\documentclass[acmsmall,screen,nonacm,acmthm=false]{acmart}

\usepackage{algorithm}
\usepackage{algpseudocode}
\usepackage{bbm}

\usepackage{aliascnt}
\usepackage{cleveref}

\newcommand{\capac}{C}
\newcommand{\inst}{\mathrm{I}}
\newcommand{\opt}{\mathrm{Opt}}
\newcommand{\sched}{\mathrm{S}}
\newcommand{\rate}{\mathrm{rate}}
\newcommand{\greedy}{\mathrm{G}}
\newcommand{\gr}{\mathrm{gr}}
\newcommand{\expect}[1]{\operatorname{E}\!\left[#1\right]}
\newcommand{\prob}[1]{\Pr\!\left[#1\right]}
\newcommand{\on}{\mathrm{ON}}
\newcommand{\algname}{\mathrm{PrefixLookahead}}
\newcommand{\bad}{\mathcal{B}}
\newcommand{\advname}{\mathrm{RandomShift}}
\newcommand{\batch}{\mathrm{batch}}
\newcommand{\wait}{\mathrm{wait}}
\newcommand{\off}{\mathrm{off}}

\newcommand{\act}{\mathrm{act}}
\newcommand{\ign}{\mathrm{ign}}

\newcommand{\cycjobs}{\mathrm{N}}
\newcommand{\cycflow}{\mathrm{F}}
\newcommand{\cycdur}{\mathrm{D}}
\newcommand{\cyc}{\mathcal{C}}
\newcommand{\makespan}{\mathrm{Makespan}}
\newcommand{\early}{\mathrm{early}}

\newcommand{\pred}{\mathrm{pred}}

\newtheorem{theorem}{Theorem}[section]

\newaliascnt{lemma}{theorem}
\newtheorem{lemma}[lemma]{Lemma}
\aliascntresetthe{lemma}

\newaliascnt{informaltheorem}{theorem}
\newtheorem{informaltheorem}[informaltheorem]{Informal Theorem}
\aliascntresetthe{informaltheorem}

\newaliascnt{corollary}{theorem}
\newtheorem{corollary}[corollary]{Corollary}
\aliascntresetthe{corollary}

\newaliascnt{fact}{theorem}
\newtheorem{fact}[fact]{Fact}
\aliascntresetthe{fact}
\crefname{fact}{fact}{facts}
\Crefname{fact}{Fact}{Facts}

\newaliascnt{claim}{theorem}

\aliascntresetthe{claim}

\newaliascnt{definition}{theorem}
\newtheorem{definition}[definition]{Definition}
\aliascntresetthe{definition}

\newaliascnt{proposition}{theorem}
\newtheorem{proposition}[proposition]{Proposition}
\aliascntresetthe{proposition}

\newenvironment{proofsketch}
  {\begin{proof}[Proof sketch]}
  {\end{proof}}

\usepackage{thm-restate}
\usepackage{titletoc}

\AtBeginDocument{%
}
\usepackage{tikz}
\usetikzlibrary{decorations.pathreplacing}
\usepackage{subcaption}

\newcommand{\sanjay}[1]{{\color{brown}{\bfseries Sanjay: #1}}}

\newcommand{\aditya}[1]{{\color{red}{Aditya: #1}}}

\begin{document}

\pagestyle{plain}

\hypersetup{
    pdftitle={Online Batching with Lookahead: Adversarial Limits and Stochastic Gains},
    pdfauthor={Dimitrios Christou, Jiayi Chen, Nihal Sharma, Aditya Akella,
               Shuchi Chawla, and Sanjay Shakkottai}
}

\begin{center}

{\LARGE\bfseries
Online Batching with Lookahead:\\[0.15em]
Adversarial Limits and Stochastic Gains \footnote{This research was supported by the National Science Foundation under Grant No. 2326576 and by the InfraAI Center at the University of Texas at Austin.}
\par}

\vspace{1.25em}

{\large
\setlength{\tabcolsep}{4pt}
\begin{tabular}{ccc}
\begin{tabular}{c}
Dimitrios Christou\textsuperscript{$\dagger$}\\[-0.05em]
{\small\texttt{christou@cs.utexas.edu}}
\end{tabular}
&
\begin{tabular}{c}
Jiayi Chen\textsuperscript{$\dagger$}\\[-0.05em]
{\small\texttt{janechen@cs.utexas.edu}}
\end{tabular}
&
\begin{tabular}{c}
Nihal Sharma\textsuperscript{$\ddagger$}\\[-0.05em]
{\small\texttt{nihal.sharma@capitalone.com}}
\end{tabular}
\\[0.8em]
\begin{tabular}{c}
Aditya Akella\textsuperscript{$\dagger$}\\[-0.05em]
{\small\texttt{akella@cs.utexas.edu}}
\end{tabular}
&
\begin{tabular}{c}
Shuchi Chawla\textsuperscript{$\dagger$}\\[-0.05em]
{\small\texttt{shuchi@cs.utexas.edu}}
\end{tabular}
&
\begin{tabular}{c}
Sanjay Shakkottai\textsuperscript{$\dagger$}\\[-0.05em]
{\small\texttt{sanjay.shakkottai@utexas.edu}}
\end{tabular}
\end{tabular}
\par}

\vspace{1em}

{\normalsize
\textsuperscript{$\dagger$}The University of Texas at Austin 
\par}

\vspace{0.25em}

{\normalsize
\textsuperscript{$\ddagger$}Capital One 
\par}

\end{center}

\vspace{1.25em}

\begin{center}
\begin{minipage}{0.92\textwidth}

\noindent\textbf{Abstract.}
Modern inference systems rely on batching to exploit parallel execution
and amortize serial execution overheads, but batching creates a
fundamental online scheduling tradeoff: serving requests immediately
reduces request queuing delay, while waiting can form larger batches
that better utilize parallel hardware and improve serving efficiency.
We study this tradeoff in a single-server online batching model with
bounded batch capacity, fixed batch processing time, and the objective
of minimizing total flow time. Our focus is on the value of finite
lookahead into future arrivals over a purely online setting for
scheduling.

\vspace{3pt}

We show that the benefits of lookahead depend qualitatively on the
arrival model. In the adversarial setting, finite perfect lookahead has
inherent limitations: for any bounded lookahead window, every online
algorithm remains separated from the offline optimum by a constant
factor. Thus, worst-case arrival sequences can prevent perfect
knowledge of future events from fully overcoming online uncertainty.
In contrast, under Poisson arrivals in the stable regime, finite
lookahead is much more powerful. Although the best distribution-aware
online algorithm has a nonzero asymptotic gap from the offline optimum,
lookahead closes this gap exponentially fast in the window length. We
prove matching upper and lower bounds up to constants in the exponent.
Together, these results give a tight characterization of when lookahead
can close the online--offline gap, incorporating insights from the
competitive analysis of online algorithms into queueing-theoretic
steady-state bounds.

\vspace{3pt}

We also study imperfect predictions, showing that the benefit of
lookahead degrades gracefully with prediction error. Simulations on
synthetic arrivals and experiments on real request traces support the
theory: modest lookahead substantially reduces flow time and brings
online batching close to the offline benchmark.

%\textcolor{red}{To move: The paper was written and submitted on July 10, 2026. The authors used AI-based tools to improve the presentation of the manuscript, including polishing the writing and revising the organization, layout, and presentation order of proofs. The authors developed the underlying algorithmic ideas and constructed and sketched the proofs themselves; AI tools were additionally used to help present these arguments in a clearer and more polished form. The authors take full responsibility for the content of the paper.}

\end{minipage}
\end{center}
\thispagestyle{empty}

\newpage

\makeatletter
\begingroup
\let\addcontentsline\@gobblethree
\tableofcontents*
\endgroup
\makeatother

\thispagestyle{empty}
\newpage
\pagenumbering{arabic}
\section{Introduction}

% The design of algorithms based on the knowledge of future events has emerged as a powerful approach for going beyond the classical limitations of online algorithms. 
%The design of algorithms based on learned predictions has emerged as a powerful approach for going beyond the classical limitations of online algorithms.
%In many optimization problems, there is a fundamental gap between the performance of the offline optimum, which knows the entire input in advance, and that of an online algorithm, which observes the input over time and must make decisions without knowing the future. 
% Access to a ``lookahead'' into the future provides a way to partially bridge this gap: even imperfect information about short-term future inputs can help an algorithm avoid costly online decisions and substantially improve over purely online performance, sometimes approaching the hindsight optimum.
%Predictions provide a way to partially bridge this gap: even imperfect information about future inputs can help an algorithm avoid costly online decisions and substantially improve over purely online performance, sometimes approaching the hindsight optimum.

The design of algorithms with learned predictions~\cite{PurohitSvitkinaKumar2018,LykourisVassilvitskii2021,BamasMaggioriSvensson2020,MitzenmacherVassilvitskii2020} has emerged as a powerful approach for overcoming the classical limitations of online algorithms. In many optimization problems, there is a fundamental gap between the performance of the offline optimum, which knows the entire input in advance, and that of an online algorithm, which observes the input over time and must make irrevocable decisions without knowing the future. Predictions offer a way to partially bridge this gap: even limited information about future inputs can help an algorithm avoid costly online decisions and substantially improve over purely online performance, sometimes approaching the hindsight optimum.

In this paper, we study the power of predictions in a fundamental online batched scheduling problem. %~\cite{nvidiatritondynamicbatching}.
Jobs arrive over time and must be processed by a single non-preemptive server in batches of size at most $C$. Each batch, regardless of the number of jobs it contains, takes exactly $s$ time units to process. The objective is to minimize total flow-time, namely the total time jobs spend in the system between arrival and completion. The problem captures a basic tension in online request-processing systems: processing jobs immediately reduces delay, while waiting may allow the scheduler to form larger batches and amortize the fixed processing time of a batch~\cite{GfellerPeetersWeberWidmayer2009,ImMoseley2013,DoolyGoldmanScott2001,KarlinKenyonRandall2001}.

For request-processing problems such as this one, a particularly natural form of prediction is partial lookahead, a classical semi-online model in which the algorithm observes a bounded portion of the future input~\cite{LiYuanCaoBu2009,ImrehNemeth2007,DunkeNickel2016,DunkeNickel2021}. Rather than prescribing the algorithm’s actions, such predictions provide limited information about future arrivals, for example by revealing which jobs will arrive next. To isolate the algorithmic value of this information, we study the clean idealized model of {\bf perfect but finite lookahead}: at every point in time, the algorithm observes all arrivals in the next $L$ time units, in addition to the arrivals it has already seen. Our goal is to understand, through competitive analysis, how much this lookahead can close the gap between online and offline performance, and how its power depends on the structure of the input.

We begin our investigation in the context of adversarial arrivals. One might expect perfect lookahead to be most valuable precisely in this setting, where the online algorithm has no reliable structure in the input to exploit. Under this intuition, the competitive ratio should improve with the lookahead window $L$ and converge to $1$ as $L\to\infty$. This intuition is correct when the server has unbounded batch capacity. With bounded capacity, however, the picture is fundamentally different: we show that for every finite $L$, the competitive ratio of any online algorithm with lookahead remains bounded away from $1$. The reason is that the value of lookahead depends not only on knowing the near future, but also on the scheduler's ability to recover from decisions that are locally optimal within the next $L$ time units but globally suboptimal. In the bounded-capacity setting, such decisions can have persistent consequences from which the online algorithm may never fully recover.

We complement our theoretical results with an empirical study using real-world compute request traces drawn from Microsoft~\cite{shahrad2020serverless, zhang2021faster} and Alibaba~\cite{weng2022mlaas, yan2025Understanding}. For each trace, we evaluate a simple online algorithm with lookahead: at each decision point, the algorithm computes the hindsight-optimal schedule for the arrivals observed so far together with those revealed by the lookahead window, and then follows this schedule until new information becomes available. We find that even modest lookahead enables this simple algorithm to perform nearly optimally. In contrast, without lookahead, every online algorithm from a broad and natural class performs noticeably worse. These empirical findings stand in sharp contrast to our adversarial lower bound: on real-world traces, even limited lookahead can yield dramatic improvements over prediction-free online algorithms.

Motivated by these empirical observations, we ask whether a stochastic model of the input can account for the performance we see on real-world traces. At first glance, one might expect lookahead to be less valuable in such a setting. An algorithm that knows the input distribution can already optimize against the likely pattern of future arrivals, rather than against worst-case realizations. Lookahead into the realized sample path should therefore provide only limited additional information. Surprisingly, our theoretical results show the opposite.

We study a stochastic version of the model in which arrivals follow a Poisson process of rate $\lambda>0$. This places the problem within the classical literature on controlled bulk-service queues, where several customers may be served simultaneously and the server may wait to accumulate a larger batch~\cite{Bailey1954,DebSerfozo1973,YadinNaor1963}. We focus on the stable regime $\lambda s<C$, where the long-run arrival rate is smaller than the maximum processing rate of the server. In this regime, the expected offline optimum flow-time grows linearly with the number of arrivals, and the central question is whether online algorithms can match the offline asymptotic rate, and whether finite lookahead can close any remaining rate gap.

We first show that, without lookahead, a provable constant gap remains between the performance of any online algorithm and the offline optimum, even when the online algorithm knows the arrival distribution. In contrast, with a finite window of perfect lookahead, the algorithm’s performance converges to the offline optimum exponentially fast as the lookahead window length grows; we give matching upper and lower bounds on this convergence rate. Together, these results help explain the dramatic gains from lookahead observed in real-world traces.

%We also study a stochastic version of the model in which arrivals follow a Poisson process of rate $\lambda>0$. We focus on the stable regime $\lambda s<C$, where the long-run arrival rate is smaller than the maximum processing rate of the server. In this regime, the expected offline optimum flow-time grows linearly with the number of arrivals, and the central question is whether online algorithms can match the offline asymptotic rate, and whether finite lookahead can close any remaining rate gap.

Finally, while most of the paper studies the idealized model of perfect lookahead, we also consider a more realistic prediction model in which the algorithm receives an imperfect forecast of near-future arrivals. In this setting, the lookahead window may contain errors in the predicted arrival times. Following the learning-augmented framework, we seek consistency under accurate predictions, robustness under arbitrary predictions, and graceful degradation with prediction error~\cite{PurohitSvitkinaKumar2018,LykourisVassilvitskii2021,WeiZhang2020,ImMoseleyXuZhang2023}. We show that the benefits of lookahead are robust: even imperfect predictions can be used to obtain performance close to the offline optimum, with the guarantee degrading smoothly as the prediction error increases.

%Finally, although most of our focus is on the power of perfect lookahead, we show that our observations extend also to the more realistic setting of erroneous predictions of near-future arrivals. We show that imperfect lookahead can still be used to approximate the offline optimum, with performance degrading smoothly as prediction error increases.

To summarize, our results show that the value of lookahead for online batching is highly sensitive to the input model: gains from finite lookahead are inherently limited under adversarial arrivals with bounded batch capacity, yet even modest lookahead, perfect or erroneous, can yield near-optimal performance on real-world and stochastic inputs, with guarantees that degrade gracefully under prediction error. 

We now place the online batching model in context and describe our results and techniques.
%We consider a single-server batching problem. Jobs arrive over time at release times $r_1,\ldots,r_n$, and each job must eventually be assigned to a batch. The server is non-preemptive: once it starts processing a batch, it cannot interrupt it to add newly arriving jobs. Each batch can contain at most $B$ jobs and, regardless of its size, takes exactly $s$ time units to process. Equivalently, the model has a fixed per-batch processing/setup time and no additional per-job processing time. A schedule specifies when each batch is started and which available jobs it contains. If job $j$ is completed at time $C_j$, its flow-time is $C_j-r_j$, and the objective is to minimize total flow-time over all jobs, that is, $\sum_{j=1}^n (C_j-r_j)$. The offline optimum knows the entire arrival sequence in advance and chooses the schedule minimizing this objective. An online algorithm, by contrast, observes jobs only as they arrive and must decide when to start a batch and which currently available jobs to include, without knowing future arrivals. 

\paragraph{Request-level Batching: Motivation and Scope.}
Our model is motivated by the central tension of request-level batching in inference serving~\cite{nvidiatritondynamicbatching}. Modern inference systems batch requests to amortize fixed overheads, improve accelerator utilization, and reduce per-request compute cost. The scheduler must therefore decide when to stop waiting and launch a batch: launching too early sacrifices batching efficiency, while waiting too long increases latency for requests already in the queue. This tension appears even in homogeneous, non-generative inference workloads, such as image classification, embedding generation, ranking, and other fixed-cost neural inference tasks. These workloads are often served on GPUs or other accelerators, where a batch incurs a fixed launch or setup cost and is constrained by a maximum feasible batch size determined by memory, tensor shape compatibility, or latency targets.

We study this tradeoff using the stylized online batching model described above. The model isolates the value of future arrival information for the batching decision itself, abstracting away multi-replica routing, heterogeneous models, request-dependent service times, and other system-level features. It is therefore most directly applicable to homogeneous non-LLM inference, where requests in a scheduling class have comparable service demands and the main decision is when to launch a batch. LLM serving introduces additional structure: autoregressive generation has separate prefill and decode phases, continuous batching, variable output lengths, and a KV cache that stores intermediate attention states for previously processed tokens and therefore creates sequence-dependent memory constraints. Modeling these features requires a richer state space and is outside the scope of this paper.

In this model, finite lookahead captures settings in which short-term demand forecasts are available, for example from recent traffic patterns, application-level structure, or learned predictors, while long-horizon predictions remain unreliable. In the idealized model the lookahead information is exact; in our prediction-error model it is an imperfect forecast of near-future arrivals. The window length $L$ is thus the natural resource: larger $L$ gives the scheduler more information about whether to launch a batch immediately or wait for imminent arrivals, but still leaves uncertainty about the rest of the input.

\paragraph{Our Results.}
We evaluate online algorithms, with and without lookahead, using competitive analysis. An algorithm is $\alpha$-competitive if, on every input sequence, its total flow-time is at most $\alpha$ times that of the offline optimum. Equivalently, its average per-job flow-time is within the same factor of the minimum possible average per-job flow-time. The offline optimum has competitive ratio $1$ by definition, and ratios closer to $1$ indicate better performance. Our first result characterizes the power and limitations of finite lookahead in the adversarial model.

\begin{informaltheorem}[Adversarial arrivals]
For adversarial arrivals, purely online batched scheduling admits a constant competitive ratio bounded away from $1$ (\Cref{thm:adv-online-bounds}). Perfect finite lookahead is substantially more powerful when batch capacity is unbounded: for $C=\infty$, the optimal competitive ratio with lookahead $L$ converges polynomially to $1$ as $L$ grows (\Cref{thm:adv-lookahead-uncap-upper,thm:adv-lookahead-uncap-lower}). With bounded capacity, however, this convergence fails. For every finite capacity $C<\infty$ and every finite lookahead window $L<\infty$, every $L$-lookahead algorithm remains separated from the offline optimum by a factor of $1+\Omega(1)$ (\Cref{thm:adv-lookahead-cap-lb}). 
\end{informaltheorem}

%\begin{informaltheorem}[Adversarial arrivals]
%For adversarial arrivals, purely-online batching admits a constant competitive ratio of $1+\Theta(1)$. Perfect finite lookahead helps substantially when capacity is unbounded: for $C=\infty$, the optimal competitive ratio with lookahead $L$ is $1+\Theta\left(\mathrm{poly}(s/L)\right)$ and approaches $1$ as $L$ grows. However, this phenomenon disappears under bounded capacity. For every finite capacity $C<\infty$ and every finite lookahead window $L<\infty$, every $L$-lookahead algorithm remains separated from the offline optimum by a $1+\Omega(1)$ factor.
%\end{informaltheorem}

%We prove the adversarial online bounds in \Cref{thm:adv-online-bounds}, the uncapacitated lookahead upper and lower bounds in \Cref{thm:adv-lookahead-uncap-upper,thm:adv-lookahead-uncap-lower} respectively, and the bounded-capacity lower bound in \Cref{thm:adv-lookahead-cap-lb}. 
Together, these results show that batch capacity fundamentally changes the value of lookahead. When $C=\infty$, a lookahead scheduler can repair local mistakes by absorbing leftover jobs into a large future batch. When $C<\infty$, this repair mechanism breaks down: an early incorrect batching decision can leave the scheduler permanently out of phase with future full batches. Thus, in the worst case, no finite amount of lookahead suffices to recover the offline optimum.

By contrast, in the Poisson model, finite lookahead is enough to close the offline-optimal gap exponentially fast. 

\begin{informaltheorem}[Poisson arrivals]
For Poisson arrivals of rate $\lambda$, the per-job flow-time of the offline-optimal scheduler is finite exactly in the stable regime $\lambda s<C$ (\Cref{sec:stoch-rate}). In this regime, every online policy remains separated from the offline-optimal 
flow-time by a competitive ratio of $1+\Theta_{\lambda,s,\capac}(1)$ (\Cref{thm:online-gap}). 
%rate by a positive constant $\Theta_{\lambda,s,B}(1)$. 
Perfect finite lookahead closes this stochastic gap exponentially fast (\Cref{thm:poisson-upper}): there is an $L$-lookahead scheduler with competitive ratio at most
$$
%\rate(\sched^*)
1+\exp(-\Theta_{\lambda,s,\capac}(L)).
$$
Moreover, this exponential dependence on $L$ is tight up to constants in the exponent (\Cref{thm:poisson-lb}).
\end{informaltheorem}

%We establish the rate definition and the stability threshold in \Cref{sec:stoch-rate}. The online--offline separation is proven in \Cref{thm:online-gap}, and the lookahead upper bound is in \Cref{thm:poisson-upper}. The matching exponential lower bound is proved in \Cref{thm:poisson-lb}. 

The contrast between the two settings is the main message of the paper: under adversarial arrivals, finite lookahead provides only local information and bounded capacity can lead to irreversible local mistakes; under Poisson arrivals, memorylessness creates regeneration opportunities, and the probability of failing to find one inside the lookahead window decays exponentially in $L$.

Next, we treat the case of imperfect predictions where the instance observed through the lookahead window doesn't correspond to the true sequence of arrivals. We assume that the true and predicted instances contain the same number of jobs and measure prediction error by the maximum discrepancy $\epsilon$ between rank-matched arrival times. We show that total flow-time is Lipschitz with respect to this error: any schedule computed for the predicted instance can be transferred to the true instance at an additional cost of at most $O(\epsilon)$ per job. Using this property, we extend our perfect-lookahead upper bounds to erroneous predictions, incurring only an additive $O(\epsilon)$ loss in the competitive ratio, while retaining a constant robust bound~(\Cref{thm:imperfect-prediction}).

Finally, simulations on synthetic and real-world traces validate our theoretical findings across a range of practical settings (see \Cref{sec:eval}). We also note that all of the algorithms we design can be implemented efficiently. 
%with respect to the size of the job arrival sequence.}

%\paragraph{Our Techniques.}
%A key technical contribution of our work is to combine algorithmic design with queueing-theoretic steady-state analysis in order to obtain tight competitive-ratio bounds. This is especially important in the stochastic setting, where the steady-state behavior of an online algorithm can be difficult to analyze: the algorithm may carry state across time, and its decisions may depend on the entire history of arrivals and past actions.

\paragraph{Our Techniques.}
A key technical contribution of our work is to adapt the tools of competitive analysis to a setting where performance is governed by long-run stochastic behavior. Our upper bounds require designing online algorithms whose decisions are robust to worst-case uncertainty within the lookahead model, while our lower bounds require constructing local input patterns that force a gap relative to the offline optimum. In the stochastic setting, turning these local comparisons into asymptotic competitive-ratio bounds requires controlling the steady-state behavior of the algorithm. This is challenging because online algorithms may carry state across time and make decisions based on the entire history of arrivals and past actions.

For our upper bounds, we control this history dependence by designing algorithms whose behavior regenerates at carefully chosen times. In the stochastic setting with lookahead, for example, the algorithm uses only a prefix of the lookahead window, stopping at the first point at which the system can be certified to become idle for a short interval. We show that, with high probability, such a prefix exists and the algorithm can schedule it optimally. This regeneration structure allows us to analyze performance cycle by cycle and derive the desired competitive-ratio bounds.

Our lower bounds require different arguments. In the adversarial setting, we show that any local ``mistake'' made by the online algorithm can be amplified into a constant performance gap by keeping the server fully loaded for an arbitrarily long time. The stochastic setting is more delicate: although we cannot prescribe a single bad arrival sequence, we identify local arrival patterns that create a gap between the online algorithm and the offline optimum, and occur with positive probability in the regenerative structure of the process. In particular, in the absence of lookahead, we use the memorylessness of Poisson arrivals to reduce the analysis of online algorithms to a single busy cycle of some threshold policy. On a positive-probability event within such a cycle, the online schedule can be locally replaced by a better one that completes no later than the original cycle endpoint. These local improvements concatenate into a feasible global schedule, and the renewal-reward theorem turns constant improvements per cycle into a positive gap between the online and offline asymptotic rates.

For the lookahead results, the central idea is to decompose the offline optimum itself into regenerative pieces. We define offline cycles that begin with an arrival into an empty system and end once the jobs in the cycle can be scheduled optimally and cleared before the next arrival. This yields a cycle-by-cycle benchmark that a lookahead algorithm can attempt to imitate. The upper bound shows that, except with probability exponentially small in $L$, the lookahead window contains enough of the current offline cycle for the algorithm to recognize the cycle and execute its optimal schedule. The lower bound shows that this exponential dependence is inherent: with exponentially small but unavoidable probability, the relevant offline cycle extends beyond the lookahead window, leaving any finite-lookahead algorithm unable to distinguish between continuations that require different optimal decisions. Thus, the same offline decomposition both powers the algorithm and identifies the rare events that limit its convergence rate.

\subsection{Related Work}

Finite lookahead has been studied widely in online optimization
\cite{DunkeNickel2016,DunkeNickel2021} with applications in bin packing, routing, packet scheduling, caching and several other queuing environments~\cite{Grove1995,JailletWagner2006,BohmChrobakJezLiSgallVesely2016,MotwaniSaraswatTorng1998,ZhengXuZhang2008,MandelbaumShabtay2011,ZhengChengLiuXu2013} to model side information about future inputs provided to the learner. Particularly relevant are lookahead algorithms for batch and parallel-batch scheduling~\cite{LiYuanCaoBu2009,LiZhangYang2012,LiYuanYang2014,DwibedyMohanty2023,Jiao2024} and for online data acknowledgment~\cite{ImrehNemeth2007} that address the tradeoff between idling to batch jobs together and processing jobs as they arrive. While this literature primarily studies worst-case competitive performance~\cite{SleatorTarjan1985,BorodinElYaniv1998,Albers2003}, we study both the adversarial and stochastic settings with the goal of comparing the benefit from lookahead in the two. 

%Competitive analysis is the traditional framework for evaluating algorithms that make irrevocable decisions without knowing future inputs. 
Online scheduling has a rich history, with algorithms designed to optimize metrics such as makespan~\cite{Graham1966}, total or weighted completion time~\cite{HallSchulzShmoysWein1997}, and total or weighted flow-time~\cite{PhillipsSteinTorngWein1997,ChekuriKhannaZhu2001,BansalPruhs2014}; see also the survey~\cite{PruhsSgallTorng2004}. Batching~\cite{PottsKovalyov2000,FowlerMonch2022,TianFuYuan2014} introduces a further layer of complexity in vanilla scheduling but is applicable to modern serving systems that can process jobs in parallel. The works closest to our adversarial model study online single-machine batching with release times and total flow-time~\cite{GfellerPeetersWeberWidmayer2009} and online batch scheduling under several flow-time objectives~\cite{ImMoseley2013}; related bounded-batch models consider weighted completion-time objectives~\cite{ChenDengZang2004,MaWanWeiYuan2014}. Another closely related  problem is that of online aggregated TCP acknowledgment~\cite{DoolyGoldmanScott2001,KarlinKenyonRandall2001}, where packets accumulate delay until the algorithm sends a fixed-cost acknowledgment covering all pending packets. Variants incorporate delay penalties, communication constraints, offline optimization, and general delay functions~\cite{AlbersBals2005,FrederiksenLarsenNogaUthaisombut2003,NogaSeidenWoeginger2002,BhorePawlowskiUmboh2026}. Both TCP acknowledgment and batch scheduling trade-off immediate service against waiting to aggregate additional requests. In the standard acknowledgment model, however, actions are instantaneous and uncapacitated, whereas our batches have finite capacity and occupy the server for a nonzero processing interval.

Another related area is learning-augmented online algorithms~\cite{ALPS,MitzenmacherVassilvitskii2020}, which use predictions to obtain near-optimal performance when forecasts are accurate while retaining worst-case guarantees as prediction error increases. This framework has been applied to scheduling with predictions of job weights, processing loads, and job orders~\cite{LattanziLavastidaMoseleyVassilvitskii2020,ImKumarQaemPurohit2023,LindermayrMegow2025}, as well as to interval scheduling and dynamic TCP acknowledgment~\cite{BoyarFavrholdtKamaliLarsen2026,ImMoseleyXuZhang2023}. Most of this literature is framed in the adversarial-input model, where predictions are used to strengthen worst-case guarantees. Our work asks how such guarantees behave in both adversarial and stochastic arrival models. We show that perfect finite lookahead has sharply different effects in the two settings: under adversarial arrivals, every finite lookahead window leaves a constant gap from the offline optimum, whereas under Poisson arrivals this gap shrinks exponentially fast with the window length. This contrast places our work in the broader beyond-worst-case perspective~\cite{Roughgarden2020}.

Finally, the literature on the control of queueing systems studies dynamic service decisions, such as when to initiate service and how many waiting jobs to process in each batch~\cite{YadinNaor1963,DebSerfozo1973}. These problems are typically formulated as average-cost Markov or semi-Markov decision processes, yielding optimal causal policies under a known arrival distribution~\cite{FedergruenHordijkTijms1979,Puterman1994,Sennott1999}. Our perspective is different: after characterizing the optimal causal policy, we compare with a clairvoyant sample-path optimum and quantify how finite lookahead into the realized arrival process reduces the resulting steady-state optimality gap.

\section{The Adversarial Batching Model}\label{sec:adversarial}

In this section we study our online batching problem under adversarial arrivals. An instance $\inst$ consists of an arbitrary, potentially adversarial sequence of $n$ release times $r_1\le \cdots \le r_n$. A scheduler $\sched$ must process these jobs in batches subject to the following constraints: job $j$ may only be processed after time $r_j$; each batch may contain at most $\capac$ jobs; every nonempty batch, regardless of its size, requires some fixed processing time $s>0$; preemption is not allowed; and all jobs in a batch complete simultaneously. 

Let $T_\sched (j)$ denote the completion time of job $j$ under a scheduling algorithm $\sched$. We measure performance in terms of the total {\em flow-time} of $\sched$ on the instance $\inst$, which we denote as
$$
\sched(\inst)
:=
\sum_{j=1}^n (T_\sched(j)-r_j).
$$
If the scheduler $\sched$ uses internal randomness, we denote with $\sched(\inst)$ the expected flow-time of $\sched$ on $\inst$. Our benchmark is the offline optimum: a scheduler that observes the entire instance $\inst$ in advance and then chooses a feasible batching schedule minimizing total flow-time. We denote this value by
$$
\opt(\inst)
:=
\min_{\sched}\sched(\inst).
$$

We evaluate the performance of a scheduling algorithm using the tools of competitive analysis. We say that a scheduler $\sched$ is {\it $\alpha$-competitive} if, for every adversarial instance $\inst$, its total flow-time satisfies $\sched(\inst)\le \alpha\cdot \opt(\inst)$.

\subsection{Structural Properties of the Offline Optimum}\label{sec:adv-basic-properties}

We begin by recording two key properties of the offline optimum that will be useful throughout this paper. First, we show that the optimal flow-time objective is super-additive: if an instance is partitioned into two sub-instances, then restricting any feasible schedule to each part separately gives feasible schedules for the two parts, so the optimal cost of the full instance must be at least the sum of the two separate optima.

\begin{restatable}{lemma}{superadditivelemma}
\label{lemma:superadditive}
Let $\inst$ be any instance, and let $\inst_1,\inst_2$ be any partition of the jobs in $\inst$ into two disjoint sub-instances such that $\inst_1\cup \inst_2=\inst$ and $\inst_1\cap\inst_2=\emptyset$. Then, $\opt(\inst) \ge \opt(\inst_1)+\opt(\inst_2)$.
\end{restatable}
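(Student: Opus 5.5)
The plan is a restriction argument: take an optimal schedule $\sched^*$ for $\inst$ with $\sched^*(\inst)=\opt(\inst)$, build from it feasible schedules for $\inst_1$ and $\inst_2$, and compare costs. One preliminary point needs a line of justification. The minimum in the definition of $\opt(\inst)$ is attained, or at least can be approached arbitrarily closely. If we only use an infimum, then for every $\delta>0$ there is a schedule of cost at most $\opt(\inst)+\delta$. Running the argument with such a schedule and letting $\delta\to 0$ gives the claim. So existence of an exact minimizer is not actually needed.

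\emph{Construction.} For $k\in\{1,2\}$, define $\sched_k$ on $\inst_k$ as follows. For every batch $B$ of $\sched^*$, started at time $t_B$, let $\sched_k$ start the batch $B\cap\inst_k$ at the same time $t_B$ if this intersection is nonempty, and launch nothing otherwise.

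\emph{Feasibility.} Each constraint of the model is inherited from $\sched^*$:
\begin{itemize}
\item every job $j\in B\cap\inst_k$ satisfies $t_B\ge r_j$, because $\sched^*$ was feasible;
\item $|B\cap \inst_k|\le|B|\le\capac$;
\item the batches of $\sched_k$ occupy a subset of the processing intervals $[t_B,t_B+s)$ used by $\sched^*$, and these are pairwise disjoint because $\sched^*$ is non-preemptive on a single server, so no two batches of $\sched_k$ overlap;
\item each job of $\inst_k$ is processed exactly once, since it lies in exactly one batch of $\sched^*$.
\end{itemize}
Empty batches are simply dropped, which can only free up the server.

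\emph{Cost.} Every job $j\in\inst_k$ completes under $\sched_k$ at exactly $T_{\sched^*}(j)$. Because $\inst_1,\inst_2$ partition $\inst$,
\[
\sched_1(\inst_1)+\sched_2(\inst_2)=\sum_{j\in\inst}\bigl(T_{\sched^*}(j)-r_j\bigr)=\opt(\inst).
\]
Since $\opt(\inst_k)\le\sched_k(\inst_k)$, the inequality follows.

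There is no real obstacle here. The only delicate step is the feasibility check, specifically that removing jobs never violates the single-server non-overlap constraint. This holds because the batch start times are kept unchanged.
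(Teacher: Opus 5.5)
Your proposal is correct and takes the same route as the paper: restrict an optimal schedule for $\inst$ to each part, keep the batch start times, and discard empty batches. Every job then keeps its completion time, so the two restricted costs sum to $\opt(\inst)$. Your extra remark handling a possibly unattained minimum via a $\delta$-optimal schedule is a harmless refinement that the paper does not bother with.
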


The above statement holds path-wise, and therefore if instance $\inst$ is random it holds with probability one over every instantiation of $\inst$ and every partition of that instantiation. We next adapt the standard notion of work-conservation for our batched scheduling model.

\begin{definition}[Work conservation]\label{def:work-conserving}
A schedule $\sched$ is \emph{work-conserving} if it satisfies the following two properties: (1) whenever $\sched$ starts a batch while $q$ jobs are pending, the batch contains $\min\{q,\capac\}$ jobs; and (2) whenever the processor is idle and at least $\capac$ jobs are pending, $\sched$ starts a batch immediately.
\end{definition}

Since every nonempty batch has the same processing time $s$, work-conservation is without loss when minimizing flow-time: an optimal schedule never benefits from leaving capacity unused in a started batch, nor from idling when a full batch is already available. In other words, every flow-time optimal schedule is work-conserving. Work-conservation also rules out artificial delays that could make a schedule arbitrarily slow to finish without improving flow-time. This lets us compare the final completion time (a.k.a. the \textit{makespan}) of a flow-time optimum to that of any other feasible schedule.

\begin{restatable}{lemma}{makespanlemma}\label{lemma:opt-makespan}
Let $\inst$ be any instance, let $\sched^*$ be any schedule minimizing flow-time for $\inst$, and let $\sched$ be any feasible schedule for $\inst$. Then,  
$$
    \max_{j=1,\ldots,n} T_{\sched^*}(j)
    \le 
    \max_{j=1,\ldots,n} T_{\sched}(j) + 2s . 
$$
In particular, the offline optimum completes the last job in $\inst$ within $2s$ of any other schedule.
\end{restatable}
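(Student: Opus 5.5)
The plan is to compare the two makespans through the last maximal stretch of consecutive, back-to-back batches of $\sched^*$. I would first invoke the remark after \Cref{def:work-conserving} that every flow-time optimal schedule is work-conserving. Two consequences are needed. First, whenever $\sched^*$ starts a batch containing fewer than $\capac$ jobs, it takes every pending job. Second, whenever $\sched^*$ sits idle, fewer than $\capac$ jobs are pending. Write the batches of $\sched^*$ in time order as $B_1,\dots,B_m$. Let $k$ be the smallest index such that $B_k,\dots,B_m$ run consecutively with no idle time between them and $B_k,\dots,B_{m-1}$ are all full. The makespan of $\sched^*$ is then $\mathrm{start}(B_k)+(m-k+1)s$. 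Let $N$ be the number of jobs in $B_k\cup\cdots\cup B_m$. These batches are consecutive and all but the last are full, so $N\ge (m-k)\capac+1$.

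By the minimality of $k$, either $B_{k-1}$ exists, ends exactly at $\mathrm{start}(B_k)$, and is not full, or there is idle time immediately before $B_k$ (which includes the case $k=1$). The idea in both cases is to find a time $u$ such that almost all of the $N$ jobs are released after $u$. Any feasible $\sched$ then needs many batches that start after $u$, which forces its makespan to be large.

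\emph{Case 1: $B_{k-1}$ is non-full and immediately precedes $B_k$.} Let $t=\mathrm{start}(B_{k-1})$, so $\mathrm{start}(B_k)=t+s$. By work conservation, $B_{k-1}$ took every job pending at time $t$. Hence all $N$ jobs of $B_k,\dots,B_m$ are released after $t$. Since $N\ge (m-k)\capac+1$, schedule $\sched$ must start at least $m-k+1$ batches after time $t$ to process them. These batches run sequentially, so the makespan of $\sched$ is at least $t+(m-k+1)s$. The makespan of $\sched^*$ is $t+(m-k+2)s$, so in this case the gap is at most $s$.

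\emph{Case 2: $\sched^*$ is idle just before $v:=\mathrm{start}(B_k)$.} Take $u<v$ arbitrarily close to $v$ such that $\sched^*$ is idle throughout $[u,v)$. Work conservation implies that at time $u$ fewer than $\capac$ jobs are pending, so at most $\capac-1$ of the $N$ jobs were released by time $u$. The remaining jobs, at least $(m-k-1)\capac+2$ of them, are released after $u$. Schedule $\sched$ therefore needs at least $m-k$ batches starting after $u$, and its makespan is at least $u+(m-k)s$. The makespan of $\sched^*$ is $v+(m-k+1)s$. Letting $u\to v$ gives a gap of at most $s+(v-u)\to s$, and since the bound holds for every admissible $u$ the gap is at most $s\le 2s$. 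The degenerate subcase $k=1$ fits here directly, because no jobs are pending before the first batch.

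The main obstacle I anticipate is making Case 2 rigorous. The idle-time property of work conservation controls only what is pending at each idle instant, so the limiting choice of $u$ (or a left-limit argument at $v$) has to be handled carefully. The cleanest way may be to take $u=v$ and count only the jobs pending strictly before $v$. The slack of $2s$ in the statement absorbs any off-by-one batch that this looser counting introduces. The definitional step is also delicate: $k$ must be chosen so that the final stretch contains no idle time and only its last batch can be non-full, and the cases must be checked to be exhaustive.
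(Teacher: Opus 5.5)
\textbf{Gap in Case 2 when the final stretch is a single batch.} Your decomposition into the last back-to-back stretch is a different route from the paper's, which compares completion counts via \Cref{lemma:job_difference} and then clears the at most $\capac-1$ leftover jobs. Case~1 and Case~2 with $m>k$ go through. The gap is Case~2 with $m=k$, where the final stretch is a single batch $B_m$ preceded by idle time. Your count there says at least $(m-k-1)\capac+2=2-\capac\le 0$ jobs are released after $u$. So nothing forces $\sched$ to start any batch after $u$, and ``the makespan of $\sched$ is at least $u+(m-k)s=u$'' does not follow.

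This is not an off-by-one that the $2s$ slack can absorb. Your argument uses only work conservation, and the lemma is false for work-conserving schedules. Take $\capac\ge 2$ and a single job released at time $0$ and started at time $100s$. This schedule satisfies both conditions of \Cref{def:work-conserving}, since fewer than $\capac$ jobs are ever pending. Yet its makespan exceeds greedy's by $100s$. In your notation this is exactly $m=k=1$ in Case~2. (Separately, your remark that for $k=1$ ``no jobs are pending before the first batch'' is inaccurate, since jobs may wait before the first batch. That subcase is covered by the idle-interval argument anyway, subject to the same fix.)

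\textbf{The missing ingredient is flow-time optimality itself.} In an optimal schedule, let $B$ be a batch starting at time $v$ after an idle interval that begins at $u_0<v$. Then $B$ must contain a job released exactly at $v$. Otherwise, restarting $B$ at $\max\{u_0,\max_{j\in B}r_j\}<v$ is feasible, delays no other batch, and strictly decreases flow time. Apply this to $B_m$ when $m=k$. It yields a job released at $v$, so every feasible $\sched$ has makespan at least $v+s$, which is the makespan of $\sched^*$.

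The paper uses optimality at the analogous point: it argues that $\sched^*$ cannot be idle at time $M+s$, and that it must put every unfinished job into the batch it is running. With this patch your argument is complete. It also gives the sharper bound: the gap is at most $s$ rather than $2s$.
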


We defer the proofs of~\Cref{lemma:superadditive,lemma:opt-makespan} to Appendix~\ref{app:superadditive} and~\ref{app:makespan} respectively.

\subsection{Online Algorithms for Adversarial Batching}
Our primary interest is in online schedulers, which must make decisions without observing the entire job sequence. At any time $t$, an online scheduler knows only the arrivals that have occurred up to time $t$ and its own past decisions; it has no information about future release times. The batching problem captures a fundamental online tradeoff. Starting a batch immediately reduces the waiting time of currently pending jobs, but may be wasteful if many additional jobs arrive soon after. Conversely, waiting may allow the scheduler to form larger batches and amortize the setup time, but increases the flow-time of jobs already in the queue. In the adversarial model, as we show, this tradeoff cannot be resolved perfectly: online schedulers can achieve constant competitive ratios, but no online scheduler can match the offline optimum on all instances.

We begin by analyzing what is arguably the simplest online algorithm. Let $\greedy$ denote the greedy scheduler, which starts a batch whenever the processor is idle and at least one job is pending; the batch includes as many pending jobs as possible, up to the capacity $C$. We show that the greedy scheduler can never fall too far behind an optimal schedule: in the worst case, it requires only an additional $s$ time units to catch up to the optimum in terms of the number of jobs processed. This implies that greedy incurs at most twice the flow-time of the optimal schedule. A similar technique was used in \cite{GfellerPeetersWeberWidmayer2009} to show that the greedy scheduler is $2$-competitive in a related setting where processing time is linear in the batch size, but batch size is unbounded. 

We further show that greedy is optimal among all deterministic schedulers in our setting, and we prove constant lower bounds even for randomized schedulers. The intuition is that once an online algorithm falls behind the optimal schedule, an adversary can force it to schedule at capacity for an arbitrarily long interval, leaving it no opportunity to recover and match the queue state of the optimum. These results are summarized below as \Cref{thm:adv-online-bounds}; a formal proof is presented in Appendix~\ref{appendix:adv-online-bounds}. We assume throughout that $C\ge 2$, since the case $C=1$ is degenerate: every batch contains a single job, so the greedy scheduler is offline optimal. 

%This scheduler was shown to be $2$-competitive by \textcolor{blue}{XX} in the uncapacitated setting ($B=\infty$), even under the more challenging setting where each job has an individual processing time $p>0$ and batching $m$ jobs takes time $pm+s$. We adapt their arguments to show that the same is true in our simplified setting (corresponding to $p=0$) even in the presence of the batching capacity $B$. Furthermore, the greedy algorithm is optimal across all deterministic schedulers for our setting, with constant lower bounds holding even for randomized schedulers.

\begin{restatable}{theorem}{adversarialonline}
\label{thm:adv-online-bounds}
Assume $\capac\ge 2$. The greedy scheduler $\greedy$ is $2$-competitive for adversarial online batching. Furthermore, for any $\epsilon>0$, no deterministic scheduler is $(2-\epsilon)$-competitive and no randomized scheduler is $(11/9-\epsilon)$-competitive. 
\end{restatable}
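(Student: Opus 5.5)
The plan has three parts: a catch-up lemma that yields $2$-competitiveness of $\greedy$; a deterministic adversary that exploits a single early mistake; and Yao's principle for the randomized bound.

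\textbf{Upper bound.} Let $N_\sched(t)$ denote the number of jobs completed by schedule $\sched$ by time $t$, and let $\sched^*$ be a flow-time optimal schedule. The key claim is the catch-up property $N_\greedy(t+s)\ge N_{\sched^*}(t)$ for every $t$.

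Given the claim, the $k$-th completion time of $\greedy$ is at most the $k$-th completion time of $\sched^*$ plus $s$. Jobs are interchangeable: flow-time equals the sum of completion times minus the sum of release times. Summing over $k$ therefore gives $\greedy(\inst)\le \opt(\inst)+ns$. Every job has flow-time at least $s$, so $\opt(\inst)\ge ns$, and hence $\greedy(\inst)\le 2\opt(\inst)$.

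To prove the claim, fix $t$ and let $\tau\le t$ be the last time at which $\greedy$ is idle with an empty queue; if no such time exists, take $\tau=0$.
\begin{itemize}
\item At time $\tau$, $\greedy$ has completed every job released before $\tau$, so it is not behind $\sched^*$.
\item On $[\tau,t+s]$ greedy is continuously busy. Every batch it starts is either full, or clears the entire queue at its start time.
\end{itemize}
Now take the last non-full batch of $\greedy$ started at or before $t$. It clears all jobs released before its start, so after it completes, greedy has handled everything $\sched^*$ could have handled by that time. From then on greedy processes $C$ jobs every $s$ time units. Meanwhile $\sched^*$ processes at most $C$ jobs per batch and needs time $s$ per batch, and a partial batch in progress at time $t$ accounts for the one extra $s$ of slack. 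I expect this case analysis around the batch straddling $t$ to be the fiddly part, but it is routine.

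\textbf{Deterministic lower bound.} Release one job at time $0$. If the algorithm never starts a batch, its cost is unbounded. Otherwise, let $d$ be the time it starts its first batch. Immediately after $d$, at $d+\delta$, release $C-1$ jobs; afterwards release $C$ jobs at each of the times $d+\delta+ks$ for $k=1,\dots,K$.
\begin{itemize}
\item The offline schedule waits until $d+\delta$ and then runs only full batches, each of which clears an arrival group on release. Its cost is $d+\delta+s$ for the first job plus $s$ per later job.
\item The online algorithm has served only one job in its first batch. It therefore holds a residual $C-1$ jobs and is behind by one batch forever: every later group waits roughly an extra $s$, so each job costs about $2s$.
\end{itemize}
Letting $K\to\infty$ and $\delta\to 0$ makes the ratio tend to $2$. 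The initial $d$ washes out, since $d$ cannot depend on $K$ and both totals grow like $KCs$. One must check that the online algorithm cannot recover by idling or by splitting batches. It cannot: under a stream of full loads any schedule can complete at most $C$ jobs per $s$ time units, so a deficit of one batch persists.

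\textbf{Randomized lower bound.} Use Yao's principle: construct a distribution over instances and bound the expected cost of every deterministic algorithm. First reduce a deterministic algorithm to a single parameter, the start time $d$ of its first batch. Then take a distribution on when the adversary's $C-1$ follow-up jobs arrive (after $d$ or before it), each followed by the same long full-load stream as above. Because of that tail, any early mistake costs a persistent $\Theta(s)$ per job, while waiting too long costs only once. Optimizing the adversary's distribution, likely a two- or three-point distribution over the follow-up time, against the algorithm's best choice of $d$ should give the value $11/9$.

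Identifying the right adversarial distribution and carrying out this optimization is the main obstacle. I would first try two candidate follow-up times $\{0^+, s\}$ with probabilities tuned so that the algorithm is indifferent between starting immediately and waiting until $s$, and then check that intermediate choices of $d$ do no better.
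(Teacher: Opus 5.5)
Your upper bound is essentially the paper's. Both use the catch-up inequality $N_\greedy(t+s)\ge N_{\sched^*}(t)$, which the paper proves via a minimal violating time, and turn it into $\greedy(\inst)\le\opt(\inst)+ns$. Your direct proof of the inequality works if you anchor it at the last time $u\le t$ at which greedy's queue is empty, because greedy need not be busy on all of $[\tau,t+s]$.

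\textbf{Deterministic lower bound.} This part only gives $2-1/C$, not $2-\epsilon$. After its singleton batch at $d$, the online algorithm is not forced to delay every later group. It can idle until $d+\delta+s$ and then start a full batch at each burst arrival, carrying the $C-1$ residual jobs to the very end. That costs only $(C-1)s$ extra per burst, versus $C(s-\delta)$ for catching up immediately. The ratio therefore tends to $1+\frac{C-1}{C}$, e.g.\ $3/2$ for $C=2$. The one-batch deficit does persist, but its price is $\min\{(C-1)s,\,C(s-\delta)\}$ per burst. No construction with a single early mistake followed by a full-load tail can do better, since the residual is at most $C-1$ jobs.

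The missing idea is to make the adversary re-adapt on every job, so the algorithm errs repeatedly. The paper releases job $i+1$ exactly $\delta$ after the algorithm starts the (necessarily singleton) batch containing job $i$. Then every job after the first waits at least $s-\delta$ online. Meanwhile, releases from job $2$ onward are at least $s$ apart, so the offline schedule pairs jobs $1,2$ and serves every later job on release, paying about $s$ per job.

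\textbf{Randomized lower bound.} Your plan here is not yet a proof: it gives neither the instance pair nor the computation producing $11/9$, and the heuristic behind it yields no constant. Reducing a deterministic algorithm to its first start time is the right move. But if a late first batch ``costs only once'' while a long full-load tail follows, that one-time $O(s)$ loss is negligible against the $\Theta(KCs)$ optimum. With bounded follow-up times such as $\{0^+,s\}$, the algorithm that simply waits for the $C-1$ follow-up jobs and then starts a full batch is phase-aligned with the tail. It has asymptotic ratio $1$ on every instance in the support, so Yao's principle gives nothing.

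Both wrong first decisions must leave the schedule permanently out of phase. The paper uses $q$ jobs at $0$ and $C-q$ at $xs$, followed by one of two continuations:
\begin{itemize}
\item $q$ jobs at $s$ and bursts of $C$ at $2s,3s,\dots$. Here waiting until $xs$ leaves $q$ leftovers and a lag of $xs$, costing $\min\{Cx,q\}s$ per burst.
\item Bursts of $C$ at $(1+x)s,(2+x)s,\dots$. Here starting at $0$ leaves $C-q$ leftovers and a lag of $(1-x)s$, costing $\min\{C(1-x),C-q\}s$ per burst.
\end{itemize}
With $q=\lfloor C/2\rfloor$ and $x=q/C$, the wrong choice has asymptotic ratio $1+x$ on the first instance and $2-x$ on the second. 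Putting probability $p=1-x$ on the first instance gives $1+x-x^2$. This equals $5/4$ for even $C$ and is minimized over $C\ge2$ at $C=3$, yielding $11/9$.
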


%The main message is that the adversarial model admits constant-competitive online scheduling, but also that purely online information is intrinsically insufficient to recover or even approach the offline optimum to arbitrary precision.

\subsection{Schedulers with Perfect Lookahead}

We next consider algorithms with partial information about the future. The prediction model we study is \emph{time lookahead}: at every time $t$, the scheduler knows all arrivals in the interval $[0,t+L]$, where $L\ge 0$ is the lookahead window. Thus $L=0$ recovers the standard online model, while $L\to\infty$ recovers the offline optimum. Throughout this section, we assume that the predictions are perfect: the scheduler observes the true arrivals in every lookahead window.

\subsubsection{The Uncapacitated Setting}

We begin with the uncapacitated case, where $C=\infty$. In this setting, finite lookahead can be extremely powerful. Intuitively, because batch capacity is unbounded, an online algorithm can transition from one queue state to a more desirable one while incurring only $O(s)$ additional delay per job. Thus, after an early suboptimal decision, an adversary cannot force the algorithm to remain in a bad state indefinitely. To fully exploit this flexibility, however, the algorithm must use lookahead carefully so as to avoid repeatedly making suboptimal decisions.

We propose an algorithm, called $\advname$ (\Cref{alg:adv_uncap_lookahead}), that partitions the time horizon into phases of length $L+2s$. The start times of these phases are shifted by a random amount chosen uniformly from a short interval. Each phase is further divided into active jobs, which arrive during the first $L$ time units, and ignored jobs, which arrive during the final $2s$ time units. At the beginning of each phase, $\advname$ uses its lookahead to compute an offline optimal schedule for the active jobs and then executes that schedule; ignored jobs are scheduled with the first batch of the following phase. Since batch capacity is unbounded, this schedule necessarily completes within the same phase: in the worst case, the algorithm starts a batch just before the last active arrival, processes it for $s$ time units, and then processes any remaining active jobs together in one additional batch of length $s$. Consequently, the schedules executed in different phases do not interfere with one another.

\begin{comment}
    
\end{comment}
\begin{algorithm}[h]
\caption{$\advname(L)$}
\label{alg:adv_uncap_lookahead}
\begin{algorithmic}[1]
%\Require Lookahead window $L>0$
\State Sample an offset $\tau\sim \mathrm{Unif}[0,L+2s]$.
\State Define active intervals $A_0:=[0,(\tau-2s)^+]$ and $A_{\ell+1}:=[\tau+\ell(L+2s),\ \tau+\ell(L+2s)+L]$ $\;\forall\ell\ge 0$.
\State Define the ignored intervals
$ I_\ell :=  (\tau-2s + \ell (L+2s), \tau + \ell(L+2s))$ for all $\ell \ge 0$.
   
\For{$\ell=0,1,2,\ldots$}
    \State At the beginning of $A_\ell$, use the $L$-lookahead window to observe all arrivals in $A_\ell$.
    \State Compute an offline-optimal schedule for the sub-instance induced by jobs released in $A_\ell$.
    \State Execute this offline-optimal schedule, without changing any of its batching times or sizes.
    \State Process each ignored job in the first batch after its arrival defined by these schedules. If no batch starts within the first $s$ units of the next active interval, start a fresh one at its beginning.
\EndFor
\end{algorithmic}
\end{algorithm}

Hence, our algorithm processes all active jobs optimally. Moreover, because of the random phase shift, the probability that any given job is ignored, and hence not processed optimally, is precisely $2s/(L+2s)$. Thus, if we can obtain a constant approximation for the ignored jobs without interfering with the schedule already computed for the active jobs, then the competitive ratio of $\advname$ is $1+O(s/L)$. This is indeed the case. Each ignored job waits at most $2s$ time units for its ignored window to end and at most an additional $s$ time units for the first active batch to start; if no such batch starts in time, the algorithm inserts one. Thus each ignored job incurs flow-time at most $4s$. Since the offline optimum also incurs flow-time at least $s$ for each ignored job, the desired bound follows. This yields the following guarantee, proved formally in Appendix~\ref{appendix:adv-lookahead-upper}.

\begin{restatable}{theorem}{randomshift}
\label{thm:adv-lookahead-uncap-upper}
Assume $C=\infty$. For every $L\ge 0$, $\advname(L)$ is $\alpha(L)$-competitive for
\[\alpha(L):=1 + \frac{6s}{2s+L}.\]
\end{restatable}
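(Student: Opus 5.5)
We split the cost of $\advname(L)$ into the cost of active jobs and the cost of ignored jobs, and compare each with $\opt(\inst)$. Fix the offset $\tau$. Partition $\inst$ into the sub-instances $\inst_{A_0},\inst_{A_1},\ldots$ of jobs released in the active intervals and the set $\inst_{\ign}$ of ignored jobs.

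\textbf{Step 1 (phases do not interfere).} Since $C=\infty$, we may take the optimal schedule for each $\inst_{A_\ell}$ to be work-conserving with the structure described before \Cref{alg:adv_uncap_lookahead}. By \Cref{lemma:opt-makespan}, the optimal schedule for $\inst_{A_\ell}$ completes within $2s$ of the schedule that starts one batch at the last arrival in $A_\ell$. That schedule finishes by $\sup A_\ell+s$, so the optimum finishes by $\sup A_\ell+3s$. This overlaps the next active interval, so we need a sharper bound. With unbounded capacity, an optimal schedule never starts a batch later than the last arrival in $A_\ell$ while jobs remain unserved: otherwise we could move that final batch earlier to the last arrival and strictly reduce flow time. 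Hence its last batch starts by $\sup A_\ell$ and ends by $\sup A_\ell+s\le\inf A_{\ell+1}$. The server is therefore free when phase $\ell+1$ begins. Ignored jobs only join batches that the next phase starts anyway, or a fresh batch at $\inf A_{\ell+1}$ if no batch of that phase starts within its first $s$ time units. Since capacity is unbounded, this adds no delay to active jobs.

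The one thing to check is that inserting a fresh batch at $\inf A_{\ell+1}$ does not delay phase $\ell+1$'s own schedule. That schedule's first batch starts at least $s$ after $\inf A_{\ell+1}$, exactly when the inserted batch of length $s$ has finished. So active jobs incur exactly $\sum_\ell \opt(\inst_{A_\ell})$. By \Cref{lemma:superadditive}, this is at most $\opt(\inst)$ pointwise in $\tau$.

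\textbf{Step 2 (ignored jobs).} An ignored job waits at most $2s$ until its ignored interval ends. It then waits at most $s$ until a batch of the next phase starts, either an existing one or the inserted one, and is processed in another $s$. Its flow time is at most $4s$. Every job in $\opt$ has flow time at least $s$. Hence the ignored jobs cost at most $4s\,|\inst_{\ign}|$.

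\textbf{Step 3 (averaging over $\tau$).} Since $\tau$ is uniform on $[0,L+2s]$ and phases have period $L+2s$, each job $j$ is ignored with probability at most $2s/(L+2s)$. This holds for jobs near time $0$ as well, since $A_0$ only enlarges the active set. Therefore
\[
\expect{|\inst_{\ign}|}\le \frac{2s}{L+2s}\,n\le \frac{2s}{L+2s}\cdot\frac{\opt(\inst)}{s}.
\]
Combining the three steps gives
\[
\expect{\advname(\inst)}\le \opt(\inst)+4s\cdot\frac{2n}{L+2s},
\]
which yields the ratio $1+\frac{8s}{L+2s}$.

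To reach the stated constant $6$, we tighten the per-job bound in Step 2. We charge each ignored job its excess over its own contribution to $\opt$, which is at least $s$, giving an excess of at most $3s$ rather than $4s$. We cannot directly subtract that contribution from the active-job term, however, so we bound the cost more carefully. Ignored jobs are removed from the sub-instances, so $\sum_\ell\opt(\inst_{A_\ell})\le\opt(\inst\setminus\inst_{\ign})$. Restricting $\opt(\inst)$'s schedule to $\inst\setminus\inst_{\ign}$ drops at least $s|\inst_{\ign}|$, so $\opt(\inst\setminus\inst_{\ign})\le \opt(\inst)-s|\inst_{\ign}|$. The total cost is then at most $\opt(\inst)+3s|\inst_{\ign}|$, and taking expectations gives the ratio $1+\frac{6s}{L+2s}$.

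\textbf{Main obstacle.} The delicate step is Step 1: showing that the per-phase optimal schedules, together with the inserted batches, never overlap. This requires the structural claim that with $C=\infty$ the final batch of an optimal schedule starts by the last arrival.
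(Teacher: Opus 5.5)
Your decomposition is the same as the paper's. Active jobs pay exactly $\sum_\ell \opt(\inst_{A_\ell})$, and each ignored job pays at most $4s$. The saving of $s|\inst_{\ign}|$ you get by restricting $\opt(\inst)$'s schedule is the same one the paper gets from \Cref{lemma:superadditive} with $\opt(\inst_{\ign})\ge s|\inst_{\ign}|$. Averaging over the shift then gives $6s/(L+2s)$.

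The problem is Step 1, at exactly the point you flag as the crux. The claim that, for $C=\infty$, an optimal schedule starts its final batch no later than the last arrival $r$ is false. Your exchange argument (``move the final batch earlier to $r$'') ignores that the machine may still be busy at time $r$. Take two jobs at time $0$ and one at time $s/2$. Batching the two at $0$ and the third at $s$ costs $7s/2$. The only way to finish by $r+s=3s/2$ is a single batch at $s/2$, which costs $4s$. So the unique optimum starts its last batch at $s>r$ and ends at $r+3s/2$. With many jobs at $0$ and one at a small $\varepsilon>0$, the optimum ends at $r+2s-\varepsilon$. Hence ``ends by $\sup A_\ell+s$'' cannot be established, and your argument as written does not show that phases are disjoint.

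The correct replacement, which the paper uses, is a $2s$ bound. The batch in progress at time $r$ (if any) ends by $r+s$. At that moment every job has arrived and the machine is free, so an optimal schedule processes all remaining jobs in one immediate batch and finishes by $r+2s$. This suffices because the ignored window between $\sup A_\ell$ and $\inf A_{\ell+1}$ has length exactly $2s$; the algorithm uses that gap with zero slack. So phase $\ell$ finishes by $\inf A_{\ell+1}$. Your check that an inserted stitching batch at $\inf A_{\ell+1}$ ends before phase $\ell+1$'s first batch then goes through unchanged.

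With this fix the proof is correct and coincides with the paper's. As a minor point, your intermediate bound should read $4s\cdot\frac{2sn}{L+2s}$.
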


\Cref{thm:adv-lookahead-uncap-upper} demonstrates that in the uncapacitated setting, perfect lookahead suffices to overcome the information-theoretic lower bounds of~\Cref{thm:adv-online-bounds}. In particular, it shows that for any $\rho>0$, one can achieve a competitive ratio of at most $1+\rho$ provided a lookahead window of $L \approx s/\rho$. We complement this result by showing that a polynomial dependency between the lookahead window and the competitive ratio is necessary, via the following lower bound, proven in Appendix~\ref{appendix:adv-uncap-lookahead-lb}. 

\begin{restatable}{theorem}{advuncaplb}
\label{thm:adv-lookahead-uncap-lower}
Assume $C=\infty$ and $L\ge s$. For every $L$-lookahead scheduler $\sched$, there exists an adversarial instance $\inst$ such that $\sched(\inst)\ge \beta(L)\cdot \opt(\inst)$  for
$$
\beta(L):= 1 + \Omega\bigg( \frac{s^2}{L^2}\bigg).
$$
\end{restatable}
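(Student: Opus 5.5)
The plan is to prove \Cref{thm:adv-lookahead-uncap-lower} with Yao's principle. I would build a random instance from two continuations of a common prefix and show that no deterministic $L$-lookahead scheduler can do well on both. The two continuations agree on $[0,L]$ and differ only after time $L$. I then show that the flow-optimal schedules for them must start their first batch at times differing by an amount $\delta$. Any single choice of the first batch time therefore pays an additive excess of order $\delta^2$ per unit of ``curvature'' on at least one continuation, which occurs with constant probability. To avoid lower-order effects I would concatenate many independent copies of this gadget, separated by idle gaps longer than $L+2s$. Superadditivity (\Cref{lemma:superadditive}) and the gaps make $\opt$ additive across gadgets, and the lookahead scheduler gets no information about one gadget from another.

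The gadget consists of a dense stream of jobs, say $m$ jobs spread uniformly at spacing $\eta\ll s$, over an interval $[0,T]$ with $T=\Theta(L)$. After that comes either nothing (continuation $A$) or a short extra burst placed just beyond time $L$ (continuation $B$). With $C=\infty$, the offline optimum on a uniform stream of horizon $T$ partitions $[0,T]$ into batches of roughly equal length $\approx s$. Its cost is a smooth function of the batch boundaries, and the first boundary $t_1$ is pinned by the requirement that the last batch ends exactly at the horizon. I would choose the burst in $B$ so that the effective horizon shifts by $\Theta(s)$. Spread over the $\Theta(L/s)$ batches, this moves $t_1$ by $\delta=\Theta(s\cdot s/L)$.

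The central calculation is local: perturbing the batch boundaries by $\delta$ away from the optimum costs, to second order, an additional $\Theta(\delta^2\cdot m/s)$ in flow time. I expect this to hold because the cost is quadratic in the batch lengths and strictly convex near the equal-length partition. The deterministic scheduler has seen only the common prefix when it commits to $t_1$. It is therefore at distance at least $\delta/2$ from one of the two optimal values, and with probability $1/2$ it pays an excess of order $(m/s)\cdot s^4/L^2$. After time $L$ it can re-optimize, but the first batch is already fixed. Meanwhile $\opt$ on the gadget is $\Theta(m s)$, because each job waits $\Theta(s)$. Normalizing by the density $m/(L/\eta)$, the ratio of excess to $\opt$ is $\Theta(s^2/L^2)$, as required. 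The assumption $L\ge s$ guarantees that the gadget contains at least a constant number of batches, so the perturbation argument applies.

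I expect the main obstacle to be the second-order claim in the presence of recourse. The scheduler may change every later boundary once it sees the burst, so I must show that a fixed first boundary $t_1$ forces $\Theta(\delta^2 m/s)$ excess even after all subsequent boundaries are re-optimized. I would handle this by computing the optimal cost exactly as a function of $t_1$ for the uniform stream, treated as a fluid with quadratic cost per batch. I would then check that the minimum over the remaining boundaries, given $t_1$, is a strictly convex function of $t_1$ with curvature $\Theta(m/(sL))\cdot L$. A second obstacle is rounding: the discrete spacing $\eta$ must be small enough that discretization errors are $o(\delta^2 m/s)$. I would also use \Cref{lemma:opt-makespan}, together with the idle gaps, to make sure that spillover from one gadget cannot leak into the next.
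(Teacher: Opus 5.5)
\textbf{There is a genuine gap: the load-bearing second-order lemma is false.} Your plan rests on a wrong picture of the offline optimum on a uniform stream. With $\capac=\infty$ and density $\rho$, a batch covering an interval of length $g$ costs $\rho g^2/2$ in waiting. Shorter intervals are therefore always better, and the constraint $t_{i+1}\ge t_i+s$ is active for every batch after the first. The optimum keeps the server continuously busy, with batches at $T,T-s,T-2s,\dots$, so its first batch is at $t_1=r:=T\bmod s$.

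This has two consequences.
\begin{itemize}
\item \emph{No spreading.} The slack is not spread over the $\Theta(L/s)$ batches. Shifting the effective horizon by $\Delta<s$ moves $t_1$ by $\Delta$ (mod $s$), not by $\Theta(s\Delta/L)$.
\item \emph{No quadratic minimum.} The optimum lies on the boundary of the feasible region. So the cost, re-optimized over all later boundaries, is not strictly convex in $t_1$. In the fluid limit it is kinked: delaying $t_1$ by $d\in[0,s]$ costs exactly $\rho r d$, and advancing it by $d\in[0,r]$ costs exactly $\rho(s-r)d$.
\end{itemize}
Hence the curvature statement you intend to verify, with excess $\Theta(\delta^2 m/s)$ and $\delta=\Theta(s^2/L)$, does not hold, and the numbers built on it do not materialize.

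Your own equal-gap model does not give $s^2/L^2$ either. With the horizon fixed and $n=\Theta(L/s)$ batches, displacing $g_1$ by $\delta$ moves each later gap by only about $\delta/n$. The re-optimized excess is then about $\rho\delta^2/2$ with $\rho\approx m/L$, not $(m/s)\delta^2$. That yields only $s^3/L^3$.

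\textbf{There is also an information problem.} At time $t$ the scheduler sees every arrival up to $t+L$. A burst ``just beyond time $L$'' is therefore revealed almost immediately, before a first batch at $t_1$ of order $s$. Your remark that the scheduler can re-optimize ``after time $L$'' misreads the model. The distinguishing arrivals must lie beyond $L$ plus the latest first-batch time you want to keep ambiguous. The paper places them at $ks+\epsilon$ and makes the decision deadline $\epsilon$.

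\textbf{The paper's route.} The paper avoids fluid computations entirely. It uses two sparse instances: single jobs at $0,\tau,s,\dots,(k-1)s$ with $\epsilon=s/(4k)$ and $\tau=3s/(4k)$, and optionally $k$ extra jobs at $ks+\epsilon$. A case analysis on the first batch time gives the bound. Starting before $\epsilon$ costs about $s/4$ extra on $\inst_1$. Not starting costs at least $2\tau$ extra on $\inst_2$. Both are measured against $\opt=\Theta(ks)$.

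\textbf{Salvaging your gadget.} If you want to keep a dense stream, you must redo the analysis with the kinked cost. A phase error of order $s$ then costs order $\rho s^2$ directly, with no second-order effect. The burst must arrive while the alternative phase keeps the server busy, so that it forces re-phasing, and it must be revealed only after the first batch. You would also need to account for re-phasing by inserted idle time once the burst is seen. Done this way, the gadget would plausibly give an even stronger $1+\Omega(s/L)$ bound, but that is a different argument from the one you proposed.
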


\subsubsection{The Capacitated Setting}
We now turn to the capacitated setting, where each batch may contain at most $C$ jobs. The analysis of $\advname$ already indicates why capacity changes the picture. In the uncapacitated case, the scheduler can optimize over the jobs visible in its lookahead window, and any jobs left between consecutive windows can later be absorbed into a single large stitching batch. Thus boundary mistakes can be repaired locally. With finite capacity, this repair mechanism breaks down: once the scheduler commits capacity to a batch, it may be impossible to undo that decision when additional jobs arrive just beyond the lookahead horizon. Recovering from such a mistake may require several future batches, creating persistent flow-time loss. As it turns out, this obstruction is not an artifact of $\advname$; it is inherent to the capacitated problem. Specifically, we show that even perfect finite lookahead does not eliminate the online--offline separation. 

\begin{restatable}{theorem}{advcaplb}\label{thm:adv-lookahead-cap-lb}
    Fix any finite capacity $C\ge 2$ and any finite lookahead $L\ge 0$. For any $L$-lookahead scheduler $\sched$ and any $\epsilon>0$, there exists an adversarial instance $\inst$ such that $\sched(\inst)\ge (\alpha_C - \epsilon)\cdot \opt(\inst)$ where
    \[
        \alpha_C
        =
        1+\frac{\lfloor C/2\rfloor}{C}
        \quad\text{if $\;\sched$ is deterministic,}
    \]
    and
    \[
        \alpha_C
        =
        1+\frac{\lfloor C/2\rfloor\lceil C/2\rceil}{C^2}
        \quad\text{if $\;\sched$ is randomized.}
    \]
    In particular, when $C$ is even, these lower bounds are $3/2-\epsilon$ for deterministic schedulers and $5/4-\epsilon$ for randomized schedulers.
%    $\alpha=1.5$ if $\sched$ is deterministic and $\alpha=1.25$ is $\sched$ randomized.
\end{restatable}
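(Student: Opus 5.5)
The plan is a ``mistake-then-saturate'' construction. A short initial gadget forces the scheduler into a local decision that its $L$-window cannot resolve. Immediately afterwards, the adversary keeps the server fully loaded for a long horizon $T$, so any leftover backlog persists and is charged once per period of length $s$. Sending $T\to\infty$ makes all finite gadget costs negligible, and the ratio tends to the per-period ratio. We will use \Cref{lemma:superadditive} to lower bound the online cost phase by phase, and \Cref{lemma:opt-makespan} together with work conservation (\Cref{def:work-conserving}) to argue that the optimum stays in phase with the saturating stream.

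\emph{Deterministic case.} Let $a=\lceil C/2\rceil$ and $b=\lfloor C/2\rfloor$, so $a+b=C$. At time $0$ release $a$ jobs, and fix a large threshold $M$.
\begin{itemize}
\item If $\sched$ does not start a batch before time $M$ while seeing no further arrivals in its window, the adversary releases nothing more. Then $\sched(\inst)\ge aM$ while $\opt(\inst)=as$, which exceeds any target ratio.
\item Otherwise let $t\le M$ be the start time of the first batch. At time $t$ the scheduler has committed to a batch of at most $a<C$ jobs, having seen arrivals only up to $t+L$. The adversary then releases $b$ jobs at time $t+L+\delta$, and from time $t+L+\delta+s$ onward releases $C$ jobs every $s$ time units for $T/s$ periods.
\end{itemize}
The optimum can wait, merge the $a$ early jobs with the $b$ new ones into one full batch at $t+L+\delta$, and then serve each group of $C$ arrivals in a full batch with flow time exactly $s$. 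This gives $\opt\le Cs\cdot T/s+O(C(M+L+s))$.

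For $\sched$, count jobs: after $t$ it has at least $b$ pending jobs beyond what the optimum has. Its server throughput is at most $C$ per $s$ and arrivals are exactly $C$ per $s$, so this excess backlog can never be cleared during saturation. A cumulative queue-length argument (queue integrated over time is at least arrivals minus maximal service) then shows that each period costs $\sched$ at least $Cs+bs$. Hence the ratio is at least $1+b/C-o(1)$ as $T\to\infty$. This matches $\alpha_C$ up to $\epsilon$; if the gadget is taken with the roles of $a$ and $b$ swapped, the surviving backlog is $\lfloor C/2\rfloor$.

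\emph{Randomized case.} Apply Yao's principle. The adversary randomizes between two continuations after the initial group:
\begin{itemize}
\item with probability $p$, ``more jobs come'': $b$ jobs just beyond the window, then saturation;
\item with probability $1-p$, ``nothing comes for a while'': an idle gap, followed by a saturation phase that punishes having waited.
\end{itemize}
The punishment for waiting must also be made to scale with $T$. To do this, we would arrange that a scheduler which waited enters the saturation phase with an excess backlog of $a$ jobs, since it failed to clear the $a$ initial jobs before the stream began. The stream onset is placed a fixed time after the last possible ``early start'' moment, so that the scheduler cannot distinguish the two continuations when it must commit.

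Any deterministic scheduler either commits early, losing $b s$ per period with probability $p$, or waits, losing $a s$ per period with probability $1-p$. Choosing $p=a/C$ equalizes the two losses at $ab\,s/C$ per period, against an optimal cost of $Cs$ per period. This gives $1+ab/C^2=1+\lfloor C/2\rfloor\lceil C/2\rceil/C^2$.

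The main obstacle is making the randomized gadget airtight. The scheduler has a continuum of commitment times and may split the initial jobs across batches, so the ``early/wait'' dichotomy has to be enforced by discretizing the candidate release times into slots spaced more than $L+s$ apart. The backlog accounting must also be shown to hold for arbitrary, including non-work-conserving, online schedules. I would handle the latter with the counting bound on cumulative queue length, which does not depend on how batches are formed.
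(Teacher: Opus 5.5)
Your deterministic gadget leaves slack between the commitment and the hidden continuation, and once $L\ge s$ that slack absorbs the mistake entirely. The batch started at $t$ finishes at $t+s\le t+L<t+L+\delta$, so $\sched$ is idle when the $b$ hidden jobs arrive. It serves them immediately and finishes at $t+L+\delta+s$, exactly when the first burst arrives; it then serves every burst on arrival. Its cost is $a(t+s)+bs+CT\le\opt(\inst)+at$, so the ratio tends to $1$ as $T\to\infty$. This cost even beats your proposed offline schedule, which holds the $a$ early jobs until $t+L+\delta$.

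So the claim that $\sched$ enters saturation with $b$ excess pending jobs is false. The extra partial batch only consumes server time, and server time is free on $[t+s,t+L+\delta]$. Even when $L+\delta<s$, the induced lag is only $s-L-\delta$. Moreover, $\sched$ may clear its backlog by serving it at once and delaying every later burst, so the per-period loss is $\min\{bs,\,C(s-L-\delta)\}$, not $bs$. Your queue-counting step ignores this second repair option.

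The randomized sketch has the same flaw. A scheduler that waited sees the stream at least $L\ge s$ time units before it begins, so it can clear its $a$ jobs in one batch just before onset. Waiting therefore costs a one-time constant rather than $as$ per period. Spacing the slots more than $L+s$ apart only adds more of this slack.

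What is missing is a visible, slack-free saturated prefix covering the whole lookahead window, so that the initial decision propagates as a persistent phase offset. The paper's two instances share a prefix: $q$ jobs at $0$, $C-q$ at $\delta s$, and $C$ jobs at each of $s,2s,\ldots,ks$ with $k=\lceil L/s\rceil$. At time $\delta s$ the window ends at $\delta s+L<(k+1)s$, before the continuations diverge. One continuation has $q$ jobs at $(k+1)s$ followed by bursts at $(k+2)s,(k+3)s,\ldots$; the other has bursts at $(k+1+\delta)s,(k+2+\delta)s,\ldots$.

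Starting a $q$-batch at $0$ carries a $(C-q)$-job leftover through the prefix. Waiting for the full batch at $\delta s$ instead keeps every batch $\delta s$ late. Each choice is mismatched with one continuation: the early start costs $\min\{(C-q)s,C(1-\delta)s\}$ per burst, and waiting costs $\min\{qs,C\delta s\}$ per burst. Set $q=\lfloor C/2\rfloor$ and $\delta=q/C$. An adversary who picks the instance after seeing the deterministic decision then obtains $1+\lfloor C/2\rfloor/C$. Against a randomized scheduler, $\max\{(1-p)x,py\}\ge xy/(x+y)=\lfloor C/2\rfloor\lceil C/2\rceil/C^2$ with $x=\lfloor C/2\rfloor/C$ and $y=\lceil C/2\rceil/C$.
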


We include a proof sketch for the theorem here. The formal proof is deferred to Appendix~\ref{sec:adv-cap-lb}.

% Preamble:
% \usepackage{tikz}
% \usepackage{graphicx}
% \usetikzlibrary{decorations.pathreplacing}

\begin{figure}[h]
\centering
\resizebox{0.97\linewidth}{!}{%
\begin{tikzpicture}[
    x=1cm,
    y=1cm,
    tick/.style={black, line width=0.8pt},
    axis/.style={black, line width=0.8pt},
    arr/.style={font=\small},
    timelab/.style={font=\small},
    title/.style={font=\large},
    prefixstyle/.style={draw=red!70, text=red!70},
    early/.style={draw=green!55!black, text=green!55!black},
    late/.style={draw=blue!70, text=blue!70},
    brace/.style={
        decorate,
        decoration={brace, amplitude=5pt},
        line width=0.8pt
    }
]

% Optional light grid, to mimic the hand-drawn background.
% Comment out these two lines if you want a clean white background.
%\draw[step=0.75cm, very thin, blue!12] (-0.4,-4.6) grid (13.7,1.9);

% Common x-coordinates
\coordinate (x0)   at (1.0,0);
\coordinate (xd)   at (1.6,0);
\coordinate (xs)   at (2.4,0);
\coordinate (x2s)  at (3.5,0);
\coordinate (xmid) at (5.0,0);
\coordinate (xks)  at (7.0,0);

% Top timeline coordinates
\coordinate (xkp1) at (8.1,0);
\coordinate (xkp2) at (9.0,0);
\coordinate (xkM)  at (11.4,0);
\coordinate (xkM1) at (12.6,0);

% Bottom timeline coordinates
\coordinate (xk1d) at (8.4,0);
\coordinate (xk2d) at (9.3,0);
\coordinate (xkMd) at (12.0,0);

% y-coordinates
\def\yE{0}
\def\yL{-3.0}

%%%%%%%%%%%%%%%%%%%%%%%%%%%%%%%%%%%%%%%%%%%%%%%%%%%%%%%%%%%%
% Early instance
%%%%%%%%%%%%%%%%%%%%%%%%%%%%%%%%%%%%%%%%%%%%%%%%%%%%%%%%%%%%

\node[title, anchor=west] at (-1.0,{\yE+1.75})
    {\small The early instance $\inst_{\mathrm E}$:};

% Axis
\draw[axis] (0.7,\yE) -- (13.1,\yE);

% Left labels
\node[arr, red!70, anchor=east] at (0.65,{\yE+0.36}) {arrivals};
\node[timelab, anchor=east] at (0.65,{\yE-0.35}) {time};

% Ticks and time labels
\foreach \x/\lab in {
    1.0/{\scriptsize $0$},
    1.3/{\scriptsize $\delta s$},
    2.0/{\scriptsize $s$},
    3.0/{\scriptsize $2s$},
    7.0/{\scriptsize $ks$},
    8.0/{\scriptsize $(k+1)s$},
    9.0/{\scriptsize $(k+2)s$},
    11.0/{\tiny $(k+M)s$},
    12.1/{\tiny $(k+M+1)s$}
}{
    \draw[tick] (\x,{\yE-0.12}) -- (\x,{\yE+0.12});
    \node[timelab, below=3pt] at (\x,\yE) {\lab};
}

% Ellipses
\node at (5.0,{\yE+0.22}) {\large $\cdots$};
\node at (10.1,{\yE+0.22}) {\large $\cdots$};

% Arrival labels: prefix
\node[arr, red!70] at (1.0,{\yE+0.43}) {\tiny $q\quad$};
\node[arr, red!70] at (1.3,{\yE+0.43}) {\tiny $C-q$};
\node[arr, red!70] at (2.0,{\yE+0.43}) {\scriptsize $C$};
\node[arr, red!70] at (3.0,{\yE+0.43}) {\scriptsize $C$};
\node[arr, red!70] at (7.0,{\yE+0.43}) {\scriptsize $C$};

% Arrival labels: early continuation
\node[arr, green!55!black] at (8.0,{\yE+0.43}) {\scriptsize $q$};
\node[arr, green!55!black] at (9.0,{\yE+0.43}) {\scriptsize $C$};
\node[arr, green!55!black] at (11.0,{\yE+0.43}) {\scriptsize $C$};
\node[arr, green!55!black] at (12.0,{\yE+0.43}) {\scriptsize $C$};

% Braces
\draw[brace, prefixstyle]
    (1.0,{\yE+0.82}) -- (7.0,{\yE+0.82})
    node[midway, above=6pt] {$\mathcal I_{\mathrm{prefix}}$};

\draw[brace, early]
    (8.0,{\yE+0.82}) -- (12.0,{\yE+0.82})
    node[midway, above=6pt] {$\mathcal I_{\mathrm{early}}$};

%%%%%%%%%%%%%%%%%%%%%%%%%%%%%%%%%%%%%%%%%%%%%%%%%%%%%%%%%%%%
% Late instance
%%%%%%%%%%%%%%%%%%%%%%%%%%%%%%%%%%%%%%%%%%%%%%%%%%%%%%%%%%%%

\node[title, anchor=west] at (-1.0,{\yL+1.75})
    {\small The late instance $\inst_{\mathrm L}$:};

% Axis
\draw[axis] (0.7,\yL) -- (13.1,\yL);

% Left labels
\node[arr, red!70, anchor=east] at (0.65,{\yL+0.36}) {arrivals};
\node[timelab, anchor=east] at (0.65,{\yL-0.35}) {time};

% Ticks and time labels
\foreach \x/\lab in {
    1.0/{\scriptsize $0$},
    1.3/{\scriptsize $\delta s$},
    2.0/{\scriptsize $s$},
    3.0/{\scriptsize $2s$},
    7.0/{\scriptsize $ks$},
    8.3/{\scriptsize $(k+1+\delta)s$},
    11.3/{\scriptsize $(k+M+\delta)s$}
}{
    \draw[tick] (\x,{\yL-0.12}) -- (\x,{\yL+0.12});
    \node[timelab, below=3pt] at (\x,\yL) {\lab};
}

% Additional unlabeled burst tick
\draw[tick] (9.3,{\yL-0.12}) -- (9.3,{\yL+0.12});

% Ellipses
\node at (5.0,{\yL+0.22}) {\large $\cdots$};
\node at (10.4,{\yL+0.22}) {\large $\cdots$};

% Arrival labels: prefix
\node[arr, red!70] at (1.0,{\yL+0.43}) {\tiny $q\quad$};
\node[arr, red!70] at (1.3,{\yL+0.43}) {\tiny $C-q$};
\node[arr, red!70] at (2.0,{\yL+0.43}) {\scriptsize $C$};
\node[arr, red!70] at (3.0,{\yL+0.43}) {\scriptsize $C$};
\node[arr, red!70] at (7.0,{\yL+0.43}) {\scriptsize $C$};

% Arrival labels: late continuation
\node[arr, blue!70] at (8.3,{\yL+0.43}) {\scriptsize $C$};
\node[arr, blue!70] at (9.3,{\yL+0.43}) {\scriptsize $C$};
\node[arr, blue!70] at (11.3,{\yL+0.43}) {\scriptsize $C$};

% Braces
\draw[brace, prefixstyle]
    (1.0,{\yL+0.82}) -- (7.0,{\yL+0.82})
    node[midway, above=6pt] {$\mathcal I_{\mathrm{prefix}}$};

\draw[brace, late]
    (8.3,{\yL+0.82}) -- (11.3,{\yL+0.82})
    node[midway, above=6pt] {$\mathcal I_{\mathrm{late}}$};

\end{tikzpicture}%
}
\caption{The two instances in the construction of~\Cref{thm:adv-lookahead-cap-lb}.}
\label{fig:early-late-instances}
\end{figure}

\begin{proofsketch}
Fix $k=\lceil L/s\rceil$ and consider the two instances illustrated in
\Cref{fig:early-late-instances}. The two instances share a common prefix:
$q$ jobs arrive at time $0$, $C-q$ jobs arrive at time $\delta s$, and then
$C$ jobs arrive at each time $s,2s,\ldots,ks$. They differ only after this
prefix. In the early instance $\inst_{\mathrm E}$, $q$ additional jobs arrive
at time $(k+1)s$, followed by $M$ full bursts of $C$ jobs spaced exactly $s$
time units apart. In the late instance $\inst_{\mathrm L}$, instead, the $M$
full bursts occur at times $(k+1+\delta)s,(k+2+\delta)s,\ldots,
(k+M+\delta)s$. Since $L\le ks$, any $L$-lookahead scheduler cannot distinguish
the two continuations before time $\delta s$. Thus it must decide whether to
start a batch at time $0$, containing only the first $q$ jobs, or to wait until
time $\delta s$ and start a full batch of size $C$, without knowing which
continuation will occur. These two decisions are respectively optimal for
$\inst_{\mathrm E}$ and $\inst_{\mathrm L}$. With the correct initial decision,
the schedule can clear the prefix exactly before the first of the $M$ full
bursts, and then process each burst immediately upon arrival. Hence, as
$M\to\infty$, the optimal cost on either instance is $CsM+o(M)$.

If the scheduler makes the wrong initial decision, it is left permanently out
of phase with the repeated full bursts. On $\inst_{\mathrm E}$, if it waits
until $\delta s$, then after the prefix it has $q$ leftover jobs while the full
bursts arrive every $s$ time units. It can either defer these $q$ jobs until the
end, paying an additional $qs$ per burst asymptotically, or process them
immediately, delaying every burst by $\delta s$ and paying an additional
$C\delta s$ per burst. Thus the asymptotic extra cost is
$\min\{qs,C\delta s\}M$. Similarly, on $\inst_{\mathrm L}$, if the scheduler
starts early at time $0$, it is left with $C-q$ leftover jobs, and the
asymptotic extra cost is
$\min\{(C-q)s,C(1-\delta)s\}M$. Therefore the deterministic lower bound is
\[
    1+\min\left\{\frac qC,\delta,1-\frac qC,1-\delta\right\}.
\]
Choosing $q=\lfloor C/2\rfloor$ and $\delta=q/C$ gives
$1+\lfloor C/2\rfloor/C$, which is $3/2$ when $C$ is even. For randomized
schedulers, let $p$ be the probability of starting early. The adversary
chooses the worse of the two continuations, giving asymptotic ratio at least
\[
    1+\max\left\{(1-p)\min\left\{\frac qC,\delta\right\},
    p\min\left\{1-\frac qC,1-\delta\right\}\right\}.
\]
Optimizing over $p$ and again taking $q=\lfloor C/2\rfloor$ and
$\delta=q/C$ yields
\[
    1+\frac{\lfloor C/2\rfloor\lceil C/2\rceil}{C^2},
\]
which is $5/4$ when $C$ is even. Taking $M$ sufficiently large absorbs the
$o(M)$ terms and gives the claimed lower bounds up to an arbitrary
$\epsilon>0$.
\end{proofsketch}

Thus, in the adversarial model, perfect but finite lookahead does not eliminate the online--offline separation. This is somewhat surprising: in the uncapacitated problem, lookahead allows the scheduler to approach the offline optimum as the window grows, whereas finite capacity makes batching decisions irreversible. Once the scheduler commits capacity before seeing the unseen continuation, the resulting mistake cannot always be repaired. In the next section, we show that this obstruction is specific to the adversarial model: when arrivals are generated by a fixed stochastic process, finite lookahead is sufficient to overcome the online--offline separation even with capacity constraints. Moreover, the convergence to the offline optimal rate is exponentially fast in the length of the lookahead window, in contrast to the polynomial bound of~\Cref{thm:adv-lookahead-uncap-lower}.

\section{The Poisson Batching Model}\label{sec:poisson}

We now study our batching problem under stochastic arrivals. In particular, we assume that jobs arrive according to a Poisson process of rate $\lambda>0$. Equivalently, after any time $t$ and conditional on any history up to time $t$, the next arrival occurs at time $t+T'$, where $T'\sim \mathrm{Exp}(\lambda)$; the arrival process is therefore memoryless. For any integer number $n$, we use $\inst(n)$ to denote the random instance comprised by the first $n$ job arrivals of that Poisson process.

Our objective in this setting is the asymptotic average flow-time per job, which we call the \emph{rate} of a scheduling policy. We establish existence of this rate, together with bounds on the offline-optimal rate, in \Cref{sec:stoch-rate}. In \Cref{sec:stoch-online}, we study online schedulers without lookahead and show that, under Poisson arrivals, it suffices to optimize over a finite class of $C$ threshold policies. We then prove that even the best such online policy has a strictly larger rate than the offline optimum. Finally, in \Cref{sec:stoch-lookahead,sec:stoch-lookahead-lb}, we study schedulers with finite lookahead: we give a policy whose rate converges to the offline-optimal rate exponentially fast in the lookahead window length, and prove a matching lower bound up to constants in the exponent.

\subsection{The rate of a scheduling policy and bounds on the optimal rate}
\label{sec:stoch-rate}

Given a (feasible) scheduling protocol $\sched$ that respects the batching capacity $C$ and the processing time $s$, we use $\sched(n)$ to denote the (random) total flow-time incurred by $\sched$ on the (random) instance $\inst(n)$. By definition, note that $S(n)\ge sn$ with probability $1$ and therefore this becomes unbounded as $n\to\infty$. Hence, we define the rate of scheduler $\sched$ as the (asymptotically) average flow-time per job that it achieves on such Poisson generated instances, whenever that limit exists. Formally:
\[\rate(\sched):= \lim_{n\to\infty}\frac{\expect{\sched(n)}}{n}.\]

For a fixed $n$, let $\sched^*(n)$ denote the offline optimum which observes the entire realization of $\inst(n)$ before constructing a schedule minimizing total flow-time. By definition, for every feasible scheduler $\sched$ and every $n$, we have $\sched(n)\ge\sched^*(n)$ with probability $1$. Consequently, whenever the corresponding limits exist,
$$
\rate(\sched)\ge \rate(\sched^*)
:=
\lim_{n\to\infty}
\frac{\expect{\sched^*(n)}}{n}.
$$

We will not need, and do not claim, that this limit exists for every feasible scheduler.  Instead, we establish rate existence for the schedulers considered in this paper. For the offline optimum $\sched^*$, existence follows from the superadditivity of the optimal flow-time (\Cref{lemma:superadditive}) and Fekete's lemma. For online and lookahead schedulers, we restrict attention to policies that are stationary and regenerative: whenever the system becomes empty, the policy returns to a fixed internal state and no information about future arrivals is carried over to the next busy period. For such schedulers, the rate exists by the standard renewal-reward argument \cite{bremaud1999markov}, possibly with value $+\infty$.
%As we show in~\Cref{thm:poisson-stability}, the rate of the offline optimum scheduler always exists and is finite only when $\lambda s < B$.

%\shuchi{Are the terms ``regenerative'' and ``renewal-reward'' standard for this literature?} \sanjay{yes, I think these are common in queueing literature (I added ref to Bremaud's book).}

%A straightforward application of Fekete's lemma, coupled with the super-additivity of the flow-time objective suffice to show that these limits are well-defined for any feasible scheduler $\sched$ \textcolor{blue}{Actually no, just for the offline}. The goal of this section is to understand whether simple online policies, and later policies with finite lookahead, can match the asymptotic rate of the offline optimum. 

%We first identify the stability threshold. The Poisson process injects jobs at average rate $\lambda$, while the server can process at most $B$ jobs every $s$ time units. Hence the maximum service rate is $B/s$. If $\lambda s\ge B$, arrivals occur at least as fast as the system can process them, and the backlog necessarily grows on the scale relevant for flow-time. In contrast, when $\lambda s<B$, the system has positive slack. In this stable regime, natural non-idling policies that start a full batch whenever at least $B$ jobs are pending (a property we call work-conserving and is satisfied by any potentially optimal policy without loss), have finite asymptotic rate. We formalize this threshold as follows.

Having separated the question of existence from that of finiteness, we next identify the stability threshold. The Poisson process injects jobs at average rate $\lambda$, while the server can process at most $C$ jobs every $s$ time units. Hence the maximum service rate is $C/s$. As is intuitive, when $\lambda s\ge C$, the system has no positive slack and the average flow-time per job diverges. Furthermore, when $\lambda s<C$, the system is stable under work-conserving policies (\Cref{def:work-conserving}). This is formalized below for completeness, and further discussed in Appendix~\ref{app:poisson-offline}.

%\sanjay{Change theorem to proposition. We can use Theorem  for key non-trivial results.}

\begin{restatable}{proposition}{poissonstability}
\label{thm:poisson-stability}
Assume jobs arrive according to a Poisson process of rate $\lambda>0$, batches have capacity $C$, and each batch requires processing time $s$. Then, the limit $\rate(\sched^*)$ always exists and furthermore it is finite if and only if $\lambda s <C$.
\end{restatable}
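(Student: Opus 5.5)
Existence of $\rate(\sched^*)$ comes from superadditivity plus Fekete's lemma. Finiteness is split into two directions: a comparison with the greedy scheduler $\greedy$ when $\lambda s<\capac$, and a counting argument on the number of batches when $\lambda s\ge\capac$.

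\textbf{Existence.} Let $a_n:=\expect{\sched^*(n)}$. Split $\inst(n+m)$ into the first $n$ jobs and the last $m$ jobs. By \Cref{lemma:superadditive}, pathwise $\sched^*(n+m)\ge \opt(\text{first }n)+\opt(\text{last }m)$. By the strong Markov property and time-homogeneity of the Poisson process, the last $m$ arrivals, shifted by the $n$-th arrival time, form a copy of $\inst(m)$. Since flow-time is invariant under time shifts, the expected optimum of that part equals $a_m$. Hence $a_{n+m}\ge a_n+a_m$. Fekete's lemma then gives $\lim a_n/n=\sup_n a_n/n\in(0,\infty]$, so the limit exists, possibly $+\infty$.

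\textbf{Finiteness when $\lambda s<\capac$.} It suffices to bound $\expect{\greedy(n)}\le Kn$, since $\sched^*(n)\le\greedy(n)$. Greedy is work-conserving, and its queue sampled at batch start epochs forms a Markov chain of bulk-service type: with $Q_k$ jobs waiting when a batch starts and $A_k\sim\mathrm{Poisson}(\lambda s)$ arrivals during service,
\[
Q_{k+1}=(Q_k-\capac)^++A_k .
\]
The case where the server idles and restarts on the next arrival is handled as a renewal at an empty system. The drift is $\lambda s-\capac<0$ once $Q_k\ge \capac$. The Lyapunov function $V(q)=q^2$ gives negative drift of order $q$ outside a finite set. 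Foster's criterion together with the standard moment bound (Comparison Theorem, Meyn--Tweedie) gives a finite stationary mean queue length.

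Equivalently, one can argue via regenerative busy periods: their length has finite second moment because the random walk with increments $A_k-\capac$ has negative mean and exponential moments. The renewal-reward theorem then yields a finite per-job flow-time for $\greedy$. This also gives $\limsup \expect{\greedy(n)}/n<\infty$ directly, without appealing to stationarity.

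I expect the main technical care to lie in passing from the stationary or renewal statement to the finite-$n$ expectation bound. I would handle it by bounding $\expect{\greedy(n)}$ through the sum over completed busy cycles covering the first $n$ jobs, using Wald's identity.

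\textbf{Infiniteness when $\lambda s\ge\capac$.} Any schedule needs at least $\lceil j/\capac\rceil$ batches to complete $j$ jobs. The job with index $j$ in completion order therefore finishes no earlier than $s\lceil j/\capac\rceil$ after time $0$, and in fact no earlier than $r_{j_0}+s\lceil (j-j_0)/\capac\rceil$ for any earlier index $j_0$. Summing over jobs gives, pathwise,
\[
\sched^*(n)\ \ge\ \sum_{j=1}^n \Bigl(\frac{s j}{\capac}-r_{(j)}\Bigr)^+ .
\]
Here I take $r_{(j)}=r_j$, which is justified by an exchange argument: assigning completion slots in arrival order is optimal.

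Since $r_j\approx j/\lambda$ and $s/\capac\ge 1/\lambda$, this is only borderline in the critical case $\lambda s=\capac$. There I would instead use fluctuations. By Donsker, $\max_{i\le j}\bigl(si/\capac-r_i\bigr)$ is of order $\sqrt{j}$, which gives $\expect{\sched^*(n)}\gtrsim n^{3/2}$. More precisely, the job $j$ completes no earlier than $\max_{i\le j}\bigl(r_i+s\lceil (j-i+1)/\capac\rceil\bigr)$, whose expected excess over $r_j$ grows like $\sqrt{j}$.

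The critical case is the main obstacle. I would handle it with a reflected-random-walk lower bound: the expected supremum of a mean-zero random walk with nondegenerate variance grows as $\Theta(\sqrt j)$. For $\lambda s>\capac$, the simple linear bound already gives order-$n^2$ growth. In both cases $a_n/n\to\infty$.
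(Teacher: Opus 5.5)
Your proof is correct. The existence part (\Cref{lemma:superadditive}, the strong Markov property at $r_n$, and Fekete's lemma) is exactly the paper's, but both directions of the finiteness claim take a different route.

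\textbf{Stable regime.} For $\lambda s<\capac$ the paper does not analyze greedy. It uses an offline \emph{block scheduler} that serves jobs $\capac(k-1)+1,\ldots,\capac k$ together once the last of them has arrived and the server is free. The block waiting times then obey the Lindley recursion $W_{k+1}=(W_k+s-X_{k+1})^+$ with i.i.d.\ $\mathrm{Erlang}(\capac,\lambda)$ increments $X_k$. Hence $\sup_k\expect{W_k}<\infty$ gives $\expect{\sched^*(n)}=O(n)$ directly at every finite $n$. The same scheduler also yields the sharp $n^2$ constant in the overloaded regime.

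Your greedy route needs the regenerative machinery and a second moment for the busy period, which the paper only develops later (\Cref{lemma:s1-cycle-tail}). In exchange, it shows as a by-product that the online policy $\sched_1^\on$ itself has finite rate. Your Wald step does close the argument:
\begin{itemize}
\item the index $\tau\le n$ of the greedy cycle containing job $n$ is a stopping time for the i.i.d.\ cycle sequence;
\item FIFO greedy on $\inst(n)$ completes jobs $1,\ldots,n$ exactly as in the infinite execution.
\end{itemize}
Together these give $\expect{\greedy(n)}\le n\,\expect{\cycflow(\sched_1^\on)}$.

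\textbf{Unstable regime.} For $\lambda s\ge\capac$, your bound $\sum_{j\le n}(sj/\capac-r_j)^+$ is the job-indexed reading of the paper's workload-area bound $\int_0^\infty(\min\{N(t),n\}-\capac t/s)^+\,dt$. Indexing by jobs makes the truncation to $n$ jobs automatic. The paper, by contrast, needs a Chernoff and Cauchy--Schwarz correction for it in the critical case. In the overloaded case, Jensen immediately gives the nonasymptotic bound $\frac{\lambda s-\capac}{2\lambda\capac}n(n+1)$, in place of the paper's Fatou and strong-law limit.

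\textbf{Two simplifications.} First, the critical case is not borderline for your basic bound. When $\lambda s=\capac$, $r_j\sim\mathrm{Gamma}(j,\lambda)$ has mean exactly $sj/\capac$, so
\[
\expect{(sj/\capac-r_j)^+}=\tfrac12\expect{|r_j-sj/\capac|}\ge\sqrt{j}/(6\lambda).
\]
This follows from H\"older, $\expect{|X|}\ge(\expect{X^2})^{3/2}/(\expect{X^4})^{1/2}$, with $\expect{X^2}=j/\lambda^2$ and $\expect{X^4}=(3j^2+6j)/\lambda^4$ for the centered Gamma. Only Jensen discards this fluctuation. This is the $i=1$ term of your maximum, and it plays the role of the paper's identity $\expect{(N(t)-\mu)^+}=\mu\prob{N(t)=\lfloor\mu\rfloor}\ge\sqrt{\mu}/e^2$. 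So Donsker and the maximal inequality are not needed.

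Second, the exchange argument for $r_{(j)}=r_j$ is unnecessary. Total flow equals $\sum_jT_{(j)}-\sum_jr_j$ under any matching of completion times to jobs. Moreover $T_{(j)}\ge r_j+s$, because at least $j$ jobs are completed by time $T_{(j)}$ and all of them arrived by $T_{(j)}-s$.
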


%%%%%%%Sanjay%%%%%% The proof of~\Cref{thm:poisson-stability} is deferred to Appendix \textcolor{blue}{XX}. In fact, the proof gives a sharper dichotomy. When $\lambda s>C$, the queue length diverges under every feasible scheduler. Moreover, for every ``reasonable'' scheduler $\sched$, the expected total flow-time satisfies $\expect{\sched(n)} = \kappa n^2 \pm o(n^2)$ as $n\to\infty$, where the leading constant $\kappa=\kappa(\lambda,C,s)$ is independent of $\sched$. Thus, in the strictly overloaded regime, all ``reasonable'' schedulers are asymptotically equivalent up to lower-order terms. At the critical point $\lambda s=C$, the system has zero drift. The backlog no longer grows linearly, but its fluctuations are still large enough to make the average flow-time per job diverge. In particular, we show that $\expect{\sched(n)}=\Theta(n^{3/2})$
%%%%%%%%Sanajy%%%%%%% for every feasible work-conserving scheduler $\sched$. Optimizing the leading constant in this critical regime is a finer question, which we leave for future work.

For the remainder of the section, we restrict attention to the stable regime $\lambda s<C$. Our goal is to determine whether simpler policy classes, such as online schedulers and finite-lookahead algorithms, can match or approximate the (offline) optimal asymptotic rate, i.e. $\rate(\sched^*)$.

\subsection{Online Schedulers}
\label{sec:stoch-online}

We next consider schedulers without access to future arrivals. Formally, an \emph{online scheduler} observes the arrival process only causally: at any time $t$, its decision may depend on arrivals and scheduling decisions up to time $t$, but not on arrivals after $t$. Thus, whenever the processor becomes idle, the scheduler must decide whether to start a batch from the currently pending jobs or wait for additional arrivals.

The memoryless property of Poisson arrivals greatly simplifies online decisions. Suppose the processor is idle with $q$ pending jobs. If $q=0$, there is no decision to make. If $q\ge C$, the scheduler can start a full batch of $C$ jobs without loss in flow-time by work-conservation (\Cref{def:work-conserving}). The only nontrivial case is therefore $q\in\{1,2,\ldots,C-1\}$, where the scheduler must decide whether to process the $q$ jobs immediately or wait and reconsider. Crucially, conditional on no new arrival occurring, waiting does not change the stochastic state relevant to future arrivals: by memorylessness, the residual time until the next arrival is still exponential with rate $\lambda$. Thus, after any positive waiting time without an arrival, the scheduler faces the same decision as before, except that the pending jobs have accumulated additional waiting cost. Hence there is no advantage to waiting a deterministic amount of time and then starting the same batch if no arrival occurs.

Moreover, the optimal online decision is monotone in the queue length: if it is optimal to start a batch with $q$ pending jobs, then it is also optimal to start with $q+1$ pending jobs. Randomization also does not improve the optimal rate: because the objective is linear in the schedule's actions and the state evolution is Markovian, an optimal online policy can be chosen deterministically. Thus the relevant online policies are exactly threshold policies: for some $k\in\{1,\ldots,C\}$, wait while fewer than $k$ jobs are pending, and start a batch as soon as at least $k$ jobs are pending. We summarize these findings in the following statement, formally proven in Appendix~\ref{app:poisson-online}.

\begin{restatable}{proposition}{poissonthreshold}\label{thm:online-thresholds}
Assume $0<\lambda s<C$. For any $k\in \{1,2,\ldots, C\}$, let  $\sched^\on_k$ denote the work-conserving online scheduler that only starts a batch of $\min\{q,C\}$ jobs whenever the processor is idle and there are $q\ge k$ pending arrivals\footnote{Note that $\sched_1^\on$ corresponds to the greedy scheduler $\greedy$ analyzed in the adversarial setting.}. Then, one of the schedulers $\{\sched^\on_k\}_{k=1}^C$ achieves optimal rate across all online schedulers $\sched$\footnote{Formally, the left hand optimum is taken over online-policies whose rate exists as an ordinary limit.}, i.e.

\[ \min_{\sched:\text{ online}}\bigg(\rate(\sched)\bigg)= \min_{k\in\{1,\ldots, C\}} \bigg(\rate(\sched_k^\on)\bigg).\]
\end{restatable}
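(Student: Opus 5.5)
We will cast the online problem as an average-cost semi-Markov decision process (SMDP) and show that an optimal stationary deterministic policy exists and is a threshold policy.

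\emph{Step 1: reduction to decision epochs.} The relevant state at a decision epoch is the number $q$ of pending jobs when the processor is idle. Because arrivals are Poisson, the arrival process after any stopping time is independent of the past. Any additional history (waiting times, past batches) only contributes sunk cost. By work conservation (\Cref{def:work-conserving}), which is without loss for flow-time, states $q\ge C$ force a full batch. The decision epochs are therefore the instants when the server becomes idle or an arrival occurs while idle, and at each epoch the policy either starts a batch of $\min\{q,C\}$ jobs or waits. We will argue that waiting a deterministic positive time $\tau$ and then starting is never strictly better than the better of ``start now'' and ``wait until the next arrival''. Conditional on no arrival during $[0,\tau]$, the state at $\tau$ is identical to the state at $0$, but $q\tau$ extra cost has accrued. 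This is the standard exponential-clock argument: a policy that waits then acts is a mixture, and the linear holding cost makes the pure choice at time $0$ weakly better.

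\emph{Step 2: existence of an optimal stationary deterministic policy.} Using flow time $=\int(\text{number in system})\,dt$, the cost per unit time is the queue length, so we have an SMDP with countable state space $\{0,1,\dots\}$ (queue length at idle epochs), two actions, bounded expected sojourn times ($\le s$ or $\mathrm{Exp}(\lambda)$), and linear holding cost. By \Cref{thm:poisson-stability}, $\lambda s<C$ gives stability, and the greedy policy $\sched_1^\on$ has finite rate. We then invoke standard average-cost existence results for countable-state SMDPs (Sennott's conditions, or the vanishing-discount approach~\cite{Sennott1999,Puterman1994}). A reference stable policy with finite rate and a nearly-linear relative value function give the required bounds. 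The conclusion is that there is a deterministic stationary policy, regenerative at empty states, achieving the minimal average cost among all non-anticipative policies, including randomized and history-dependent ones. Relating that optimal average cost to our $\rate$ (cost per job versus cost per time) uses the renewal-reward theorem and the identity that jobs arrive at rate $\lambda$.

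\emph{Step 3: monotonicity (the main obstacle).} We must show that the set of $q\in\{1,\dots,C-1\}$ where ``start'' is optimal is upward closed. Then the policy is $\sched_k^\on$ with $k$ the smallest such $q$, or $k=C$ if there is none. The plan is a coupling/interchange argument on the relative value function $h$. Suppose starting is optimal at $q$ but waiting is strictly better at $q+1$. Couple the two systems on the same arrival path, and let the $q$-system mimic the $(q+1)$-system's policy while treating the extra job as ``phantom''. Comparing the difference of action values shows that $Q(q+1,\text{start})-Q(q+1,\text{wait})\le Q(q,\text{start})-Q(q,\text{wait})$. The intuition is that one extra pending job makes the holding cost of waiting strictly larger, while the benefit of the extra arrival shrinks because fewer slots remain below capacity. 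Formally, we aim to prove that $h$ has increasing differences $h(q+1)-h(q)$ nondecreasing on $\{0,\dots,C\}$, via value iteration preserving this (discrete convexity) property. This convexity proof is the delicate part, because the capacity truncation at $C$ breaks naive convexity. If it fails, we will fall back on a direct sample-path exchange: start the extra job's batch at the same time and bound its delay.

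Combining the three steps, the minimum over online policies equals $\min_k \rate(\sched_k^\on)$.
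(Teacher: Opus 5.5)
Your Steps~1--2 follow the paper: an SMDP reduction on the idle-epoch queue length, then elimination of interior timers. Your mixture remark is exact. By memorylessness, a timer $\tau$ at state $q$ has value $e^{-\lambda\tau}Q(q,\mathrm{start})+(1-e^{-\lambda\tau})Q(q,\mathrm{wait})$, which is the paper's observation that $\partial_\tau H(q,\tau)$ has a $\tau$-independent sign. Two small repairs are needed:
\begin{itemize}
\item This comparison uses continuation values. So apply the SMDP theorem to the model with timer actions $\tau\in[0,\infty]$ first, and discard interior timers afterwards.
\item \Cref{thm:poisson-stability} concerns the offline rate. It does not give finiteness of the greedy rate.
\end{itemize}

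\textbf{The genuine gap is Step~3.} The formal route you commit to is to show that $h$ has nondecreasing increments on $\{0,\dots,\capac\}$, because value iteration preserves discrete convexity. That statement is false.
\begin{itemize}
\item At every state where waiting is optimal, including the forced state $q=0$, the optimality equation reads $h(q)=(q-g)/\lambda+h(q+1)$.
\item Hence $h(q+1)-h(q)=(g-q)/\lambda$ is strictly \emph{decreasing} there, so $h$ is strictly concave on the waiting region.
\item If waiting is optimal at $q=1$, the first two increments are $g/\lambda$ and $(g-1)/\lambda$.
\item If greedy is optimal (with $\capac\ge 2$), they are $g/\lambda=\rho^*$ and $s<\rho^*$, where $\rho^*$ is the optimal online per-job rate.
\end{itemize}
Since convexity survives pointwise limits, no iteration converging to $h$ can preserve it.

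What the argument needs is much weaker than convexity. For $q+1\le\capac$, a batch empties the queue and the next state $A\sim\mathrm{Poisson}(\lambda s)$ does not depend on $q$. So $Q(q+1,\mathrm{start})-Q(q,\mathrm{start})=s$ exactly, while $Q(q+1,\mathrm{wait})-Q(q,\mathrm{wait})=1/\lambda+h(q+2)-h(q+1)$.

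It therefore suffices to know $h(r+1)-h(r)\ge s$ for every $r$. This is exactly what your phantom-job coupling gives: delete one initially pending job from an optimal continuation and keep every batch start time. The deleted job spends at least $s$ in the system, because the server is idle at time $0$.

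Then $D(q):=Q(q,\mathrm{start})-Q(q,\mathrm{wait})$ satisfies $D(q+1)-D(q)\le -1/\lambda<0$. The set where starting is optimal is upward closed, so the optimal policy is some $\sched_k^\on$. This is the paper's proof, stated there for the per-job adjusted cycle cost as $V(r+1)-V(r)\ge s-\rho^*$, which yields $\Delta(q+1)-\Delta(q)\ge 1/\lambda$.

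In your per-time formulation, run the coupling on discounted values and pass to the vanishing-discount limit. With a hitting-time definition of $h$, the $-g\cdot\mathrm{time}$ term can spoil the naive comparison when the deletion makes the system empty earlier. Also, the mechanism is not ``fewer slots below capacity''. It is that the post-batch state does not depend on $q$, while waiting with one more job costs an extra $1/\lambda$ of holding.
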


The main result of this section is a separation between the best rate achievable by any threshold policy and the optimal offline rate in the stable regime. Proving such a separation in the stochastic setting requires a different approach from standard adversarial lower bounds. In adversarial models, lower bounds are typically obtained by allowing the adversary to tailor the future input sequence to the online algorithm’s early mistakes. In the stochastic setting, by contrast, the arrival process is fixed and cannot be adapted. Nevertheless, for each online threshold scheduler $\sched_k^\on$, we identify an input pattern on which the scheduler is suboptimal and show that this pattern occurs with constant probability under Poisson arrivals in any sufficiently long interval. This yields a gap between the rate of that scheduler and the offline optimum. Since only a finite number of threshold policies can be optimal online, these individual gaps imply a general separation. We include a proof sketch for the following theorem here; the formal proof is deferred to Appendix~\ref{app:online_vs_offline}.

%For the main separation result, however, the identity of the best threshold is immaterial. In the stable regime, every threshold policy has finite rate, but every such rate remains separated from the offline optimum by a positive constant.

\begin{restatable}{theorem}{poissonseparation}\label{thm:online-gap}
Assume $0<\lambda s<C$. Then, for every $k\in\{1,2,\ldots,C\}$,
$$
\rate(\sched_k^\on)
\ge
\rate(\sched^*)+\beta,
$$
where the constant $\beta>0$ depends on $\lambda,s$, and $C$.\footnote{Since $s\le \rate(\sched^*)<\infty$ in the stable regime, we can translate additive rate gaps into multiplicative competitive ratio gaps.}
\end{restatable}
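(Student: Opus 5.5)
Fix $k\in\{1,\ldots,C\}$ and write $\sched=\sched_k^\on$. The plan is to show that $\sched$ loses a constant expected amount of flow-time per busy cycle relative to some feasible (clairvoyant) schedule, and then turn this into a rate gap by the renewal-reward theorem.

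\textbf{Regenerative structure.} Under $\sched$ the system regenerates each time it becomes empty. Call such a time the start of a \emph{cycle}. By memorylessness, the cycles are i.i.d. In the stable regime $\lambda s<C$, the cycle length $\cycdur$ and the number of jobs served in a cycle $\cycjobs$ have finite means, since $\sched$ is work-conserving and the standard drift argument behind \Cref{thm:poisson-stability} applies. Let $\cycflow$ be the flow-time accumulated in a cycle. Renewal-reward then gives $\rate(\sched)=\expect{\cycflow}/\expect{\cycjobs}$.

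\textbf{Local improvement on a positive-probability event.} For each $k$ I will exhibit an event $\mathcal{E}_k$ of probability $p_k>0$, determined by the arrivals within the first $O(Cs)$ time of a cycle. On $\mathcal{E}_k$, the jobs of that cycle can be rescheduled, finishing no later than the original cycle end, with flow-time smaller by at least some $\gamma_k>0$.

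\emph{Case $k=1$ (greedy).} Let $\mathcal{E}_1$ be the event that the cycle's first job arrives at time $t$, the second arrives in $(t,t+s/2)$, and no further job arrives in $[t,t+4s]$. Greedy serves the two jobs in two batches, finishing at $t+2s$. Waiting for the second job and serving both together finishes at most at $t+3s/2$, is feasible, and saves at least $s/2$ in flow-time.

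\emph{Case $k\ge2$.} Let $\mathcal{E}_k$ be the event that exactly one job arrives in the first window and no job arrives for a long time $T$. Then $\sched$ holds that job for time $\ge T$, while serving it immediately is feasible, saves about $T-s$, and ends earlier.

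In both cases the ending time does not increase, so the modified schedule for the cycle can be concatenated with the unchanged schedules of later cycles. The resulting global schedule is feasible.

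\textbf{Passing to the offline rate.} Apply the modification in every cycle where $\mathcal{E}_k$ occurs. For the first $n$ arrivals this gives a feasible schedule $\sched'$ with
\[
\expect{\sched'(n)}\le \expect{\sched(n)}-\gamma_k p_k\,\expect{\#\text{cycles among }n\text{ jobs}}+O(1).
\]
The number of cycles is asymptotically $n/\expect{\cycjobs}$. Since $\sched^*(n)\le\sched'(n)$, dividing by $n$ and letting $n\to\infty$ gives
\[
\rate(\sched)\ge\rate(\sched^*)+\beta_k,\qquad \beta_k:=\frac{\gamma_k p_k}{\expect{\cycjobs}}>0.
\]
Taking $\beta=\min_k\beta_k$ over the finitely many thresholds proves \Cref{thm:online-gap}.

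\textbf{Main obstacle.} The delicate step is making the local replacement rigorous. The event must be defined through the cycle's own arrivals so that it has probability bounded below independently of history; this holds because cycles start empty. The modified schedule must also never end later than the original one, so that the concatenation stays feasible. A further subtlety is the boundary effect at the $n$-th arrival, where the last cycle is truncated. I expect this to be handled by absorbing an $O(1)$ expected error, using finiteness of $\expect{\cycjobs^2}$-type moments in the stable regime, or simply by applying the improvement only to complete cycles.
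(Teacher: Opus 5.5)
Your proposal is correct and follows the paper's proof essentially step for step. The shared steps are: regenerative cycles of $\sched_k^\on$; a positive-probability event on which re-timing the first batch lowers the cycle's flow time without delaying its end; concatenation into a feasible schedule that upper-bounds $\sched^*$; and renewal-reward. Your $\min_k\beta_k$ over the finitely many thresholds is a valid substitute for the paper's explicit uniform constant, which it obtains via \Cref{lemma:threshold-cycle-size-bound}.

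There is one arithmetic slip in the case $k=1$. If the second arrival is at $t+x$ with $x<s/2$, the saving is $(3s-x)-(2s+x)=s-2x$, which is not bounded below by $s/2$. Either shrink the window to $(t,t+s/4]$ as the paper does, which gives a saving of at least $s/2$, or argue with the still-positive expected saving on your event.
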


\begin{proof}[Proof sketch]
By the threshold characterization of optimal online schedulers (\Cref{thm:online-thresholds}), it suffices to compare the offline optimum with each threshold scheduler $\sched_k^\on$ for $k\in\{1,\ldots,C\}$. Fix such a threshold $k$. We decompose the execution of $\sched_k^\on$ into regeneration cycles: a cycle begins with an arrival to an empty system and ends at the next time the system becomes empty again. These cycles are i.i.d., and the rate of $\sched_k^\on$ is the expected flow-time per cycle divided by the expected number of jobs per cycle. The subtlety is that the offline optimum need not respect the same regeneration cycles as $\sched_k^\on$. We therefore avoid analyzing the offline optimum cycle-by-cycle directly; instead, on certain good cycles, we locally replace the schedule of $\sched_k^\on$ by a {\em better} schedule that completes no later than the original cycle completion time.
This modified schedule can be concatenated across cycles and is globally feasible. Since the offline optimum is no worse than this modified schedule, any constant expected improvement per cycle yields a gap in rates. 

For each threshold $k$, we identify a good event $\mathcal E_k$ within a single regeneration cycle on which the scheduler mistimes its first batch relative to the nearby arrivals: it either starts before a soon-arriving job can be included or waits even though the initial job could already be processed without delaying the remainder of the cycle (see (\ref{eqn:E_k-sketch}) below for a formal definition). In either case, locally re-timing the first batch yields a feasible replacement schedule that reduces flow time while preserving the cycle completion time. Formally, $\mathcal E_k$ has positive probability $p_k$, and conditioned on $\mathcal E_k$, there exists a feasible replacement schedule for the jobs in the cycle that completes no later than $\sched_k^\on$, so that all subsequent cycles are unaffected, and improves the cycle’s total flow time by at least some $\beta_k>0$. These properties imply

%Thus, for each threshold $k$, it is enough to define an event $\mathcal E_k$ on one regeneration cycle with the following properties. First, $\mathcal E_k$ must have positive probability $p_k$. Second, conditioned on $\mathcal E_k$, there must be a feasible local replacement schedule for the jobs in that cycle. Third, this replacement schedule must complete no later than $\sched_k^\on$ completes the cycle, so that later cycles are unaffected. Finally, it must improve the total flow-time in the cycle by at least some $\beta_k>0$. These properties imply
\[
    \rate(\sched_k^\on)-\rate(\sched^*)
    \ge
    \frac{p_k\beta_k}{\expect{\cycjobs(\sched_k^\on)}}
\]
where $\cycjobs(\sched_k^\on)$ is the number of jobs in one regeneration cycle.
We then show that, for every $k$, one can define the event $\mathcal E_k$ so that
\[
    p_k\beta_k \ge \frac{\lambda s^2e^{-2\lambda s}}{8}\;\;\text{(\Cref{lemma:separation-construction})},
\]
while the expected cycle size of any threshold scheduler is at most
\[
    \expect{\cycjobs(\sched_k^\on)}
    \le
    \frac{C^2}{C-\lambda s}e^{\lambda s}\;\;\text{(\Cref{lemma:threshold-cycle-size-bound})}.
\]

The good event is defined differently for $k=1$ and for $k\ge 2$. For $k=1$, let $\varepsilon=s/4$ and let $\mathcal E_1$ be the event that exactly one additional job arrives in $(0,\varepsilon]$ and no further job arrives in $(\varepsilon,2s]$. On this event, $\sched_1^\on$ processes the first two jobs as two singleton batches, whereas the replacement schedule waits until the second arrival and processes the two jobs together in one batch; this completes before the original cycle ends and improves flow-time by at least $s/2$. 

For $k\ge 2$, let $T_{k-1}$ be the time of the $(k-1)$st arrival after the initial arrival, i.e., the time at which $\sched_k^\on$ first reaches its threshold and starts its first batch. Define
\begin{equation} \label{eqn:E_k-sketch}
    \mathcal E_k := \{T_{k-1}\ge s\}.
\end{equation}
On this event, the replacement schedule processes the initial job immediately as a singleton batch, completing it at time $s$, and then from time $T_{k-1}$ onward mimics $\sched_k^\on$ while omitting that already-completed job. This leaves the cycle completion time unchanged and improves the flow-time by $T_{k-1}$. Combining these improvements with the cycle-size bound gives
\[
    \rate(\sched_k^\on)-\rate(\sched^*)
    \ge
    \frac{\lambda s^2(C-\lambda s)}{8C^2}e^{-3\lambda s},
\]
uniformly over all $k$. Since some threshold scheduler is optimal among online schedulers, the same positive gap separates the optimal online rate from the offline optimum.
\end{proof}

Thus, even the optimal online scheduler is asymptotically suboptimal: causal knowledge of the arrival process is not enough to match the rate of the offline optimum.

\subsection{Schedulers with (Perfect) Lookahead}
\label{sec:stoch-lookahead}

Following the separation between online schedulers and the offline optimum under Poisson arrivals, established in the previous section, we now ask whether finite perfect lookahead can overcome this gap. We answer this question affirmatively: we design an online algorithm whose performance approaches the offline optimum rate exponentially fast in the length of the lookahead window.

A natural first idea is to compute the optimal schedule over the entire lookahead window and then follow it exactly. The difficulty is that such a locally optimal schedule may interact with jobs arriving just beyond the lookahead window, destroying the regeneration structure needed to invoke memorylessness of the Poisson process. The guiding idea behind our algorithm, which we call $\algname$ (\Cref{alg:poisson_lookahead}), is therefore to use lookahead only when it certifies a genuine regeneration point. Specifically, the algorithm searches for a prefix of the predicted arrivals that can be completed optimally before the next arrival following that prefix. Such a prefix is self-contained: scheduling it optimally returns the system to the empty state before any outside job can interfere. If no such prefix exists, the lookahead window does not reveal a safe offline-optimal action, and the algorithm instead falls back on a purely online scheduler (namely the ``greedy'' algorithm $\sched_1^\on$ that starts a batch as soon as a job is available) until the system clears. This design preserves the renewal structure needed for the analysis, while ensuring that, except on rare ``bad'' cycles, the algorithm behaves exactly like the offline optimum on the portion of the cycle it serves.

\begin{algorithm}[h]
\caption{$\algname$}
\label{alg:poisson_lookahead}
\begin{algorithmic}[1]
\Loop
    \State Wait until the first arrival; let its arrival time be $a_0$.  \label{step:cycle-start}
    \State Use the $L$-lookahead window to observe all arrivals in $(a_0,a_0+L]$.
    \State Let $ a_1<a_2<\cdots<a_m$ be the observed arrivals in $(a_0,a_0+L]$, and set $a_{m+1}:=a_0+L$.
    \For{$j=0,1,\ldots,m$}
        \State Let $\inst_j:=\{a_0,a_1,\ldots,a_j\}$ be the prefix instance consisting of the first $j+1$ jobs.
        \State Let $C_j$ be the minimum completion time among offline-optimal schedules for $\inst_j$.
        \If{$C_j<a_{j+1}$}
            \State Execute an offline-optimal schedule for $\inst_j$ and end the current cycle. \label{step:cycle-end-1}
            \State \textbf{Continue} to the next cycle.
        \EndIf
    \EndFor
    \State Run $\sched_1^\on$ until all jobs in the lookahead window have completed and the system is empty.
    \State End the current cycle. \label{step:cycle-end-2}
\EndLoop
\end{algorithmic}
\end{algorithm}

A key technical component in the analysis of $\algname$ is the renewal structure described above. Each cycle starts from an empty system, begins with the next arrival, and ends the next time the system becomes empty. The independence of arrivals and decisions across cycles is preserved despite the use of lookahead: if a certifiable prefix is found, the algorithm serves that prefix and clears the system before the next arrival outside it; if no such prefix is found, the algorithm discards the lookahead information and runs $\sched_1^\on$ until the system clears and all observed arrivals have been processed. Thus, at the end of each cycle, the algorithm carries forward no state and no useful information about future arrivals. By the strong Markov property and the memorylessness of the Poisson process, the arrival process after the cycle is independent of the past and has the same law as the original process. Consequently, the long-run performance of $\algname$ can be analyzed through a single representative cycle using the renewal-reward theorem.

The performance analysis then reduces to distinguishing good cycles from bad ones. On a good cycle, in which a certifiable prefix is found, the algorithm incurs no loss in flow-time relative to the offline optimum for the jobs it serves. On a bad cycle, in which no certifiable prefix exists within the lookahead window, the algorithm may incur additional loss before the system clears. Thus all loss relative to the offline optimum can be charged to the event that no certifiable prefix is found. The main technical task is to show that this event is exponentially unlikely in the lookahead length, and that the loss incurred on bad cycles is sufficiently well controlled. This is the content of the main result of this section.

In the following statement a cycle denotes the time interval starting in Step~\ref{step:cycle-start} in some iteration of \Cref{alg:poisson_lookahead} and ending in Step~\ref{step:cycle-end-1} (for a good cycle) or Step~\ref{step:cycle-end-2} (for a bad cycle).

\begin{lemma}\label{lemma:bad_event_bound}
    For any cycle in \Cref{alg:poisson_lookahead}, let $\bad$ denote the ``bad'' event that no suitable prefix was found by $\algname$. Then, $\prob{\bad}\le \kappa\cdot e^{-\gamma L}$ where $\kappa,\gamma>0$ are constants depending only on $\lambda,C$ and $s$.
\end{lemma}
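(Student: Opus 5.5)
The plan is to exhibit, inside the lookahead window, an explicit sufficient condition for a certifiable prefix. I would then show that this condition fails only with exponentially small probability. The reference process is the greedy scheduler $\sched_1^\on$ run on the arrivals from $a_0$ onward.

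\textbf{Step 1 (deterministic certificate).} Suppose greedy, started at $a_0$ on the true arrival stream, empties the system at some time $E$. Suppose further that no arrival occurs in $(E,E+2s]$ and that $E+2s< a_0+L$. Let $j$ be the index of the last arrival with $a_j\le E$. The jobs of $\inst_j$ are exactly the jobs served by greedy up to $E$, and greedy on the whole stream coincides with greedy on $\inst_j$ up to time $E$. So greedy is a feasible schedule for $\inst_j$ with makespan $E$. By \Cref{lemma:opt-makespan}, every flow-time optimal schedule for $\inst_j$ has makespan at most $E+2s$, hence $C_j\le E+2s<a_{j+1}$. This uses the convention $a_{m+1}=a_0+L$ when $j=m$. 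Therefore the bad event $\bad$ is contained in the event that no emptying time $E$ of greedy with $E+2s<a_0+L$ is followed by a $2s$-interval free of arrivals.

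\textbf{Step 2 (renewal attempts).} Run greedy from $a_0$ and look at its successive emptying times $E_1<E_2<\cdots$. After each $E_i$, by memorylessness, the next arrival comes after an $\mathrm{Exp}(\lambda)$ time. With probability $e^{-2\lambda s}$ this gap exceeds $2s$, which is a success. Otherwise the next arrival restarts a fresh greedy busy period, and by the strong Markov property the busy periods and gaps are i.i.d. Let $B$ denote a generic busy-period length. Each failed attempt consumes time at most $B_i+2s$. Fix $N=\lfloor \eta L\rfloor$ for a small constant $\eta>0$. Then
\[
\prob{\bad}\le (1-e^{-2\lambda s})^{N}+\prob{\textstyle\sum_{i\le N+1}(B_i+2s)\ge L}.
\]
The first term is already $e^{-\Omega(L)}$. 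The second term is a Chernoff bound for a sum of i.i.d.\ variables, which requires $B$ to have an exponential moment $\expect{e^{\theta B}}<\infty$ for some $\theta>0$. Choosing $\eta$ small relative to $\expect{B}+2s$ then makes this term $e^{-\Omega(L)}$.

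\textbf{Step 3 (main obstacle: exponential tail of the greedy busy period).} This is where stability $\lambda s<C$ enters. Let $Q_n$ be the queue at the start of the $n$-th batch of a busy period. Then $Q_{n+1}=(Q_n-C)^+ + A_n$ with $A_n\sim\mathrm{Poisson}(\lambda s)$ i.i.d., and the busy period ends at the first $n$ with $Q_n=0$. While $Q_n\ge C$, the increments $A_n-C$ have negative mean $\lambda s-C$ and all exponential moments. An exponential-martingale (Chernoff/Lundberg) argument therefore gives $\prob{\text{no return to }\{Q<C\}\text{ within }n\text{ batches}}\le c\,e^{-\theta' n}$.

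From any state with $Q<C$, the next batch clears all pending jobs. With probability $e^{-\lambda s}$ no arrival occurs during that batch, which ends the busy period. Otherwise we are back in the drift regime started from $A_n$, which has light tails. Combining the geometric number of such excursions with the exponential tail of each gives $\prob{B> t}\le c'e^{-\theta t}$, since $B=s\cdot(\text{number of batches})$.

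Plugging this into Step 2 yields $\prob{\bad}\le\kappa e^{-\gamma L}$ with $\kappa,\gamma$ depending only on $\lambda,C,s$. I expect the bookkeeping in this drift-plus-excursion tail estimate to be the most delicate part of the argument.
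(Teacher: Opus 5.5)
Your proposal is correct and follows essentially the paper's route. You certify a safe prefix via a greedy clearing time followed by a long idle gap (using \Cref{lemma:opt-makespan}), then combine the i.i.d.\ memoryless gaps with an exponential tail for the greedy cycle length, which is precisely the drift-plus-excursion argument of \Cref{lemma:s1-cycle-tail}. The differences are minor: the paper requires the gap to lie in $(2s,3s]$ so that the next arrival is visible in the window, whereas your use of the convention $a_{m+1}=a_0+L$ is also valid; and the paper bounds the exponential moment of the first certifying time directly rather than via your union bound plus Chernoff estimate.
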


We prove Lemma~\ref{lemma:bad_event_bound} at the end of this section. Equipped with it, and following our discussion from above, we obtain the following result, formally proven in Appendix~\ref{app:poisson-lookahead-upper}.

\begin{restatable}{theorem}{prefixlookahead}\label{thm:poisson-upper}
    Assume $0<\lambda s <C$ and fix any $L\ge 0$. Then, we have
    \[\rate(\algname)\le \rate(\sched^*) + \kappa'\cdot e^{-\gamma' L}\]
    where $\kappa',\gamma'>0$ are constants depending only on $\lambda,C$ and $s$.
\end{restatable}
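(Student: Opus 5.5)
We reduce the statement to a single regeneration cycle of $\algname$ and charge all loss to the bad event $\bad$ of \Cref{lemma:bad_event_bound}.

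\textbf{Step 1: renewal structure.} The cycles of $\algname$ (from Step~\ref{step:cycle-start} to Step~\ref{step:cycle-end-1} or \ref{step:cycle-end-2}) are i.i.d. This follows from the strong Markov property: each cycle ends with an empty system and no carried information, and the next arrival after a cycle end is a fresh Poisson start. By renewal-reward,
\[
\rate(\algname)=\frac{\expect{\cycflow}}{\expect{\cycjobs}},
\]
where $\cycflow,\cycjobs$ are the flow-time and job count of one cycle. We also need $\expect{\cycjobs}<\infty$ and $\expect{\cycflow}<\infty$. We get these by dominating a bad cycle by a busy period of $\sched_1^\on$ started with the window's jobs already queued. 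In the stable regime $\lambda s<C$ this busy period has exponential moments (a standard drift/Lyapunov argument, or by comparison with the cycle-size bound \Cref{lemma:threshold-cycle-size-bound}). So $\expect{\cycjobs\mathbbm{1}_\bad}$ and $\expect{\cycflow\mathbbm{1}_\bad}$ are at most $\prob{\bad}^{1/2}$ times a constant, by Cauchy--Schwarz.

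\textbf{Step 2: lower bound on the offline rate via the same cycles.} This is the step we expect to be the main obstacle. We need
\[
\rate(\sched^*)\ge \frac{\expect{\cycflow\mathbbm{1}_{\bad^c}}}{\expect{\cycjobs}}
\]
roughly, i.e. that on good cycles the optimum cannot beat $\algname$. Apply superadditivity (\Cref{lemma:superadditive}) to $\inst(n)$ partitioned into the job sets of the $\algname$-cycles. This gives
\[
\sched^*(n)\ge\sum_{\text{cycles}}\opt(\text{cycle jobs})
\]
up to a truncated last cycle, which is $o(n)$ in expectation. On a good cycle, $\algname$ runs an offline-optimal schedule for exactly its cycle's jobs, so its cost equals $\opt$ of that sub-instance. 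On a bad cycle we bound $\opt\ge s\cdot\cycjobs\ge 0$. Dividing by $n$ and applying the renewal theorem (the number of cycles among $n$ jobs is about $n/\expect{\cycjobs}$) yields
\[
\rate(\sched^*)\ge \frac{\expect{\cycflow\mathbbm{1}_{\bad^c}}+\expect{s\,\cycjobs\mathbbm{1}_\bad}}{\expect{\cycjobs}}.
\]
The delicate points are the boundary cycle and ensuring the partition is measurable and defined independently of $n$. We would handle this with Wald's identity applied to the stopping index of the cycle containing job $n$.

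\textbf{Step 3: combine.} Subtracting the two expressions gives
\[
\rate(\algname)-\rate(\sched^*)\le \frac{\expect{\cycflow\mathbbm{1}_\bad}}{\expect{\cycjobs}}\le \expect{\cycflow\mathbbm{1}_\bad},
\]
using $\cycjobs\ge1$. By Cauchy--Schwarz, $\expect{\cycflow\mathbbm{1}_\bad}\le \expect{\cycflow^2}^{1/2}\prob{\bad}^{1/2}$. For $\expect{\cycflow^2}$, a bad cycle's flow-time is at most (number of jobs)$\times$(duration). The window contributes a Poisson$(\lambda L)$ number of jobs, and the subsequent greedy busy period has exponential tails. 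Hence $\expect{\cycflow^2}\le \mathrm{poly}(L)\cdot c$. Combined with \Cref{lemma:bad_event_bound}, this gives a bound of $\mathrm{poly}(L)\,\kappa^{1/2}e^{-\gamma L/2}$, which is at most $\kappa' e^{-\gamma' L}$ for any $\gamma'<\gamma/2$ and a suitable $\kappa'$ depending on $\lambda,C,s$.
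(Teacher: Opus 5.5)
Your proposal is correct and follows the paper's proof essentially step for step. The shared skeleton is: i.i.d.\ cycles of $\algname$ via the strong Markov property, a lower bound on $\rate(\sched^*)$ from superadditivity over the same cycles, exact optimality on good cycles, and Cauchy--Schwarz of $\prob{\bad}$ against a polynomial-in-$L$ second moment of the fallback cost. The only difference is how that second moment is bounded: the paper dominates the fallback by at most $\lceil L/s\rceil+1$ i.i.d.\ greedy cycles and uses the fourth-moment bound of \Cref{lemma:s1-cycle-tail}, whereas you use $\cycflow\le\cycjobs\times(\text{duration})$ together with a greedy busy period started from a Poisson$(\lambda L)$ queue; both give $\mathrm{poly}(L)$.
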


Comparing~\Cref{thm:poisson-upper} with ~\Cref{thm:adv-lookahead-uncap-lower,thm:adv-lookahead-cap-lb}, we see that the Poisson model is qualitatively more favorable than the adversarial one. In the capacitated setting, sufficiently large lookahead allows the scheduler to approximate the offline optimal rate arbitrarily well; moreover, the convergence is exponentially fast in the lookahead length, in contrast to the polynomial lower bounds that arise in the adversarial model even in the unbounded batch size setting. 

We conclude this section with a proof sketch of our main technical statement (Lemma~\ref{lemma:bad_event_bound}), bounding the probability that $\algname$ is not able to identify a certifiable prefix  within an arrival window of length $L$. A formal proof can be found in Appendix~\ref{app:poisson-lookahead-upper}; see \Cref{lemma:lookahead-bad-probability}.

\begin{proof}[Proof sketch]
The key idea is to upper-bound the probability that $\algname$ fails to find a safe prefix by identifying a simple event that guarantees success. Shift the first arrival of the cycle to time $0$, and run $\sched_1^\on$ on the same arrival sequence. Let $H_L$ be the event that $\sched_1^\on$ clears the system at some time $T\le L-3s$, and that the next arrival after $T$ lies in $(T+2s,T+3s]$. On this event, the prefix consisting of all jobs up to time $T$ has a feasible schedule of makespan $T$ under $\sched_1^\on$. By 
\Cref{lemma:opt-makespan}, any offline-optimal schedule for this prefix has makespan at most $T+2s$. Since the next arrival occurs after $T+2s$, this prefix can be scheduled optimally and cleared before the next arrival; and since that next arrival occurs by time $T+3s\le L$, the certificate is visible within the lookahead window. Thus $H_L$ implies that $\algname$ finds a safe prefix, and hence
\[
\prob{\bad}\le \prob{\overline{H_L}}.
\]

It remains to argue that $\overline{H_L}$ is exponentially unlikely. The intuition is that every time $\sched_1^\on$ clears the system, the process gets a fresh independent chance to certify $H_L$: namely, the next idle gap must lie in the interval $(2s,3s]$. Since the arrival process is Poisson, this happens with some fixed probability, $p_H=e^{-2\lambda s}-e^{-3\lambda s}>0$, independently after each clearing epoch. Thus, as $\sched_1^\on$ goes through its regenerative cycles, the number of cycles needed before seeing a certifying idle gap is geometrically distributed with constant success probability.

The only remaining issue is that the cycles of $\sched_1^\on$ have random lengths. We prove a cycle-length tail bound (\Cref{lemma:s1-cycle-tail}) showing  that these lengths have an exponential moment. Therefore, the time until the first certifying gap is essentially a geometric sum of random variables with exponential tails, and such a sum also has an exponential tail. Consequently, the probability that no certifying gap appears by time $L-3s$ decays exponentially in $L/s$. This gives
\[
\prob{\overline{H_L}}
\le \exp(-\Omega(L/s)),
\]
and hence the same bound for $\prob{\bad}$.
\end{proof}

\subsection{A Lower Bound on the Performance of Schedulers with Lookahead}
\label{sec:stoch-lookahead-lb}

Finally, it is natural to ask whether some other lookahead algorithm could converge to the offline-optimal rate faster as the lookahead window grows. We show that, up to constant factors in the exponent, this is impossible. The main difficulty in proving such a lower bound is that the offline optimum observes the entire instance, and its decisions may couple arrivals that occur far apart in time. To isolate the local information-theoretic obstruction faced by any finite-lookahead scheduler, we introduce a new cycle decomposition of the arrival process. The cycles are defined by minimal prefixes that can be scheduled optimally and cleared before the next arrival. We show that there exists an offline optimum that respects this decomposition: on each such prefix, it processes the jobs independently of the rest of the instance. This yields a renewal structure and reduces the lower-bound analysis to a single representative cycle.

As in the proof of \Cref{thm:online-gap}, we identify a sequence of arrivals that forces suboptimal behavior for any online algorithm with limited lookahead and occurs in any cycle with nonzero probability. The hard event $\hat{\mathcal{E}}_k$ creates two continuations that are indistinguishable to any scheduler with lookahead $L=ks$ at the moment when it must make its first consequential batching decision. The cycle starts with $u=\lceil \capac/2\rceil$ jobs arriving near time $0$, followed shortly after by the remaining $v=\lfloor \capac/2\rfloor$ jobs needed to form a full batch. The scheduler must therefore decide whether to process the first $u$ jobs early, or to wait and process the first full batch of $\capac$ jobs. We then place a long sequence of full $\capac$-job bursts just before the times $s,2s,\ldots,ks$. This visible prefix keeps the two initial decisions nearly balanced: choosing early creates a persistent $v$-job offset, while choosing to wait avoids this offset but delays the first $u$ jobs. The decisive information is hidden just beyond the lookahead window. In the early continuation, $u$ additional jobs arrive soon enough to complete the $v$-job offset, so the early commitment is better. In the late continuation, the same $u$ jobs arrive too late, so waiting is better. Since the two continuations agree throughout the scheduler's lookahead window, every finite-lookahead scheduler makes the wrong first commitment on one of them and incurs a constant excess flow-time loss. Finally, we show that our constructed event $\hat{\mathcal{E}}_k$ has probability $\exp(-\Theta(k))$ under Poisson arrivals. Translating this one-cycle loss through the renewal-reward decomposition yields the following lower bound, formally proven in Appendix~\ref{app:poisson-lookahead-lower}.

%Using this decomposition, we construct an event on a cycle with probability roughly $\exp(-L)$. Conditional on this event, the scheduler must make a decision whose correct resolution depends on arrivals occurring beyond its lookahead window. Consequently, every $L$-lookahead scheduler makes the wrong decision with constant probability on this event, and incurs a constant additive flow-time loss whenever this happens. 

\begin{theorem}\label{thm:poisson-lb}
Assume $0<\lambda s<\capac$. Let $\capac\ge 2$ and fix any $L\ge 0$. For every $L$-lookahead scheduler $\sched$,
$$
\rate(\sched)
\ge
\rate(\sched^*) + \kappa'' e^{-\gamma'' L},
$$
where $\kappa'',\gamma''>0$ are constants depending only on $\lambda$, $\capac$, and $s$.
\end{theorem}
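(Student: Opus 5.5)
The plan has four steps: build a regenerative decomposition of the offline optimum, reduce the rate comparison to a single cycle, place a hard local event inside that cycle, and bound the probability of that event.

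\emph{Offline cycles.} First I will construct a decomposition of the offline optimum into regenerative pieces. An offline cycle starts at an arrival into an empty system. It ends at the first index $j$ such that the prefix of the cycle up to the $j$-th arrival admits an optimal schedule that clears before the $(j+1)$-st arrival. Call this the minimal certifiable prefix.

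I will show that some offline optimum for $\inst(n)$ respects this decomposition, i.e.\ it schedules each offline cycle as an independent optimal sub-instance. The argument is an exchange step. Superadditivity (\Cref{lemma:superadditive}) gives $\opt(\inst)\ge\sum_i\opt(\text{cycle}_i)$. Conversely, concatenating the per-cycle optima is feasible, because each one clears before the next cycle's first arrival; so equality holds. By the strong Markov property, the offline cycles are i.i.d. Renewal-reward then gives
\[\rate(\sched^*)=\frac{\expect{\opt(\text{cycle})}}{\expect{\cycjobs}}.\]
An exponential tail on the cycle length is needed to make $\expect{\cycjobs}$ finite, and I would get it the same way as in the proof of \Cref{lemma:bad_event_bound}.

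\emph{Lower bound on $\rate(\sched)$.} The scheduler $\sched$ need not regenerate at offline cycle boundaries, so I will lower-bound its cost by splitting its schedule along the offline cycles. Restricting $\sched$'s schedule to the jobs of one offline cycle gives a feasible schedule for that cycle. Hence
\[\sched(n)\ge\sum_i \sched|_{\text{cycle}_i}\ge\sum_i\opt(\text{cycle}_i),\]
and it suffices to show that on a positive-probability event inside a cycle, the restricted cost exceeds the cycle optimum by a constant $\Delta>0$ with constant conditional probability. Then
\[\rate(\sched)-\rate(\sched^*)\ge \Pr[\hat{\mathcal E}_k]\cdot\Delta\cdot\tfrac12/\expect{\cycjobs}.\]

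\emph{The hard event $\hat{\mathcal E}_k$, with $k=\lceil L/s\rceil+1$.} I mimic the construction in \Cref{thm:adv-lookahead-cap-lb}. In a small window near time $0$, $u=\lceil C/2\rceil$ jobs arrive, followed after a gap of about $\delta s$ by $v=\lfloor C/2\rfloor$ jobs. Then exactly $C$ jobs arrive in each short window just before $s,2s,\dots,ks$. After that there are two continuations: either $u$ jobs arrive in a window shortly after $(k+1)s$, or in a window slightly later, with a large idle gap afterwards so that the offline cycle ends.

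Both continuations have comparable Poisson probability, each of order $\exp(-\Theta(k))$: each window has length $\Theta(s)$, carries a fixed count, and has probability bounded below by a constant depending on $\lambda,C,s$. The decisive decision, whether to start a batch before the $v$ jobs arrive, must be made before $\delta s$. At that time the lookahead reaches only $\delta s+L<(k+1)s$, so the two continuations are indistinguishable. The decision is therefore a measurable function of information with the same conditional law under both continuations.

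I then need a deterministic computation, analogous to the adversarial sketch but with jittered windows: on one continuation, early commitment costs at least $\Delta$ more than the cycle optimum, and on the other, waiting costs at least $\Delta$ more. I also need the windows to be small enough that this computation is robust to the jitter. Whatever $\sched$ chooses, it loses $\Delta$ with conditional probability at least about $1/2$, up to the probability ratio between the continuations. This yields $\kappa''e^{-\gamma''L}$ with $\gamma''=\Theta(\lambda+\text{window cost})/s$ per step.

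\emph{Main obstacle.} The delicate part is the restriction step combined with the local cost comparison. A wrong early commitment costs $\sched$ only a bounded amount inside one cycle, since the cycle ends. I must verify that the minimal-certifiable-prefix cycle indeed contains the whole gadget, i.e.\ that no earlier certifiable prefix exists, which the full bursts should prevent. I must also verify that the wrong choice forces an excess that cannot be repaired within the cycle, even though $\sched$ may use idle time freely.

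My first attempt will be to make the bursts full and spaced exactly $s$. Then any schedule of the prefix is pinned to a tight phase: either it lags by $v$ jobs, or it starts the bursts $\delta s$ late. The adversarial accounting then gives an excess of at least $\min\{v\,s,\ \delta C s\}$ minus jitter terms. Choosing the jitter width small compared with $\delta s/k$ would ruin the exponential rate, so instead I will take the windows of width $\eta s$ with $\eta$ a fixed constant. I will argue that the phase offset persists only until the final idle gap, giving a constant loss $\Delta$ rather than one growing in $k$, which suffices.
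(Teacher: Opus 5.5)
Your overall architecture matches the paper's. That covers the offline-certifying cycles, superadditivity plus concatenation for the offline rate, and restricting $\sched$ to each cycle. It also covers the shape of the hard event: a $u$/$v$ initial split, then $k$ visible full bursts, then a hidden $u$-job block in one of two positions. The quantitative core of the hard event, however, fails as you plan it.

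The adversarial quantity $\min\{v s,\delta\capac s\}$ is a per-burst penalty on the $M$ bursts \emph{beyond} the lookahead, where the correct commitment incurs no waiting at all. Inside your cycle the only bursts are the $k$ visible ones, and on them both commitments pay. The early commitment carries $v$ leftover jobs for an extra $s$ at every burst. The waiting commitment delays every burst by $r-\tau$, where $\tau$ and $r$ are the last arrivals in your two initial windows. The cycle optimum simply takes the cheaper of the two. Summing completion times gives
\[
\opt_{\mathrm{wait}}^{\mathrm{pref}}-\opt_{\mathrm{early}}^{\mathrm{pref}}=(k+1)\bigl(\capac(r-\tau)-vs\bigr),
\]
while the hidden block shifts this comparison by only $O(\capac s)$. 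Hence the first decision is genuinely ambiguous only when $|\capac(r-\tau)-vs|=O(s/k)$. What you call a lower-order ``jitter term'' is in fact $\Theta(k\eta\capac s)$ for initial windows of width $\eta s$.

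On all but an $O(1/k)$ fraction of your event, the data visible at decision time ($\tau$, $r$, and all but $O(1)$ of the bursts) already determines a commitment that is optimal under both continuations. A scheduler that computes it loses nothing there, so your claimed conditional loss $\Delta=\Omega(1)$ is false.

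Your reason for avoiding narrow windows is a misdiagnosis. Only the two initial windows need width $\Theta(s/k)$. Together they carry $\capac-1$ random arrivals, so this costs a factor of order $(\lambda s/k)^{\capac-1}$, which is polynomial in $1/k$ and absorbed into $e^{-O(k)}$. Equivalently, you can restrict to the sub-event where $r-\tau$ lies within $O(s/k)$ of $vs/\capac$. The $k$ burst windows can keep constant width provided each lies just before $is$; the paper uses $[is-\rho s,is]$. Both commitments then start the batch for each burst after it has fully arrived, so burst jitter cancels in the comparison.

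The paper also centers the second window at $\delta_k=(v/\capac-\alpha/(k+1))s$, so the visible prefix favors waiting by only $\Theta(\alpha\capac s)$. It places the early hidden window between the waiting schedule's last batch start and $\tau+(k+1)s$, and the late one just before $r+(k+1)s$. The early continuation then overturns this small bias and the late one reinforces it, each by a constant margin.

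Finally, your binary early/wait dichotomy omits first batches that split the initial groups or are postponed well past $r$. The paper handles these separately. Split batches create an offset that costs $\Omega(ks)$ over the bursts. Postponed batches pay $\Omega(\capac s)$ directly in extra waiting of the first $\capac$ jobs.
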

%\begin{proof}
%    \textcolor{blue}{This is also one I would like to add, but it takes space so let's decide on that later.} \shuchi{Perhaps leave this proof to the appendix given the detailed discussion above.}
%\end{proof}

\section{Lookahead with Imperfect Predictions: Consistency and Robustness}
\label{sec:prediction_error}

So far, we have considered perfect predictions within the lookahead window. We showed that such predictions are useful in both the uncapacitated adversarial setting, where they allow the competitive ratio to approach $1$ as the window grows (\Cref{thm:adv-lookahead-uncap-upper}), and the capacitated Poisson setting, where they allow the rate to converge exponentially fast to the offline optimum (\Cref{thm:poisson-upper}). In practice, however, predicted arrivals will rarely be exact. We therefore ask whether imperfect predictions can still improve upon purely online scheduling.

\paragraph{The Prediction Model.}
Let $\inst=\{r_j\}_{j=1}^n$ denote the true arrival sequence, which may be either adversarial or stochastic, and let $\inst'=\{r'_j\}_{j=1}^n$ denote the predicted arrival sequence. At each time $t$, the algorithm observes all true arrivals in $\inst$ up to time $t$, together with all predicted arrivals in $\inst'$ up to time $t+L$. We assume that the two instances contain the same number of jobs, index both arrival sequences in non-decreasing order, and match jobs by rank. We define the maximum prediction error as
$$
\epsilon
:=
\max_{j\in[n]} |r_j-r'_j|.
$$
We assume that $\epsilon\ge 0$ is known to the algorithm. In the Poisson setting, the same rank-wise error bound is assumed to hold for every finite prefix of the arrival sequence. 

Crucially, no assumptions are made whatsoever about the predicted sequence; in fact, it can be fully adversarial and of arbitrarily low quality. Obtaining high-accuracy predictions is outside the scope of this work. Instead, we show that given any such sequence, we can obtain guarantees that scale with the underlying precision without making any further assumptions.

\paragraph{Our Results.}
We introduce simple error-aware modifications of our two main algorithms, namely $\advname$ and $\algname$. When $\epsilon=0$, they recover the corresponding perfect-lookahead guarantees. As the error grows, their performance degrades smoothly until they revert to prediction-free baselines, retaining robust guarantees under arbitrarily inaccurate predictions.

\begin{restatable}{theorem}{predictionerror}
\label{thm:imperfect-prediction}
Consider the prediction model above. In the uncapacitated adversarial setting, the error-aware implementation of $\advname$ satisfies
$$
\frac{\expect{\mathrm{Error}\advname(\inst)}}{\opt(\inst)}
\le
1+\min\left\{
O\left(
\frac{s}{L}
+
\frac{\epsilon}{s}
\right),
1
\right\}.
$$
Moreover, assuming $\lambda s<\capac$, the error-aware implementation of $\algname$ satisfies
$$
\rate(\mathrm{Error}\algname)
\le
\rate(\sched^*)
+
\min\left\{
O_{\lambda,s,\capac}(1)\cdot e^{-\Omega_{\lambda,s,\capac}(L)}
+
O(\epsilon),
O_{\lambda,s,\capac}(1)
\right\}.
$$
\end{restatable}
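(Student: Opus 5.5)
The argument rests on a Lipschitz transfer lemma. Let $\inst=\{r_j\}$ and $\inst'=\{r'_j\}$ be rank-matched with $\max_j|r_j-r'_j|\le\epsilon$, and let $\sched'$ be any feasible schedule for $\inst'$. Shifting every batch start of $\sched'$ later by $\epsilon$, with the same rank-indexed contents, gives a feasible schedule for $\inst$. Job $j$ is then started no earlier than $r'_j+\epsilon\ge r_j$, capacity and non-overlap are preserved, and each job's flow time changes by at most $2\epsilon$. Summing over jobs, $|\sched(\inst)-\sched'(\inst')|\le 2\epsilon n$.

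Applying the lemma in both directions gives $\opt(\inst')\le\opt(\inst)+2\epsilon n$. Since every job has flow time at least $s$, we have $n\le\opt(\inst)/s$. Hence the lemma costs a multiplicative factor of $1+O(\epsilon/s)$ in the adversarial setting and an additive $O(\epsilon)$ per job in the Poisson setting.

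\textbf{Error-aware $\advname$.} I would run $\advname$ on the predicted instance and shift each batch by $\epsilon$. I would also enlarge the ignored intervals to length $2s+2\epsilon$, so that true arrivals whose predicted times lie in different phases cannot interfere across phase boundaries. The active jobs of phase $\ell$ are then served by the shifted optimal schedule of the predicted sub-instance. By the transfer lemma and superadditivity (\Cref{lemma:superadditive}) applied to $\inst'$, their cost is at most the optimal cost of the predicted phase sub-instance plus $2\epsilon$ per job. Ignored jobs cost $O(s+\epsilon)$ each, and a job is ignored with probability $O((s+\epsilon)/(L+s))$. Summing these contributions and converting with $\opt(\inst')\le\opt(\inst)+2\epsilon n$ gives the ratio $1+O(s/L+\epsilon/s)$.

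For the $\min\{\cdot,1\}$ part, the error-aware algorithm simply runs $\greedy$ whenever $\epsilon\ge c\,s$ or $L<c\,s$ for a suitable constant $c$. This is $2$-competitive by \Cref{thm:adv-online-bounds}, which supplies the robustness term.

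\textbf{Error-aware $\algname$.} The cycle certificate is evaluated on predicted arrivals, but with a slack margin. We accept prefix $j$ only if $C_j+2\epsilon<a'_{j+1}-\epsilon$, that is, the optimal predicted completion plus the transfer shift clears before the earliest possible true next arrival. The algorithm then executes the $\epsilon$-shifted schedule. This keeps cycles regenerative in the true process, since the system empties before the next true arrival. I would then bound the probability of the bad event as in \Cref{lemma:bad_event_bound}, but with the certifying gap interval widened to $(2s+4\epsilon,\,3s+4\epsilon]$. Its probability $p_H$ stays bounded below for $\epsilon\le s$. For $\epsilon>s$ the algorithm reverts to the best threshold policy, which gives the $O_{\lambda,s,\capac}(1)$ robust bound.

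\textbf{Main obstacle.} The hardest step is the good-cycle accounting. In the perfect case the cycle cost equals an offline optimum restricted to a prefix, and the comparison with $\rate(\sched^*)$ goes through superadditivity plus renewal-reward. Here the cycle is defined through predicted times, while the benchmark is the optimum on true arrivals. I would handle this by comparing each good cycle's cost to $\opt$ of its true prefix plus $4\epsilon$ per job, using the transfer lemma twice. Superadditivity then sums these true-prefix optima to at most $\opt(\inst(n))$. The rank-matching assumption on every finite prefix is what makes this per-prefix transfer legitimate. The bad-cycle loss is then bounded exactly as in \Cref{thm:poisson-upper}.
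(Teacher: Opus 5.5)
Your proposal is correct and takes essentially the paper's route. The shared steps are: a rank-matched $\epsilon$-transfer lemma (your uniform $\epsilon$-shift is an equally valid variant of the paper's start-as-soon-as-feasible rule); a guarded $\advname$ whose ignored region has length $2s+2\epsilon$; a margin-certified $\algname$ whose bad event is bounded through a widened greedy certifying gap; good cycles charged to the true-prefix optimum via two transfers plus superadditivity; and reversion to $\greedy$ or the best threshold policy for robustness.

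Two small repairs are needed:
\begin{enumerate}
\item Your acceptance rule as written, $C_j+2\epsilon<a'_{j+1}-\epsilon$, double-counts the shift. Your verbal description instead gives $C_j+\epsilon<a'_{j+1}-\epsilon$, which is the paper's rule. That rule is the one that matches the gap interval $(2s+4\epsilon,3s+4\epsilon]$, since $C_j\le T+2s+\epsilon$. With your formula the interval would have to start at $2s+5\epsilon$.
\item The switch to $\greedy$ needs two separate thresholds: a small one on $\epsilon/s$ and a large one on $L/s$. Alternatively, use the paper's test that the explicit bound is at most $1$. Either way, this ensures the $\advname$ branch is itself at most $2$-competitive; a single constant $c$ does not achieve both.
\end{enumerate}
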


We prove \Cref{thm:imperfect-prediction} in Appendix~\ref{app:prediction-error-proofs}. Both modifications execute schedules computed from the predictions in the same manner: whenever the unmodified algorithm uses its lookahead to commit to a schedule for the predicted future, the modified algorithm commits to the corresponding true jobs, matched by rank, and starts each batch as soon as it becomes feasible. This delays every batch by at most $\epsilon$ and increases the total flow time by at most $2\epsilon$ per job.

Below, we describe the error-aware modifications and sketch the proof of their error-dependent consistency guarantees. Robustness follows by comparing these bounds with the corresponding prediction-free baselines (\Cref{thm:adv-online-bounds} and~\Cref{thm:online-thresholds}) and using whichever guarantee is better, ensuring that low-quality predictions cannot hurt the policies we design.

\paragraph{Error-Aware Random Shift.}
Recall that $\advname$ partitions time into periods of length $L+2s$, each consisting of an active interval of length $L$ and an ignored interval of length $2s$. Under prediction error, we remove an $\epsilon$-length guard interval from each endpoint of every active interval and optimize only over the resulting active core, of length $(L-2\epsilon)^+$. Jobs predicted in either guard interval are treated as ignored jobs.

The true counterpart of every job predicted in an active core arrives within the original active interval. Moreover, the core ends $\epsilon$ before the original active interval, compensating for the possible $\epsilon$ delay incurred when transferring its predicted-optimal schedule. Thus, schedules from consecutive periods remain disjoint. The ignored portion of each period increases from $2s$ to $2s+2\epsilon$, and every ignored job incurs flow time $O(s+\epsilon)$. The original analysis therefore gives
$$
\frac{\expect{\mathrm{Error}\advname(\inst)}}{\opt(\inst)}
\le
1+
O\left(
\frac{s}{L}
+
\frac{\epsilon}{s}
\right).
$$

\paragraph{Error-Aware Prefix Lookahead.}
For $\algname$, a certificate for the predicted instance need not be a valid regeneration certificate for the true instance. The modified algorithm therefore searches only within the effective horizon $(L-2\epsilon)^+$. For each predicted prefix, let $T'_j$ be the earliest completion time among its offline-optimal schedules, and let $r'_{j+1}$ be the next visible predicted arrival. The prefix is accepted only if
$$
T'_j+2\epsilon<r'_{j+1}.
$$
The algorithm then transfers a predicted-optimal schedule for this prefix to the corresponding true jobs. The transferred schedule completes by time at most $T'_j+\epsilon$, while the next true arrival satisfies $r_{j+1}\ge r'_{j+1}-\epsilon$. Hence $T'_j+\epsilon
<
r'_{j+1}-\epsilon
\le
r_{j+1}$ so every accepted prefix remains a valid regeneration certificate.

Conversely, true certificates with sufficient additional slack remain detectable from the predictions. Accounting for the shortened horizon, the probability that no such certificate is found remains exponentially small in $L$ whenever the prediction-assisted guarantee is better than the prediction-free baseline. Transferring the resulting schedules contributes an additional $O(\epsilon)$ flow time per job. If no valid certificate is found, $\algname$ runs $\sched^{\on}_1$ on the true arrivals until the system becomes empty.

\section{Evaluation}
\label{sec:eval}

We complement our theoretical results with an experimental evaluation on both synthetic and real-world arrival traces. The synthetic experiments use a Monte Carlo simulator for Poisson arrivals and test the main predictions of the stochastic theory: long-run rates stabilize, purely online policies remain separated from the offline optimum, finite lookahead rapidly closes this gap, and noisy predictions degrade gracefully. 
The real-world experiments evaluate whether similar qualitative behavior appears on production request traces.

We implement three batch scheduling algorithm baselines: the offline optimum, the online optimum, and lookahead. The offline optimum (OPT) is the exact flow-time minimizing schedule computed via dynamic programming. For the online algorithm baseline, we implement a pseudo-online algorithm that runs the $\capac$ schedulers $\sched_k^\on$ for $k\in \{1,\ldots , \capac\}$ and executes the one achieving the minimum flow time; it therefore estimates a lower bound on the optimal online rate from~\Cref{thm:online-thresholds}. For lookahead, we implement a simplified variant of PrefixLookahead that uses the same lookahead information but omits the certification and fallback steps.\footnote{These steps are needed in the Poisson analysis to preserve regeneration. In finite-trace evaluations, however, they may be overly conservative, triggering frequent fallbacks and limiting the gains from lookahead. The simplified scheduler avoids these issues while remaining a valid \(L\)-lookahead policy: it uses only pending jobs and arrivals visible within the next \(L\) time units, respects the same batching constraints, and commits only to the next batch before re-planning.} Whenever the server is idle, the scheduler solves the offline subproblem induced by the currently pending jobs and the predicted arrivals in the next \(L\) time units, and then executes the first batching decision of that schedule.
% \aditya{add this as footnote? We use this simplified lookahead scheduler only for evaluation.}
The results are summarized in Figure~\ref{fig:lookahead-evaluation-main}; we describe the experimental setup and findings in the next two subsections. Finally, we compare with the implementation of \Cref{alg:poisson_lookahead} in Appendix~\ref{sec:lookahead-complete}.

\begin{figure}[h]
    \centering

    \begin{subfigure}[t]{0.24\textwidth}
        \centering
        \includegraphics[width=\linewidth]{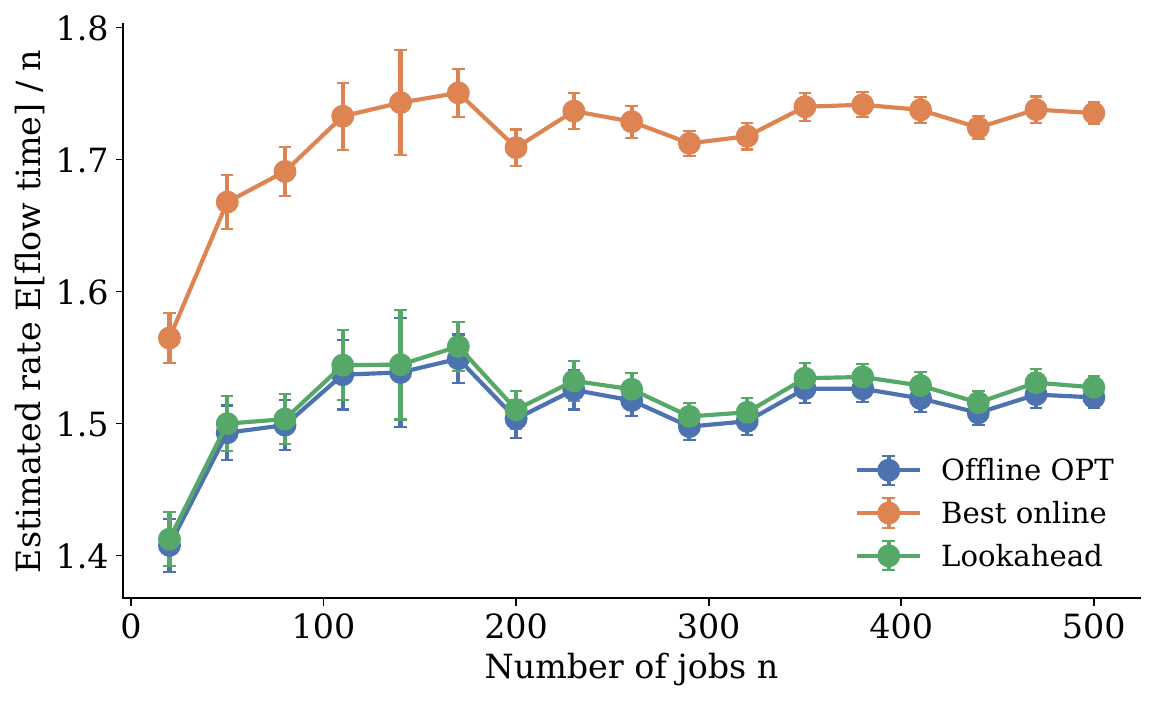}
        \caption{Rate Convergence}
        \label{fig:online-lookahead-rate-main}
    \end{subfigure}
    \hfill
    \begin{subfigure}[t]{0.24\textwidth}
        \centering
        \includegraphics[width=\linewidth]{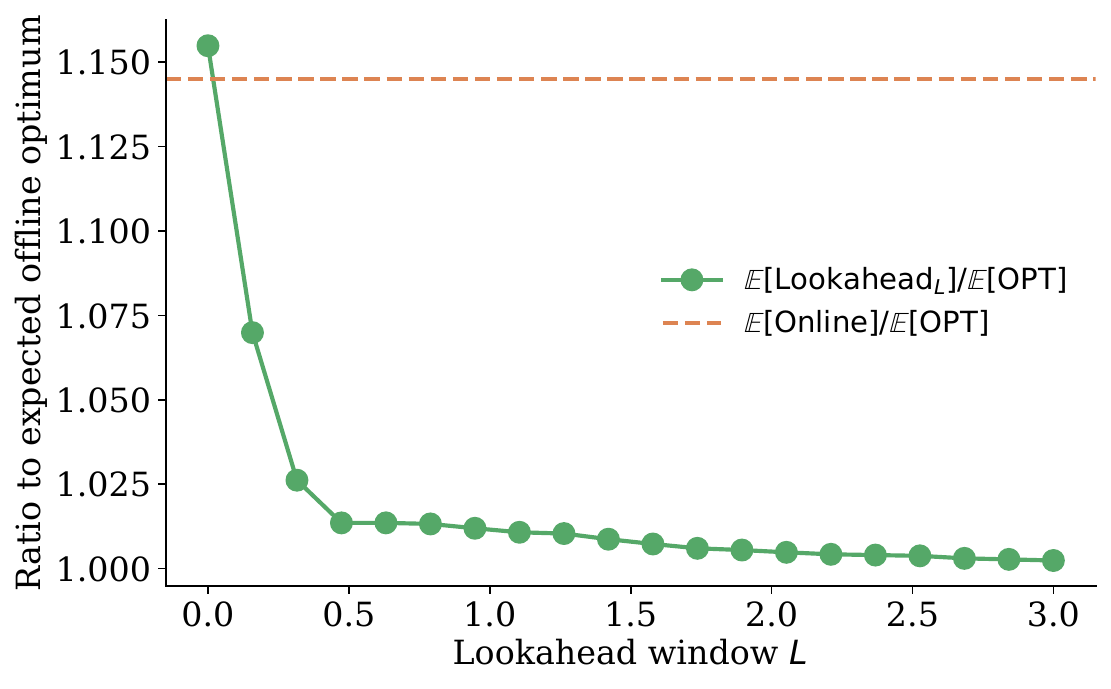}
        \caption{Lookahead Effect}
        \label{fig:lookahead-ratio-vs-L-main}
    \end{subfigure}
    \hfill
    \begin{subfigure}[t]{0.24\textwidth}
        \centering
        \includegraphics[width=\linewidth]{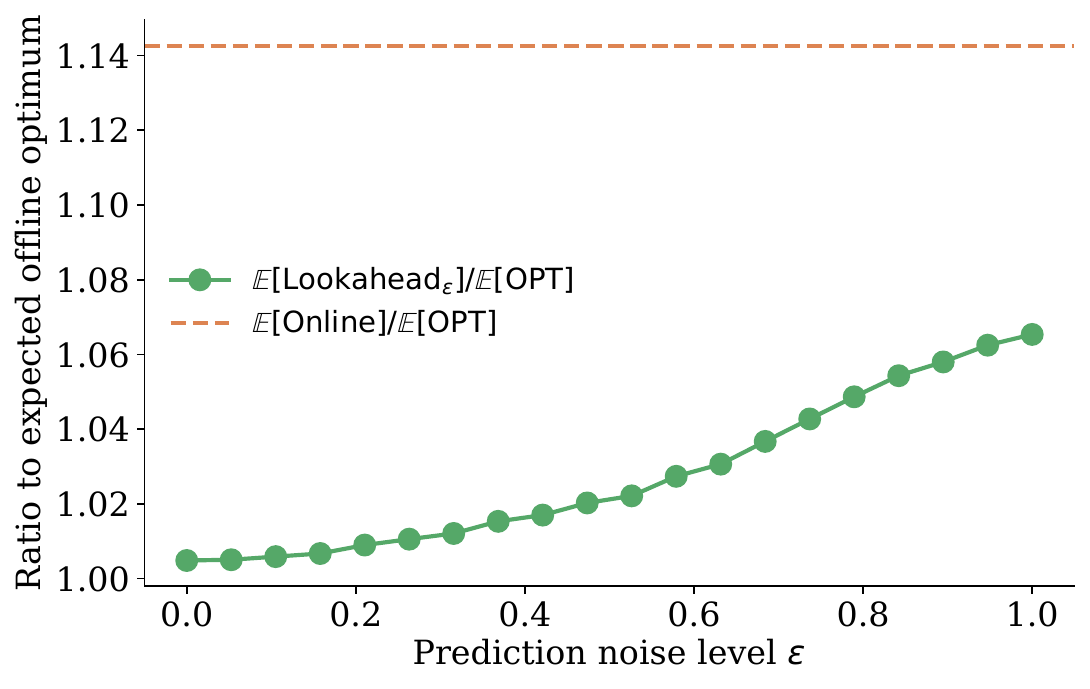}
        \caption{Error Robustness}
        \label{fig:lookahead-ratio-vs-epsilon-main}
    \end{subfigure}
    \hfill
    \begin{subfigure}[t]{0.24\textwidth}
        \centering
        \includegraphics[width=\linewidth]{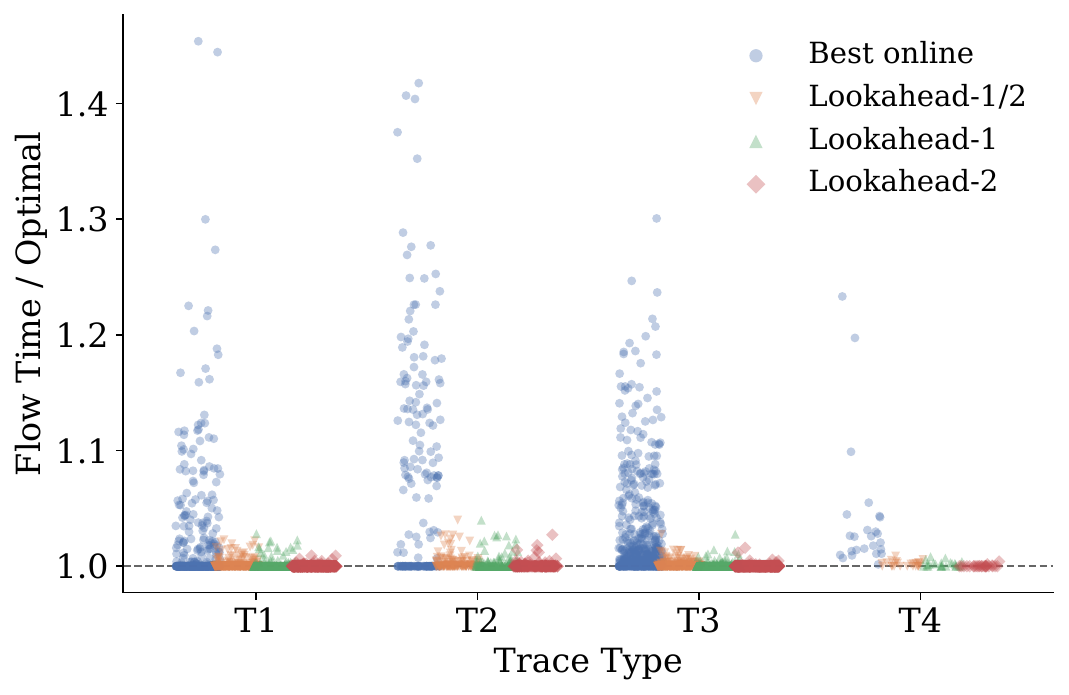}
        \caption{Real Traces}
        \label{fig:real-trace-flow-time-main}
    \end{subfigure}

    \caption{Experimental evaluation of online batching with lookahead. Panels~(a)--(c) use synthetic Poisson arrivals. Panel~(d) depicts evaluations on real-world request traces.}
    \label{fig:lookahead-evaluation-main}
\end{figure}

\subsection{Simulations over Poisson Arrivals}

We first evaluate the algorithms on synthetic traces generated by a Poisson arrival process. For the purposes of illustration, we use rate $\lambda=1.3$, setup time $s=1$ and batch capacity $\capac=2$ so that $\lambda s < \capac$. We use $L=2s=2$ as the default lookahead window. Each data point is averaged over $100$ independently sampled instances.

\paragraph{Asymptotic Rates.} In Figure~\ref{fig:online-lookahead-rate-main}, we compare the performance of our three baselines on the random instance $\inst(n)$ for increasing values of $n$. The three lines depict the estimated average flow-time rates. All algorithms are evaluated on identical instances using common randomness.

We make the following observations. First, all three curves stabilize as $n$ grows, which is consistent with the existence of long-run rates in the stable regime $\lambda s<C$ (\Cref{thm:poisson-stability}). 
The figure also shows a clear separation between the best online policy and the offline optimum, as predicted from~\Cref{thm:online-gap}. In contrast, the finite-lookahead scheduler lies much closer to the offline optimum, showing that short-horizon sample-path information eliminates most of the online loss, as predicted by~\Cref{thm:poisson-upper}.

\paragraph{Effect of the Lookahead Window.} In Figure~\ref{fig:lookahead-ratio-vs-L-main} we isolate the effect of the lookahead length. For different values of the lookahead window $L$, we depict the ratio between the pseudo-online lower bound and the offline optimum (orange line) and the corresponding ratio between the lookahead-algorithm described above and the offline optimum (green line). As~\Cref{thm:poisson-upper} predicts, the last ratio approaches $1$ exponentially fast as $L$ grows.

\paragraph{Robustness under Prediction Error.}

We next test whether the lookahead advantage depends on exact predictions. 
For each true arrival in the lookahead window, we perturb its predicted arrival time by an independent bounded error of scale $\epsilon$, re-sort the perturbed predicted arrivals, and run the same receding-horizon lookahead scheduler on the perturbed window. 
Thus, $\epsilon$ controls the maximum rank-matched arrival-time error in the predicted sequence, matching the prediction-error model of Section~\ref{sec:prediction_error}. Figure~\ref{fig:lookahead-ratio-vs-epsilon-main} shows the resulting slowdown as a function of $\epsilon$. 
The slowdown increases smoothly with the prediction error, rather than exhibiting a sharp failure threshold. 
Moreover, for a broad range of errors, the noisy-lookahead scheduler remains better than the prediction-free online baseline. 
This supports the robustness message of Section~\ref{sec:prediction_error}: imperfect predictions still provide useful information, and their benefit degrades gracefully as the predicted arrival sequence moves away from the true one.

\subsection{Experimenting over Real-World Request Traces}

Finally, we evaluate our algorithms on four real-world request-arrival traces: AzureFunctionsDataset2019 (T1), AzureFunctionsInvocationTrace2021 (T2), AlibabaGPU2020 (T3), and AlibabaGenAI2026 (T4). 
To approximate a simple cluster deployment in which requests from the same function or user are routed to the same server, we partition the Azure traces by function and the Alibaba traces by user, treating each resulting group as a separate arrival sequence. 
We discard sequences containing fewer than 100 arrivals. 
For each instance, we compute the competitive ratio of each algorithm relative to the offline optimum and plot it as a scatter point in Figure~\ref{fig:real-trace-flow-time-main}, grouped by trace type.

For each trace instance, we set the setup time $s$ to half of the median inter-arrival time. 
We set the maximum batch size $\capac$ to approximately three times the expected number of arrivals during one setup interval. 
We also verify that the results are not sensitive to $s$ and $\capac$ choices in Appendix~\ref{sec:parameter-generalization}.
For the lookahead algorithm, we sweep the lookahead window length $L$ with different magnitudes of $s$. 

\paragraph{Results.}

Across all four trace types, we find that lookahead consistently reduces slowdown relative to the best fixed-wait baseline, indicating that real workloads are not completely adversarial, and contain short-term structure that can be exploited by prediction-guided batching.
Increasing the lookahead window further reduces the gap to the offline optimum. 
These results complement our theory: although finite lookahead cannot eliminate worst-case gaps in the capacitated adversarial model, it provides robust and practically meaningful improvements on real request traces.
% \dimitris{Let's briefly comment on why we don't implement the theoretical algorithms.}

\begin{comment}
    
\section{Conclusion}
This paper studies short-horizon predictions as a principled tool for online batching in inference serving. Beginning with the idealized setting with perfect predictions of future arrivals, in the adversarial case, we show that bounded batch capacity limits what any finite-lookahead scheduler can guarantee, yielding a worst-case barrier.  However, this barrier can be overcome with stochasticity in the arrival distribution: under Poisson arrivals in the stable regime, a scheduler with finite perfect lookahead approaches the offline-optimal rate exponentially fast in the window size. Further, this rate is robust under settings where we only have imperfect predictions of the short-term future arrivals. Our trace-based evaluation on realistic workloads shows that modest lookahead reduces flow time beyond fixed-waiting baselines and narrows the gap to the offline optimum. More broadly, the results motivate batching schedulers that combine online decisions, short-window forecasts, and queueing structure. This points toward prediction-aware serving policies with both analyzable guarantees and measurable performance gains.
\end{comment}

\section{Conclusion}

This paper studies short-horizon predictions in the form of lookahead as a principled tool for online batching and scheduling. Perhaps counterintuitively, we find that finite lookahead is most powerful when requests are generated stochastically. The intuition is that, under stochastic arrivals, the probability of encountering the adverse patterns that limit online performance decays exponentially with the length of the pattern. 

Our experiments on real request traces reveal consistent performance gains: even modest lookahead substantially outperforms fixed-wait baselines and closes much of the gap to the offline benchmark. Taken together, our theoretical and empirical results show that the value of lookahead is determined not by the prediction horizon alone, but by how that horizon interacts with the system constraints and the structure of the arrival process.

Extending this theoretical and experimental framework to broader and richer batching models is a natural direction for future work.

\subsection*{Declaration of AI Usage}
The paper was written and submitted on July 10, 2026. The authors used AI-based tools to improve the presentation of the manuscript, including polishing the writing and revising the organization, layout, and presentation order of proofs. The authors developed the underlying algorithmic ideas and constructed and sketched the proofs themselves; AI tools were additionally used to help present these arguments in a clearer and more polished form. The authors take full responsibility for the content of the paper.

%Our results identify when finite lookahead can fundamentally overcome online uncertainty in batching. Under adversarial arrivals, its power is limited by capacity: with unbounded batches, performance approaches the offline optimum, whereas any finite capacity preserves a constant gap for every bounded lookahead window. Under stable Poisson arrivals, by contrast, the obstruction is confined to exponentially unlikely events: regeneration allows a finite-lookahead scheduler to match the offline optimum except on such events, so the online-offline gap vanishes exponentially fast with the lookahead length, at an asymptotically tight rate.

%These guarantees remain meaningful when forecasts are imperfect. Our error-aware algorithms degrade smoothly with the rank-matched arrival-time error while retaining constant prediction-free guarantees under arbitrarily inaccurate predictions. Experiments on synthetic arrivals and real request traces further show that modest lookahead improves substantially over fixed-wait baselines and narrows the gap to the offline benchmark. Taken together, our results show that the power of lookahead is determined not by the prediction horizon alone, but by its interaction with the system constraints and the structure of the arrival process.

\clearpage
\appendix

\startcontents[appendices]
\section*{Appendix}
%\section*{Appendix Contents}
%\printcontents[appendices]{}{1}[2]{}

\section{The Adversarial Batching Model (Omitted Proofs from~\Cref{sec:adversarial})}

This appendix contains the formal proofs for the adversarial batching model studied in~\Cref{sec:adversarial}. We begin in~\Cref{app:superadditive} with the proof of the super-additivity property (\Cref{lemma:superadditive}). In~\Cref{app:makespan}, we then establish general structural facts about the offline optimum and prove the makespan comparison lemma (\Cref{lemma:opt-makespan}). The remaining subsections provide the formal proofs of the main adversarial results: the online bounds of~\Cref{thm:adv-online-bounds} in~\Cref{appendix:adv-online-bounds}, the uncapacitated lookahead upper bound of~\Cref{thm:adv-lookahead-uncap-upper} in~\Cref{appendix:adv-lookahead-upper}, the uncapacitated lookahead lower bound of~\Cref{thm:adv-lookahead-uncap-lower} in~\Cref{appendix:adv-uncap-lookahead-lb}, and the capacitated lookahead lower bound of~\Cref{thm:adv-lookahead-cap-lb} in~\Cref{sec:adv-cap-lb}.

\subsection{Super-Additivity of the Optimal Flow-Time (\Cref{lemma:superadditive})}\label{app:superadditive}

We begin with the proof of~\Cref{lemma:superadditive}, which states that the optimal flow-time objective is super-additive over partitions of the job set. This property will be useful when comparing algorithms to the offline optimum: it allows us to lower-bound the offline cost on a full instance by the sum of the offline costs on disjoint sub-instances.

\superadditivelemma*
\begin{proof}
Consider an optimal offline schedule for $\inst$. We construct from it two feasible schedules, one for $\inst_1$ and one for $\inst_2$. For each $k\in\{1,2\}$, take the schedule for $\inst$ and delete from every batch all jobs that do not belong to $\inst_k$; if a batch becomes empty, discard it. All remaining batches are processed at the same times as in the original schedule.

This gives a feasible schedule for $\inst_k$. Indeed, every job of $\inst_k$ is processed no earlier than its release time, because this was true in the original schedule. Moreover, every remaining batch has size at most $\capac$, since it is a subset of a batch in the original schedule.

Let $T_j^*$ denote the completion time of job $j$ in the optimal schedule for $\inst$. In the restricted schedule for $\inst_k$, every job $j\in \inst_k$ completes at the same time $T_j^*$. Therefore, the cost of this restricted schedule is exactly
$$
\sum_{j\in \inst_k} (T_j^*-r_j).
$$
Since $\opt(\inst_k)$ is the minimum possible flow time for $\inst_k$, we have
$$
\opt(\inst_k)
\le
\sum_{j\in \inst_k} (T_j^*-r_j).
$$
Summing over $k=1,2$, and using that $\inst_1,\inst_2$ partition $\inst$, gives
$$
\begin{aligned}
\opt(\inst_1)+\opt(\inst_2)
&\le
\sum_{j\in \inst_1}(T_j^*-r_j)
+
\sum_{j\in \inst_2}(T_j^*-r_j) =
\sum_{j\in \inst}(T_j^*-r_j)
=
\opt(\inst).
\end{aligned}
$$
\end{proof}

\subsection{Approximate Makespan Optimality of the Offline Optimum (\Cref{lemma:opt-makespan})}\label{app:makespan}

We next prove~\Cref{lemma:opt-makespan}. We first introduce notation for the number of arrivals, the number of completed jobs, and the makespan of a schedule.

\begin{definition}[Number of Arrivals]
Let $\inst$ be any instance. For each time $t\ge 0$, we use
$$
A(t\mid \inst):=\big|\{r_j\in\inst: r_j\le t\}\big|
$$
to denote the number of jobs that have arrived on instance $\inst$ by time $t$.
\end{definition}

\begin{definition}[Number of Completed Jobs]
Let $\inst$ be any instance and let $\sched$ be any schedule. For each time $t\ge 0$, we use
$$
N_\sched(t\mid \inst):=\big|\{r_j\in\inst: T_\sched(j)\le t\}\big|
$$
to denote the number of jobs that have completed their processing under $\sched$ on instance $\inst$ by time $t$.
\end{definition}

\begin{definition}[Makespan]
Let $\inst$ be any instance and let $\sched$ be any schedule. We use
$$
\makespan(\sched\mid\inst):=\max_{r_j\in \inst}\{T_\sched(j)\}
$$
to denote the makespan of $\sched$ on instance $\inst$, i.e. the largest completion time among all jobs.
\end{definition}

We first record that offline-optimal schedules are necessarily work-conserving.

\begin{proposition}\label{claim:opt-work-conserving}
Let $\inst$ be any instance and let $\sched$ be any feasible schedule for $\inst$ that minimizes flow-time, i.e. $\sched$ satisfies $\sched(\inst)=\opt(\inst)$. Then $\sched$ is work-conserving.
\end{proposition}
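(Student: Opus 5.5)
The plan is to argue by contradiction, with one exchange argument for each of the two properties in \Cref{def:work-conserving}. In each case, if an optimal schedule $\sched$ violates the property, we modify it locally so that it stays feasible, no job completes later, and at least one job completes strictly earlier. That strictly lowers the total flow-time, contradicting $\sched(\inst)=\opt(\inst)$.

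\emph{Property (1).} Suppose $\sched$ starts a batch $b$ at time $t$ while $q$ jobs are pending, and $b$ contains fewer than $\min\{q,\capac\}$ jobs. Then some pending job $j$ is not in $b$ (so $r_j\le t$), and $b$ has unused capacity. Job $j$ is processed in a later batch, so $T_\sched(j)\ge t+2s>t+s$. We move $j$ into $b$ and leave every batch start time unchanged. This is feasible: $b$ still has at most $\capac$ jobs, $j$ is released by $t$, and removing $j$ from its original batch cannot violate anything (if that batch becomes empty we discard it). Job $j$ now completes at $t+s$, which is strictly earlier, and all other jobs are unaffected. So the flow-time strictly decreases.

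\emph{Property (2).} Suppose the processor is idle at some time $t$, at least $\capac$ jobs are pending, and $\sched$ does not start a batch at $t$. The first step is to pin down the idle interval. Let $t'>t$ be the next batch start in $\sched$; it exists because pending jobs must eventually be processed. The processor is idle on $[t,t')$ and all of the at least $\capac$ jobs pending at $t$ are still pending at $t'$. By property (1), which we may now assume holds, the batch at $t'$ is full and consists of $\capac$ jobs.

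We then shift that batch: start it at $t$ instead of $t'$, using $\capac$ jobs pending at $t$. Since the processor was idle on $[t,t')$, the shifted batch occupies $[t,t+s)\subseteq[t,t'+s)$ and does not overlap any other batch. If the batch at $t'$ originally contained jobs released after $t$, we swap them with jobs pending at $t$ that were originally scheduled in later batches. Those later batches start at or after $t'+s$, so every job in them is released by then, and the swap keeps all release constraints satisfied. After this, every job completes no later than before, and the $\capac\ge1$ jobs in the moved batch complete strictly earlier, by $t'-t>0$. This contradicts optimality.

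The main obstacle is the swap in property (2): checking that the swapped-out later jobs are released before their new batches start, and that the swapped-in jobs genuinely finish earlier. The simplest route is to choose the shifted batch to consist entirely of jobs pending at $t$. The set of batch start times stays the same except that $t'$ becomes $t$, so the multiset of completion times strictly improves. Since total flow-time equals the sum of completion times minus a constant, it then suffices to show that the new multiset is feasible for the jobs. That follows by assigning jobs to batches in earliest-release-first order, which respects release times because the batch at $t$ is filled only with jobs released by $t$.
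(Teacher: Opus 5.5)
Your proposal is correct and matches the paper's proof essentially step for step. For property (1), you use the same exchange, moving an omitted pending job into the under-filled batch; for property (2), you use (1) to see that the next batch is full, swap any of its jobs released after $t$ with jobs pending at $t$ from later batches, and then start that batch at $t$ (your closing earliest-release-first reassignment is just an equivalent packaging of that swap).
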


\begin{proof}
Let $\sched$ be an offline-optimal schedule. First, suppose that $\sched$ starts a batch while $q$ jobs are pending but includes fewer than $\min\{q,\capac\}$ of them. Choose one omitted pending job and add it to this same batch, removing it from the later batch in which it was originally scheduled. If that later batch becomes empty, discard it. This modification preserves feasibility, does not delay any job, and strictly decreases the completion time of the added job. This contradicts optimality.

Second, suppose that the processor is idle at some time $t$ while at least $\capac$ jobs are pending. Let $u\ge t$ be the next time at which $\sched$ starts a batch. By the first part, this batch is full. If some job in this batch has not yet arrived by time $t$, then we may swap it with a job that was already pending at time $t$ and is scheduled in a later batch. This preserves feasibility and does not change the multiset of completion times. Repeating this exchange if necessary, we obtain an optimal schedule in which the batch started at time $u$ consists only of jobs that were already pending at time $t$. Starting this full batch at time $t$ instead of $u$ is feasible, does not delay any later batch, and strictly decreases the completion times of the jobs in this batch. This again contradicts optimality. Therefore $\sched$ is work-conserving.
\end{proof}

We next show that a work-conserving schedule cannot lag too far behind any other feasible schedule in terms of the number of completed jobs.

\begin{lemma}\label{lemma:job_difference}
Let $\inst$ be any instance, let $\sched_1$ be any work-conserving schedule for $\inst$, and let $\sched_2$ be any feasible schedule for $\inst$. Then, for all $t\ge 0$,
$$
N_{\sched_2}(t\mid \inst)
\le
N_{\sched_1}(t+s\mid \inst)+\capac-1 .
$$
\end{lemma}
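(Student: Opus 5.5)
The plan is a ``last reset point'' argument on the timeline of $\sched_1$. We fix $t$ and look backwards from $t$ for the most recent moment at which $\sched_1$ has no backlog that could possibly be ahead of $\sched_2$. From that moment on, $\sched_1$ is running full batches back to back, which completes jobs at the maximum possible rate $\capac/s$. Throughout, we use the trivial bound $N_{\sched_2}(x\mid\inst)\le A(x\mid\inst)$ for every $x$.

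\emph{Step 1 (the reset time).} Let $\tau\le t$ be the latest time satisfying one of the following two conditions.
\begin{itemize}
\item[(a)] $\tau=0$, or the processor of $\sched_1$ is idle at $\tau$.
\item[(b)] $\tau$ is the start time of a non-full batch of $\sched_1$.
\end{itemize}
In case (a), part (2) of work conservation (\Cref{def:work-conserving}) says that fewer than $\capac$ jobs can be pending at $\tau$. Hence
$$N_{\sched_1}(\tau\mid\inst)\ge A(\tau\mid\inst)-(\capac-1).$$
In case (b), part (1) of work conservation says that the non-full batch contains every pending job. So every job released by $\tau$ is completed by $\tau+s$, that is,
$$N_{\sched_1}(\tau+s\mid\inst)\ge A(\tau\mid\inst).$$
In both cases we obtain an anchor that is at least $N_{\sched_2}(\tau\mid\inst)$, up to the additive $\capac-1$ in case (a) or the time shift $s$ in case (b).

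\emph{Step 2 (counting on $(\tau,t]$).} By maximality of $\tau$, during $(\tau,t]$ the processor of $\sched_1$ is never idle and every batch it starts is full. So its full batches tile the interval from the first batch start at or after $\tau$ onward. This yields at least
$$\capac\cdot\#\{\text{batches of }\sched_1\text{ started in }[\tau,t]\}$$
completions in $(\tau,t+s]$, beyond those counted in the anchor. For $\sched_2$, we bound from above the completions in $(\tau,t]$ that are not already covered by $A(\tau\mid\inst)$. Since the anchor already counts every job released by $\tau$, only jobs released after $\tau$ matter. Such jobs lie in batches of $\sched_2$ that start after $\tau$ and finish by $t$. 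Because $\sched_2$ is non-preemptive, there are at most $\lfloor (t-\tau)/s\rfloor$ such batches, contributing at most $\capac\lfloor (t-\tau)/s\rfloor$ jobs. Meanwhile $\sched_1$ is continuously busy from (essentially) $\tau$ and starts at least $\lfloor (t-\tau)/s\rfloor+1$ full batches in $[\tau,t]$, all finishing by $t+s$. In case (b), the first of these is the non-full batch started at $\tau$, so it must be excluded, which uses up exactly the extra $+1$.

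\emph{Step 3 (combine).} Adding the anchor from Step 1 to the counts from Step 2 gives
$$N_{\sched_2}(t\mid\inst)\le N_{\sched_1}(t+s\mid\inst)+\capac-1.$$
In case (a) the slack $\capac-1$ absorbs the pending jobs at $\tau$. In case (b) no slack is needed at all.

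The main obstacle is the boundary bookkeeping in Step 2. Three points need care:
\begin{itemize}
\item $\sched_1$ may already be mid-batch at $\tau$ in case (a) when $\tau$ is a batch start rather than an idle instant. I will resolve this by choosing $\tau$ as an idle instant or an instant of batch start.
\item There is an off-by-one between $\lfloor (t-\tau)/s\rfloor$ and the number of $\sched_1$ batches started in $[\tau,t]$.
\item Jobs that $\sched_1$ has pending but that arrived after $\tau$ must be shown never to limit fullness.
\end{itemize}
I would handle all three by proving, by induction over the successive batch-start times of $\sched_1$ in $[\tau,t]$, that at each such start $\sigma$ the count $N_{\sched_1}(\sigma+s\mid\inst)$ exceeds $\min\{A(\sigma\mid\inst),\ N_{\sched_2}(\sigma\mid\inst)+\capac\}$ minus $\capac-1$.
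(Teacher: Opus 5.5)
Your route differs from the paper's and most of it works, but the case~(a) anchor in Step~1 is false as written and needs a small repair.

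\textbf{How the two arguments differ.} The paper argues by minimal counterexample. It takes the first violation time $\tau$ and uses that the inequality still holds at $\tau-s$. It then notes that at least $\capac$ jobs finished by $\sched_2$ by time $\tau$ were released by $\tau-s$ and are unstarted under $\sched_1$ throughout $(\tau-s,\tau]$. Work conservation therefore forces $\sched_1$ to finish a full batch in $(\tau,\tau+s]$, while $\sched_2$ finishes at most $\capac$ jobs in $(\tau-s,\tau]$. You instead go back to the last ``reset'' of $\sched_1$ and compare its run of back-to-back full batches with the maximal completion rate of any schedule. Your case~(b) is correct, and so is your bound on the number of $\sched_2$-batches. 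Case~(b) even gives $N_{\sched_2}(t\mid\inst)\le N_{\sched_1}(t+s\mid\inst)$ with no slack.

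\textbf{The flaw in case~(a).} The bound $N_{\sched_1}(\tau\mid\inst)\ge A(\tau\mid\inst)-(\capac-1)$ holds only at an instant where $\sched_1$ is idle and starts no batch. Such an instant can be the latest reset only if $\tau=t$, because idleness at $\tau<t$ persists until the next batch start, which is strictly after $\tau$. So after your proposed resolution, $\tau$ is $0$ or the start of a full batch that ends an idle period, and there simultaneous arrivals break the anchor.

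For example, release $2\capac$ jobs at time $0$ and $\capac$ jobs at each of $s,2s,\ldots$, and let $m=\lfloor t/s\rfloor$.
\begin{itemize}
\item $\sched_1$ never idles and all its batches are full, so your $\tau$ is $0$.
\item Yet $N_{\sched_1}(0\mid\inst)=0<\capac+1=A(0\mid\inst)-(\capac-1)$.
\item Your Step~2 accounting then only gives $N_{\sched_2}(t\mid\inst)\le \capac(m+2)$.
\item The target is $N_{\sched_1}(t+s\mid\inst)+\capac-1=\capac(m+2)-1$, so the chain misses by one.
\end{itemize}

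\textbf{The repair.} Anchor on $A(\tau^-\mid\inst)$, the number of jobs released strictly before $\tau$. Since $\sched_1$ is idle on some $[b,\tau)$ and completes nothing in $(b,\tau]$, work conservation gives $A(\tau^-\mid\inst)\le N_{\sched_1}(\tau\mid\inst)+\capac-1$ (trivially when $\tau=0$). Correspondingly, count the jobs with $r_j\ge\tau$ that $\sched_2$ completes by time $t$. With $m=\lfloor (t-\tau)/s\rfloor$, these jobs lie in at most $m$ batches starting in $[\tau,t-s]$. Meanwhile $\sched_1$ runs $m+1$ full batches starting at $\tau,\tau+s,\ldots,\tau+ms$; if the batch at $\tau$ is not full, you are in case~(b). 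Together these give $N_{\sched_2}(t\mid\inst)\le N_{\sched_1}(\tau\mid\inst)+\capac-1+\capac m\le N_{\sched_1}(t+s\mid\inst)-1$. The remaining case, $\sched_1$ idle at $t$ itself, follows directly from $A(t\mid\inst)\le N_{\sched_1}(t\mid\inst)+\capac-1$. With this fix the proof is complete, and the induction you sketch at the end is unnecessary.

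\textbf{What each approach buys.} Your argument is global and more explanatory: a lag can build up only while $\sched_1$ idles with fewer than $\capac$ pending jobs, and a non-full batch erases it entirely. The paper's one-step look-back by $s$ never has to choose a reset point. It therefore sidesteps exactly the idle versus batch-start boundary conventions that break your case~(a).
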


\begin{proof}
Assume for contradiction that the statement is false, and let $\tau$ be the first time at which a violation occurs. Thus,
\begin{equation}\label{eq_lemma_count_1}
N_{\sched_2}(\tau\mid \inst)
>
N_{\sched_1}(\tau+s\mid \inst)+\capac-1 .
\end{equation}
Since no schedule can complete a job before processing it for time $s$, we have $\tau\ge s$. By the minimality of $\tau$, the inequality is not violated at time $\tau-s$, so
\begin{equation}\label{eq_lemma_count_2}
N_{\sched_2}(\tau-s\mid \inst)
\le
N_{\sched_1}(\tau\mid \inst)+\capac-1 .
\end{equation}

By~\Cref{eq_lemma_count_1}, there are at least $\capac$ jobs that are completed by $\sched_2$ by time $\tau$ but are not completed by $\sched_1$ by time $\tau+s$. These jobs must have been released by time $\tau-s$. Moreover, none of them can be in a batch of $\sched_1$ that starts at or before time $\tau$, since any such batch would complete by time $\tau+s$. Hence all these jobs are pending under $\sched_1$ throughout the interval $(\tau-s,\tau]$.

Since $\sched_1$ is work-conserving, it cannot remain idle while these $\capac$ jobs are pending. Therefore, either $\sched_1$ starts a full batch at time $\tau$, or it is already processing a full batch that started after time $\tau-s$, or it completes a batch at time $\tau$ and then immediately starts a full batch. In all cases, $\sched_1$ completes at least $\capac$ additional jobs during the interval $(\tau,\tau+s]$. Thus,
\begin{equation}\label{eq_lemma_count_3}
N_{\sched_1}(\tau+s\mid \inst)
\ge
N_{\sched_1}(\tau\mid \inst)+\capac .
\end{equation}
On the other hand, every feasible schedule can complete at most $\capac$ jobs in any interval of length $s$, and hence
\begin{equation}\label{eq_lemma_count_4}
N_{\sched_2}(\tau\mid \inst)
\le
N_{\sched_2}(\tau-s\mid \inst)+\capac .
\end{equation}
Combining~\Cref{eq_lemma_count_2}, \Cref{eq_lemma_count_3}, and~\Cref{eq_lemma_count_4}, we get
$$
\begin{aligned}
N_{\sched_2}(\tau\mid \inst)-N_{\sched_1}(\tau+s\mid \inst)
&\le
N_{\sched_2}(\tau-s\mid \inst)+\capac
-
N_{\sched_1}(\tau+s\mid \inst) \\
&\le
N_{\sched_2}(\tau-s\mid \inst)
-
N_{\sched_1}(\tau\mid \inst) \\
&\le
\capac-1,
\end{aligned}
$$
contradicting~\Cref{eq_lemma_count_1}.
\end{proof}

We can now prove the approximate makespan optimality of the offline optimum.

\makespanlemma*
\begin{proof}
Let $\sched^*$ be an offline-optimal schedule for $\inst$, and let $\sched$ be any feasible schedule for $\inst$. Define $M^*:=\makespan(\sched^*\mid\inst)$ and $M:=\makespan(\sched\mid\inst)$. It is enough to show that $M^*\le M+2s$, since $\sched$ was arbitrary. By definition of the makespan, the schedule $\sched$ completes every job by time $M$. Since every batch requires processing time $s$, every job in $\inst$ must have arrived by time $M-s$. Moreover,
$$
N_\sched(M\mid\inst)=|\inst|.
$$
By~\Cref{claim:opt-work-conserving}, the schedule $\sched^*$ is work-conserving. Applying~\Cref{lemma:job_difference} with $\sched_1=\sched^*$, $\sched_2=\sched$, and $t=M$, we obtain
$$
|\inst|
=
N_\sched(M\mid\inst)
\le
N_{\sched^*}(M+s\mid\inst)+\capac-1.
$$
Thus, at time $M+s$, the schedule $\sched^*$ has at most $\capac-1$ unfinished jobs.

If no jobs are unfinished at time $M+s$, then $M^*\le M+s$ and we are done. Otherwise, all unfinished jobs have already arrived, since every job arrived by time $M-s$. If $\sched^*$ is idle at time $M+s$, then starting a batch containing all unfinished jobs at time $M+s$ is feasible and strictly improves the total flow time, contradicting optimality. Hence $\sched^*$ must be processing a batch at time $M+s$. Since there are at most $\capac-1$ unfinished jobs in total, this batch has room to contain every unfinished job. If some unfinished job were not in this batch, adding it to the current batch would preserve feasibility and strictly decrease its completion time without delaying any other job, again contradicting optimality. Therefore all unfinished jobs are contained in the batch being processed at time $M+s$, and they complete by time $M+2s$. Thus $M^*\le M+2s$ for every feasible schedule $\sched$ and the proof follows.
\end{proof}

\subsection{The Competitive Ratio of Online Algorithms (\Cref{thm:adv-online-bounds})}\label{appendix:adv-online-bounds}

We begin by observing that the number of completed jobs and arrivals can be used to express the total flow time as an area. Indeed, at every time $t$, the quantity $A(t\mid \inst)-N_\sched(t\mid \inst)$ is exactly the number of jobs that have arrived but have not yet completed processing under $\sched$. Therefore, integrating this quantity over time counts the total amount of time that jobs spend in the system, which is precisely the total flow time.

\begin{fact}\label{claim:flow-def}
Let $\inst$ be any instance and let $\sched$ be any schedule. Then, the total flow time of $\sched$ on $\inst$ can be expressed as
$$
\sched(\inst)
=
\int_{t=0}^\infty
\big(A(t\mid \inst) - N_\sched(t\mid \inst)\big)\,dt .
$$
\end{fact}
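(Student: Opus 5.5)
The plan is the standard layer-cake (area) argument: rewrite each flow time as an integral of an indicator, then exchange the finite sum with the integral. Fix an instance $\inst$ with release times $r_1\le\cdots\le r_n$ and a feasible schedule $\sched$ with completion times $T_\sched(j)$. Feasibility gives $T_\sched(j)\ge r_j+s>r_j$ for every job.

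First, for each job $j$, write its flow time as the length of the interval on which it is in the system:
$$
T_\sched(j)-r_j=\int_0^\infty \mathbbm{1}\big[r_j\le t< T_\sched(j)\big]\,dt .
$$

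Second, sum over $j$. Since the sum is finite and each integrand is a nonnegative indicator, the sum and the integral may be swapped (linearity of the integral, or Tonelli), giving
$$
\sched(\inst)=\int_0^\infty \sum_{j=1}^n \mathbbm{1}\big[r_j\le t<T_\sched(j)\big]\,dt .
$$

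Third, identify the integrand pointwise in $t$. Because every job completes after it is released, $\{j: T_\sched(j)\le t\}\subseteq\{j: r_j\le t\}$. Hence
$$
\mathbbm{1}[r_j\le t<T_\sched(j)]=\mathbbm{1}[r_j\le t]-\mathbbm{1}[T_\sched(j)\le t].
$$
Summing over $j$, the first term gives $A(t\mid\inst)$ and the second gives $N_\sched(t\mid\inst)$.

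Two small points remain, and neither is a real obstacle. The first is the strict-versus-nonstrict boundary conventions in the indicators. These differ only at finitely many points $t$, a measure-zero set, so the integral is unaffected. The second is convergence of the integral. The integrand vanishes for $t\ge\max_j T_\sched(j)$, so the integral is over a bounded range. If some job is never completed, both sides are $+\infty$ and the identity still holds. The only care needed is therefore the containment used in the third step, which is where feasibility ($T_\sched(j)>r_j$) enters.
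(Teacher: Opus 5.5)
Your proof is correct and takes the same approach as the paper. The paper states this as a Fact and justifies it only informally: $A(t\mid\inst)-N_\sched(t\mid\inst)$ counts the jobs in the system at time $t$, so integrating it over time gives the total flow time. Your indicator decomposition and sum--integral exchange are exactly the rigorous version of that remark. In fact, with the definitions $A(t\mid\inst)=|\{j: r_j\le t\}|$ and $N_\sched(t\mid\inst)=|\{j: T_\sched(j)\le t\}|$, the pointwise identity holds for every $t$, so the measure-zero caveat about boundary conventions is not even needed.
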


We also record a simple lower bound on the optimal flow time, stating that every job contributes at least $s$ to the total flow time of any feasible schedule, since each batch requires $s$ units of processing time.

\begin{fact}\label{claim:opt-lb}
Let $\inst$ be any instance. Then, $\opt(\inst)\ge s|\inst|$.
\end{fact}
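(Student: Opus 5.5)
The plan is a one-line argument from the definitions, keeping the two constraints (each batch has size at most $\capac$ and takes time $s$) separate. Fix any feasible schedule $\sched$ for $\inst$, in particular an optimal one. Every job $j$ is placed in some batch. That batch starts no earlier than $r_j$, since jobs may not be processed before release. It runs for exactly $s$ time units without preemption. Hence $T_\sched(j)\ge r_j+s$, so each term of the flow-time sum is at least $s$.

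Summing this bound over the $|\inst|$ jobs gives
\[
\sched(\inst)=\sum_{j\in\inst}\big(T_\sched(j)-r_j\big)\ge s|\inst|
\]
for every feasible $\sched$. Taking the minimum over all feasible schedules yields $\opt(\inst)\ge s|\inst|$.

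If a randomized scheduler is allowed in the definition of $\opt$, the same inequality holds pathwise for each realization, so it survives taking expectations. No capacity argument or use of \Cref{claim:flow-def} is needed. The only point to verify is that completion is measured at the end of the batch containing $j$, which the model specifies, since all jobs in a batch complete simultaneously. There is no real obstacle here.
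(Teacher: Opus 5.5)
Your argument is correct and matches the paper's own justification: each job's batch starts no earlier than its release time and takes exactly $s$ time units, so every job contributes at least $s$ to the flow time of any feasible schedule, and minimizing over schedules gives the bound. The paper states this as a fact with only that one-sentence justification, so your writeup is simply a slightly more explicit version of it.
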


We now prove~\Cref{thm:adv-online-bounds}, which we re-state for the reader's convenience. For simplicity, we break the proof into three separate statements: the upper bound on greedy is proven in~\Cref{lemma:adv_greedy}, the deterministic lower bound in~\Cref{lemma:adv_det_lb}, and the randomized lower bound in~\Cref{lemma:adv_ran_lb}.

\adversarialonline*

\begin{lemma}\label{lemma:adv_greedy}
The greedy online schedule $\greedy$ is $2$-competitive.
\end{lemma}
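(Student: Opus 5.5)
We will prove the lemma with the area representation of flow time (\Cref{claim:flow-def}) together with the counting comparison of \Cref{lemma:job_difference}. The greedy scheduler $\greedy$ is work-conserving: whenever it is idle and a job is pending it starts a batch of $\min\{q,\capac\}$ jobs. So \Cref{lemma:job_difference} applies with $\sched_1=\greedy$ and $\sched_2=\sched^*$, an offline-optimal schedule. However, the additive term $\capac-1$ in that lemma is too weak to integrate directly: it would give an additive error proportional to the length of the horizon, not to the number of jobs. The plan is therefore to prove a sharper, greedy-specific statement: greedy lags the optimum by at most $s$ time units in the number of completed jobs,
$$
N_{\sched^*}(t\mid\inst)\le N_{\greedy}(t+s\mid\inst)\qquad\text{for all } t\ge 0.
$$
This is the ``catch up within $s$'' property mentioned in the text.

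Suppose this inequality holds. The optimal flow time is $\int (A(t)-N_{\sched^*}(t))\,dt$, where we write $A(t)=A(t\mid\inst)$ and abbreviate the other counts similarly. Greedy's cost is
$$
\int (A(t)-N_\greedy(t))\,dt \le \int \big(A(t)-N_{\sched^*}(t-s)\big)\,dt = \int\big(A(t)-A(t-s)\big)\,dt + \int \big(A(t-s)-N_{\sched^*}(t-s)\big)\,dt .
$$
The first integral equals $s|\inst|$, since each job is counted for exactly $s$ time units. The second integral equals $\opt(\inst)$ after a shift of variable. Applying \Cref{claim:opt-lb}, which gives $s|\inst|\le\opt(\inst)$, we conclude $\greedy(\inst)\le 2\opt(\inst)$.

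The main obstacle is the comparison inequality itself. The idea is to argue at a first violation time $\tau$, as in the proof of \Cref{lemma:job_difference}, but to exploit the structure of greedy. Let $\tau$ be the first time with $N_{\sched^*}(\tau)>N_\greedy(\tau+s)$. The job that $\sched^*$ completes at $\tau$ was in a batch started at $\tau-s$, so it was released by $\tau-s$. Consider the last moment $u\le\tau$ at which greedy's server was idle, or else the start of the greedy busy period containing $\tau$. Greedy is continuously busy on $[u,\tau]$ in back-to-back batches, and every job released by $u$ is served in a greedy batch that starts no later than the first batch after its release.

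We would then do case analysis:
\begin{itemize}
\item If greedy is idle at some time in $[\tau-s,\tau]$, then every job released by $\tau-s$ has been started by $\tau$, hence completes by $\tau+s$. This gives $N_\greedy(\tau+s)\ge A(\tau-s)\ge N_{\sched^*}(\tau)$, a contradiction.
\item Otherwise, greedy runs back-to-back batches from some idle point $u<\tau-s$. If every batch in this run is full, then greedy completes $\capac$ jobs per $s$ time units starting from $u$, where it had cleared everything released before $u$. Meanwhile $\sched^*$ can complete at most $\capac$ jobs per $s$ time units, and can only use jobs released after the last point where it matched greedy. This yields the inequality.
\item The remaining case is a non-full batch in the run. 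Such a batch, started at time $b$, empties the queue of all jobs released by $b$. We would restart the argument from $b$ as the new reference point.
\end{itemize}

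Getting this potential/induction argument clean, in particular handling the offset by exactly one batch length $s$ and the fact that the two schedules may have misaligned batch boundaries, is the step I expect to require the most care. The fallback would be to induct over greedy's batch start times, comparing the counts at those times.
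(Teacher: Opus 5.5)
Your reduction matches the paper. The lag inequality $N_{\sched^*}(t\mid\inst)\le N_\greedy(t+s\mid\inst)$ is exactly the paper's key claim, and your integration step is an equivalent rewriting of the paper's $\greedy(\inst)-\opt(\inst)\le\int (N_\greedy(t+s)-N_\greedy(t))\,dt=s|\inst|$. You differ in how you prove the lag inequality.

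\textbf{The paper's route.} The paper argues purely locally at a first violation $\tau$:
\begin{itemize}
\item minimality gives $N_\greedy(\tau)\ge N_{\sched^*}(\tau-s)$;
\item capacity gives $N_{\sched^*}(\tau)\le N_{\sched^*}(\tau-s)+\capac$;
\item so it only needs greedy to complete $\capac$ jobs in $(\tau,\tau+s]$.
\end{itemize}
That last step holds because some job finished by $\sched^*$ by $\tau$ was released by $\tau-s$ and is still unstarted by greedy at $\tau$. Hence greedy is busy on $[\tau-s,\tau]$, and every batch it starts there is full. No busy periods, reset points, or batch counting are needed.

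\textbf{Your route.} Your reset-point argument also works, and it needs neither the minimality of $\tau$ nor any ``restart''. If greedy is idle at $\tau$, your first case already finishes. Otherwise, let $b\le\tau$ be the start of the \emph{last} non-full greedy batch begun by $\tau$ in the current busy period, or the start of that busy period if there is none. Then:
\begin{itemize}
\item greedy has finished every job released by $b$ (resp.\ before $b$) by time $b+s$ (resp.\ $b$);
\item all its batches started in $(b,\tau]$ (resp.\ $[b,\tau]$) are full and back-to-back, giving at least $\capac\lfloor(\tau-b)/s\rfloor$ (resp.\ $\capac$ more) further completions by $\tau+s$.
\end{itemize}
On the other side, state the bound precisely as $N_{\sched^*}(\tau)\le A(b)+\capac m'$ (resp.\ with $A(b^-)$). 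Here $m'\le\lfloor(\tau-b)/s\rfloor$ counts $\sched^*$-batches that start after $b$ (resp.\ at or after $b$) and end by $\tau$.

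Your remark that $\sched^*$ ``can only use jobs released after'' the reference point is not literally true: $\sched^*$ may finish older jobs later, but there are at most $A(b)$ of those. The one-batch lag $s$ is exactly what pays for greedy's non-full batch at $b$.

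\textbf{Comparison.} The paper's proof is shorter and avoids this boundary bookkeeping. Yours is a direct pointwise argument, and it also shows why greedy works: it is fully caught up at reset points and throughput-optimal between them.
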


\begin{proof}
Fix any adversarial instance $\inst$, and let $\sched^*$ be an offline-optimal schedule for $\inst$. We begin by showing that the number of jobs completed by the greedy scheduler at any time dominates the number of jobs completed by the offline optimum, up to a lag of $s$. Namely, we show that
$$
N_\greedy(t+s\mid\inst)\ge N_{\sched^*}(t\mid\inst)
$$
for all $t\ge 0$.

Assume for contradiction that this is not the case, and let $\tau$ be the first time such that
\begin{equation}\label{eq:adv_online_ub_1}
N_\greedy(\tau+s\mid\inst)
<
N_{\sched^*}(\tau\mid\inst).
\end{equation}
Since no schedule can complete any job before time $s$, we have $\tau\ge s$. By the minimality of $\tau$, the inequality is not violated at time $\tau-s$, and hence
\begin{equation}\label{eq:adv_online_ub_2}
N_\greedy(\tau\mid\inst)
\ge
N_{\sched^*}(\tau-s\mid\inst).
\end{equation}
Furthermore, by the capacity constraint, no schedule can complete more than $\capac$ jobs during any interval of length $s$. Therefore,
\begin{equation}\label{eq:adv_online_ub_3}
N_{\sched^*}(\tau\mid\inst)
\le
N_{\sched^*}(\tau-s\mid\inst)+\capac .
\end{equation}
Combining~\Cref{eq:adv_online_ub_1,eq:adv_online_ub_2,eq:adv_online_ub_3}, we obtain
\begin{equation}\label{eq:adv_online_ub_4}
N_\greedy(\tau+s\mid\inst)
<
N_\greedy(\tau\mid\inst)+\capac .
\end{equation}

We now derive a contradiction. By~\Cref{eq:adv_online_ub_1}, there exists a job $j$ that is completed by $\sched^*$ by time $\tau$ but is not completed by $\greedy$ by time $\tau+s$. Since $j$ is completed by $\sched^*$ by time $\tau$, it must have arrived by time $\tau-s$. Moreover, $j$ cannot be contained in any greedy batch that starts at or before time $\tau$, since such a batch would complete by time $\tau+s$. Thus, job $j$ is pending for greedy throughout the interval $[\tau-s,\tau]$.

By definition of the greedy scheduler, this can only happen if greedy is continuously busy during this interval. Moreover, any greedy batch that starts during $[\tau-s,\tau]$ must be full; otherwise, it would include job $j$. Hence greedy completes a full batch of $\capac$ jobs during the interval $[\tau,\tau+s]$. This implies
$$
N_\greedy(\tau+s\mid\inst)
=
N_\greedy(\tau\mid\inst)+\capac,
$$
contradicting~\Cref{eq:adv_online_ub_4}. Therefore,
$$
N_\greedy(t+s\mid\inst)\ge N_{\sched^*}(t\mid\inst)
$$
for all $t\ge 0$.

Using~\Cref{claim:flow-def}, we have
$$
\greedy(\inst)-\opt(\inst)
=
\int_{t=0}^\infty
\big(N_{\sched^*}(t\mid\inst)-N_\greedy(t\mid\inst)\big)\,dt .
$$
By the domination inequality above,
$$
\greedy(\inst)-\opt(\inst)
\le
\int_{t=0}^\infty
\big(N_\greedy(t+s\mid\inst)-N_\greedy(t\mid\inst)\big)\,dt
=
s|\inst|.
$$
The final equality holds because each completed job contributes to the integrand for exactly $s$ units of time. By~\Cref{claim:opt-lb}, $s|\inst|\le \opt(\inst)$. Hence $\greedy(\inst)\le 2\cdot \opt(\inst)$ as desired.
\end{proof}

\begin{lemma}\label{lemma:adv_det_lb}
For any $\epsilon>0$ and any deterministic online scheduler $\sched$, there exists an instance $\inst$ such that $\sched(\inst)\ge (2-\epsilon)\cdot\opt(\inst)$.
\end{lemma}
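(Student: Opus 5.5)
The plan is to build an adversarial instance that punishes the deterministic scheduler $\sched$ for the time at which it starts its very first batch. Release a single job at time $0$. Since $\sched$ is deterministic and sees nothing else, it either never starts a batch, giving infinite flow-time and nothing to prove, or it starts the singleton batch at some time $t\ge 0$ that the adversary can predict. Fix a small $\delta\in(0,s)$ and a large $M$. The adversary releases bursts of $C-1$ jobs at times $t+\delta+(i-1)s$ for $i=1,\ldots,M$. The assumption $C\ge 2$ is used here, since it makes the bursts nonempty.

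\textbf{Offline upper bound.} The offline schedule waits until $t+\delta$ and puts the initial job into the first burst's batch, which is then exactly full. It then processes every later burst the moment it arrives. Batches are spaced exactly $s$ apart, so no burst ever waits, and
\[
\opt(\inst)\le M(C-1)s+(t+\delta+s).
\]
This is $M(C-1)s+O(1)$ as $M\to\infty$ with $t,\delta,s$ fixed.

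\textbf{Online lower bound.} The processor of $\sched$ is busy on $[t,t+s]$ with the singleton batch, while burst $1$ arrives at $t+\delta<t+s$. Because batches cannot overlap, the $j$-th batch after the singleton starts no earlier than $t+js$. I will show that every job of burst $i$ completes no earlier than $t+(i+1)s$, i.e.\ it waits at least $s-\delta$ beyond its minimum processing time. Given this, $\sched(\inst)\ge M(C-1)(2s-\delta)$. Then
\[
\frac{\sched(\inst)}{\opt(\inst)}\;\ge\;\frac{2s-\delta}{s}-o_M(1),
\]
and choosing $\delta=\epsilon s/2$ and $M$ large gives ratio at least $2-\epsilon$.

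\textbf{Main obstacle: the claimed phase lag.} The difficulty is that an online scheduler might try to realign itself, for example by idling or by merging two bursts into one batch. Merging is possible when $C=2$, where bursts are single jobs and two fit in one batch. I would argue by induction on $i$, ordering batches by start time. No batch started before $t+\delta+(i-1)s$ can contain jobs of burst $i$, since they have not arrived. Any batch starting at or after that time but before the $i$-th post-singleton slot $t+is$ is impossible, because slots advance by at least $s$ from $t$. Hence the earliest batch containing burst-$i$ jobs starts at or after $t+is$, so those jobs complete at or after $t+(i+1)s$.

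Merging or idling can only push the relevant batch to an even later slot. To make this rigorous I expect to need a counting statement in the spirit of \Cref{lemma:job_difference}: writing $\tau_j$ for the start time of the $j$-th batch after the singleton, show $\tau_j\ge t+js$ and that the batch containing burst $i$ has index at least $i$. Summing the resulting per-job delays via \Cref{claim:flow-def} then gives the bound $\sched(\inst)\ge M(C-1)(2s-\delta)$.
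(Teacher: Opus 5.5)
There is a genuine gap, and it lies in the construction itself, not just in the write-up: your phase-lag claim is false on this instance. Each burst has only $C-1$ jobs, so every batch has one unit of spare capacity, and the scheduler can use it to catch up. Keep the singleton at $t$, idle at $t+s$, and start a full batch at $t+\delta+s$ containing all of burst $1$ plus one job of burst $2$. That batch completes exactly when burst $3$ arrives. The next full batch takes the $C-2$ leftovers of burst $2$ plus two jobs of burst $3$, and so on. The backlog shrinks by one job per batch, so after $C-1$ batches the schedule is exactly in phase and every later burst is served on arrival. Its total cost is $(M(C-1)+1)s+t+C(C-1)s/2=M(C-1)s+O(1)$, so the ratio on your instance tends to $1$, not $2-\delta/s$. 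For $C=2$ this is simply merging bursts $1$ and $2$ at $t+\delta+s$, the case you flagged.

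Your lower-bound argument uses nothing about $\sched$ beyond its first start time, so it would have to hold for every feasible schedule with a singleton at $t$; this catch-up schedule refutes it. The schedule is also produced by an honest deterministic online rule, e.g.\ ``when idle, start a batch of $\min\{q,C\}$ jobs only at an arrival epoch'' (plus a flush rule), which has $t=0$. The faulty step is the induction. The bound $\tau_j\ge t+js$ is only a lower bound on start times; it does not stop batch $i-1$ from starting in $[t+\delta+(i-1)s,\,t+is)$ and containing burst-$i$ jobs. The assertion that burst $i$ lies in a batch of index at least $i$ is exactly what the catch-up schedule violates. No counting statement like \Cref{lemma:job_difference} can repair this, since the instance genuinely admits a cheap schedule.

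The missing idea is that the adversary must stay adaptive for the whole instance, not only for the first decision. The paper releases single jobs: job $i+1$ arrives $\delta$ after $\sched$ starts the batch containing job $i$. This has three consequences:
\begin{itemize}
\item $\sched$ can never batch two jobs together, because the next job does not exist until it commits.
\item Every job $i\ge 2$ waits at least $s-\delta$ before service.
\item $t_i\ge t_{i-1}+s$ gives $r_{i+1}\ge r_i+s$, so the offline schedule that merges jobs $1,2$ at $r_2$ and then serves each job on arrival costs only $ns+\Delta_1+\delta$.
\end{itemize}
Together with stopping after one job when the first start time is at least $s$, this gives ratio tending to $2-\delta/s$.
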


\begin{proof}
Fix the batch capacity $\capac\ge 2$ and the batch processing time $s>0$. Let $\sched$ be any deterministic online scheduler. We construct an adaptive adversarial instance $\inst$ against $\sched$. The instance starts with a single job at time $r_1=0$. Let $\Delta_1$ be the amount of time $\sched$ waits before starting a batch containing this job, assuming no other jobs arrive. If $\Delta_1\ge s$, then the adversary releases no further jobs. In that case, $\sched(\inst)=\Delta_1+s\ge 2s$, whereas the offline optimum processes the job immediately and pays $\opt(\inst)=s$. Thus the desired lower bound follows. We may therefore assume $\Delta_1<s$.

Now fix an arbitrary number of jobs $n$ and a small parameter $\delta>0$. The adversary continues as follows. Whenever $\sched$ starts processing the currently pending job, the adversary releases the next job after $\delta$ time units. Thus, if $\sched$ starts processing job $i$ at time $t_i$, with $t_1:=\Delta_1$, then job $i+1$ is released at time $r_{i+1}=t_i+\delta$ for every $i=1,\ldots,n-1$. This ensures that job $i+1$ arrives just after $\sched$ has committed to a batch containing only job $i$.

Under this construction, the first job waits $t_1=\Delta_1$ time units before entering service. For each $i\ge 2$, job $i$ arrives at time $t_{i-1}+\delta$, while the previous batch cannot complete before time $t_{i-1}+s$. Hence job $i$ waits at least $s-\delta$ before it can even begin processing. Accounting also for the $s$ units of processing time incurred by every job, we obtain
$$
\sched(\inst)
\ge
ns+\Delta_1+(n-1)(s-\delta).
$$

On the other hand, consider the following feasible offline schedule. The offline scheduler waits until job $2$ arrives, batches jobs $1$ and $2$ together, and then processes each remaining job individually at its release time. We claim that this is feasible. Indeed, since $\sched$ starts a singleton batch containing job $i$ at time $t_i$, it cannot start a batch containing job $i+1$ before the previous batch has completed. Therefore $t_i\ge t_{i-1}+s$ for every $i\ge 2$. Using $r_{i+1}=t_i+\delta$, this implies $r_{i+1}\ge r_i+s$ for every $i\ge 2$. Thus, after the offline scheduler completes the first batch containing jobs $1$ and $2$ at time $r_2+s$, job $3$ has not yet arrived; and more generally, if the offline scheduler processes job $i$ alone at time $r_i$, then it completes at time $r_i+s\le r_{i+1}$. Finally, the maximum batch size used by this schedule is $2\le \capac$, so the schedule is feasible. Its total flow time is
$$
ns+\Delta_1+\delta \ge \opt(\inst).
$$
Consequently,
$$
\frac{\sched(\inst)}{\opt(\inst)}
\ge
\frac{ns+\Delta_1+(n-1)(s-\delta)}
{ns+\Delta_1+\delta}
\ge
\frac{n(2s-\delta)-(s-\delta)}
{ns+s+\delta}.
$$
As $n\to\infty$, this lower bound tends to $2-\delta/s$. Taking $\delta<\epsilon s$ and then taking $n$ sufficiently large completes the proof.
\end{proof}

\begin{lemma}\label{lemma:adv_ran_lb}
For any $\epsilon>0$ and any randomized online scheduler $\sched$, there exists an instance $\inst$ such that $\sched(\inst)\ge (11/9-\epsilon)\cdot \opt(\inst)$.
\end{lemma}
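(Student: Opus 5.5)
We will use Yao's minimax principle. We fix a finite family of adversarial instances and a probability distribution over them, and show that every \emph{deterministic} online scheduler has expected cost at least $(11/9-\epsilon)$ times the corresponding expected optimum. Because the instances have very different optimal costs, we will weight instance $\inst_i$ with probability proportional to $1/\opt(\inst_i)$. This turns the expected-cost comparison into an averaged competitive ratio. Then, for every randomized $\sched$, some $\inst_i$ in the support satisfies $\sched(\inst_i)\ge(11/9-\epsilon)\opt(\inst_i)$.

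The instances reuse the gadget of \Cref{lemma:adv_det_lb}. Each instance starts with one job at time $0$. In the \emph{short} instance $\inst_0$ nothing else arrives, so $\opt(\inst_0)=s$. In a \emph{long} instance $\inst_a$, with $a=\alpha s$ and $\alpha\in(0,1)$, a second job arrives at time $a$. After that, $n-2$ further jobs arrive at times $a+s,a+2s,\ldots$, exactly $s$ apart. The offline schedule waits until $a$, batches the first two jobs, and then serves every later job on arrival, so $\opt(\inst_a)\le ns+a$.

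A deterministic scheduler is summarized, on the common prefix, by its waiting time $w$ before starting the first job if nothing new arrives. Both instances coincide on $[0,a)$, so only two behaviours matter.
\begin{itemize}
\item If $w<a$, it pays $w+s\ge s$ on $\inst_0$. On $\inst_a$, the job arriving at $a$ finds the server busy until $w+s>a$, so the following chain is pushed out of phase. Arguing as in the proof of \Cref{lemma:adv_det_lb}, each of the $\approx n$ later jobs loses at least about $s-a$ (minus $w$), whether the scheduler serves the backlog immediately or defers it. This gives ratio at least $2-\alpha-o(1)$ as $n\to\infty$.
\item If $w\ge a$, the ratio on $\inst_0$ is at least $1+\alpha$, and on $\inst_a$ the scheduler is essentially optimal.
\end{itemize}
The required bookkeeping is the ``area'' identity of \Cref{claim:flow-def}, together with the counting inequality $N_{\sched_2}(t)\le N_{\sched_1}(t+s)+\capac-1$ of \Cref{lemma:job_difference}. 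We use these to show that the out-of-phase loss per job on the periodic chain cannot be avoided by any later re-timing.

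With two instances this is a $2\times 2$ game, with payoffs $(1,\,2-\alpha)$ for $w=0$ and $(1+\alpha,\,1)$ for $w=a$. We will solve it for the adversary's mixing probability $p$ and then optimize $\alpha$. If this naive computation gives a value exceeding $11/9$, some loss estimate is too optimistic, most likely the claim that the out-of-phase loss on $\inst_a$ is a full $s-a$ per job. In that case we will recompute the true per-job loss on $\inst_a$ carefully: a scheduler that started early may, for example, let job $2$ wait and join job $3$ at cost $s$ once, instead of paying $s-a$ per job. We will then re-solve the game, possibly adding a third instance (a second arrival time) or letting $a$ range over a small grid so the scheduler cannot exploit a single known $a$. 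The target $11/9$ suggests the optimum comes from such a three-outcome game with rational parameters, for example $\alpha=1/3$.

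The main obstacle is this middle step: a tight, scheduler-independent lower bound on the asymptotic extra cost of a deterministic scheduler that has committed early on the long instance. It has to hold against every recovery strategy, and it has to be integrated with the $1/\opt$ weighting so that the $o(n)$ terms and the $\delta$-slack vanish as $n\to\infty$. This is what produces the final $-\epsilon$.
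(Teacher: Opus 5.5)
\textbf{The central step fails.} The step you single out as the main obstacle is not merely unproven; it is false, and the recovery you mention in passing already refutes it.

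Take $n\ge3$ and a deterministic scheduler that starts job~$1$ alone at time $w<a$. It is free again at $w+s<a+s$. It can therefore hold job~$2$, start a batch of jobs $2$ and $3$ at time $a+s$ (size $2\le\capac$), and then serve every later job on arrival. Its cost is $(w+s)+2s+(n-2)s=ns+w+s$. Since $\opt(\inst_a)\ge ns$, its ratio on $\inst_a$ is below $1+2/n$.

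So the out-of-phase loss is a one-time additive $O(s)$, not roughly $s-a$ per job. Starting immediately is asymptotically optimal on both of your instances, and the value of your $2\times2$ game tends to $1$ as $n\to\infty$. Small $n$ does not rescue it either: for $n=2$ the game value is $1+\max_\alpha \alpha(1-2\alpha)/(1+\alpha^2)\approx1.12$.

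The per-job loss in \Cref{lemma:adv_det_lb} came from adaptivity: each job is released just after the scheduler commits to the previous one. That is unavailable once you fix a distribution over fixed instances for Yao's principle. Also, \Cref{lemma:job_difference} only upper-bounds how far a work-conserving schedule can lag, which is the opposite of what you need. Adding a third instance or a grid of values of $a$ does not touch the cause. As long as the arrivals after the first batch are singletons and $\capac\ge2$, every later offline batch has spare capacity, so a stranded job can always be absorbed at bounded total cost.

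\textbf{The missing idea.} Make the periodic continuation consist of \emph{full} bursts of $\capac$ jobs spaced exactly $s$ apart, so leftovers cannot be absorbed. Also split the first group into $q$ jobs at time $0$ and $\capac-q$ jobs at time $xs$. The paper's proof then uses two long instances that agree before time $s$:
\begin{itemize}
\item one adds $q$ jobs at time $s$, followed by full bursts at $2s,3s,\ldots$; this rewards starting early;
\item the other has full bursts at $(1+x)s,(2+x)s,\ldots$; this rewards waiting.
\end{itemize}
A wrong first commitment strands $q$ (resp.\ $\capac-q$) jobs. These must either wait until the end or delay every subsequent full burst. The resulting asymptotic ratios are $1+\min\{q/\capac,x\}$ and $1+\min\{1-q/\capac,1-x\}$.

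With $x=q/\capac$ this is exactly your naive matrix, $(1,2-x)$ versus $(1+x,1)$. Since both optima are $\capac sM+O(1)$, no $1/\opt$ reweighting is needed. The game value $1+x(1-x)$, with $x=\lfloor\capac/2\rfloor/\capac$, equals $1+\lfloor\capac/2\rfloor\lceil\capac/2\rceil/\capac^2$. Over $\capac\ge2$ this is minimized at $\capac=3$, giving $11/9$.

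So $11/9$ comes from the integrality of $q$, which forces $x\in\{1/3,2/3\}$ when $\capac=3$; it does not come from a three-outcome game. Likewise, a game value above $11/9$ (such as $5/4$ for even $\capac$) is not by itself evidence of an error. Your construction never uses $\capac$, which is why it cannot see this.
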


\begin{proof}
Fix the batch capacity $\capac\ge 2$ and the batch processing time $s>0$. We will use parameters $q\in\{1,2,\ldots,\capac-1\}$, $x\in(0,1)$, and $M\ge 1$, to be chosen later. Define the two arrival sequences
$$
\inst_0
:=
\{\text{$q$ jobs at $t=0$}\}
\cup
\{\text{$\capac-q$ jobs at $t=xs$}\}
\cup
\{\text{$q$ jobs at $t=s$}\}
\cup
\bigcup_{i=2}^{M+1}
\{\text{$\capac$ jobs at $t=is$}\},
$$
and
$$
\inst_x
:=
\{\text{$q$ jobs at $t=0$}\}
\cup
\{\text{$\capac-q$ jobs at $t=xs$}\}
\cup
\bigcup_{i=1}^{M}
\{\text{$\capac$ jobs at $t=(i+x)s$}\}.
$$
Pictorially, we depict these instances in~\Cref{fig:randomized-lb-instances}. At a high level, we will show that on $\inst_0$ the optimal schedule starts a batch at $t=0$ containing the first $q$ jobs, whereas on $\inst_x$ it is better to wait until time $xs$ and start a full batch of size $\capac$. However, the two instances are indistinguishable before time $s$, so an online scheduler cannot know which of these two first actions is better.
% Preamble:
% \usepackage{tikz}
% \usetikzlibrary{decorations.pathreplacing}
% \usepackage{float} % needed if you use [H]

\begin{figure}[H]
\centering
\begin{tikzpicture}[
    x=1cm,
    y=1cm,
    tick/.style={black, line width=0.8pt},
    axis/.style={black, line width=0.8pt},
    arr/.style={font=\small},
    timelab/.style={font=\small},
    title/.style={font=\large},
    prefixstyle/.style={draw=red!70, text=red!70},
    zeroinst/.style={draw=green!55!black, text=green!55!black},
    xinst/.style={draw=blue!70, text=blue!70},
    brace/.style={decorate, decoration={brace, amplitude=5pt}, line width=0.8pt}
]

% y-coordinates
\def\yA{0}
\def\yB{-3.0}

%%%%%%%%%%%%%%%%%%%%%%%%%%%%%%%%%%%%%%%%%%%%%%%%%%%%%%%%%%%%
% Top timeline: instance I_0
%%%%%%%%%%%%%%%%%%%%%%%%%%%%%%%%%%%%%%%%%%%%%%%%%%%%%%%%%%%%

\node[title, anchor=west] at (-1.0,{\yA+1.75}) {\small The instance $\inst_0$:};

% Axis
\draw[axis] (0.7,\yA) -- (12.6,\yA);

% Left labels
\node[arr, red!70, anchor=east] at (0.65,{\yA+0.36}) {arrivals};
\node[timelab, anchor=east] at (0.65,{\yA-0.35}) {time};

% Ticks and time labels
\foreach \x/\lab in {
    1.0/{\scriptsize $0$},
    1.8/{\scriptsize $xs$},
    2.8/{\scriptsize $s$},
    4.6/{\scriptsize $2s$},
    12.0/{\scriptsize $(M+1)s$}
}{
    \draw[tick] (\x,{\yA-0.12}) -- (\x,{\yA+0.12});
    \node[timelab, below=3pt] at (\x,\yA) {\lab};
}

% Extra unlabeled ticks to suggest continuation
\draw[tick] (6.4,{\yA-0.12}) -- (6.4,{\yA+0.12});

% Ellipsis
\node at (8.7,{\yA+0.22}) {\large $\cdots$};

% Arrival labels: common prefix
\node[arr, red!70] at (1.0,{\yA+0.43}) {\scriptsize $q$};
\node[arr, red!70] at (1.8,{\yA+0.43}) {\scriptsize $\capac-q$};

% Arrival labels: I_0 continuation
\node[arr, green!55!black] at (2.8,{\yA+0.43}) {\scriptsize $q$};
\node[arr, green!55!black] at (4.6,{\yA+0.43}) {\scriptsize $\capac$};
\node[arr, green!55!black] at (6.4,{\yA+0.43}) {\scriptsize $\capac$};
\node[arr, green!55!black] at (12.0,{\yA+0.43}) {\scriptsize $\capac$};

% Braces
\draw[brace, prefixstyle] (1.0,{\yA+0.82}) -- (1.8,{\yA+0.82})
    node[midway, above=6pt] {Common Prefix};

\draw[brace, zeroinst] (2.8,{\yA+0.82}) -- (12.0,{\yA+0.82})
    node[midway, above=6pt] {Bad continuation if $\sched$ batches late};

%%%%%%%%%%%%%%%%%%%%%%%%%%%%%%%%%%%%%%%%%%%%%%%%%%%%%%%%%%%%
% Bottom timeline: instance I_x
%%%%%%%%%%%%%%%%%%%%%%%%%%%%%%%%%%%%%%%%%%%%%%%%%%%%%%%%%%%%

\node[title, anchor=west] at (-1.0,{\yB+1.75}) {\small The instance $\inst_x$:};

% Axis
\draw[axis] (0.7,\yB) -- (12.6,\yB);

% Left labels
\node[arr, red!70, anchor=east] at (0.65,{\yB+0.36}) {arrivals};
\node[timelab, anchor=east] at (0.65,{\yB-0.35}) {time};

% Ticks and time labels
\foreach \x/\lab in {
    1.0/{\scriptsize $0$},
    1.8/{\scriptsize $xs$},
    2.8/{\scriptsize $s$},
    3.7/{\scriptsize $(1+x)s$},
    5.5/{\scriptsize $(2+x)s$},
    11.4/{\scriptsize $(M+x)s$}
}{
    \draw[tick] (\x,{\yB-0.12}) -- (\x,{\yB+0.12});
    \node[timelab, below=3pt] at (\x,\yB) {\lab};
}

% Ellipsis
\node at (8.7,{\yB+0.22}) {\large $\cdots$};

% Arrival labels: common prefix
\node[arr, red!70] at (1.0,{\yB+0.43}) {\scriptsize $q$};
\node[arr, red!70] at (1.8,{\yB+0.43}) {\scriptsize $\capac-q$};

% Arrival labels: I_x continuation
\node[arr, blue!70] at (3.7,{\yB+0.43}) {\scriptsize $\capac$};
\node[arr, blue!70] at (5.5,{\yB+0.43}) {\scriptsize $\capac$};
\node[arr, blue!70] at (11.4,{\yB+0.43}) {\scriptsize $\capac$};

% Braces
\draw[brace, prefixstyle] (1.0,{\yB+0.82}) -- (1.8,{\yB+0.82})
    node[midway, above=6pt] {Common Prefix};

\draw[brace, xinst] (3.7,{\yB+0.82}) -- (11.4,{\yB+0.82})
    node[midway, above=6pt] {Bad continuation if $\sched$ batches early};

\end{tikzpicture}
\caption{The two adversarial instances used in the proof of \Cref{lemma:adv_ran_lb}.}
\label{fig:randomized-lb-instances}
\end{figure}

We first compute the asymptotic costs of the two canonical schedules. Let $\sched_0$ denote the schedule that starts with a batch of the first $q$ jobs at time $0$, and let $\sched_x$ denote the schedule that waits until time $xs$ and starts with a full batch of size $\capac$.

Consider first the instance $\inst_0$. Under $\sched_0$, the first $q$ jobs are processed at time $0$, the next $\capac$ pending jobs are processed at time $s$, and then each of the $M$ full bursts arriving at times $is$, for $i=2,\ldots,M+1$, is processed immediately upon arrival. Therefore,
$$
\sched_0(\inst_0)
=
\big(q+(M+1)\capac\big)s
+
(\capac-q)(1-x)s .
$$
Under $\sched_x$, the schedule waits until time $xs$ and processes the first $\capac$ jobs as a full batch, completing it at time $(1+x)s$. At that point the $q$ jobs arriving at time $s$ have also arrived. The flow time incurred up to this point is $\capac s+2qxs$. From that point on, the schedule either processes these $q$ leftover jobs immediately and delays all future full bursts by $xs$, or carries them until the end. Choosing the better of the two options gives
$$
\sched_x(\inst_0)
=
\big(q+(M+1)\capac\big)s
+
2qxs
+
\min\big\{M\capac xs,\ q(M+1-x)s\big\}.
$$

Now consider the instance $\inst_x$. Under $\sched_x$, the schedule is aligned with the $M$ future full bursts, and the only waiting beyond processing time is the waiting of the first $q$ jobs until time $xs$. Hence
$$
\sched_x(\inst_x)
=
(M+1)\capac s+qxs .
$$
Under $\sched_0$, after the first batch at time $0$, the schedule has $\capac-q$ leftover jobs at time $s$. It can either process these leftover jobs immediately and delay all future bursts by $(1-x)s$, or carry them until the end. Thus
$$
\sched_0(\inst_x)
=
(M+1)\capac s
+
(\capac-q)(1-x)s
+
\min\big\{M\capac(1-x)s,\ (\capac-q)(M+x)s\big\}.
$$

Set
$$
q=\lfloor \capac/2\rfloor
\qquad\text{and}\qquad
x=\frac{q}{\capac}.
$$
As $M\to\infty$, both $\sched_0(\inst_0)$ and $\sched_x(\inst_x)$ are asymptotic to $\capac sM$. Moreover,
$$
\frac{\sched_x(\inst_0)}{\sched_0(\inst_0)}
\to
1+x
\qquad\text{and}\qquad
\frac{\sched_0(\inst_x)}{\sched_x(\inst_x)}
\to
2-x.
$$

We now invoke Yao's minmax principle. Consider an oblivious distribution that selects $\inst_0$ with probability $p$ and $\inst_x$ with probability $1-p$. It suffices to lower-bound the expected competitive ratio of every deterministic online scheduler against this distribution.

Before time $s$, the two instances are indistinguishable. Thus any deterministic online scheduler either starts a batch before time $xs$, or waits at least until time $xs$. In the first case, its asymptotic cost is no smaller than that of $\sched_0$ on both instances. In the second case, its asymptotic cost is no smaller than that of $\sched_x$ on both instances. Therefore, using the asymptotic cost computations above, for every $\epsilon>0$ we can choose $M$ large enough so that every deterministic online scheduler has expected ratio at least
$$
\min\big\{
p+(1-p)(2-x),\ p(1+x)+(1-p)
\big\}
-\epsilon .
$$
Choosing $p=1-x$ equalizes the two terms and gives
$$
1+x-x^2-\epsilon .
$$
Since $x=\lfloor \capac/2\rfloor/\capac$, this equals
$$
1+\frac{\lfloor \capac/2\rfloor}{\capac}
-
\bigg(\frac{\lfloor \capac/2\rfloor}{\capac}\bigg)^2
-\epsilon.
$$
For even $\capac$, this is $5/4-\epsilon$. For odd $\capac$, it is
$$
\frac{5}{4}-\frac{1}{4\capac^2}-\epsilon.
$$
The minimum over all $\capac\ge 2$ is attained at $\capac=3$ and equals $11/9-\epsilon$. Hence, for every randomized online scheduler, there exists an instance on which its expected competitive ratio is at least $11/9-\epsilon$.
\end{proof}

\subsection{The Analysis of RandomShift (\Cref{thm:adv-lookahead-uncap-upper})}\label{appendix:adv-lookahead-upper}

In this section, we give the full analysis of algorithm $\advname$ in the uncapacitated $\capac=\infty$ setting and prove \Cref{thm:adv-lookahead-uncap-upper}. 

\randomshift*
\begin{proof}
Fix any instance $\inst$, any batch processing time $s>0$ and any lookahead $L>0$. Furthermore, let $\inst_\act$ be the sub-instance of active jobs, as defined by $\advname$, and $\inst_\ign$ be the sub-sequence of ignored jobs. Note that these sequences are random, as they are defined based on the realization of the random offset $\tau$ sampled at the beginning of $\advname$. Since the random offset $\tau$ induces a uniformly random shift of the period $L+2s$, each arrival time falls in an ignored interval with probability $2s/(L+2s)$, and in an active interval with probability $L/(L+2s)$. Therefore,
\begin{equation}\label{eq:random_shift_split}
  \expect{\lvert \inst_\ign\rvert}
=
\frac{2s}{L+2s}\lvert \inst\rvert
\qquad \text{and} \qquad
\expect{\lvert \inst_\act\rvert}
=
\frac{L}{L+2s}\lvert \inst\rvert .  
\end{equation}

Now fix any realization of the threshold $\tau$ and therefore the sub-instances $\inst_\act$ and $\inst_\ign$. By Lemma~\ref{lemma:superadditive} we have that 
\begin{equation}\label{eq:random_shift_opt}\opt(\inst) \ge \opt(\inst_\act) + \opt(\inst_\ign).
\end{equation}
We also separately charge the flow time suffered by $\advname$ on jobs from $\inst_\act$ and $\inst_\ign$ respectively, to write 
\begin{equation}\label{eq:random_shift_alg}\advname(\inst) = F_\act + F_\ign.
\end{equation}
Here, $F_\act$ (resp. $F_\ign$) denotes the total flow time paid by $\advname$ due to the jobs in the sub-sequence $\inst_\act$ (resp. $\inst_\ign$).

We first observe that, by construction,
\begin{equation}\label{eq:random_shift_alg_1}
F_\act=\opt(\inst_\act).
\end{equation}
This is true as each active interval comprising $\inst_\act$ is processed via an offline optimal schedule on its jobs, and these schedules never overlap with each other: indeed, any offline optimal schedule for a sub-instance of maximum arrival time $r$ will complete by $t=r+2s$ as in the uncapacitated setting ($\capac=\infty$), it is always optimal to include all pending jobs when starting a batch, each job has arrived at $t=r$ and the last batch that started before $t=r$ will finish by $t=r+s$. Therefore, $\advname$ is indeed able to optimally process each active interval before the next one starts, due to allowing sufficient spacing ($2s$ time units in particular) between consecutive active intervals.

Combining \eqref{eq:random_shift_opt}, \eqref{eq:random_shift_alg} and \eqref{eq:random_shift_alg_1}, we conclude that 
\begin{equation}\label{eq:random_shift_reduced}
   \advname(\inst) \le \opt(\inst) + F_\ign - \opt(\inst_\ign). 
\end{equation}

Furthermore, $\advname$ ensures that no ignored job remains in the system for more than $4s$ units of time. Indeed, even if a job arrives at the start of the ignored interval (which has length $2s$), by construction of $\advname$ it will start processing at most $s$ units of time after the start of the next active interval (and then take another $s$ units of time to process). At the same time, we know from~\Cref{claim:opt-lb} that $\opt(\inst_\ign)\ge s\cdot \lvert\inst_\ign\rvert$. Therefore, we have that 

\begin{equation}\label{eq:random_shift_ignored}
    F_\ign - \opt(\inst_\ign) \le 4s\cdot \lvert\inst_\ign\rvert - s\cdot \lvert\inst_\ign\rvert = 3s\cdot \lvert\inst_\ign\rvert .
\end{equation}

Combining \eqref{eq:random_shift_reduced} with \eqref{eq:random_shift_ignored} we finally have that for every realization of the offset $\tau$, the total flow time of $\advname$ satisfies
\[\advname(\inst) \le \opt(\inst) + 3s\cdot \lvert\inst_\ign\rvert \]
with probability $1$, and by taking expectations and using \eqref{eq:random_shift_split} we finally have
\[\advname(\inst) \le \opt(\inst) + \frac{6s^2}{L+2s}\cdot \lvert\inst\rvert .\]
The proof is completed by using~\Cref{claim:opt-lb} once more, stating that $\opt(\inst)\ge s\cdot \lvert\inst\rvert.$
\end{proof}

\subsection{Lower Bound for Uncapacitated Lookahead Algorithms (\Cref{thm:adv-lookahead-uncap-lower})}\label{appendix:adv-uncap-lookahead-lb}

\advuncaplb*
\begin{proof}
    
Up to constants, we can assume without loss that $L=ks$, where $k\ge 5$ is an integer. Fix such a $k$, and set
$$
\epsilon:=\frac{s}{4k}
\qquad\text{and}\qquad
\tau:=\frac{3s}{4k}.
$$
We construct two instances. The first instance is
$$
\inst_1:=\{0,\tau,s,2s,\ldots,(k-1)s\}.
$$
The second instance $\inst_2$ is identical to $\inst_1$, except that it contains $k$ additional jobs arriving at time $ks+\epsilon$. The two instances are indistinguishable to an $L=ks$-lookahead scheduler at every time $t<\epsilon$, since the only arrivals on which they differ occur at time $ks+\epsilon$. Pictorially, we depict these instances in~\Cref{fig:uncap-lookahead-lower-bound}. 
% Preamble:
% \usepackage{tikz}
% \usetikzlibrary{decorations.pathreplacing}
% \usepackage{float} % needed if you use [H]

\begin{figure}[H]
\centering
\begin{tikzpicture}[
    x=1cm,
    y=1cm,
    tick/.style={black, line width=0.8pt},
    axis/.style={black, line width=0.8pt},
    arr/.style={font=\small},
    timelab/.style={font=\small},
    instone/.style={text=green!55!black},
    insttwo/.style={text=blue!70},
    brace/.style={decorate, decoration={brace, amplitude=5pt}, line width=0.8pt}
]

% baseline
\def\yA{0}

% Main equally-spaced time points
\def\xzero{1.0}
\def\xs{2.8}
\def\xtwos{4.6}
\def\xkmone{9.4}

% Intermediate points between 0 and s
\def\xeps{1.45}
\def\xtau{1.85}

% Slightly after the (hidden) ks point
\def\xkseps{11.65}

% Axis
\draw[axis] (0.7,\yA) -- (12.2,\yA);

% Left labels
\node[arr, anchor=east] at (0.65,{\yA+0.55}) {$\inst_1$};
\node[arr, anchor=east] at (0.65,{\yA+0.25}) {$\inst_2$};
\node[timelab, anchor=east] at (0.65,{\yA-0.35}) {time};

% Ticks and time labels
\foreach \x/\lab in {
    \xzero/{\scriptsize $0$},
    \xeps/{\scriptsize $\epsilon$},
    \xtau/{\scriptsize $\tau$},
    \xs/{\scriptsize $s$},
    \xtwos/{\scriptsize $2s$},
    \xkmone/{\scriptsize $(k-1)s$},
    \xkseps/{\scriptsize $ks+\epsilon$}
}{
    \draw[tick] (\x,{\yA-0.12}) -- (\x,{\yA+0.12});
    \node[timelab, below=3pt] at (\x,\yA) {\lab};
}

% Ellipsis
\node at (7.0,{\yA+0.18}) {\large $\cdots$};

% I_1 arrivals
\node[arr, instone] at (\xzero,{\yA+0.55}) {\scriptsize $1$};
\node[arr, instone] at (\xtau,{\yA+0.55}) {\scriptsize $1$};
\node[arr, instone] at (\xs,{\yA+0.55}) {\scriptsize $1$};
\node[arr, instone] at (\xtwos,{\yA+0.55}) {\scriptsize $1$};
\node[arr, instone] at (\xkmone,{\yA+0.55}) {\scriptsize $1$};

% I_2 arrivals
\node[arr, insttwo] at (\xzero,{\yA+0.25}) {\scriptsize $1$};
\node[arr, insttwo] at (\xtau,{\yA+0.25}) {\scriptsize $1$};
\node[arr, insttwo] at (\xs,{\yA+0.25}) {\scriptsize $1$};
\node[arr, insttwo] at (\xtwos,{\yA+0.25}) {\scriptsize $1$};
\node[arr, insttwo] at (\xkmone,{\yA+0.25}) {\scriptsize $1$};
\node[arr, insttwo] at (\xkseps,{\yA+0.25}) {\scriptsize $k$};

% Burst label
\node[arr, insttwo] at (\xkseps,{\yA+0.62}) {\scriptsize Extra burst in $\inst_2$};

% Braces
\draw[brace] (\xzero,{\yA+0.95}) -- (\xkmone,{\yA+0.95})
    node[midway, above=6pt] {Common prefix};

\end{tikzpicture}
\caption{The two adversarial instances used in the uncapacitated lookahead lower bound.}
\label{fig:uncap-lookahead-lower-bound}
\end{figure}

We first record simple upper bounds on the offline optimum. For $\inst_1$, consider the schedule that starts batches at
$$
\tau,\ s+\tau,\ 2s+\tau,\ldots,(k-1)s+\tau.
$$
The first batch contains the jobs released at times $0$ and $\tau$, and each subsequent batch contains the next regularly spaced job. Thus, every job contributes $s$ units of processing time, and the total waiting time is $k\tau$. Therefore,
\begin{equation}\label{eq:uncap_lb_opt1}
\opt(\inst_1)\le (k+1)s+k\tau.
\end{equation}
For $\inst_2$, consider the schedule that starts batches at
$$
0,\ s,\ 2s,\ldots,(k-1)s,\ ks+\epsilon.
$$
The only waiting time in this schedule is $s-\tau$, incurred by the job released at time $\tau$. Hence
\begin{equation}\label{eq:uncap_lb_opt2}
\opt(\inst_2)\le (2k+1)s+(s-\tau)=(2k+2)s-\tau.\end{equation}

Now fix an arbitrary deterministic $L$-lookahead scheduler $\sched$. Since $\inst_1$ and $\inst_2$ are indistinguishable before time $\epsilon$, the scheduler behaves identically on the two instances until that time. Also, because $\capac=\infty$, we may assume without loss that whenever $\sched$ starts a batch, it includes all currently pending jobs; omitting a pending job can only delay that job and cannot improve the schedule. We distinguish two cases.

First, suppose that $\sched$ starts a batch at some time $a<\epsilon$. We give the scheduler instance $\inst_1$. Since $a<\epsilon<\tau$, the first batch can contain only the job released at time $0$. The processor is then busy until time $a+s$. Consequently, the job released at time $\tau$ cannot start before time $a+s$. Therefore, the jobs released at times $0$ and $\tau$ alone incur total waiting at least
$$
a+(a+s-\tau)\ge s-\tau.
$$
Since $\inst_1$ contains $k+1$ jobs, the total processing contribution is $(k+1)s$. Hence
\begin{equation}\label{eq:uncap_lb_alg1}
    \sched(\inst_1)
    \ge
    (k+1)s+(s-\tau)
    =
    (k+2)s-\tau.
\end{equation}
Combining~\eqref{eq:uncap_lb_opt1} and~\eqref{eq:uncap_lb_alg1} and substituting $\tau=3s/(4k)$, we have
\begin{equation}\label{eq:uncap_lb_lb1}
    \frac{\sched(\inst_1)}{\opt(\inst_1)}
    \ge
    \frac{(k+2)s-\tau}{(k+1)s+k\tau}
    =
    1+\frac{k-3}{k(4k+7)}
\end{equation}
if $\sched$ starts its first batch at $t=\alpha < \epsilon$.

Second, suppose that $\sched$ does not start any batch before time $\epsilon$. We give the scheduler instance $\inst_2$. We show that in this case the total waiting time is at least $s+\tau$. Let $a\ge\epsilon$ be the starting time of the first batch. We distinguish between the following cases:

\begin{enumerate}
    \item If $a<\tau$, then the first batch contains only the job released at time $0$. The processor is busy until time $a+s$, so the job released at time $\tau$ cannot start before time $a+s$. Now consider the regularly spaced jobs released at times $s,2s,\ldots,(k-1)s$. If any one of these jobs is not started before the next arrival time in the sequence
    $$
    s,2s,\ldots,(k-1)s,ks+\epsilon,
    $$
    then that job alone waits at least $s$. Together with the waiting of the jobs released at times $0$ and $\tau$, this gives total waiting at least
    $$
    a+(a+s-\tau)+s
    \ge
    2s-\tau+2\epsilon
    \ge
    s+\tau,
    $$
    where the last inequality uses $\tau=3\epsilon$ and $k\ge 1$.
    
    Otherwise, every regularly spaced job is started before the next arrival time in the above sequence. Since the first batch completes at time $a+s$, a simple induction implies that each job released at time $is$, for $i=1,\ldots,k-1$, waits at least $a$, and the final $k$ jobs released at time $ks+\epsilon$ wait at least $a-\epsilon$ each. Therefore, the total waiting time is at least
    $$
    a+(a+s-\tau)+(k-1)a+k(a-\epsilon)
    =
    s-\tau+(2k+1)a-k\epsilon.
    $$
    Using $a\ge\epsilon$ and $\tau=3\epsilon$, this is at least
    $$
    s-\tau+(k+1)\epsilon
    =
    s+(k-2)\epsilon
    \ge
    s+\tau,
    $$
    where the final inequality uses $k\ge 5$. Therefore, if $\alpha\in [\epsilon,\tau)$, the total waiting time of $\sched$ on $\inst_2$ is at least $s+\tau$.

    \item Next suppose that $a\in[\tau,s)$. Then the first batch contains the jobs released at times $0$ and $\tau$. If some regularly spaced job released at time $is$, for $i\in{1,\ldots,k-1}$, is not started before the next arrival time in the sequence
    $$
    s,2s,\ldots,(k-1)s,ks+\epsilon,
    $$
    then that job waits at least $s$. The two jobs in the first batch incur total waiting $a+(a-\tau)\ge\tau$, so the total waiting is at least $s+\tau$.
    
    Otherwise, every regularly spaced job is started before the next arrival time in the above sequence. Since the first batch completes at time $a+s$, an induction implies that each job released at time $is$, for $i=1,\ldots,k-1$, waits at least $a$, and the final $k$ jobs released at time $ks+\epsilon$ wait at least $a-\epsilon$ each. Hence the total waiting time is at least
    $$
    a+(a-\tau)+(k-1)a+k(a-\epsilon)
    =
    (2k+1)a-\tau-k\epsilon.
    $$
    Using $a\ge\tau$ and $\tau=3\epsilon$, this is at least
    $$
    2k\tau-k\epsilon
    =
    5k\epsilon
    =
    \frac{5}{4}s
    \ge
    s+\tau,
    $$
    where the final inequality uses $k\ge 3$. Therefore, if $\alpha\in [\tau,s)$, the total waiting time of $\sched$ on $\inst_2$ is at least $s+\tau$.

    \item Finally, suppose that $a\ge s$. Then the jobs released at times $0$ and $\tau$ alone incur waiting at least
    $$
    s+(s-\tau)=2s-\tau.
    $$
    Since $\tau\le s/2$, this is at least $s+\tau$. Therefore, if $\alpha\ge s$, the total waiting time of $\sched$ on $\inst_2$ is at least $s+\tau$.
\end{enumerate}

Thus, whenever $\sched$ does not start a batch before time $\epsilon$, its total waiting time on $\inst_2$ is at least $s+\tau$. Since $\inst_2$ contains $2k+1$ jobs, we have

\begin{equation}\label{eq:uncap_lb_alg2}
    \sched(\inst_2)
    \ge
    (2k+1)s+(s+\tau)
    =
    (2k+2)s+\tau.
\end{equation}

Combining~\eqref{eq:uncap_lb_opt2} and~\eqref{eq:uncap_lb_alg2} and substituting $\tau=3s/(4k)$, we have
\begin{equation}\label{eq:uncap_lb_lb2}
    \frac{\sched(\inst_2)}{\opt(\inst_2)}
    \ge
    \frac{(2k+2)s+\tau}{(2k+2)s-\tau}
    =
    1+\frac{6}{8k^2+8k-3}.
\end{equation}

Finally, combining~\eqref{eq:uncap_lb_lb1} and~\eqref{eq:uncap_lb_lb2}, for every deterministic $L=ks$-lookahead scheduler, there exists an instance $\inst\in\{\inst_1,\inst_2\}$ such that
$$
\frac{\sched(\inst)}{\opt(\inst)}
\ge
1+
\min\left\{
\frac{k-3}{k(4k+7)},
\frac{6}{8k^2+8k-3}
\right\}.
$$
For every $k\ge 5$, both terms inside the minimum are at least $1/(4k^2)$. Therefore,
$$
\frac{\sched(\inst)}{\opt(\inst)}
\ge
1+\frac{1}{4k^2}.
$$
Since $k=L/s$, this gives
$$
\frac{\sched(\inst)}{\opt(\inst)}
\ge
1+\Omega\left(\frac{s^2}{L^2}\right)
$$
for $L=ks$. For an arbitrary lookahead value $L\ge s$, choose an integer
$$
k:=\max\left\{5,\left\lceil \frac{L}{s}\right\rceil\right\}.
$$
Then $ks\ge L$, so the two instances above remain indistinguishable before time $\epsilon$ even to an $L$-lookahead scheduler. Moreover, $k=O(L/s)$ for $L\ge s$. Applying the lower bound above with this value of $k$ gives
$$
\frac{\mathbb E[\sched(\inst)]}{\opt(\inst)}
\ge
1+\Omega\left(\frac{1}{k^2}\right)
=
1+\Omega\left(\frac{s^2}{L^2}\right),
$$
as claimed.

Finally, if $\sched$ is a randomized schedule, then we can instead offer a uniformly random selection between $\inst_1$ and $\inst_2$; for every realization of $\sched$'s randomness, and in particular the first batching time $a$, with probability $1/2$ it will draw the wrong instance, and therefore with the same arguments we can derive
$$
\frac{\sched(\inst)}{\opt(\inst)}
\ge
1+\frac{1}{8k^2}.
$$
\end{proof}

\subsection{Lower Bound for Capacitated Lookahead Algorithms (Theorem~\ref{thm:adv-lookahead-cap-lb})}
\label{sec:adv-cap-lb}

\advcaplb*
\begin{proof}
    
Fix any batch capacity $\capac\ge 2$, any batch processing time $s>0$, and any finite lookahead $L>0$. Let
\[
    k:=\lceil L/s\rceil,
\]
so that $ks\ge L$. Finally, fix a real parameter $\delta\in (0,1)$ and integer parameters $q\in \{1,\ldots,C-1\}$ and $M\ge 1$, to be specified later. We define three arrival sequences:
\begin{enumerate}
    \item $\inst_{\mathrm{prefix}}:=
    \{\text{$q$ arrivals at $t=0$}\}
    \cup
    \{\text{$\capac-q$ arrivals at $t=\delta s$}\}\cup\bigcup_{i=1}^k \{\text{$\capac$ arrivals at $t=is$}\}$,
    \item $\inst_{\mathrm{early}}:=
    \{\text{$q$ arrivals at $t=(k+1)s$}\}
    \cup
    \bigcup_{i=2}^{M+1} \{\text{$\capac$ arrivals at $t=(k+i)s$}\}$, and,
    \item $\inst_{\mathrm{late}}:=
    \bigcup_{i=1}^{M} \{\text{$\capac$ arrivals at $t=(k+i+\delta)s$}\}$.
\end{enumerate}
These define two instances
\[
    \inst_{\mathrm{E}}:=\inst_{\mathrm{prefix}}\cup \inst_{\mathrm{early}}
    \qquad\text{and}\qquad
    \inst_{\mathrm{L}}:=\inst_{\mathrm{prefix}}\cup \inst_{\mathrm{late}}.
\]
The two instances are indistinguishable to an $L$-lookahead scheduler up to time $t=\delta s$: indeed, since $L\le ks$, at time $\delta s$ the scheduler can see only up to time $\delta s+L\le (k+\delta)s<(k+1)s$, while the two continuations first differ at time $(k+1)s$. Thus the scheduler must decide whether to start a batch before time $\delta s$ without knowing whether the continuation will be $\inst_{\mathrm{early}}$ or $\inst_{\mathrm{late}}$. The instances are designed so that $\inst_{\mathrm{E}}$ rewards batching the first $q$ jobs early, while $\inst_{\mathrm{L}}$ rewards waiting until time $\delta s$ and starting with a full batch of size $\capac$.

First consider the optimal schedules for the two instances. As $M\to\infty$, the flow time incurred before the first of the $M$ repeated bursts is lower-order, and the asymptotic cost is determined by the steady-state behavior on those bursts. Since each burst contains $\capac$ jobs and consecutive bursts are spaced exactly $s$ time units apart, any scheduler incurs at least $s$ flow time per burst asymptotically. This lower bound is achievable in both instances if the correct initial decision is made.

In instance $\inst_{\mathrm{E}}$, start a batch at time $0$ containing the first $q$ jobs. Then, for each $i\in[k]$, start a batch at time $is$ containing the $\capac$ jobs that arrive at time $is$. Finally, at time $(k+1)s$, start a batch containing the $\capac-q$ jobs that arrived at time $\delta s$ together with the $q$ jobs that arrive at time $(k+1)s$. Thus the system is empty by time $(k+2)s$, when the first burst of $\capac$ jobs arrives. Thereafter, each burst can be processed immediately upon arrival.

In instance $\inst_{\mathrm{L}}$, start a batch at time $\delta s$ containing the first $\capac$ arrivals. Then, for each $i\in[k]$, start a batch at time $(i+\delta)s$ containing the $\capac$ jobs that arrived at time $is$. Thus the system is empty by time $(k+1+\delta)s$, when the first burst of $\capac$ jobs arrives. Thereafter, each burst can again be processed immediately upon arrival.

Therefore,
\begin{equation}\label{eq:adv-lb-optimal-asymptotic}
    \lim_{M\to\infty}\frac{\opt(\inst_{\mathrm{E}})}{M}
    =
    \lim_{M\to\infty}\frac{\opt(\inst_{\mathrm{L}})}{M}
    =
    Cs.
\end{equation}

We now analyze the best possible performance subject to making the wrong initial decision.

First suppose that in instance $\inst_{\mathrm{E}}$ the scheduler does not start a batch before time $\delta s$. Let $\opt(\inst_{\mathrm{E}}\mid \mathrm{L})$ denote the minimum flow time achievable subject to this constraint. Since it is always optimal to start a batch whenever at least $\capac$ jobs are pending and the processor is idle, such a schedule has $q$ pending jobs and an idle processor at time $(k+1+\delta)s$. At this point, there are two asymptotically relevant choices:
\begin{enumerate}
    \item Remain idle until time $(k+2)s$, when the first burst of $\capac$ jobs arrives; process all $M$ bursts immediately as they arrive; and then process the $q$ pending jobs after the final burst. This incurs total flow time
    \[
        M\cdot (Cs+qs)+o(M).
    \]
    \item Immediately process the $q$ pending jobs. This shifts the processing of every subsequent burst by $\delta s$, causing an additional delay of $\delta s$ for each of the $\capac$ jobs in each burst. This incurs total flow time
    \[
        M\cdot (Cs+C\delta s)+o(M).
    \]
\end{enumerate}
Any intermediate choice gives a convex combination of these two asymptotic penalties and is therefore no better. Hence
\begin{equation}\label{eq:adv-lb-late-asymptotic}
    \lim_{M\to\infty}\frac{\opt(\inst_{\mathrm{E}}\mid \mathrm{L})}{M}
    =
    Cs+\min\{qs,C\delta s\}.
\end{equation}

Next suppose that in instance $\inst_{\mathrm{L}}$ the scheduler starts a batch before time $\delta s$, containing only the first $q$ arrivals. Starting this batch at time $0$ is the best possible such early decision, so it suffices to lower bound the cost of schedules subject to this stronger assumption. Let $\opt(\inst_{\mathrm{L}}\mid \mathrm{E})$ denote the minimum flow time achievable subject to starting the first batch at time $0$.

Again, the scheduler should start a batch whenever at least $\capac$ jobs are pending and the processor is idle. Such a schedule has $\capac-q$ pending jobs and an idle processor at time $(k+1)s$. At this point, there are two asymptotically relevant choices:
\begin{enumerate}
    \item Remain idle until time $(k+1+\delta)s$, when the first burst of $\capac$ jobs arrives; process all $M$ bursts immediately as they arrive; and then process the $\capac-q$ pending jobs after the final burst. This incurs total flow time
    \[
        M\cdot (Cs+(C-q)s)+o(M).
    \]
    \item Immediately process the $\capac-q$ pending jobs. This shifts the processing of every subsequent burst by $(1-\delta)s$, causing an additional delay of $(1-\delta)s$ for each of the $\capac$ jobs in each burst. This incurs total flow time
    \[
        M\cdot (Cs+C(1-\delta)s)+o(M).
    \]
\end{enumerate}
Thus
\begin{equation}\label{eq:adv-lb-early-asymptotic}
    \lim_{M\to\infty}\frac{\opt(\inst_{\mathrm{L}}\mid \mathrm{E})}{M}
    =
    Cs+\min\{(C-q)s,C(1-\delta)s\}.
\end{equation}

We are now ready to derive the lower bounds. The adversary declares that the true instance is either $\inst_{\mathrm{E}}$ or $\inst_{\mathrm{L}}$. Since the two instances are indistinguishable until time $\delta s$, the scheduler must decide whether to start a batch before time $\delta s$ without knowing which continuation will occur.

Define
\[
    a:=\min\left\{\frac{q}{C},\delta\right\},
    \qquad
    b:=\min\left\{1-\frac{q}{C},1-\delta\right\}.
\]
By \eqref{eq:adv-lb-optimal-asymptotic} and \eqref{eq:adv-lb-late-asymptotic}, if the scheduler waits until time $\delta s$ on instance $\inst_{\mathrm{E}}$, then its asymptotic competitive ratio is at least
\[
    1+a.
\]
Similarly, by \eqref{eq:adv-lb-optimal-asymptotic} and \eqref{eq:adv-lb-early-asymptotic}, if the scheduler starts a batch before time $\delta s$ on instance $\inst_{\mathrm{L}}$, then its asymptotic competitive ratio is at least
\[
    1+b.
\]

For a deterministic scheduler, the adversary observes which of these two actions the scheduler takes before time $\delta s$. If the scheduler waits, the adversary chooses $\inst_{\mathrm{E}}$; if the scheduler starts early, the adversary chooses $\inst_{\mathrm{L}}$. Therefore the asymptotic competitive ratio is at least
\[
    1+\min\{a,b\}
    =
    1+\min\left\{
        \frac{q}{C},
        \delta,
        1-\frac{q}{C},
        1-\delta
    \right\}.
\]
Now choose
\[
    q=\lfloor C/2\rfloor
    \qquad\text{and}\qquad
    \delta=\frac{q}{C}.
\]
Then
\[
    a=\frac{q}{C}
    =
    \frac{\lfloor C/2\rfloor}{C},
    \qquad
    b=1-\frac{q}{C}
    =
    \frac{\lceil C/2\rceil}{C},
\]
and hence
\[
    \min\{a,b\}
    =
    \frac{\lfloor C/2\rfloor}{C}.
\]
Thus the deterministic lower bound tends to
\[
    1+\frac{\lfloor C/2\rfloor}{C}
\]
as $M\to\infty$. Choosing $M$ sufficiently large gives the claimed
\[
    1+\frac{\lfloor C/2\rfloor}{C}-\epsilon
\]
lower bound.

Now consider a randomized scheduler. Let $p$ denote the probability that the scheduler starts a batch before time $\delta s$; this probability is well-defined because the two instances are indistinguishable up to time $\delta s$. If the adversary chooses $\inst_{\mathrm{E}}$, then with probability $1-p$ the scheduler makes the wrong initial decision, and the asymptotic competitive ratio is at least
\[
    1+(1-p)a.
\]
If the adversary chooses $\inst_{\mathrm{L}}$, then with probability $p$ the scheduler makes the wrong initial decision, and the asymptotic competitive ratio is at least
\[
    1+p b.
\]
Therefore the adversary can force asymptotic competitive ratio at least
\[
    1+\max\{(1-p)a,p b\}.
\]
For any $p\in[0,1]$,
\[
    \max\{(1-p)a,p b\}
    \ge
    \frac{ab}{a+b},
\]
with equality when $p=a/(a+b)$. Using again
\[
    q=\lfloor C/2\rfloor
    \qquad\text{and}\qquad
    \delta=\frac{q}{C},
\]
we have
\[
    a=\frac{\lfloor C/2\rfloor}{C},
    \qquad
    b=\frac{\lceil C/2\rceil}{C},
\]
and therefore
\[
    \frac{ab}{a+b}
    =
    \frac{
        \frac{\lfloor C/2\rfloor}{C}
        \cdot
        \frac{\lceil C/2\rceil}{C}
    }{
        \frac{\lfloor C/2\rfloor+\lceil C/2\rceil}{C}
    }
    =
    \frac{\lfloor C/2\rfloor\lceil C/2\rceil}{C^2}.
\]
Thus the randomized lower bound tends to
\[
    1+\frac{\lfloor C/2\rfloor\lceil C/2\rceil}{C^2}
\]
as $M\to\infty$. Choosing $M$ sufficiently large gives the claimed
\[
    1+\frac{\lfloor C/2\rfloor\lceil C/2\rceil}{C^2}-\epsilon
\]
lower bound.

\end{proof}

\section{The Stochastic Batching Model (Omitted Proofs from~\Cref{sec:poisson})}
\label{app:poisson}

In this appendix, we formally prove our various statements for the stochastic batching model, as presented in~\Cref{sec:poisson}. First, in~\Cref{app:poisson-offline} we establish the existence of the offline-optimal flow-time asymptotic rate under Poisson arrivals and establish the finiteness of that rate if and only if $\lambda s < \capac$ (\Cref{thm:poisson-stability}); we also provide a brief discussion of the unstable regime $\lambda s \ge \capac$. Then, in~\Cref{app:poisson-online}, we formally establish a series of reductions demonstrating that the optimal rate across all online schedules in the stable regime $\lambda s < \capac$ is finite, and achieved by one of $\capac$ threshold-based policies (\Cref{thm:online-thresholds}). While doing so, we also discuss a cycle renewal decomposition for these algorithms, and how one can optimize over them both theoretically and empirically. We then continue by establishing the strict separation between online and offline schedules for the Poisson arrival setting in~\Cref{app:online_vs_offline}, proving~\Cref{thm:online-gap}. Next we take a detour and provide an analysis for the greedy online scheduler $\sched_1^\on$ in~\Cref{app:poisson_greedy}, establishing the key ingredients for the proof of~\Cref{lemma:bad_event_bound}. Finally, we conclude this chapter with the discussion of (perfect) lookahead algorithms, analyzing $\algname$ and proving~\Cref{thm:poisson-upper} in~\Cref{app:poisson-lookahead-upper} and presenting the matching (up to constants) lower bound of~\Cref{thm:poisson-lb} in~\Cref{app:poisson-lookahead-lower}

\subsection{The Offline Optimal Schedule for the Poisson Setting (\Cref{thm:poisson-stability})}
\label{app:poisson-offline}
Recall that $\sched^*(n)$ denotes the (random) optimal total flow time on the first $n$ (random) arrivals of a rate-$\lambda$ Poisson process. We first prove that the offline-optimal rate, defined as the ordinary limit $\rate(\sched^*):=\lim_{n\to \infty}(\expect{\sched^*(n)}/n)$, always exists, possibly equal to $+\infty$ (\Cref{app:subsec_limit_existance}). We then characterize when this limit is finite and identify $\lambda s < \capac$ as the sufficient and necessary condition for this to happen (\Cref{app:subsec_limit_finitie}).

\subsubsection{Existence of the Offline Rate}\label{app:subsec_limit_existance}

The existence of the offline rate follows from super-additivity of the flow-time objective and Fekete's lemma. In particular, fix any integers $n,m\ge 1$, and let
$$
\inst=\{r_1,\ldots,r_{n+m}\}
$$
be a realization of $\inst(n+m)$ with jobs labeled by arrival order $r_j\le r_{j+1}$. Split this instance into
$$
\inst_1:=\{r_1,\ldots,r_n\},
\qquad
\inst_2':=\{r_{n+1},\ldots,r_{n+m}\}.
$$
By super-additivity of the optimal flow-time objective (\Cref{lemma:superadditive}), we have that

\begin{equation}\label{eq:limit_exists}
    \opt(\inst)
    \ge
    \opt(\inst_1)+\opt(\inst_2').
\end{equation}

Furthermore, the optimal flow time is invariant under shifting all release times by the same amount. Therefore, if
$$
\inst_2:=\{r_{n+1}-r_n,\ldots,r_{n+m}-r_n\},
$$
then $\opt(\inst_2')=\opt(\inst_2)$. By the strong Markov property of the Poisson process at time $r_n$, the shifted instance $\inst_2$ is independent of $\inst_1$ and has the same distribution as $\inst(m)$. Furthermore, $\inst_1$ has the same distribution as $\inst(n)$ for the same reason. Therefore, taking expectations in~\eqref{eq:limit_exists} gives
$$
\expect{\sched^*(n+m)}
\ge
\expect{\sched^*(n)}
+
\expect{\sched^*(m)}.
$$
Thus the sequence $a_n:=\expect{\sched^*(n)}$ is super-additive. Fekete's lemma implies
$$
\rate(\sched^*):=\lim_{n\to\infty}\frac{a_n}{n}
=
\sup_{n\ge 1}\frac{a_n}{n},
$$
where the common value may be $+\infty$. Therefore, $\rate(\sched^*)$ is always a well-defined limit.

\subsubsection{Upper Bounds on the Offline Optimum}

We next upper bound the growth of the offline optimum by comparing it to a simple feasible block scheduler.

\begin{lemma}\label{claim:block-scheduler-upper-bound}
For the offline optimum, the following upper bounds hold:
$$
\expect{\sched^*(n)}
\le
\begin{cases}
O_{\lambda,s,\capac}(n),
& \text{if } \lambda s<\capac, \\
O_{\lambda,s,\capac}(n^{3/2}),
& \text{if } \lambda s=\capac, \\
\displaystyle
\frac{\lambda s-\capac}{2\lambda\capac}n^2+o_{\lambda,s,\capac}(n^2),
& \text{if } \lambda s>\capac.
\end{cases}
$$
\end{lemma}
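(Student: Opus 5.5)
We would bound $\expect{\sched^*(n)}$ from above by the expected cost of one explicit feasible schedule, the \emph{block scheduler} $\sched^{\mathrm{blk}}$. It groups the arrivals into consecutive blocks of $\capac$ jobs: block $j$ consists of jobs $(j-1)\capac+1,\ldots,j\capac$, and the last block may be partial. Each block is processed as a single batch, started at the earliest time that is no earlier than its last arrival and no earlier than the completion of the previous batch. Write $A_j:=r_{j\capac}$ for the last arrival of block $j$ and $W_j$ for its start time. Then $W_j=\max\{A_j,W_{j-1}+s\}$, and the schedule is clearly feasible, so $\sched^*(n)\le\sched^{\mathrm{blk}}(n)$ pathwise.

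\textbf{Decomposing the cost.} Set $D_j:=W_j-A_j\ge 0$. A job $i$ in block $j$ has flow time
$$
(W_j-r_i)+s \;=\; D_j+(A_j-r_i)+s .
$$
The middle term sums, within a block, to the intra-block waiting. In expectation this is at most $\capac(\capac-1)/(2\lambda)$ per block, because $A_j-r_i$ is a sum of i.i.d.\ $\mathrm{Exp}(\lambda)$ gaps. Summed over blocks, the intra-block waiting and the processing term together contribute $O_{\lambda,s,\capac}(n)$, which is lower order in every regime. All the content is therefore in $\capac\sum_{j\le \lceil n/\capac\rceil}\expect{D_j}$.

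\textbf{The waiting recursion.} The delays satisfy the Lindley recursion $D_j=\max\{0,\,D_{j-1}+s-X_j\}$ with $D_1=0$. Here $X_j=A_j-A_{j-1}$ are i.i.d.\ $\mathrm{Gamma}(\capac,\lambda)$ by the independent increments of the Poisson process. Set $\xi_j:=s-X_j$; these have mean $\mu:=s-\capac/\lambda$ and finite variance $\sigma^2=\capac/\lambda^2$. Unrolling the recursion gives $D_j=\max_{i\le j}(S_j-S_i)$, where $S_j$ is the partial-sum random walk of the $\xi$'s. Decompose $S_j-S_i=(j-i)\mu+(\bar S_j-\bar S_i)$, where $\bar S$ is the centered walk. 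This gives
$$
\expect{D_j}\le \max\{0,\,(j-1)\mu\}+2\,\expect{\max_{i\le j}|\bar S_i|}\le \max\{0,j\mu\}+O(\sigma\sqrt{j}),
$$
where the last step uses Doob's $L^2$ maximal inequality. This single estimate handles the two non-stable cases.
\begin{itemize}
\item If $\lambda s=\capac$, then $\mu=0$, so $\sum_{j\le n/\capac}\expect{D_j}=O(n^{3/2})$.
\item If $\lambda s>\capac$, then $\mu>0$ and
$$
\capac\sum_{j\le n/\capac}\bigl(j\mu+O(\sqrt j)\bigr)=\capac\mu\frac{n^2}{2\capac^2}+O(n^{3/2})=\frac{\lambda s-\capac}{2\lambda\capac}n^2+o(n^2).
$$
This matches the claimed constant.
\end{itemize}

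\textbf{Main obstacle: the stable case.} For $\lambda s<\capac$ the bound above only gives $O(\sqrt j)$ per block, which is too weak; we need $\sup_j\expect{D_j}<\infty$. We would use Kingman's bound for a random walk with negative drift $\mu<0$: the all-time maximum $\max_{k\ge0}S_k$ has expectation at most $\expect{\xi^2}/(2|\mu|)$, which is finite. Since $D_j$ is stochastically dominated by this stationary maximum, $\expect{D_j}=O_{\lambda,s,\capac}(1)$ uniformly in $j$, giving $O(n)$ in total.

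Alternatively, we could apply an exponential-martingale argument to $\xi$, which has exponential moments since $\xi\le s$. This would give a geometric tail for the maximum. The partial final block only adds $O(\capac)$ jobs, each with flow time at most $D_{\lceil n/\capac\rceil}$ plus lower-order terms, so it does not affect any of the three bounds.
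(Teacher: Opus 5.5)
Your proposal is correct and follows the paper's proof essentially step for step. It uses the same block scheduler, the same Lindley recursion for the block delays, and the same split of the cost into $\capac\sum_j \expect{D_j}$ plus an $O_{\lambda,s,\capac}(n)$ intra-block and processing term. The only difference is that you justify the reflected-random-walk estimates explicitly: Doob's $L^2$ maximal inequality in the critical and overloaded regimes, and Kingman's bound in the stable regime. The paper simply cites these estimates as standard.
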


\begin{proof}
Consider the following feasible block scheduler. Partition the jobs into consecutive blocks of size $\capac$: the first block contains jobs $1,\ldots,\capac$, the second contains jobs $\capac+1,\ldots,2\capac$, and so on. The scheduler starts each full block as soon as all jobs in that block have arrived and the processor is idle. If fewer than $\capac$ jobs remain at the end, it processes them as a final partial block. Since $\sched^*$ is offline-optimal, the expected flow time of this block scheduler upper bounds $\expect{\sched^*(n)}$.

For each full block $k$, let
$$
R_k:=r_{k\capac}
$$
be the arrival time of the last job in that block, and set $R_0:=0$. Define
$$
X_k:=R_k-R_{k-1}.
$$
The random variables $(X_k)_{k\ge 1}$ are i.i.d. Erlang random variables with parameters $(\capac,\lambda)$, so
$$
\expect{X_k}=\frac{\capac}{\lambda}.
$$
Let $S_k$ be the time at which the $k$-th full block starts processing, and define
$$
W_k:=S_k-R_k
$$
to be the waiting time of the $k$-th full block after it becomes available. By construction, the start times satisfy
$$
S_{k+1}=\max\{R_{k+1},S_k+s\}.
$$
Subtracting $R_{k+1}$ gives the Lindley \cite{L52} recursion
$$
W_{k+1}
=
\big(W_k+s-X_{k+1}\big)^+.
$$
Thus $(W_k)_{k\ge 1}$ is a reflected random walk with increments
$$
Y_k:=s-X_k.
$$
These increments are i.i.d., have finite exponential moments, and satisfy
$$
\expect{Y_k}
=
s-\frac{\capac}{\lambda}
=
\frac{\lambda s-\capac}{\lambda}.
$$

We use standard bounds for this reflected random walk. If $\lambda s<\capac$, then the increments have negative drift, and
$$
\sup_{k\ge 1}\expect{W_k}<\infty.
$$
If $\lambda s=\capac$, then the increments have mean zero and finite variance, so standard maximal inequalities for mean-zero random walks imply
$$
\expect{W_k}=O_{\lambda,s,\capac}(\sqrt{k}).
$$
Finally, if $\lambda s>\capac$, then the increments have positive drift
$$
\delta:=\frac{\lambda s-\capac}{\lambda}>0,
$$
and the reflected random walk grows linearly:
$$
\expect{W_k}
=
\delta k+o_{\lambda,s,\capac}(k).
$$

It remains to translate these waiting-time bounds into flow-time bounds. For a full block $k$, once all jobs in the block have arrived, each of the $\capac$ jobs waits an additional $W_k$ time units before processing begins, and then spends $s$ time units in service. The expected within-block waiting before the final job of the block arrives depends only on $\lambda$ and $\capac$. Hence the expected contribution of block $k$ to total flow time is at most
$$
\capac\expect{W_k}+O_{\lambda,s,\capac}(1).
$$
Let
$$
M:=\left\lfloor\frac{n}{\capac}\right\rfloor
$$
be the number of full blocks. The final partial block contains fewer than $\capac$ jobs; its contribution is bounded by the same estimate for one additional block and is lower-order in all three regimes. Therefore,
$$
\expect{\sched^*(n)}
\le
\capac\sum_{k=1}^{M}\expect{W_k}
+
O_{\lambda,s,\capac}(n).
$$

\begin{enumerate}
    \item If $\lambda s<\capac$, then $\sup_k\expect{W_k}<\infty$, and so
    $$
    \expect{\sched^*(n)}
    =
    O_{\lambda,s,\capac}(n).
    $$
    
    \item If $\lambda s=\capac$, then $\expect{W_k}=O_{\lambda,s,\capac}(\sqrt{k})$, and hence
    $$
    \expect{\sched^*(n)}
    \le
    \capac\sum_{k=1}^{M}O_{\lambda,s,\capac}(\sqrt{k})
    +
    O_{\lambda,s,\capac}(n)
    =
    O_{\lambda,s,\capac}(n^{3/2}).
    $$

    \item If $\lambda s>\capac$, since
    $$
    \expect{W_k}=\delta k+o_{\lambda,s,\capac}(k),
    $$
    we have
    $$
    \sum_{k=1}^{M}\expect{W_k}
    =
    \frac{\delta M^2}{2}+o_{\lambda,s,\capac}(M^2).
    $$
    Using $M=n/\capac+O(1)$ gives
    \begin{align*}
    \expect{\sched^*(n)}
    &\le
    \capac\left(
    \frac{\delta M^2}{2}
    +
    o_{\lambda,s,\capac}(M^2)
    \right)
    +
    O_{\lambda,s,\capac}(n) \\
    &=
    \frac{\delta}{2\capac}n^2
    +
    o_{\lambda,s,\capac}(n^2) 
    =
    \frac{\lambda s-\capac}{2\lambda\capac}n^2
    +
    o_{\lambda,s,\capac}(n^2).
\end{align*}
\end{enumerate}

This completes the proof.
\end{proof}

\subsubsection{Lower Bounds on the Offline Optimum}

We now prove matching lower bounds. The stable-regime lower bound is immediate from the fact that every job spends at least $s$ time units in the system. The critical and overloaded lower bounds use a standard workload argument.

\begin{lemma}\label{claim:offline-growth-lower-bounds}
For the offline optimum, the following lower bounds hold:
$$
\expect{\sched^*(n)}
\ge
\begin{cases}
sn,
& \text{if } \lambda s<\capac, \\
\Omega_{\lambda,s,\capac}(n^{3/2}),
& \text{if } \lambda s=\capac,\\ 
\displaystyle
\frac{\lambda s-\capac}{2\lambda\capac}n^2-o_{\lambda,s,\capac}(n^2),
& \text{if } \lambda s>\capac.
\end{cases}
$$
\end{lemma}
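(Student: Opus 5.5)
The plan is a single pathwise \emph{workload} (or \emph{backlog}) bound that holds for every feasible schedule, and in particular for $\sched^*(n)$. It is then turned into expected flow-time lower bounds in the three regimes. The stable case needs nothing new: by \Cref{claim:opt-lb}, every job spends at least $s$ in the system, so $\sched^*(n)\ge sn$ pathwise. For the other two regimes I use the area formula of \Cref{claim:flow-def}, $\sched(\inst)=\int_0^\infty (A(t\mid\inst)-N_\sched(t\mid\inst))\,dt$, together with the capacity constraint $N_\sched(t\mid\inst)\le \capac\lfloor t/s\rfloor\le \capac t/s$ for all $t$. The latter holds because at most one batch of at most $\capac$ jobs completes in each interval of length $s$, and none before time $s$. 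This gives the pathwise bound
\[
\sched^*(n)\;\ge\;\int_0^{\infty}\Big(A_n(t)-\tfrac{\capac t}{s}\Big)^+\,dt,
\]
where $A_n(t)=\min\{A(t),n\}$ counts arrivals among the first $n$ jobs.

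For the overloaded regime $\lambda s>\capac$, take the integral over $[0,r_n]$. There $A_n(t)$ is a rate-$\lambda$ Poisson counting process, and $r_n=n/\lambda+O_P(\sqrt n)$. By the law of large numbers (in the functional form $\sup_{t\le r_n}|A(t)-\lambda t|=o(n)$ in $L^1$, e.g.\ via Doob's maximal inequality on the compensated process), the integrand equals $(\lambda-\capac/s)t+o(n)$ uniformly. The integral over $[0,r_n]$ is then $\frac{(\lambda-\capac/s)}{2}\frac{n^2}{\lambda^2}-o(n^2)$. Beyond $r_n$ the integrand is $(n-\capac t/s)^+$, which adds the triangle up to time $ns/\capac$. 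Summing gives $\int_0^{ns/\capac}(\min\{\lambda t,n\}-\capac t/s)\,dt$. A direct computation shows this equals $n^2\big(\tfrac{s}{2\capac}-\tfrac{1}{2\lambda}\big)=\frac{\lambda s-\capac}{2\lambda\capac}n^2$, which matches the target constant. So I will compute with the fluid curve $\min\{\lambda t,n\}$ and control the deviation by $o(n^2)$. Since $(\cdot)^+$ is $1$-Lipschitz, the deviation is at most $\int_0^{ns/\capac}|A_n(t)-\min\{\lambda t,n\}|\,dt\le (ns/\capac)\sup_t|A_n(t)-\min\{\lambda t,n\}|$. The expected supremum is $O(\sqrt n)$ by Doob, so the deviation is $O(n^{3/2})=o(n^2)$ in expectation.

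For the critical regime $\lambda s=\capac$ the drift vanishes, so $A_n(t)-\capac t/s=A(t)-\lambda t$ on $[0,r_n]$ is a compensated Poisson process. I then need $\expect{\int_0^{r_n}(A(t)-\lambda t)^+dt}=\Omega(n^{3/2})$. I would restrict to the deterministic horizon $T=n/(2\lambda)$. On the event $r_n\ge T$, whose probability tends to $1$, the integral over $[0,T]$ is a lower bound. Because of the positive part, truncating on the complement is harmless, and it suffices to bound $\expect{\int_0^T (A(t)-\lambda t)^+dt}$ minus a small correction. By Fubini this is $\int_0^T\expect{(A(t)-\lambda t)^+}dt$. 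By the CLT (or a direct Paley--Zygmund argument using the second and fourth moments of the centered Poisson), $\expect{(A(t)-\lambda t)^+}\ge c\sqrt{\lambda t}$ for $t\ge 1/\lambda$. Integrating gives $\Omega(T^{3/2})=\Omega(n^{3/2})$. The correction from $r_n<T$ is handled by Cauchy--Schwarz: a second-moment bound $O(n^3)$ on the integral, times an exponentially small probability $\prob{r_n<T}$ from a Chernoff bound, gives $o(1)$.

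The main obstacle is bookkeeping rather than ideas. In the overloaded case the constant must come out exactly, so the fluid computation has to include both the arrival phase and the drain-out triangle after $r_n$, with only $o(n^2)$ error. I expect the uniform-deviation step, via Doob's $L^2$ maximal inequality on $A(t)-\lambda t$ over a horizon of length $O(n)$, to be the part needing care.
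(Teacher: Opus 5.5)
Your proposal is correct and follows the same route as the paper: the pathwise workload bound $\sched^*(n)\ge\int_0^\infty(A_n(t)-\capac t/s)^+\,dt$, then a fluid computation giving the constant $\frac{\lambda s-\capac}{2\lambda\capac}$ in the overloaded case, and then an integrated $\Omega(\sqrt{\lambda t})$ lower bound on $\expect{(A(t)-\lambda t)^+}$ in the critical case. The differences are only in the technical tools. In the overloaded case, your $1$-Lipschitz-plus-Doob estimate gives an explicit $O(n^{3/2})$ error, where the paper uses scaling, Fatou, the strong law and bounded convergence. In the critical case, you use a moment (Paley--Zygmund/H\"older) bound with a global truncation on $\{r_n\ge T\}$, where the paper uses the exact identity $\expect{(N(t)-\mu)^+}=\mu\,\prob{N(t)=\lfloor\mu\rfloor}$ with Stirling and a pointwise Chernoff/Cauchy--Schwarz correction.
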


\begin{proof}
The stable-regime lower bound is immediate: by~\Cref{claim:opt-lb}, we have that $\sched^*(n)\ge sn$ with probability $1$. We now prove the critical and overloaded lower bounds. Let $N(t)$ denote the number of arrivals of the rate-$\lambda$ Poisson process by time $t$. We first record the deterministic workload lower bound used in both arguments. For the finite instance $\inst(n)$, the number of released jobs by time $t$ is $\min\{N(t),n\}$. Any feasible scheduler can complete at most $\capac$ jobs every $s$ units of time, and hence at most $(\capac/s)t$ jobs by time $t$. Since total flow time is the area under the number of released but unfinished jobs (\Cref{claim:flow-def}), with probability $1$, we have that 

\begin{equation}\label{eq:rate_lb}
\sched^*(n)
\ge
\int_0^\infty
\bigg(
\min\{N(t),n\}
-
\frac{\capac}{s}t
\bigg)^+
dt.
\end{equation}

We first consider the overloaded regime $\lambda s>\capac$. Dividing \eqref{eq:rate_lb} by $n^2$ and performing the change in variables $t=nx$ gives
$$
\frac{\sched^*(n)}{n^2}
\ge
\int_0^\infty
\bigg(
\min\bigg\{\frac{N(nx)}{n},1\bigg\}
-
\frac{\capac}{s}x
\bigg)^+
dx.
$$
Since the integrand is nonnegative, taking expectations and applying Tonelli's theorem gives
$$
\frac{\expect{\sched^*(n)}}{n^2}
\ge
\int_0^\infty
\expect{\bigg(
\min\bigg\{\frac{N(nx)}{n},1\bigg\}
-
\frac{\capac}{s}x
\bigg)^+}
dx.
$$
By Fatou's lemma,
$$
\liminf_{n\to\infty}
\frac{\expect{\sched^*(n)}}{n^2}
\ge
\int_0^\infty
\liminf_{n\to\infty}
\expect{
\bigg(
\min\bigg\{\frac{N(nx)}{n},1\bigg\}
-
\frac{\capac}{s}x
\bigg)^+
}
dx.
$$
For every fixed $x\ge 0$, the strong law for the Poisson process gives
$$
\frac{N(nx)}{n}\to \lambda x
\qquad\text{almost surely}.
$$
Furthermore, the integrand is upper bounded by $1$, so bounded convergence gives
$$
\lim_{n\to\infty}
\expect{
\bigg(
\min\bigg\{\frac{N(nx)}{n},1\bigg\}
-
\frac{\capac}{s}x
\bigg)^+
}
=
\bigg(
\min\{\lambda x,1\}
-
\frac{\capac}{s}x
\bigg)^+.
$$
Therefore,
$$
\liminf_{n\to\infty}
\frac{\expect{\sched^*(n)}}{n^2}
\ge
\int_0^\infty
\bigg(
\min\{\lambda x,1\}
-
\frac{\capac}{s}x
\bigg)^+
dx.
$$
When $\lambda s>\capac$, the integrand is positive exactly on $x\in(0,s/\capac)$, and a direct calculation yields
$$
\int_0^\infty
\bigg(
\min\{\lambda x,1\}
-
\frac{\capac}{s}x
\bigg)^+
dx
=
\frac{\lambda s-\capac}{2\lambda\capac}.
$$
and therefore
\[
\liminf_{n\to\infty}
\frac{\expect{\sched^*(n)}}{n^2} \ge \frac{\lambda s-\capac}{2\lambda\capac},\]
proving the desired lower bound for $\lambda s > \capac$.

It remains to prove the lower bound in the critical regime. Assume $\lambda s=\capac$. Then
$$
N(t)\sim \mathrm{Poisson}(\lambda t)
=
\mathrm{Poisson}\left(\frac{\capac}{s}t\right)
=
\mathrm{Poisson}\left(\mu\right).
$$
for $\mu:= (\capac t)/s$. For every $t\ge 0$, we have
\begin{equation}\label{eq:rlb1}
\bigg(
\min\{N(t),n\}
-
\frac{\capac}{s}t
\bigg)^+
\ge
(N(t)-\mu)^+\mathbf{1}(N(t)\le n) \
\ge
(N(t)-\mu)^+
-
N(t)\mathbf{1}(N(t)>n).
\end{equation}
If $\mu\le n/2$, then the Chernoff bound for Poisson random variables gives
$$
\prob{N(t)>n}
\le
\prob{N(t)>2\mu}
\le
\exp(-\mu/3).
$$
Therefore, by Cauchy--Schwarz,
\begin{equation}\label{eq:rlb2}
\expect{N(t)\mathbf{1}(N(t)>n)}
\le
\sqrt{\expect{N(t)^2}\prob{N(t)>n}}
\le
(\mu+1)\exp(-\mu/6),
\end{equation}
where we used $\expect{N(t)^2}=\mu^2+\mu$.

We next lower bound $\expect{(N(t)-\mu)^+}$. Let $k:=\lfloor \mu\rfloor$. Using the Poisson identity
$$
\expect{N(t)\mathbf{1}(N(t)\ge r})
=
\mu\prob{N(t)\ge r-1},
$$
with $r=k+1$, we obtain
\begin{align*}
\expect{(N(t)-\mu)^+}
&=
\sum_{\ell=k+1}^{\infty}(\ell-\mu)\prob{N(t)=\ell} \\
&=
\expect{N(t)\mathbf{1}(N(t)\ge k+1)}
-
\mu\prob{N(t)\ge k+1} \\
&=
\mu\prob{N(t)=k}.
\end{align*}
For $\mu\ge 1$, Stirling's bound therefore implies
\begin{equation}\label{eq:rlb3}
\expect{(N(t)-\mu)^+} = \mu\prob{N(t)=k}
\ge
\frac{1}{e^2}\sqrt{\mu}.
\end{equation}

Therefore, for every $t$ satisfying
$$
\frac{s}{\capac}\le t\le \frac{ns}{2\capac},
$$
or equivalently $1\le \mu\le n/2$, by~\eqref{eq:rlb1},~\eqref{eq:rlb2} and~\eqref{eq:rlb3} we have
$$
\expect{
\bigg(
\min\{N(t),n\}
-
\frac{\capac}{s}t
\bigg)^+
}
\ge
\frac{1}{e^2}\sqrt{\frac{\capac t}{s}}
-
\left(\frac{\capac t}{s}+1\right)
\exp\left(-\frac{\capac t}{6s}\right).
$$
Taking expectations in \eqref{eq:rate_lb} and applying Tonelli's theorem gives
$$
\expect{\sched^*(n)}
\ge
\int_{s/\capac}^{ns/(2\capac)}
\left[
\frac{1}{e^2}\sqrt{\frac{\capac t}{s}}
-
\left(\frac{\capac t}{s}+1\right)
\exp\left(-\frac{\capac t}{6s}\right)
\right]dt.
$$
The first term integrates to $\Omega_{\lambda,s,\capac}(n^{3/2})$, while the second term has integral bounded by a constant depending only on $s$ and $\capac$. Hence, whenever $\lambda s = \capac$ we have
$$
\expect{\sched^*(n)}
=
\Omega_{\lambda,s,\capac}(n^{3/2}).
$$
This completes the proof of all three lower bounds.
\end{proof}

\subsubsection{Putting Everything Together}\label{app:subsec_limit_finitie}

Finally, we combine the preceding estimates. From the analysis in~\Cref{app:subsec_limit_existance}, the offline rate exists as an ordinary limit in $[0,+\infty]$. In the stable regime $\lambda s<\capac$, \Cref{claim:block-scheduler-upper-bound} gives $\expect{\sched^*(n)}=O_{\lambda,s,\capac}(n)$, while \Cref{claim:offline-growth-lower-bounds} gives $\expect{\sched^*(n)}\ge sn$. Therefore
$$
s\le \rate(\sched^*)<+\infty.
$$

In the critical regime $\lambda s=\capac$, the upper and lower bounds give
$
\expect{\sched^*(n)}
=
\Theta_{\lambda,s,\capac}(n^{3/2})
$ and hence $\rate(\sched^*)=+\infty$. In the overloaded regime $\lambda s>\capac$, the upper and lower bounds match to leading order:
$$
\expect{\sched^*(n)}
=
\frac{\lambda s-\capac}{2\lambda\capac}n^2
+
o_{\lambda,s,\capac}(n^2),
$$
and therefore $\rate(\sched^*)=+\infty$.

Thus the offline rate always exists and is finite if and only if $\lambda s<\capac$, proving \Cref{thm:poisson-stability}. In the overloaded regime $\lambda s> \capac$, the asymptotics are essentially determined by the deterministic workload imbalance: arrivals inject work faster than full batches can remove it, and the simple block scheduler already matches the offline optimum to leading order. In the critical regime $\lambda s = \capac$, the correct order is $\Theta(n^{3/2})$; optimizing the constant in front of this term is a finer question that we do not pursue here. For the rest of the paper, we therefore focus on the stable regime $\lambda s<\capac$, where the offline optimum has a finite rate and it becomes meaningful to compare online and finite-lookahead schedulers against this benchmark.

\subsection{Online Schedules for the Poisson Setting (\Cref{thm:online-thresholds})}
\label{app:poisson-online}
In this section, we analyze online algorithms for the Poisson batching model in the stable regime $\lambda s<\capac$. Our goal is to prove that, for the purpose of optimizing the asymptotic flow-time rate, it suffices to consider the finite class of threshold policies appearing in~\Cref{thm:online-thresholds}.

We begin by formalizing the class of online schedulers. An online scheduler is a nonanticipatory protocol: at any time $t$, its decisions may depend on the arrival history up to time $t$, its own past decisions, the current set of pending jobs, and its internal randomness, but not on arrivals after time $t$. In particular, whenever the processor is idle and there are pending jobs, the scheduler may either start a batch immediately or wait for additional arrivals before starting the next batch.

The proof proceeds through a sequence of reductions, each simplifying the class of online policies without increasing the optimal achievable rate. First, as in the adversarial setting, we show that it is without loss of generality to restrict attention to work-conserving policies (\Cref{def:work-conserving}). Second, we formulate the rate-minimization problem as an average-cost semi-Markov decision process. This reduces the search over arbitrary history-dependent online policies to stationary timer policies: algorithms specified by deterministic timers $\tau_q\in[0,\infty]$, one for each state $q\in\{1,\ldots,\capac-1\}$, where $q$ is the number of pending jobs when the processor becomes idle. Third, using the regenerative structure of stationary timer policies and a ratio-to-difference transformation for the cycle objective, we show that finite positive timers are unnecessary: each timer may be chosen from $\{0,\infty\}$. Finally, we prove that an optimal binary timer policy can be chosen monotone in the queue length. Hence the set of states in which the scheduler batches immediately is an upper set, and the policy is a threshold policy. Pictorially, the reductions are summarized in~\Cref{fig:online-reduction-chain}.

% Preamble:
% \usepackage{tikz}
% \usepackage{float} % needed only if you use [H]

\begin{figure}[H]
\centering
\begin{tikzpicture}[
    x=1cm,
    y=1cm,
    box/.style={
        draw,
        rounded corners,
        line width=0.8pt,
        align=center,
        minimum width=2.45cm,
        minimum height=0.95cm,
        font=\small
    },
    arr/.style={
        ->,
        line width=0.8pt
    },
    toplab/.style={
        font=\scriptsize,
        align=center
    },
    botlab/.style={
        font=\scriptsize,
        align=center
    }
]

% Box centers
\def\xA{1.25}
\def\xB{4.10}
\def\xC{6.95}
\def\xD{9.80}
\def\xE{12.65}
\def\ybox{0}

% Boxes
\node[box] (online) at (\xA,\ybox) {Online\\Schedulers};
\node[box] (wc) at (\xB,\ybox) {Work-Conserving\\Online};
\node[box] (timer) at (\xC,\ybox) {Stationary\\Timer Policies};
\node[box] (binary) at (\xD,\ybox) {Binary\\Timer Policies};
\node[box] (threshold) at (\xE,\ybox) {Threshold\\Policies};

% Arrows
\draw[arr] (online.east) -- (wc.west);
\draw[arr] (wc.east) -- (timer.west);
\draw[arr] (timer.east) -- (binary.west);
\draw[arr] (binary.east) -- (threshold.west);

% Concept labels above arrows
\node[toplab] at (2.68,0.82) {work-conservation\\reduction};
\node[toplab] at (5.53,0.82) {SMDP\\reduction};
\node[toplab] at (8.38,0.82) {binary-timer\\reduction};
\node[toplab] at (11.23,0.82) {monotonicity\\argument};

% Space for section / lemma references below arrows
\node[botlab] at (2.68,-0.82) {\Cref{subsec:red_wc}};
\node[botlab] at (5.53,-0.82) {\Cref{subsec:red_stat}};
\node[botlab] at (8.38,-0.82) {\Cref{subsec:red_binary}};
\node[botlab] at (11.23,-0.82) {\Cref{subsec:red_threshold}};

\end{tikzpicture}
\caption{Reduction from general online schedulers to threshold policies in the stable regime $\lambda s<\capac$.}
\label{fig:online-reduction-chain}
\end{figure}

Along the way, we establish a renewal-reward cycle decomposition for stationary timer policies. This decomposition shows that their ordinary asymptotic rates exist and are finite in the stable regime $\lambda s<\capac$, and expresses each such rate as the ratio of expected flow time to expected number of jobs in a regenerative cycle (\Cref{claim:stationary-timer-cycle-rate}). Finally, in~\Cref{subsec:online_optim}, we explain how to evaluate and optimize over the resulting finite family of threshold policies, both theoretically and empirically.

\subsubsection{From Online to Work-Conserving Online Policies}
\label{subsec:red_wc}

We first show that online schedulers can be assumed to be work-conserving. Recall that from \Cref{def:work-conserving}, work-conservation requires two properties: the scheduler should never leave capacity unused in a batch when additional jobs are already pending, and it should never idle when a full batch is available. Naturally, both requirements are without loss of generality.

\begin{proposition}
\label{lemma:online-work-conserving-wlog}
For every online scheduler $\sched$, there exists a work-conserving online scheduler $\sched'$ such that, for every deterministic arrival sequence $\inst$, we have
$
\sched'(\inst)\le \sched(\inst)
$.

\end{proposition}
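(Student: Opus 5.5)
Given an online scheduler $\sched$, I would build $\sched'$ by simulation: $\sched'$ runs $\sched$ internally on the same arrivals, and this is possible online because $\sched$ is nonanticipatory. $\sched'$ copies the batch start times of $\sched$, except that it inserts extra batch starts and fills batches greedily. Concretely, $\sched'$ keeps its own queue. Whenever its processor is idle and either (i) the simulated $\sched$ starts a batch, or (ii) at least $\capac$ jobs are pending in $\sched'$, it starts a batch of $\min\{q,\capac\}$ jobs, taking them in FIFO order. Otherwise it waits. The invariant to prove is a domination of completion counts: $N_{\sched'}(t\mid\inst)\ge N_{\sched}(t\mid\inst)$ for all $t$. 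Combined with \Cref{claim:flow-def}, this gives $\sched'(\inst)\le\sched(\inst)$ pathwise. Since the argument is pathwise for every arrival sequence, randomized schedulers are covered by coupling on the internal randomness.

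Rule (i) needs care, because $\sched'$ may be busy when $\sched$ starts a batch. I would rather define $\sched'$ through ``batch tokens''. Each batch start of $\sched$ at time $u$ yields a token, and $\sched'$ uses the token at the earliest time $\ge u$ at which it is idle. The key claim is that $\sched'$ never falls behind in the following sense. Let $b_{\sched}(t)$ and $b_{\sched'}(t)$ be the numbers of batches started by each schedule by time $t$. I would show by induction over batch starts that the $i$-th batch of $\sched'$ starts no later than the $i$-th batch of $\sched$. The same induction should show that the first $i$ batches of $\sched'$ contain at least as many jobs as the first $i$ batches of $\sched$. For the size part, note that each batch of $\sched'$ is either full or empties its queue. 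Since $\sched$ cannot have processed jobs that had not arrived, the cumulative FIFO count of $\sched'$ dominates $\min\{A(\text{start}),\ \text{cumulative count of }\sched+\ldots\}$ via a standard exchange/induction. For the timing part, token usage is delayed only when $\sched'$ is busy with an extra batch started by rule (ii). Such an extra batch is full, so $\sched'$ is then ahead in job count, which offsets the delay.

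The timing-versus-extra-batches interaction is the main obstacle. An extra full batch inserted by rule (ii) can push the next token later than the corresponding $\sched$ batch, so a naive ``$i$-th batch starts earlier'' invariant can fail. To handle this, I would switch to comparing completion counts directly, in the style of the proof of \Cref{lemma:adv_greedy}. Suppose $\tau$ is the first time with $N_{\sched'}(\tau)<N_{\sched}(\tau)$. Then $\sched$ completes a batch at $\tau$ that started at $\tau-s$. At time $\tau-s$, $\sched'$ was either idle, in which case it would have used the token then and completed at least as many jobs by FIFO, or busy with a batch finishing by $\tau$. In the busy case, I would argue that at that point $\sched'$ is at least as far ahead in cumulative jobs. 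If this extremal argument gets messy, the fallback is to let $\sched'$ at each token simply take $\min\{q,\capac\}$ jobs and to use rule (ii) only while it is idle, then prove domination by the same first-violation argument. Finally, I would check that $\sched'$ is online and satisfies both parts of \Cref{def:work-conserving}: part (1) holds by FIFO filling, and part (2) holds by rule (ii).
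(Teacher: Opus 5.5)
\textbf{There is a genuine gap: the switch to deferred tokens breaks the construction.} The problem is not only that your size claim and the invariant $N_{\sched'}(t\mid\inst)\ge N_\sched(t\mid\inst)$ fail; the conclusion $\sched'(\inst)\le\sched(\inst)$ fails too.

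Take $\capac=2$ and $s=1$, with two jobs at time $0$, one at $0.5$ and one at $1.1$. Let $\sched$ start the two time-$0$ jobs at $0.2$ and the other two at $1.2$. Its flow time is $1.2+1.2+1.7+1.1=5.2$.

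Your $\sched'$ must start at time $0$ by rule (ii). The token created at $0.2$ is deferred to time $1$. At that moment only the job from $0.5$ is pending, so that job is served alone. The token from $1.2$ is then deferred to time $2$ and serves the last job alone. This costs $1+1+1.5+1.9=5.4>5.2$, and $N_{\sched'}=3<4=N_\sched$ on $[2.2,3)$.

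The deferred token fires at a moment when $\sched$ did not start, on behalf of jobs $\sched'$ has already served, with a nearly empty queue. That premature batch then blocks $\sched$'s next start. Your fallback keeps the deferral and fails in exactly the same way. The paper's brief coupling instead treats the forced full batch as $\sched$'s next batch started earlier, and preserves only the remaining start times. So that start of $\sched$ is used up rather than postponed.

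\textbf{Your original rule is the right one.} Let a start of $\sched$ trigger $\sched'$ only if $\sched'$ is idle at that very moment, and otherwise ignore it. Then your first-violation argument closes if you run it on cumulative starts. Define $S_X(t)$ as the number of jobs in batches that $X$ starts at or before $t$, so that $N_X(t\mid\inst)=S_X(t-s)$.

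At the first $v$ with $S_{\sched'}(v)<S_\sched(v)$, $\sched$ must start a batch at $v$.
\begin{itemize}
\item If $\sched'$ is idle at $v$, it also starts a batch, so $S_{\sched'}(v)=\min\{A(v\mid\inst),S_{\sched'}(v^-)+\capac\}\ge\min\{A(v\mid\inst),S_\sched(v^-)+\capac\}\ge S_\sched(v)$.
\item If $\sched'$ is busy, its current batch started at some $a\in(v-s,v)$. Starts of $\sched$ are at least $s$ apart, so $\sched$ started nothing in $[a,v)$. Hence that batch was not triggered by $\sched$, so it is a full rule-(ii) batch. Minimality gives $S_{\sched'}(a^-)\ge S_\sched(a^-)=S_\sched(v^-)$. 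Therefore $S_{\sched'}(v)\ge S_{\sched'}(a^-)+\capac\ge S_\sched(v^-)+\capac\ge S_\sched(v)$.
\end{itemize}
Either case is a contradiction, and \Cref{claim:flow-def} then yields $\sched'(\inst)\le\sched(\inst)$ pathwise. In the example above, this $\sched'$ waits at time $1$ and starts a full batch at $1.1$, for flow time $4.6$.
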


\begin{proof}
Fix an online scheduler $\sched$. We modify it in two local ways. First, whenever $\sched$ starts a batch while $q$ jobs are pending, the modified scheduler includes $\min\{q,\capac\}$ pending jobs in that batch. Since the processing time of a batch is always $s$, adding pending jobs to an already-started batch does not delay any completion time and can only complete additional jobs earlier.

Second, whenever the processor is idle and at least $\capac$ jobs are pending, the modified scheduler starts a full batch immediately. This is also harmless: delaying such a full batch only keeps at least $\capac$ available jobs waiting, while starting it immediately uses an idle processor and completes those jobs earlier. The subsequent behavior can be coupled with the original execution by preserving the remaining batch start times and assigning completion slots to the oldest unfinished jobs.

Applying these two modifications throughout the execution produces an online scheduler $\sched'$ that is work-conserving. The construction is nonanticipatory, since it uses only the current pending set and the decision that $\sched$ would make at the current time. On every deterministic arrival sequence, no job completion time is increased, and therefore
$
\sched'(\inst)\le \sched(\inst)
$.
\end{proof}

Taking expectations over the Poisson arrivals and over the internal randomness of $\sched$ (if any) we obtain $\expect{\sched'(n)}\le \expect{\sched(n)}$ for every $n$. Taking limsups, it therefore suffices to consider work-conserving online algorithms from now on when minimizing the rate. We will also use the standard convention that whenever a scheduler starts a batch, it processes the oldest pending jobs. This is without loss of generality: for any fixed finite instance, total flow time depends only on the multiset of completion times, not on the assignment of completion times to job labels. Thus we may relabel completion slots so that jobs complete in arrival order.

\subsubsection{From Work-Conserving Online to Stationary Timer Policies}
\label{subsec:red_stat}

We now reduce from arbitrary work-conserving online schedulers to stationary policies. By work-conservation, the only nontrivial decisions occur when the processor is idle and the number of pending jobs satisfies
$$
q\in\{1,\ldots,\capac-1\}.
$$
If $q=0$, the scheduler has no available job and must wait for the next arrival. If $q\ge \capac$, a full batch is available and the scheduler must start a batch immediately. Thus, whenever the processor is idle with $q\in\{1,\ldots,\capac-1\}$ pending jobs, the scheduler's next decision can be represented by a (potentially random) timer
$$
\tau\in[0,\infty].
$$
The interpretation is that the scheduler waits up to time $\tau$ for another arrival. If an arrival occurs before the timer expires, the timer is canceled and the state becomes $q+1$. If no arrival occurs before time $\tau$, the scheduler starts a batch with the $q$ pending jobs. Importantly, this timer representation does not impose stationarity. A general work-conserving online scheduler may choose the timer at each decision epoch as an arbitrary function of the full past history and its internal randomness. The point is only that the current state and chosen timer determine the law of the next transition.

Indeed, let $X\sim\mathrm{Exp}(\lambda)$ be the time to the next arrival. If $X<\tau$, the system waits for time $X$, accumulates $qX$ units of flow time, completes no jobs, and moves to state $q+1$. If $X\ge \tau$, the system waits for time $\tau$, starts a batch, and completes it after an additional time $s$. The $q$ initial jobs contribute $q(\tau+s)$ units of flow time. During the service interval, new jobs may arrive, but they are not processed by the current batch. A job arriving $u$ time units after service begins contributes $s-u$ units of flow before the next decision epoch. Since arrivals occur at rate $\lambda$, the expected total contribution of arrivals during service is
$$
\lambda\int_0^s (s-u)du
=
\frac{\lambda s^2}{2}.
$$
If $A\sim\mathrm{Poisson}(\lambda s)$ is the number of arrivals during service, the next state is $A$. Finally, for states $q\ge \capac$, the action is forced: the scheduler immediately starts a full batch. If $A\sim\mathrm{Poisson}(\lambda s)$ is the number of arrivals during the service time, then the next state is $q-\capac+A$ and the expected flow accumulated during this transition is
$$
qs+\frac{\lambda s^2}{2}.
$$

Therefore, after restricting to work-conserving schedulers, the online scheduling problem is a countable-state semi-Markov decision process. The state is the number of pending jobs at idle decision epochs, the actions are timers in $[0,\infty]$ for states $\{1,\ldots,\capac-1\}$, and the objective is the long-run flow time per job. A deterministic stationary policy for this SMDP is exactly a fixed choice of one timer for each nontrivial state.

\begin{definition}[Stationary Timer Policies]
\label{def:stationary-timer-policy}
A stationary timer policy is specified by a vector
$$
\tau=(\tau_1,\ldots,\tau_{\capac-1})\in[0,\infty]^{\capac-1}.
$$
Whenever the processor is idle with $q\in\{1,\ldots,\capac-1\}$ pending jobs, the policy sets a timer of length $\tau_q$. If a new job arrives before the timer expires, the timer is canceled and the policy updates its decision at the new state. If the timer expires first, the policy starts a batch with the $q$ pending jobs. As usual, if $q=0$ the policy waits for the next arrival, and if $q\ge\capac$ it starts a full batch immediately.
\end{definition}

We now appeal to the standard stationary-policy reduction for average-cost semi-Markov decision processes. The objective of the SMDP is the long-run expected flow time per completed job: each transition accumulates the queue-area cost incurred until the next decision epoch, and the normalization is by the number of jobs completed. Thus the objective coincides with $\rate(\sched)$. For denumerable-state average-cost SMDPs with compact action spaces and unbounded costs, stationary optimality holds under standard recurrence hypotheses; see, for example,~\cite{FedergruenHordijkTijms1979,sennott1989average}. In our setting, these recurrence hypotheses are guaranteed by the stability condition $\lambda s<\capac$. We therefore obtain the following reduction.

\begin{proposition}
\label{lemma:online-smdp-reduction}
Assume $\lambda s<\capac$. Then\footnote{Formally, we only take the infimum over schedulers for which the rate exists; as we show next, this is indeed the case for all stationary timer policies.}
$$
\inf_{\sched:\emph{ work-conserving online}}
\rate(\sched)
=
\inf_{\tau\in[0,\infty]^{\capac-1}}
\rate(\sched_\tau),
$$
where $\sched_\tau$ denotes the stationary timer policy associated with $\tau$.
\end{proposition}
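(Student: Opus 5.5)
The direction ``$\le$'' is immediate, since every stationary timer policy $\sched_\tau$ is itself a work-conserving online scheduler. The work is in ``$\ge$'': no history-dependent (possibly randomized) work-conserving online policy beats the best stationary timer policy. I would cast the problem as the countable-state average-cost SMDP described above and verify the hypotheses of the standard stationary-optimality theorem for denumerable-state SMDPs, in the form of \cite{sennott1989average} or \cite{FedergruenHordijkTijms1979}.

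\textbf{The model.} The state is $q\in\mathbb{Z}_{\ge 0}$, the number of pending jobs at an idle decision epoch. The action set $[0,\infty]$ is compact under the one-point compactification. The transition law is weakly continuous in $\tau$: the split probabilities are $1-e^{-\lambda\tau}$ and $e^{-\lambda\tau}$, and the post-service state is $\mathrm{Poisson}(\lambda s)$. The expected one-step cost is $q\,\expect{\min\{X,\tau\}}+\mathbbm{1}[X\ge\tau]\,(qs+\lambda s^2/2)$, which is lower semicontinuous in $\tau$. The expected sojourn time is bounded below by a positive constant uniformly: it is at least $\min\{1/\lambda,s\}$ up to constants, because either an arrival happens or a batch of length $s$ runs.

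\textbf{The objective.} The rate is a ratio of long-run cost to long-run completed jobs, not a ratio to elapsed time. I would handle this by noting that, for any policy with finite rate, long-run completions equal long-run arrivals, so the job count grows like $\lambda t$. The rate then equals $\lambda^{-1}$ times the time-average cost. That is a standard average-cost-per-unit-time SMDP criterion, to which the cited theorems apply directly. Policies with infinite rate are irrelevant to the infimum.

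\textbf{Main obstacle: the recurrence and value-function hypotheses.} Sennott's conditions require two things. First, a reference stationary policy with finite average cost whose cost-to-reach state $0$ is finite from every state. Second, a uniform lower bound on relative values, $h_\alpha(q)-h_\alpha(0)\ge -N$.

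For the reference policy I would take the greedy $\sched^{\on}_1$, i.e.\ $\tau\equiv 0$. Its queue is a Lindley-type chain with negative drift $\lambda s-C<0$ once $q\ge C$, since each service removes $C$ and adds $\mathrm{Poisson}(\lambda s)$. A Foster--Lyapunov argument with $V(q)=q^2$ then gives finite expected hitting time and finite expected hitting cost to $0$ from every $q$, with cost growing quadratically in $q$. This also shows $\rate(\sched^{\on}_1)<\infty$.

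For the lower bound on relative values, I would use a coupling argument. Starting from $q$ versus $0$, the state-$0$ system can imitate any policy for state $q$ while carrying fewer jobs, so its discounted value is no larger. This gives $h_\alpha(q)\ge h_\alpha(0)$, i.e.\ $N=0$. This monotonicity-by-coupling step, together with checking that the ratio normalization does not break the theorem, is where I expect the real care to be needed.

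With these hypotheses verified, the theorem yields a deterministic stationary optimal policy, i.e.\ a timer vector $\tau$, whose average cost is at most the liminf average cost of every history-dependent policy. That establishes the claimed equality.
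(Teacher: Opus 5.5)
Your proposal is correct and follows essentially the same route as the paper. Both model work-conserving online schedulers as history-dependent policies of a countable-state SMDP with timer actions, invoke the stationary-optimality theorem for denumerable-state average-cost SMDPs (Federgruen--Hordijk--Tijms, Sennott), and get the other direction from the trivial inclusion. Your verification of the hypotheses is more explicit than the paper's, which only notes linear cost growth and Poisson tails: you use greedy as a reference policy via $V(q)=q^2$, the coupling giving $h_\alpha(q)\ge h_\alpha(0)$, and the passage from per-job to per-time cost. For that last step, with non-stationary policies you only need the one-sided bound that the per-job rate is at least $\lambda^{-1}$ times the liminf time-average cost. This holds because the queue area up to the $n$-th arrival is at most the total flow time of the first $n$ jobs.
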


\begin{proof}
The preceding construction identifies every work-conserving online scheduler with a history-dependent, possibly randomized policy for a semi-Markov decision process whose state is the number of pending jobs at idle decision epochs. The Poisson assumption gives the required Markov property: conditional on the current state, future arrivals are independent of the past, and the distribution of the next state, elapsed time, accumulated flow time, and number of completed jobs depends only on the chosen timer.

The one-step expected flow-time costs grow at most linearly in the queue
length. Moreover, uniformly over timer actions, the number of arrivals
during a service interval has a Poisson distribution and therefore has an exponentially decaying tail. Consequently, the transition kernel has uniformly bounded moments of every fixed order, and all one-step expected costs are finite.

Lastly, the average-cost objective of this SMDP is the long-run expected accumulated flow time normalized by the long-run expected number of completed jobs, which is exactly the definition of $\rate(\sched)$. By the stationary-policy optimality theorem for average-cost semi-Markov decision processes~\cite{FedergruenHordijkTijms1979,sennott1989average}, the infimum over history-dependent randomized policies equals the infimum over deterministic stationary policies. In this model, deterministic stationary policies are precisely the stationary timer policies of~\Cref{def:stationary-timer-policy}. This proves the claim.
\end{proof}

We close this subsection by recording the regenerative structure of stationary timer policies. This will be used repeatedly below. The key point is that, unlike a general history-dependent online scheduler, a stationary timer policy has no internal state beyond the current queue length. Therefore, every time the system becomes empty, the future evolution after the next arrival is independent of the past and has the same law as a fresh execution.

\begin{definition}[Cycle of a Stationary Timer Policy]
\label{def:stationary-timer-cycle}
Fix a stationary timer policy $\sched_\tau$. Consider an execution that starts from an empty system, and condition on the first arrival occurring at time $0$. Let
$$
T_\tau
:=
\inf\{t>0:\text{ at time }t\text{ the processor is idle and there are no pending jobs}\}.
$$
The cycle induced by $\sched_\tau$ is the finite arrival sequence consisting of all jobs that arrive in the interval $[0,T_\tau)$. We denote by $\cycjobs(\sched_\tau)$ the number of jobs in this cycle, and by $\cycflow(\sched_\tau)$ the total flow time incurred by $\sched_\tau$ on these jobs.
\end{definition}

As is standard in renewal-reward analysis for regenerative stochastic processes, the empty-system times of a stationary timer policy serve as regeneration points. This allows us to express the long-run flow-time rate as the ratio between the expected flow time accumulated in one cycle and the expected number of jobs in that cycle.

\begin{proposition}
\label{claim:stationary-timer-cycle-rate}
Assume $\lambda s<\capac$. For every stationary timer policy $\sched_\tau$, the ordinary asymptotic rate exists, is finite, and satisfies
$$
\rate(\sched_\tau)
=
\lim_{n\to\infty}
\frac{\expect{\sched_\tau(n)}}{n}
=
\frac{
\expect{\cycflow(\sched_\tau)}
}{
\expect{\cycjobs(\sched_\tau)}
}.
$$
\end{proposition}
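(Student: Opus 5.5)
The plan is a standard renewal-reward argument. The key steps are: set up i.i.d. cycles, prove finite moments of the cycle length and cycle cost using stability, and then convert the renewal-reward limit (which is naturally indexed by time or by cycles) into the job-indexed limit $\expect{\sched_\tau(n)}/n$.

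\emph{Regenerative structure.} Each cycle (\Cref{def:stationary-timer-cycle}) starts with an arrival into an empty system. The policy $\sched_\tau$ depends only on the current queue length and the running timer. By the strong Markov property of the Poisson process at the end of a cycle, plus the independent $\mathrm{Exp}(\lambda)$ idle gap before the next arrival, the successive triples (cycle duration, $\cycjobs$, $\cycflow$) are i.i.d. The cycles also partition the jobs of the infinite sequence. So if $K_n$ denotes the number of cycles begun among the first $n$ jobs, the flow-time of $\inst(n)$ under $\sched_\tau$ is the sum of the first $K_n-1$ complete cycle costs plus a partial contribution from the last cycle. That partial contribution is at most the full cost of that cycle, because truncating to $n$ jobs only removes jobs and never delays the remaining ones: the policy acts on queue lengths, and removing jobs can only shorten queues. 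I will need to check this monotonicity carefully, or else simply bound the last cycle by its full cost in the infinite run.

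\emph{Finite moments.} This is the main obstacle: showing $\expect{\cycjobs}<\infty$ and $\expect{\cycflow}<\infty$, ideally with second moments, using $\lambda s<\capac$. Timers may be $\infty$ in some states, but once the queue reaches $\capac$ the forced full batches give negative drift. I would compare with a dominating process. Consider the queue-length chain at batch-completion epochs. Whenever $q\ge\capac$, it evolves as $q\mapsto q-\capac+A$ with $A\sim\mathrm{Poisson}(\lambda s)$ and $\expect{A}<\capac$. Below $\capac$, the state reaches $\capac$ or empties within a geometric number of steps with exponential tails. A Foster--Lyapunov drift argument with $V(q)=e^{\theta q}$ for small $\theta>0$ then gives exponential moments for the return time to the empty state, and hence for the cycle duration $D$. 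Since $\cycjobs\le 1+\mathrm{Poisson}$ arrivals during $D$, and $\cycflow\le \cycjobs\cdot D$ (each job's flow is at most the cycle length), all moments are finite.

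\emph{Passing from cycles to jobs.} By the renewal-reward theorem applied with jobs as the "time" index, $\sum_{i\le K_n}\cycflow_i/n\to \expect{\cycflow}/\expect{\cycjobs}$ almost surely. Wald's identity applies to $K_n$, a stopping time for the cycle sequence, and gives $\expect{\sum_{i\le K_n}\cycflow_i}=\expect{K_n}\expect{\cycflow}$. The elementary renewal theorem gives $\expect{K_n}/n\to 1/\expect{\cycjobs}$. The last-cycle overshoot has expectation $o(n)$ because $\expect{\cycflow^2}<\infty$, which gives uniform integrability of the size-biased last cycle. Sandwiching $\expect{\sched_\tau(n)}$ between $\expect{\sum_{i<K_n}\cycflow_i}$ and $\expect{\sum_{i\le K_n}\cycflow_i}$ then yields the ratio formula. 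Finiteness follows from the moment bounds.
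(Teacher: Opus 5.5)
Your proposal is correct and follows essentially the same route as the paper: i.i.d.\ regeneration cycles at empty times, a Foster--Lyapunov drift bound on the queue-length chain at batch completions, and renewal-reward in the job index with an $o(n)$ residual cycle. The paper uses the quadratic function $x^2$ and obtains only first moments, while your exponential function gives all moments, and your treatment of the last cycle via Wald's identity and a second-moment bound is more explicit than the paper's. One correction: the monotonicity you propose to check is false for timer policies. An arrival can cancel a long timer and move the system to a state with $\tau_{q+1}=0$, triggering an earlier batch. With $\tau_q=\infty$, a truncated instance can even leave jobs unserved forever. So you must use your fallback and read $\sched_\tau(n)$ as the flow time of the first $n$ jobs in the run on the infinite arrival stream, exactly as the paper does; this bounds the partial cycle trivially by its full cost.
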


\begin{proof}
Let $(F_i,N_i)_{i\ge 1}$ be the sequence of cycle flow times and cycle sizes generated by $\sched_\tau$ on an infinite rate-$\lambda$ Poisson arrival sequence. Since the arrival process is Poisson and the policy is stationary, the strong Markov property at empty times implies that the pairs $(F_i,N_i)$ are independent and identically distributed, with
$$
(F_i,N_i)
\sim
\bigl(\cycflow(\sched_\tau),\cycjobs(\sched_\tau)\bigr).
$$

We first verify that one cycle has finite first moments. Consider the embedded chain at batch-completion times, and let $X_m$ be the number of unfinished jobs immediately after the $m$-th batch completes. If $X_m=x\ge \capac$, work-conservation forces an immediate full batch, so
$$
X_{m+1}=x-\capac+A_m,
\qquad
A_m\sim\mathrm{Poisson}(\lambda s).
$$
Writing $\mu:=\lambda s<\capac$, the chain has negative drift when $x\ge \capac$:
$$
\expect{X_{m+1}-X_m\mid X_m=x}
=
-(\capac-\mu)<0.
$$
To obtain the moment bound needed for renewal reward, we use the quadratic Lyapunov function $W(x)=x^2$. For $x\ge \capac$,
$$
\begin{aligned}
\expect{W(X_{m+1})-W(X_m)\mid X_m=x}
=
\expect{(x-\capac+A_m)^2-x^2} =
-2(\capac-\mu)x+\expect{(A_m-\capac)^2}.
\end{aligned}
$$
On the finite set $x<\capac$, the expected time until the next batch completion is finite for every fixed timer vector $\tau$: if the relevant timer is finite, the batch starts after the timer expires unless an arrival occurs first; if the timer is infinite, the scheduler waits for enough Poisson arrivals to form a full batch, which has finite mean. Therefore, after increasing the constant to handle this finite set, there exist constants $c_1,c_2>0$, depending only on $\lambda,s,\capac$ and $\tau$, such that
$$
\expect{W(X_{m+1})-W(X_m)\mid X_m=x}
\le
c_1-c_2x
\qquad
\text{for all }x\ge 0.
$$
The Foster--Lyapunov criterion then implies that the embedded chain has finite expected return time to the empty state and, moreover,
$$
\expect{\sum_{m<\sigma_0}(1+X_m)}<\infty,
$$
where $\sigma_0$ is the first return time of the embedded chain to state $0$.

It remains to relate this to flow time. Let $R_m$ be the flow time accumulated between the $m$-th and $(m+1)$-st batch completions. If $X_m=x\ge\capac$, the next batch starts immediately. During the service interval, the $x$ initial unfinished jobs contribute $xs$, while arrivals during service contribute expected flow
$$
\lambda\int_0^s(s-u)du
=
\frac{\lambda s^2}{2}.
$$
Thus
$$
\expect{R_m\mid X_m=x}
=
xs+\frac{\lambda s^2}{2}
\le
C_\tau(1+x).
$$
If instead $x<\capac$, then before the next batch starts the queue size is always smaller than $\capac$; otherwise work-conservation would force an immediate batch. As argued above, the expected idle time before the next batch completion is finite for fixed $\tau$, and during this idle period fewer than $\capac$ jobs are pending. The subsequent service interval contributes at most
$$
\capac s+\frac{\lambda s^2}{2}
$$
in expectation. Hence, after increasing $C_\tau$ if necessary,
$$
\expect{R_m\mid X_m=x}
\le
C_\tau(1+x)
\qquad
\text{for all }x\ge 0.
$$
Combining this bound with
$$
\expect{\sum_{m<\sigma_0}(1+X_m)}<\infty
$$
shows that
$$
\expect{\cycflow(\sched_\tau)}<\infty.
$$
Similarly, since the expected duration of the cycle is finite and arrivals have rate $\lambda$, we also have
$$
0<
\expect{\cycjobs(\sched_\tau)}
<\infty.
$$

We may therefore apply the renewal-reward theorem with renewal length $N_i$ and reward $F_i$. Let
$$
K(n):=\max\left\{k:\sum_{i=1}^k N_i\le n\right\}
$$
be the number of complete cycles contained among the first $n$ arrivals. The complete-cycle contribution satisfies
$$
\frac{1}{n}
\expect{\sum_{i=1}^{K(n)}F_i}
\longrightarrow
\frac{\expect{F_1}}{\expect{N_1}}.
$$
The remaining jobs lie in a single incomplete cycle, whose expected contribution is $o(n)$ by the same integrability bounds. Therefore
$$
\lim_{n\to\infty}
\frac{\expect{\sched_\tau(n)}}{n}
=
\frac{\expect{F_1}}{\expect{N_1}}
=
\frac{
\expect{\cycflow(\sched_\tau)}
}{
\expect{\cycjobs(\sched_\tau)}
}.
$$
This proves both existence and finiteness of the stationary timer rate.
\end{proof}

\subsubsection{From Stationary Timer Policies to Binary Timer Policies}
\label{subsec:red_binary}

We next show that finite positive timers are unnecessary. Intuitively, once the system is idle with $q$ pending jobs, the scheduler faces a simple choice: either batch immediately, or wait for at least one more arrival. Any finite deterministic waiting time interpolates between these two options, but because Poisson arrivals are memoryless and flow accumulates linearly over time, this interpolation is never strictly better than one of the endpoints.

\begin{proposition}
\label{claim:online-binary-timers}
Assume $\lambda s<\capac$. Then
$$
\inf_{\tau\in[0,\infty]^{\capac-1}}
\rate(\sched_\tau)
=
\inf_{\tau\in\{0,\infty\}^{\capac-1}}
\rate(\sched_\tau).
$$
\end{proposition}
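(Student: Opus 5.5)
The inequality ``$\le$'' is trivial, since $\{0,\infty\}^{\capac-1}\subseteq[0,\infty]^{\capac-1}$. For ``$\ge$'', I would fix an arbitrary stationary timer vector $\tau$ with some coordinate $\tau_q\in(0,\infty)$. The goal is to show that replacing $\tau_q$ by either $0$ or $\infty$ does not increase the rate. Repeating this coordinate by coordinate, at most $\capac-1$ times, yields a binary timer policy with rate at most $\rate(\sched_\tau)$, which proves the claim.

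The tool is the ratio-to-difference transformation of \Cref{claim:stationary-timer-cycle-rate}. Let $\rho:=\rate(\sched_\tau)=\expect{\cycflow(\sched_\tau)}/\expect{\cycjobs(\sched_\tau)}$, which is finite in the stable regime. For any stationary timer policy $\sched_{\tau'}$, we have $\rate(\sched_{\tau'})\le\rho$ if and only if $\expect{\cycflow(\sched_{\tau'})-\rho\,\cycjobs(\sched_{\tau'})}\le 0$. It therefore suffices to minimize the linear cycle functional $G(\tau'):=\expect{\cycflow(\sched_{\tau'})-\rho\,\cycjobs(\sched_{\tau'})}$. This functional equals $0$ at $\tau'=\tau$.

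Next I would analyze $G$ as a function of the single coordinate $t=\tau_q$, with the other coordinates fixed. Consider the relative value, under this linear reward, of being idle with $q$ pending jobs. Define
\[
\begin{aligned}
h_{\mathrm{b}}&:= \text{expected remaining $(\cycflow-\rho\cycjobs)$ contribution from batching immediately at state $q$},\\
h_{\mathrm{w}}&:= \text{expected remaining contribution from a transition to state $q+1$ at the moment of an arrival}.
\end{aligned}
\]
Both quantities are taken under the current continuation policy, and neither depends on $t$. This independence is the point where stationarity and memorylessness enter. The return to state $q$ can occur only through a later cycle-level event, so I need to handle re-entry into state $q$ within the same cycle.

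The cleanest way to handle re-entry is a value-function formulation. Let $V_q(t)$ be the relative value at state $q$ with timer $t$. Then
\[
V_q(t)=\expect{ q\min(X,t)-\rho\cdot \mathbf 1[X\ge t]\cdot 0 }+\prob{X<t}\,h_{\mathrm w}+\prob{X\ge t}\,h_{\mathrm b},
\]
where $X\sim\mathrm{Exp}(\lambda)$. Arrivals during waiting are accounted for in $h_{\mathrm w}$.

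Computing this gives
\[
V_q(t)=h_{\mathrm b}+(1-e^{-\lambda t})\Bigl(\frac{q}{\lambda}+h_{\mathrm w}-h_{\mathrm b}\Bigr).
\]
This is a monotone function of $e^{-\lambda t}$, so its minimum over $[0,\infty]$ is attained at $t=0$ or $t=\infty$.

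The step I expect to be the main obstacle is justifying that the relative values $h_{\mathrm w}$ and $h_{\mathrm b}$ are well defined and independent of $t$, even though the cycle may revisit state $q$. I would address this with the standard average-cost optimality (Poisson) equation for the SMDP. This equation has a solution because of the Foster--Lyapunov bounds established in the proof of \Cref{claim:stationary-timer-cycle-rate}. Alternatively, I would use a policy-improvement argument: switch $\tau_q$ to the better endpoint. A single-step improvement at every visit then gives $G\le 0$ for the new policy by the comparison theorem for Markov reward processes with integrable cycles. The integrability bounds of \Cref{claim:stationary-timer-cycle-rate} hold uniformly over both endpoint choices, so the comparison theorem applies.
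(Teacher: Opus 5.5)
Your proposal is correct and follows essentially the paper's argument: the ratio-to-difference transformation, then the observation that the one-state adjusted cost has the form $h_{\mathrm b}+(1-e^{-\lambda t})(q/\lambda+h_{\mathrm w}-h_{\mathrm b})$. Your $h_{\mathrm b},h_{\mathrm w}$ play the role of the paper's $A(q)$ and $V(q+1)$, and since this expression is monotone in $t$, an endpoint timer is optimal. The only difference is framing: the paper uses the optimal value function at the optimal rate $\rho^*$, implicitly assuming the Bellman equation holds and an optimal stationary policy exists, whereas you take a policy-improvement step from an arbitrary $\tau$ at $\rho=\rate(\sched_\tau)$; this avoids presupposing that the infimum is attained, but it needs the same kind of comparison/verification theorem to handle re-entry into state $q$, which you correctly identify.
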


\begin{proof}
Let
$$
\rho^*
:=
\inf_{\tau\in[0,\infty]^{\capac-1}}
\rate(\sched_\tau)
$$
be the optimal stationary-timer rate. By the regenerative representation of~\Cref{claim:stationary-timer-cycle-rate}, each stationary timer policy has rate
$$
\frac{
\expect{\cycflow(\sched_\tau)}
}{
\expect{\cycjobs(\sched_\tau)}
}.
$$
We use the standard ratio-to-difference transformation: at the optimal rate $\rho^*$, minimizing the rate is equivalent to minimizing the adjusted one-cycle objective
$$
\expect{\cycflow(\sched_\tau)-\rho^*\cycjobs(\sched_\tau)}.
$$

Let $V(q)$ be the optimal expected adjusted cost until the system next becomes empty, starting from an idle decision epoch with $q$ pending jobs, and normalize $V(0)=0$. Fix a state $q\in\{1,\ldots,\capac-1\}$. If the policy starts a batch immediately, then the $q$ pending jobs complete after time $s$, arrivals during service contribute expected flow $\lambda s^2/2$, and the next state is $A\sim\mathrm{Poisson}(\lambda s)$. Thus the adjusted cost of immediate batching is
$$
A(q)
:=
qs+\frac{\lambda s^2}{2}
-
\rho^* q
+
\expect{V(A)}.
$$

Now consider setting a timer $\tau\in[0,\infty]$. Let $X\sim\mathrm{Exp}(\lambda)$ be the time to the next arrival. For finite $\tau$, the expected adjusted cost of this action is
$$
H(q,\tau)
=
\int_0^\tau
\lambda e^{-\lambda x}
\bigl(qx+V(q+1)\bigr)dx
+
e^{-\lambda\tau}
\bigl(q\tau+A(q)\bigr).
$$
The first term corresponds to the event $X<\tau$: an arrival occurs first, no job is completed, and the state becomes $q+1$. The second term corresponds to the event $X\ge\tau$: the timer expires, the system waits for time $\tau$, and then batches immediately. Differentiating gives
$$
\frac{\partial}{\partial \tau}H(q,\tau)
=
e^{-\lambda\tau}
\left(
q+\lambda\bigl(V(q+1)-A(q)\bigr)
\right).
$$
Crucially, the sign of this derivative is independent of $\tau$. Hence $H(q,\tau)$ is monotone on $[0,\infty)$, and its minimum over $\tau\in[0,\infty]$ is attained at one of the two endpoints: $\tau=0$ or $\tau=\infty$. If the derivative is identically zero, then every timer is optimal, including both endpoints. Applying this argument at every state $q\in\{1,\ldots,\capac-1\}$ shows that there is an optimal stationary timer policy whose timers all lie in $\{0,\infty\}$. Therefore restricting to binary timers does not increase the optimal rate.
\end{proof}

\subsubsection{From Binary Timer Policies to Threshold Policies}
\label{subsec:red_threshold}

We finally show that an optimal binary timer policy can be chosen to be monotone. Thus, once it is optimal to batch immediately with $q$ pending jobs, it is also optimal to batch immediately with any larger number of pending jobs. This recovers the finite class of threshold policies.

\begin{proposition}
\label{claim:online-monotone-binary}
Assume $\lambda s<\capac$. Among optimal binary timer policies, there exists one whose batching set is monotone. Equivalently, there exists an optimal binary timer policy satisfying
$$
\tau_q\ge \tau_{q'}
\qquad
\text{for all }1\le q\le q'\le \capac-1.
$$
\end{proposition}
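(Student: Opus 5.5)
We work with the relative value function $V$ from the proof of \Cref{claim:online-binary-timers}, normalized so that $V(0)=0$ and evaluated at the optimal rate $\rho^*$. At an idle epoch with $q\in\{1,\ldots,\capac-1\}$ pending jobs there are two options. Batching immediately costs $A(q)=qs+\lambda s^2/2-\rho^*q+\expect{V(\mathrm{Poi}(\lambda s))}$. Waiting forever costs $W(q)=q/\lambda+V(q+1)$, since the $q$ jobs wait an expected $1/\lambda$ for the next arrival. Note that $A(q)$ is affine in $q$, with slope $s-\rho^*$, which is $\le 0$ because $\rho^*\ge s$. The derivative formula from \Cref{claim:online-binary-timers} shows that batching immediately is optimal at $q$ exactly when
\[
D(q):=A(q)-W(q)=A(q)-\frac{q}{\lambda}-V(q+1)\le 0 .
\]
It therefore suffices to show that $D$ is nonincreasing in $q$. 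Then the set $\{q: D(q)\le 0\}$ is an upper set, and choosing $\tau_q=0$ on that set and $\tau_q=\infty$ elsewhere gives an optimal policy with $\tau_q\ge\tau_{q'}$ for $q\le q'$.

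The monotonicity of $D$ reduces to a bound on the increments of $V$:
\[
D(q+1)-D(q)=(s-\rho^*)-\frac{1}{\lambda}-\bigl(V(q+2)-V(q+1)\bigr).
\]
So we need $V(q+2)-V(q+1)\ge s-\rho^*-1/\lambda$. This holds if the increments of $V$ are at least $-(\rho^*-s)$ on the states that can be reached, and the slack $1/\lambda$ is then spare. We obtain such an increment bound from a sample-path coupling. Start one system from state $q+2$ and another from $q+1$, driven by the same arrivals. The system with one extra job mimics the optimal policy of the smaller system, batching whenever the smaller system batches and always taking the oldest jobs, and carries the extra job along. The extra job accumulates nonnegative flow until the coupled policies separate or both systems empty. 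In adjusted cost it contributes $+(\text{its flow})-\rho^*$, and its flow is at least $s$, which gives $V(q+2)-V(q+1)\ge s-\rho^*$.

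This argument is not a direct coupling. The mimicking policy is only feasible, so it upper bounds $V(q+2)$, which is the wrong direction. We instead run the coupling in reverse: the smaller system mimics the optimal policy of the larger one with one job deleted. Deleting a job never delays the others, so this gives $V(q+1)\le V(q+2)+\rho^*-(\text{flow of the deleted job})\le V(q+2)+\rho^*-s$. That is exactly the bound we need. One case requires care: if the deleted job's batch becomes empty, we drop that batch and compare the remaining schedules.

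The main obstacle is to make this comparison rigorous within the relative-value framework. $V$ is a cost until the system empties, and the two coupled systems empty at different times, so the leftover segment must be handled. We plan to handle it with the renewal identity: from the empty state the expected adjusted cost of any continuation is $\ge 0$ at $\rho^*$, because $\rho^*$ is optimal, and it equals $0$ under the optimal policy. A second issue is states $q\ge\capac$, where batching is forced. The same deletion coupling still applies there, since deleting a job from a full batch is allowed. If a direct argument proves messy, the fallback is to note that there are only finitely many binary policies. We would take an optimal binary policy that is not monotone, with $\tau_q=0$ and $\tau_{q'}=\infty$ for some $q<q'$, and use the same coupling to show that switching it to batch at $q'$ does not increase $D$, and hence does not increase the rate.
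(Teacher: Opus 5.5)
Your proposal is correct and follows the paper's proof. Your $D(q)$ is the negative of the paper's advantage $\Delta(q)=G_{\wait}(q)-G_{\batch}(q)$, and both arguments reduce monotonicity to the increment bound $V(r+1)-V(r)\ge s-\rho^*$, proved by deleting one initially pending job from a continuation of the larger system (your final, reversed coupling). The paper passes silently over the case where the smaller system empties first; you handle it explicitly by charging that leftover segment to complete cycles with nonnegative expected adjusted cost at $\rho^*$.
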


\begin{proof}
We optimize over binary timer policies. Let $\rho^*$ be the minimum rate among them. Like in the proof of~\Cref{claim:online-binary-timers}, we again use the adjusted objective
$$
\expect{\cycflow(\sched_\tau)-\rho^*\cycjobs(\sched_\tau)}.
$$
Let $V(q)$ be the optimal expected adjusted cost until the system next becomes empty, starting from an idle decision epoch with $q$ pending jobs. We normalize $V(0)=0$.

Fix $q\in\{1,\ldots,\capac-1\}$. Under a binary timer policy, there are only two possible actions: batch immediately or wait until the next arrival. If the scheduler batches immediately, the $q$ pending jobs complete after time $s$, arrivals during service contribute expected flow $\lambda s^2/2$, and the next state is $A\sim\mathrm{Poisson}(\lambda s)$. Thus the adjusted cost of batching is
$$
G_{\batch}(q)
:=
qs+\frac{\lambda s^2}{2}
-
\rho^* q
+
\expect{V(A)}.
$$
If the scheduler waits until the next arrival, then for $X\sim\mathrm{Exp}(\lambda)$ the $q$ pending jobs accumulate expected flow $q\expect{X}=q/\lambda$, no job is completed, and the next state is $q+1$. Thus the adjusted cost of waiting is
$$
G_{\wait}(q)
:=
\frac{q}{\lambda}+V(q+1).
$$

Define the advantage of batching over waiting by
$$
\Delta(q):=G_{\wait}(q)-G_{\batch}(q).
$$
Batching is optimal at state $q$ whenever $\Delta(q)\ge 0$, while waiting is optimal whenever $\Delta(q)\le 0$. We show that $\Delta(q)$ is strictly increasing in $q$. Indeed, for every $q\in\{1,\ldots,\capac-2\}$,
$$
\begin{aligned}
\Delta(q+1)-\Delta(q)
&=
\frac{1}{\lambda}
+
\bigl(V(q+2)-V(q+1)\bigr)
-
s+\rho^*.
\end{aligned}
$$
It remains to lower bound the marginal difference of the value function. We claim that for every $r\ge 0$,
$$
V(r+1)-V(r)\ge s-\rho^*.
$$
To see this, fix any feasible continuation policy starting from an idle state with $r+1$ pending jobs. Remove one of the initially pending jobs, keep the same future arrivals, and preserve the same batch start times and completion slots for all remaining jobs. This gives a feasible continuation from the state with $r$ pending jobs. The removed job would have completed no earlier than time $s$, since the processor is idle at the initial decision epoch and every batch takes time $s$. Hence its adjusted contribution is at least $s-\rho^*$. Therefore the adjusted cost from $r+1$ initial jobs is at least the adjusted cost of the induced continuation from $r$ initial jobs plus $s-\rho^*$. Taking the infimum over continuations gives $V(r+1)\ge V(r)+s-\rho^*$, as desired. Applying this with $r=q+1$ yields $V(q+2)-V(q+1)\ge s-\rho^*$. Substituting into the expression for $\Delta(q+1)-\Delta(q)$ gives
$$
\Delta(q+1)-\Delta(q)
\ge
\frac{1}{\lambda}>0.
$$

Thus, $\Delta(q)$ is strictly increasing on $\{1,\ldots,\capac-1\}$. It follows that the set of states in which batching is optimal is an upper set. Indeed, if batching is optimal at some state $q$, then $\Delta(q)\ge 0$, and therefore $\Delta(q')>0$ for every $q'>q$. Hence batching is also optimal at every larger state. Therefore, there exists a threshold $k\in\{1,\ldots,\capac\}$ such that an optimal policy waits for $q<k$ and batches immediately for $q\ge k$. Equivalently, $\tau_q=\infty$ for $q<k$ and $\tau_q=0$ for $q\ge k$. This proves the claim.
\end{proof}

We have therefore recovered the finite family of threshold schedulers. For each $k\in\{1,\ldots,\capac\}$, let $\sched_k^\on$ denote the online scheduler that waits whenever fewer than $k$ jobs are pending and starts a batch immediately once at least $k$ jobs are pending. Combining our chain of reductions (\Cref{lemma:online-work-conserving-wlog,lemma:online-smdp-reduction,claim:online-binary-timers,claim:online-monotone-binary}) with the cycle decomposition for stationary timer policies (\Cref{claim:stationary-timer-cycle-rate}), we obtain the following:

\begin{corollary}
\label{cor:online-threshold-optimality}
Assume $\lambda s<\capac$. Then
$$
\inf_{\sched:\emph{ online}}\rate(\sched)
=
\min_{k\in\{1,\ldots,\capac\}}
\rate(\sched_k^\on) = \min_{k\in\{1,\ldots,\capac\}}\frac{
\expect{\cycflow(\sched_k^\on)}
}{
\expect{\cycjobs(\sched_k^\on)}
} < \infty.
$$
\end{corollary}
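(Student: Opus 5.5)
The corollary follows by chaining the reductions already established and then using the finiteness of the threshold family. First, for any online scheduler $\sched$ whose rate exists, \Cref{lemma:online-work-conserving-wlog} gives a work-conserving online $\sched'$ with $\sched'(\inst)\le\sched(\inst)$ pathwise. Taking expectations over Poisson arrivals and internal randomness, and then limsups, shows that the infimum over all online schedulers equals the infimum over work-conserving ones. One subtlety is that $\sched'$ may not have an ordinary limit rate. I would handle this by comparing with $\limsup$ and noting that the next step only needs an upper comparison: the SMDP average-cost criterion of \Cref{lemma:online-smdp-reduction} is itself stated with $\liminf$/$\limsup$ in the cited theorems.

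Second, \Cref{lemma:online-smdp-reduction} replaces the work-conserving infimum by the infimum over stationary timer vectors $\tau\in[0,\infty]^{\capac-1}$. \Cref{claim:online-binary-timers} restricts this further to $\tau\in\{0,\infty\}^{\capac-1}$. \Cref{claim:online-monotone-binary} then shows that some optimal binary vector is monotone, $\tau_q=\infty$ for $q<k$ and $\tau_q=0$ for $q\ge k$, which is exactly $\sched_k^\on$ for some $k\in\{1,\ldots,\capac\}$. The case $k=\capac$ corresponds to all timers infinite, so the policy batches only at $\capac$ pending jobs. Since the binary family is finite (at most $2^{\capac-1}$ vectors), its infimum is a minimum, so the optimum is attained by a threshold policy. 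Conversely, each $\sched_k^\on$ is itself an online scheduler with an existing rate, giving the reverse inequality. Together these yield the first equality, with a minimum on the right.

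Third, each $\sched_k^\on$ is a stationary timer policy, so \Cref{claim:stationary-timer-cycle-rate} applies. Under $\lambda s<\capac$ its rate exists as an ordinary limit, is finite, and equals $\expect{\cycflow(\sched_k^\on)}/\expect{\cycjobs(\sched_k^\on)}$. This gives the second equality, and finiteness of a minimum over finitely many finite numbers gives the strict inequality $<\infty$.

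The main obstacle is the first step: making the infimum over arbitrary online policies well-posed and ensuring that work-conservation plus the SMDP theorem genuinely compare policies whose rates exist only as limits. I would address it by working throughout with $\liminf_n \expect{\sched(n)}/n$ as a lower bound for every online policy. The cited average-cost SMDP optimality results bound this quantity from below by the optimal stationary value. The threshold policies then have genuine limits, so the two sides coincide.
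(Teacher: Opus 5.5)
Your proposal is correct and follows the paper's own argument. The paper obtains the corollary by chaining the work-conservation reduction, the SMDP stationary-policy reduction, the binary-timer reduction and the monotonicity argument, and then applies the renewal-reward representation of \Cref{claim:stationary-timer-cycle-rate} to each $\sched_k^\on$ to get existence of the rate, the cycle-ratio formula and finiteness; your extra care about rate existence for the work-conserving modification (via $\liminf$/$\limsup$) is a bit more explicit than the paper's brief ``taking limsups'' remark and footnote, but it is the same route.
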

We have therefore proven~\Cref{thm:online-thresholds}.

\subsubsection{Optimizing over Online Threshold Policies}
\label{subsec:online_optim}

By~\Cref{cor:online-threshold-optimality}, optimizing over all online schedulers in the stable regime $\lambda s<\capac$ reduces to comparing the $\capac$ threshold policies $\sched_1^\on,\sched_2^\on,\ldots,\sched_\capac^\on$, for which, the renewal representation gives
$$
\rate(\sched_k^\on)
=
\frac{
\expect{\cycflow(\sched_k^\on)}
}{
\expect{\cycjobs(\sched_k^\on)}
}.
$$

These ratios can be evaluated in two standard ways. The simplest is regenerative simulation: for each $k$, repeatedly simulate independent cycles of $\sched_k^\on$, starting with one arrival into an empty system and stopping when the system first becomes empty again. Averaging the resulting cycle flow times and cycle sizes gives an estimate of $\rate(\sched_k^\on)$.

Alternatively, the same quantities can be computed from the first-step equations of the embedded queue-length chain. For fixed $k$, let $N_k(q)$ and $F_k(q)$ denote respectively the expected number of jobs completed and expected flow time accumulated before the system next becomes empty, starting from an idle decision epoch with $q$ pending jobs. Then
$$
\rate(\sched_k^\on)=\frac{F_k(1)}{N_k(1)}.
$$
The equations are obtained by conditioning on the first transition. If $q<k$, the policy waits for the next arrival, moves to state $q+1$, completes no jobs, and accumulates expected flow $q/\lambda$. If $q\ge k$, the policy immediately processes $\min\{q,\capac\}$ jobs; after the service time $s$, the next state is $A+(q-\capac)^+$ for $A\sim\mathrm{Poisson}(\lambda s)$ and the flow accumulated during this transition is
$qs+\lambda s^2/2$. Together with the boundary condition $N_k(0)=F_k(0)=0$, these transition identities define an infinite linear system for $\{N_k(q),F_k(q)\}_{q\ge 0}$. In numerical computations, one can truncate this system at a large state cutoff and control the error using the negative drift in the stable regime together with standard Poisson tail bounds.

In our experiments, we also compare against a stronger pseudo-online benchmark. This benchmark runs all threshold policies
$$
\sched_1^\on,\ldots,\sched_\capac^\on
$$
on the realized instance and reports the minimum flow time achieved among them. It is not itself an online algorithm, since it chooses the best threshold in hindsight after seeing the instance. Nevertheless, it is a useful optimistic baseline for understanding the performance of online algorithms without having to optimize over the class of threshold-based policies.

\subsection{Separating Online and Offline Optimal Schedules (\Cref{thm:online-gap})}
\label{app:online_vs_offline}
We now prove a strict separation between online and offline schedulers in the Poisson arrival setting, establishing~\Cref{thm:online-gap}. By~\Cref{thm:online-thresholds} (or \Cref{cor:online-threshold-optimality}), it suffices to compare the offline optimum against the threshold schedulers $\sched_k^\on$, for $k\in\{1,\ldots,\capac\}$. The main subtlety is that the offline optimum need not respect the regeneration cycles of a threshold scheduler. We therefore do not decompose $\sched^*$ cycle-by-cycle. Instead, for each threshold scheduler $\sched_k^\on$, we construct a feasible comparison schedule that agrees with $\sched_k^\on$ except on certain good cycles, where it uses a strictly better local schedule. The replacement schedule on each good cycle completes no later than $\sched_k^\on$ completes that cycle, so these local replacements can be concatenated across cycles. Since $\sched^*$ is offline optimal, it is no worse than this comparison schedule. We first formalize this reduction below.

\begin{lemma}
\label{lemma:cycle-improvement-rate-gap}
Assume $\lambda s<\capac$ and fix $k\in\{1,\ldots,\capac\}$. Suppose that, on each regenerative cycle of $\sched_k^\on$ (as defined in~\Cref{def:stationary-timer-cycle}), there is a local replacement rule with the following properties:
\begin{enumerate}
\item the replacement schedule is feasible for the jobs in that cycle;
\item the replacement schedule completes no later than $\sched_k^\on$ completes the cycle;
\item if $\Delta_k$ denotes the flow-time improvement in that cycle, then $\expect{\Delta_k}\ge \Gamma_k$ for some $\Gamma_k>0$.
\end{enumerate}
Then
$$
\rate(\sched_k^\on)-\rate(\sched^*)
\ge
\frac{\Gamma_k}{\expect{\cycjobs(\sched_k^\on)}}.
$$
\end{lemma}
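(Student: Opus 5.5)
Fix $k$ and run $\sched_k^\on$ on the first $n$ arrivals of the Poisson process. The plan is to build a single offline comparison schedule $\widehat\sched$ for $\inst(n)$ and then pass to the limit through the renewal-reward representation of \Cref{claim:stationary-timer-cycle-rate}.

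\textbf{Construction of $\widehat\sched$.} Decompose the execution of $\sched_k^\on$ into its regenerative cycles $1,2,\ldots$ as in \Cref{def:stationary-timer-cycle}. On every cycle that lies entirely among the first $n$ arrivals, replace the schedule of $\sched_k^\on$ by the local replacement schedule. On the last, possibly incomplete cycle, keep the schedule of $\sched_k^\on$.

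\textbf{Feasibility of $\widehat\sched$.} Cycle $i$ occupies the interval $[a_i,T_i)$, where $a_i$ is its first arrival and $T_i\le a_{i+1}$ is the time the system empties.
\begin{itemize}
\item By property 1, each replacement is feasible for its own jobs.
\item By property 2, each replacement schedules all its batches within $[a_i,T_i]$. So batches from different cycles never overlap, and the concatenation is feasible for the single processor.
\item Capacity and release constraints are local to each cycle, so they hold automatically.
\end{itemize}
Hence $\sched^*(n)\le \widehat\sched(n)=\sched_k^\on(n)-\sum_{i\le K(n)}\Delta_k^{(i)}$ pathwise. Here $K(n)$ is the number of complete cycles among the first $n$ arrivals and $\Delta_k^{(i)}\ge 0$ is the improvement in cycle $i$. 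If the rule may produce a negative improvement on some cycle, we use it only when it helps; this keeps $\expect{\Delta_k}\ge\Gamma_k$. Taking expectations gives
$$\expect{\sched_k^\on(n)}-\expect{\sched^*(n)}\ge \expect{\textstyle\sum_{i\le K(n)}\Delta_k^{(i)}}.$$

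\textbf{Passing to the limit.} The improvements $\Delta_k^{(i)}$ are i.i.d. across cycles, as functions of the i.i.d. cycle arrival patterns (strong Markov property at empty times, as in \Cref{claim:stationary-timer-cycle-rate}). The main obstacle is to show
$$\frac1n\expect{\textstyle\sum_{i\le K(n)}\Delta_k^{(i)}}\to \frac{\expect{\Delta_k}}{\expect{\cycjobs(\sched_k^\on)}}.$$
I would handle this with Wald's identity. Note that $K(n)+1$ is a stopping time for the cycle sequence, because it is the index of the first cycle whose cumulative job count exceeds $n$. Wald's identity then gives $\expect{\sum_{i\le K(n)+1}\Delta^{(i)}}=\expect{K(n)+1}\,\expect{\Delta_k}$. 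The elementary renewal theorem gives $\expect{K(n)+1}/n\to 1/\expect{\cycjobs}$. Finally, drop the extra term $\expect{\Delta^{(K(n)+1)}}$. This term is $o(n)$ because $\Delta\le \cycflow$, and the overshoot cycle has expected flow $o(n)$ by the integrability established in \Cref{claim:stationary-timer-cycle-rate}. Alternatively, it suffices to use the liminf direction with the inequality $\sum_{i\le K(n)}\ge\sum_{i\le K(n)+1}-\Delta^{(K(n)+1)}$.

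\textbf{Conclusion.} Divide the inequality above by $n$ and let $n\to\infty$. Both rates exist, by \Cref{thm:poisson-stability} and \Cref{claim:stationary-timer-cycle-rate}. This yields $\rate(\sched_k^\on)-\rate(\sched^*)\ge \expect{\Delta_k}/\expect{\cycjobs(\sched_k^\on)}\ge\Gamma_k/\expect{\cycjobs(\sched_k^\on)}$.
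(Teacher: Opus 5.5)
Your proposal is correct and follows essentially the same route as the paper. You apply the replacement rule cycle by cycle and use property~2 to concatenate the replacements into a globally feasible schedule, which the offline optimum can only beat; the expected per-cycle gain then becomes a rate gap via renewal--reward. The difference is bookkeeping: you compare directly on the finite instance $\inst(n)$, keep $\sched_k^\on$ on the straddling cycle, and pass to the limit via Wald's identity and the elementary renewal theorem, whereas the paper defines the modified scheduler $\sched'_k$ on the infinite sequence and invokes renewal--reward for it directly. One small point: the overshoot bound $\expect{F_{K(n)+1}}=o(n)$ that you need is slightly more than the paper's claim states, but it follows from $\expect{\cycflow(\sched_k^\on)}<\infty$ by dominated convergence.
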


\begin{proof}
Decompose the infinite execution of $\sched_k^\on$ into its regenerative cycles. Let $(F_i,N_i)_{i\ge 1}$ denote the cycle flow times and cycle sizes, so that $F_i$ is the flow time incurred by $\sched_k^\on$ during cycle $i$, and $N_i$ is the number of jobs in that cycle. We construct a feasible comparison scheduler $\sched'_k$ by applying the local replacement rule independently on each cycle, and then idling until the original cycle boundary if the replacement schedule finishes early. Because each replacement completes no later than the corresponding cycle of $\sched_k^\on$, the modification does not affect feasibility in later cycles. Thus $\sched'_k$  is feasible on the entire arrival sequence.

The cycle sizes of $\sched'_k$ are the same as those of $\sched_k^\on$, while the expected flow per cycle decreases by at least $\Gamma_k$. Furthermore, when applying the replacement schedule we never use information from future cycles, so the process continues to regenerate. Hence, by the renewal-reward theorem,
$$
\rate(\sched'_k)
\le
\frac{
\expect{\cycflow(\sched_k^\on)}-\Gamma_k
}{
\expect{\cycjobs(\sched_k^\on)}
}
=
\rate(\sched_k^\on)
-
\frac{\Gamma_k}{\expect{\cycjobs(\sched_k^\on)}}.
$$
Since $\sched^*$ is the offline optimum, $\rate(\sched^*)\le \rate(\sched'_k)$. Combining the two inequalities and re-ordering proves the lemma.
\end{proof}

We now define, for each threshold $k$, a positive-probability local configuration on which the cycle schedule of $\sched_k^\on$ can be improved.

\begin{lemma}
\label{lemma:separation-construction}
Assume $\lambda s<\capac$. For every $k\in\{1,\ldots,\capac\}$, there is a local replacement rule for one regenerative cycle of $\sched_k^\on$ whose expected flow-time improvement satisfies
$$
\Gamma_k
\ge
\frac{\lambda s^2 e^{-2\lambda s}}{8}.
$$
\end{lemma}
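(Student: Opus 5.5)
We split into the two cases already sketched in the proof of \Cref{thm:online-gap}: $k=1$ and $k\ge 2$. In each case we exhibit an event on the first portion of a regenerative cycle of $\sched_k^\on$ (started by an arrival at time $0$ into an empty system). We then describe a replacement schedule on that event which is feasible, completes no later than the original cycle endpoint, and improves flow time by a deterministic amount. Finally we bound the probability of the event from below using Poisson interarrival laws. Outside the event the replacement rule is the identity, so $\expect{\Delta_k}\ge \prob{\mathcal E_k}\cdot(\text{improvement})$.

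\emph{Case $k=1$.} Let $\varepsilon=s/4$ and let $\mathcal E_1$ be the event that exactly one arrival occurs in $(0,\varepsilon]$ and none in $(\varepsilon,2s]$. By independent increments,
\[
\prob{\mathcal E_1}=\lambda\varepsilon e^{-\lambda\varepsilon}\cdot e^{-\lambda(2s-\varepsilon)}=\tfrac{\lambda s}{4}e^{-2\lambda s}.
\]
On $\mathcal E_1$, greedy starts job $1$ at $0$ and job $2$ (at time $a\le\varepsilon$) at time $s$. The system is then empty at $2s$ and the cycle ends at $2s$, since no arrival occurs before $2s$. Total flow is $2s+(s-a)$. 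The replacement batches both jobs together at time $a$, completes at $a+s\le 2s$, and has total flow $2s+a$. The improvement is $s-2a\ge s/2$. Hence $\Gamma_1\ge \frac{\lambda s}{4}e^{-2\lambda s}\cdot\frac s2=\frac{\lambda s^2}{8}e^{-2\lambda s}$. Feasibility and the completion-time condition hold trivially, and later cycles are untouched.

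\emph{Case $k\ge2$.} Let $T_{k-1}$ be the $(k-1)$st arrival after time $0$, and take $\mathcal E_k=\{T_{k-1}\ge s\}$. Under $\sched_k^\on$ the processor is idle on $[0,T_{k-1})$, and at $T_{k-1}$ it starts a batch with the $k$ pending jobs. The replacement processes job $1$ alone on $[0,s]$, which fits because $s\le T_{k-1}$. From $T_{k-1}$ on, it copies every batch start time of $\sched_k^\on$, omitting job $1$ from the batch that contained it. Capacity is only reduced, all start times are unchanged, and the completion time is the same. Job $1$'s flow drops from at least $T_{k-1}+s$ to $s$, an improvement of $T_{k-1}\ge s$.

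For the probability, note that $\prob{T_{k-1}\ge s}\ge\prob{T_1\ge s}=e^{-\lambda s}$, since $T_{k-1}\ge T_1$. This gives
\[
\Gamma_k\ge s e^{-\lambda s}\ge \frac{\lambda s^2 e^{-2\lambda s}}{8},
\]
where the last step uses $\lambda s e^{-\lambda s}\le 1/e<8$.

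The main obstacle is the cycle bookkeeping in the $k\ge2$ case. One must check that removing job $1$ never creates an inconsistency with work-conservation or oldest-first ordering. In particular, the batch that originally carried job $1$ now has a free slot, and we must not pull a later job forward; filling that slot is fine but not needed. The comparison schedule in \Cref{lemma:cycle-improvement-rate-gap} need not be work-conserving, so we simply leave the slot empty. The $k=1$ case is routine once we verify that the cycle ends at $2s$ on $\mathcal E_1$.
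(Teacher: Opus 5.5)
Your proposal is correct and follows the paper's proof essentially verbatim: it uses the same events $\mathcal E_1$ (with $\varepsilon=s/4$) and $\mathcal E_k=\{T_{k-1}\ge s\}$ and the same replacement schedules. The only difference is in the final estimate for $k\ge 2$: you lower-bound $\expect{T_{k-1}\mathbf 1(T_{k-1}\ge s)}$ by $s\,\prob{T_1\ge s}=se^{-\lambda s}$ and close with $\lambda s e^{-\lambda s}\le 1/e<8$, whereas the paper uses the Erlang tail identity to get $e^{-\lambda s}/\lambda$ and closes with $8e^{x}>x^2$; both routes give the required constant.
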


\begin{proof}
All times are measured relative to the first arrival of the cycle, which we place at time $0$. First consider $k=1$. Let $\varepsilon:=s/4$ and let $\mathcal E_1$ be the event that exactly one additional job arrives in $(0,\varepsilon]$, and no other job arrives in $(\varepsilon,2s]$. Since arrivals are Poisson, we have $\prob{\mathcal E_1}
=
\lambda\varepsilon e^{-2\lambda s}$. On $\mathcal E_1$, let $x\in(0,\varepsilon]$ be the time of the second arrival. By definition, the scheduler $\sched_1^\on$ starts a singleton batch at time $0$, completing the first job at time $s$, and then starts a second singleton batch at time $s$, completing the second job at time $2s$. Since there are no further arrivals before $2s$, the cycle ends at time $2s$. The flow time of $\sched_1^\on$ on these two jobs is therefore $s+(2s-x)=3s-x$.

The replacement schedule waits until time $x$ and processes both jobs in one batch. This batch completes at time $x+s$, and its flow time is $(x+s)+s=x+2s$. Thus the improvement is
$$
(3s-x)-(x+2s)
=
s-2x
\ge
s-2\varepsilon
=
\frac{s}{2}.
$$
Moreover, $x+s\le \varepsilon+s<2s$, so the replacement completes before the original cycle ends. Therefore the expected improvement for $k=1$ is at least
$$
\Gamma_1
\ge
\lambda\varepsilon e^{-2\lambda s}\cdot \frac{s}{2}
=
\frac{\lambda s^2e^{-2\lambda s}}{8}.
$$

Now consider $k\ge 2$. Let $T_{k-1}$ be the time of the $(k-1)$st arrival after the initial arrival at time $0$. Equivalently, $T_{k-1}$ is the time at which the $k$th job in the cycle arrives, and hence the time at which $\sched_k^\on$ starts its first batch. We have $T_{k-1}\sim \mathrm{Erlang}(k-1,\lambda)$. Let $\mathcal E_k:=\{T_{k-1}\ge s\}$. On this event, $\sched_k^\on$ does not start any batch before time $s$. The replacement schedule processes the initial job immediately as a singleton batch, completing it at time $s$. On $\mathcal E_k$, we have $T_{k-1}\ge s$, so the processor is available again by the time $\sched_k^\on$ starts its first batch. From time $T_{k-1}$ onward, the replacement schedule mimics $\sched_k^\on$, with the only difference that the initial job is omitted from the first batch because it has already been completed. This is feasible, leaves the same set of pending jobs after that batch, and therefore preserves the cycle completion time.

The only job whose completion time changes is the initial job. Under $\sched_k^\on$, it completes at time $T_{k-1}+s$; under the replacement schedule, it completes at time $s$. Thus the improvement on $\mathcal E_k$ is exactly $T_{k-1}$. Therefore
$$
\Gamma_k
\ge
\expect{T_{k-1}\mathbf 1(T_{k-1}\ge s)}.
$$
Using the Erlang tail identity,
$$
\expect{T_{k-1}\mathbf 1(T_{k-1}\ge s)}
=
\frac{k-1}{\lambda}
e^{-\lambda s}
\sum_{j=0}^{k-1}
\frac{(\lambda s)^j}{j!} \ge \frac{e^{-\lambda s}}{\lambda}.
$$
since $k\ge 2$. Finally, using $8e^x>x^2$ for all $x>0$, with $x=\lambda s$, we get
$$
\Gamma_k\ge \frac{e^{-\lambda s}}{\lambda}
\ge
\frac{\lambda s^2e^{-2\lambda s}}{8}.
$$
for every $k\ge 2$, completing the proof.
\end{proof}

The final step towards the proof of~\Cref{thm:online-gap} is to upper bound the expected number of jobs in a threshold cycle.

\begin{lemma}
\label{lemma:threshold-cycle-size-bound}
Assume $\lambda s<\capac$. For every $k\in\{1,\ldots,\capac\}$, we have
$$
\expect{\cycjobs(\sched_k^\on)}
\le
\frac{\capac^2}{\capac-\lambda s}e^{\lambda s}.
$$
\end{lemma}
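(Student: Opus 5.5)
We bound the expected cycle size through Wald's identity on the embedded batch-completion chain from the proof of \Cref{claim:stationary-timer-cycle-rate}. Every job arriving in the cycle is eventually completed inside it. So $\cycjobs(\sched_k^\on)$ equals the total number of jobs served by the batches started during the cycle, which is at most $\capac$ times the number $B$ of batches in the cycle. It therefore suffices to show
\[
\expect{B}\le \frac{\capac}{\capac-\lambda s}e^{\lambda s}.
\]

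To bound $\expect{B}$, we write the cycle as a sequence of excursions separated by batch completions. Let $X_m$ be the number of pending jobs right after the $m$-th batch completion. The cycle ends at the first $m$ with $X_m=0$ and the processor idle, and there are no further arrivals before the next decision. Two cases arise.
\begin{itemize}
\item When $X_m\ge \capac$, the next batch is full and immediate, so $X_{m+1}=X_m-\capac+A_m$ with $A_m\sim\mathrm{Poisson}(\lambda s)$.
\item When $1\le X_m<\capac$, the threshold policy either batches immediately or waits for arrivals up to $k$. Either way, the next batch serves all $\min\{\cdot,\capac\}$ pending jobs, and the state after it is again a fresh $\mathrm{Poisson}(\lambda s)$ count, plus possibly leftover jobs beyond $\capac$ if waiting pushed the queue up (it cannot exceed $\capac$ at batch start under a threshold $k\le \capac$).
\end{itemize}
In all cases we have $X_{m+1}\le (X_m-\capac)^++A_m$.

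The key fact is that $A_m=0$ with probability $e^{-\lambda s}$. When $X_m<\capac$ and $A_m=0$, the system empties and the cycle ends. The plan combines two bounds:
\begin{itemize}
\item A drift bound for the time spent at or above $\capac$. Using $V(x)=x/(\capac-\lambda s)$ and optional stopping, or Wald's identity, the expected number of batches taken while $X\ge\capac$ before returning below $\capac$ is at most $X_0/(\capac-\lambda s)$. Any state reached below $\capac$ is at most $\capac$, so each entry into the region above $\capac$ costs at most $\capac/(\capac-\lambda s)$ batches in expectation, counting the batch that enters it.
\item A geometric bound on the number of visits to the low region $\{X<\capac\}$. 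Each such visit ends the cycle with probability at least $e^{-\lambda s}$, so their number is stochastically dominated by a geometric random variable with mean $e^{\lambda s}$.
\end{itemize}
Multiplying gives $\expect{B}\le e^{\lambda s}\cdot \capac/(\capac-\lambda s)$. The low-region batch itself is absorbed into the drift count, so it is not counted separately.

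The main obstacle is making this multiplication rigorous. The number of high-region batches following each low visit depends on the overshoot, so we do not get a clean product of independent means. We handle this with a single potential function. Define
\[
\Phi(x)=\frac{x}{\capac-\lambda s}+\frac{\capac}{\capac-\lambda s}\mathbf 1(x\ge 1).
\]
We would verify $\expect{\Phi(X_{m+1})-\Phi(X_m)\mid X_m}\le -e^{-\lambda s}$ per batch, uniformly in $X_m\ge 1$, and then apply the Foster-type bound $\expect{B}\le \Phi(X_0)e^{\lambda s}$ with $X_0\le \capac$ before the first batch. If the constants do not close exactly, we fall back to the two-stage argument above. There we condition on the $\sigma$-field at each low visit and use the strong Markov property of the chain to justify multiplying the geometric mean by the worst-case conditional expected excursion length $\capac/(\capac-\lambda s)$.
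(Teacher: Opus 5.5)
Your fallback two-stage argument is correct and is essentially the paper's proof. The paper likewise shows that the number of low batches (those starting with at most $\capac$ pending jobs) is geometric with mean $e^{\lambda s}$, bounds each ensuing high-backlog excursion by optional stopping on the drift $-(\capac-\lambda s)$, and combines the two by conditioning on each phase. It counts jobs per phase as $\capac+\frac{\capac}{\capac-\lambda s}\expect{A\mathbf 1(A>\capac)}\le\frac{\capac^2}{\capac-\lambda s}$, which is exactly $\capac$ times your per-phase batch bound $\frac{\capac}{\capac-\lambda s}$.

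You should present that route as the proof rather than the potential $\Phi$, for two reasons:
\begin{itemize}
\item For $1\le x<\capac$ the drift of $\Phi$ is $\frac{\lambda s-x-\capac e^{-\lambda s}}{\capac-\lambda s}$. This exceeds $-e^{-\lambda s}$ whenever $x<\lambda s(1-e^{-\lambda s})$; for example, $x=1$, $\lambda s=2$, $\capac=3$ gives positive drift.
\item Even a valid drift would only give the weaker constant $\Phi(X_0)e^{\lambda s}\ge\frac{\capac+1}{\capac-\lambda s}e^{\lambda s}$.
\end{itemize}
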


\begin{proof}
Let $\mu:=\lambda s$. We decompose a cycle of $\sched_k^\on$ into low batches and high-backlog excursions. Call a batch \emph{low} if it starts with at most $\capac$ pending jobs. Since $\sched_k^\on$ starts a batch only when at least $k$ jobs are pending, every low batch completes between $k$ and $\capac$ jobs. A low batch processes all jobs pending at its start. Therefore, after a low batch completes, the system is empty if and only if no arrivals occurred during its service interval. If $A\sim\mathrm{Poisson}(\mu)$ is the number of arrivals during the batch processing, this happens with probability
$$
\prob{A=0}=e^{-\mu}.
$$
By independence of arrivals in disjoint service intervals, the number of low batches in one cycle is geometric with success probability $e^{-\mu}$, and hence has expectation $e^\mu$.

We next bound the expected number of jobs completed between two consecutive low batches. If a low batch receives at most $\capac$ arrivals during service, then after it completes the queue size is at most $\capac$, so the next batch is again low. If instead it receives $A=a>\capac$ arrivals, the process enters a high-backlog excursion starting with queue size $a$. During such an excursion, every batch is full. Let $Z_\ell$ be the queue size immediately after the $\ell$th full batch in the excursion, with $Z_0=a$, and let
$$
T:=\inf\{\ell\ge 0: Z_\ell\le \capac\}.
$$
While $Z_\ell>\capac$, the next full batch removes $\capac$ jobs and receives an independent $\mathrm{Poisson}(\mu)$ number of arrivals. Hence
$$
\expect{Z_{\ell+1}-Z_\ell\mid Z_\ell>\capac}
=
\mu-\capac.
$$
Since $\mu<\capac$, this drift is negative. For every $m\ge 1$, applying the drift identity to the stopped process gives
$$
\expect{Z_{T\wedge m}}
\le
a-(\capac-\mu)\expect{T\wedge m}.
$$
Since $Z_{T\wedge m}\ge 0$, we therefore have
$$
\expect{T\wedge m}
\le
\frac{a}{\capac-\mu}.
$$
Thus, letting $m\to\infty$ and using monotone convergence yields
$$
\expect{T\mid Z_0=a}
\le
\frac{a}{\capac-\mu}.
$$
Each full batch in the excursion completes exactly $\capac$ jobs. Therefore, conditional on $A=a>\capac$, the expected number of jobs completed during the ensuing high-backlog excursion is at most
$$
\capac\cdot \frac{a}{\capac-\mu}.
$$
Averaging over $A$, the expected number of jobs completed in the high-backlog excursion generated by one low batch is at most
$$
\frac{\capac}{\capac-\mu}
\expect{A\mathbf 1(A>\capac)}.
$$

Thus the expected number of jobs completed in one phase, consisting of one low batch and the high-backlog excursion immediately following it, is at most
$$
\capac
+
\frac{\capac}{\capac-\mu}
\expect{A\mathbf 1(A>\capac)}.
$$
Since the expected number of low batches in a cycle is $e^\mu$, we obtain
$$
\expect{\cycjobs(\sched_k^\on)}
\le
e^\mu
\left(
\capac
+
\frac{\capac}{\capac-\mu}
\expect{A\mathbf 1{A>\capac}}
\right).
$$
Using $\expect{A\mathbf 1(A>\capac)}\le \expect{A}=\mu$, this becomes
$$
\expect{\cycjobs(\sched_k^\on)}
\le
e^\mu\capac
\left(
1+\frac{\mu}{\capac-\mu}
\right)
=
e^\mu\frac{\capac^2}{\capac-\mu}.
$$
Substituting $\mu=\lambda s$ proves the lemma.
\end{proof}

Combining the optimality of threshold policies (\Cref{thm:online-thresholds}) with the reduction technique of \Cref{lemma:cycle-improvement-rate-gap} and the bounds of \Cref{lemma:separation-construction} and \Cref{lemma:threshold-cycle-size-bound}, we obtain the proof of \Cref{thm:online-gap} with separation constant
$$
\beta
\ge
\frac{\lambda s^2(\capac-\lambda s)}{8\capac^2}e^{-3\lambda s}.
$$

\subsection{The Analysis of Online Greedy (Towards \Cref{lemma:bad_event_bound})}
\label{app:poisson_greedy}
We next record quantitative bounds for the regenerative cycles of the greedy threshold policy $\sched_1^\on$. Recall that $\sched_1^\on$ starts a batch immediately whenever the processor is idle and at least one job is pending. These bounds will be used in the analysis of PrefixLookahead: when the algorithm fails to find a useful offline prefix, it falls back to $\sched_1^\on$ until the system clears. Thus we need quantitative control over the duration of one greedy fallback cycle; the resulting moment bounds will later be used to control the size and flow contribution of such fallback periods.

As in~\Cref{def:stationary-timer-cycle}, let $\cycjobs(\sched_1^\on)$ and $\cycflow(\sched_1^\on)$ denote the number of jobs and total flow time in one regenerative cycle of $\sched_1^\on$. We also let $\cycdur(\sched_1^\on)$ has an exponentially decaying tail, starting with one job at time $0$ and followed by a fresh rate-$\lambda$ Poisson arrival process. We will show that $\cycdur(\sched_1^\on)$ decays exponentially fast, with rate that is controlled by the following parameter.

\begin{definition}[Greedy tail parameter]
\label{def:greedy-tail-parameter}
Let $\mu:=\lambda s$. In the stable regime $\mu<\capac$, define
$$
c_{\gr}
:=
\capac\log\left(\frac{\capac}{\mu}\right)-(\capac-\mu)>0,
$$
and
$$
\gamma_{\gr}
:=
\min\left\{
\frac{c_{\gr}}{2},
\log\left(1+\frac{e^{-\mu}}{4}\right),
\log\left(
1+
\frac{e^{c_{\gr}-\mu}}{4}
\left(1-e^{-c_{\gr}/2}\right)
\right)
\right\}.
$$
\end{definition}

Although $\gamma_{\gr}$ is the exponential rate used in our proofs, its definition is somewhat hard to parse. We will only need that it is strictly positive and depends only on $\lambda, s,\capac$, but it might be useful to keep in mind the following simpler lower bound, sacrificing precision for simplicity:
$$
\gamma_{\gr}
\ge
\hat{\gamma}_{\gr}
:=
\frac{e^{-\lambda s}}{64}
\min\left\{
1,\frac{(\capac-\lambda s)^2}{\capac}
\right\}.
$$
The above expression follows by lower-bounding the three terms in the minimum that defines $\gamma_\gr$ using
\[\capac\log(\frac{\capac}{\mu})-(\capac-\mu)\ge \frac{(\capac - \mu)^2}{2\capac}\;\; ,\;\;
1-e^{-z}\ge \frac12\min\{z,1\}\;\; \text{and}\;\;
\log(1+y)\ge \frac{y}{2} \;\forall 0\le y\le1 .\]

We now proceed with our main result for this section. The following lemma gives the exponential tail estimate for the greedy cycle duration, together with the moment bounds that we will use later.

\begin{lemma}
\label{lemma:s1-cycle-tail}
Assume $\lambda s<\capac$. Then, for every $x>0$,
$$
\prob{\cycdur(\sched_1^\on)>x}
\le
\frac52
\exp\left(-\gamma_{\gr}\frac{x}{s}\right).
$$
Consequently, for every $\eta\in(0,\gamma_{\gr}/s)$,
$$
\expect{
\exp\left(
\eta\cycdur(\sched_1^\on)
\right)
}
\le
1+
\frac{5\eta}{2(\gamma_{\gr}/s-\eta)}.
$$
and for every integer $r\ge1$,
$$
\expect{\cycdur(\sched_1^\on)^r}
\le
\frac{5r!}{2}
\frac{s^r}{\gamma_{\gr}^r}.
$$
\end{lemma}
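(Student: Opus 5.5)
The approach is to reduce the cycle duration to a count of batches, bound the exponential moment of that count, and then apply Markov's inequality. Within one regenerative cycle of $\sched_1^\on$ the processor never idles: a batch starts the instant a job is pending, and the cycle ends exactly when a batch completes with no new arrivals. Hence $\cycdur(\sched_1^\on)=sB$, where $B$ is the number of batches in the cycle.

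I would study the embedded chain at batch completions, with $X_{m+1}=(X_m-\capac)^++A_m$ and $A_m\sim\mathrm{Poisson}(\mu)$ i.i.d. Following the decomposition of \Cref{lemma:threshold-cycle-size-bound}, I split the batches into low batches (started with at most $\capac$ pending jobs) and high-backlog excursions. A low batch clears all pending jobs, so the system is empty after it with probability $e^{-\mu}$, independently across batches. Otherwise it is followed either directly by another low batch (if $A\le \capac$) or by an excursion started from $Z_0=A>\capac$. So $B$ is a geometric number of phases, each phase being one low batch plus an excursion of length $T$ full batches.

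The tail statement then reduces to a bound on $\expect{e^{\gamma B/1}}$ with $\gamma=\gamma_{\gr}$. It suffices to show
\[
\expect{e^{\gamma B}}\le \tfrac52 .
\]
Markov's inequality then gives $\prob{B>n}\le \tfrac52 e^{-\gamma n}$, which with $x=ns$ is the claimed tail; for non-integer $x/s$ one uses $\prob{B>x/s}\le\prob{B>\lfloor x/s\rfloor}$ and absorbs the rounding. By conditioning on the first phase, $\expect{e^{\gamma B}}$ satisfies a renewal-type inequality
\[
\Phi\le \phi\,\bigl(e^{-\mu}+(1-e^{-\mu})\Phi\bigr),
\]
where $\phi$ bounds the exponential moment of one phase's batch count (roughly $e^{\gamma}\expect{e^{\gamma T}}$). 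The inequality is solvable with $\Phi\le 5/2$ provided $\phi(1-e^{-\mu})<1$ with explicit slack. The condition $\gamma\le\log(1+e^{-\mu}/4)$ supplies exactly that slack, provided the excursion contribution is small, and this is where the third term in $\gamma_{\gr}$ enters.

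The main obstacle is controlling $\expect{e^{\gamma T}\mathbf 1(A>\capac)}$ for the excursions. I would use an exponential supermartingale. With $\theta=\log(\capac/\mu)$, the log-moment generating function of one increment $A-\capac$ is
\[
\psi(\theta)=\mu(e^\theta-1)-\capac\theta=-c_{\gr}.
\]
Therefore $e^{\theta Z_\ell+c_{\gr}\ell}$ is a martingale until $T$. Since $Z_T\ge0$, optional stopping gives
\[
\expect{e^{c_{\gr}T}\mid Z_0=a}\le (\capac/\mu)^{a}.
\]
For $\gamma\le c_{\gr}/2$, Hölder's (or Jensen's) inequality turns this into a bound by $(\capac/\mu)^{a/2}$ times a factor controlled by the term $1-e^{-c_{\gr}/2}$. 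Averaging over $A\sim\mathrm{Poisson}(\mu)$ gives $\expect{(\capac/\mu)^{A}}=e^{\capac-\mu}$, which is where the factor $e^{c_{\gr}-\mu}$ in the definition of $\gamma_{\gr}$ comes from. The third clause in $\gamma_{\gr}$ is designed so that the resulting excursion term is at most of order $e^{-\mu}/4$. I expect the bookkeeping of these constants, and getting the final $5/2$, to be the delicate part. If the direct optional-stopping bound overshoots, I would fall back to applying the supermartingale with $\theta/2$ instead of $\theta$.

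Finally, the two consequences follow by integrating the tail. For the exponential moment,
\[
\expect{e^{\eta D}}=1+\int_0^\infty \eta e^{\eta x}\prob{D>x}\,dx\le 1+\frac{5\eta}{2(\gamma_{\gr}/s-\eta)}.
\]
For the moments,
\[
\expect{D^r}=\int_0^\infty r x^{r-1}\prob{D>x}\,dx\le \tfrac52\, r!\,(s/\gamma_{\gr})^r .
\]
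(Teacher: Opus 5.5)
Your skeleton matches the paper's: a dominating loop that returns to a root batch and halts with probability $p=e^{-\mu}$, an exponential moment for one loop, a geometric sum closed using $e^{\gamma_{\gr}}\le 1+p/4$ to get $\expect{e^{\gamma_{\gr}B}}\le\frac52$, then Markov and tail integration. The last two steps are exactly the paper's.

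The gap is in the excursion estimate, which is the only nontrivial step. Write $E:=\expect{(e^{\gamma_{\gr}T}-1)\mathbf 1(A>\capac)}$. For the geometric sum to give $\frac52$ you need $e^{\gamma_{\gr}}(1-0.6p+E)\le 1$, so $E$ must be below $0.6e^{-\mu}$; the paper gets $E\le e^{-\mu}/4$. Your optional-stopping bound $\expect{e^{c_{\gr}T}\mid Z_0=a}\le(\capac/\mu)^a$ uses only $Z_T\ge 0$. It therefore throws away a factor $(\capac/\mu)^{-\capac}$, because the excursion stops when the walk reaches level $\capac$, not $0$.

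This is why your averaging produces $\expect{(\capac/\mu)^A}=e^{\capac-\mu}$ rather than $(\capac/\mu)^{-\capac}\expect{(\capac/\mu)^A}=e^{-c_{\gr}}$. The factor $e^{c_{\gr}}$ in the third clause of $\gamma_{\gr}$ is there precisely to cancel this $e^{-c_{\gr}}$, not to match your $e^{\capac-\mu}$, and it cannot absorb the loss. The most you can extract from your single moment bound is, via Jensen, $\expect{e^{\gamma T}\mid Z_0=a}\le(\capac/\mu)^{a\gamma/c_{\gr}}$. That bounds $E$ only by roughly $\frac{\theta\gamma_{\gr}}{c_{\gr}}\expect{A\mathbf 1(A>\capac)}$.

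Take $\mu=\capac-\sqrt{\capac}/2$ with $\capac$ large. Then $c_{\gr}\approx 1/8$, $\theta\approx 1/(2\sqrt{\capac})$, $\gamma_{\gr}\approx 0.017e^{-\mu}$ and $\expect{A\mathbf 1(A>\capac)}\approx 0.3\capac$. This bound is then of order $\sqrt{\capac}\,e^{-\mu}$, and the geometric sum does not close. Switching to $\theta/2$ has the same defect. So this is not a bookkeeping issue: the stated $\gamma_{\gr}$ and $\frac52$ cannot be reached along this route.

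The fix is to use the level $\capac$ at a fixed time instead of at $T$. On $\{T>m\}$ the walk is still above $\capac$. Markov's inequality for the stopped martingale $e^{\theta Z_{m\wedge T}+c_{\gr}(m\wedge T)}$, equivalently the Chernoff bound on $\sum_{i\le m}A_i$ that the paper uses, then gives
\[
\prob{T>m\mid Z_0=a}\le\left(\frac{\capac}{\mu}\right)^{a-\capac}e^{-c_{\gr}m}.
\]
For $\gamma\le c_{\gr}/2$, the tail sum
$\expect{e^{\gamma T}\mid Z_0=a}-1=(e^{\gamma}-1)\sum_{m\ge0}e^{\gamma m}\prob{T>m\mid Z_0=a}$ is at most $\frac{e^{\gamma}-1}{1-e^{-c_{\gr}/2}}(\capac/\mu)^{a-\capac}$. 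This tail sum, not H\"older, is where $1-e^{-c_{\gr}/2}$ comes from. Averaging with $\sum_{a>\capac}\prob{A=a}(\capac/\mu)^{a-\capac}\le e^{-c_{\gr}}$ and applying the third clause of $\gamma_{\gr}$ gives $E\le e^{-\mu}/4$.

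Three smaller repairs:
\begin{itemize}
\item Keep $\mathbf 1(A>\capac)$ inside the expectation, rather than using a factor $\phi\approx e^{\gamma}\expect{e^{\gamma T}}$ conditioned on an excursion; that conditioning divides by $\prob{A>\capac}$.
\item Let the dominating loop run a root batch even when an excursion ends with an empty queue, since a cycle can end inside an excursion.
\item Apply Markov directly at the real threshold $x/s$. Rounding to $\lfloor x/s\rfloor$ would cost a factor $e^{\gamma_{\gr}}$ in the constant.
\end{itemize}
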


\begin{proof}
Let $D:=\cycdur(\sched_1^\on)$ be the greedy cycle duration. During each service interval of length $s$, the number of arrivals is distributed as
$$
X\sim\mathrm{Poisson}(\mu),
\qquad
\mu=\lambda s.
$$

We dominate the greedy cycle by a simpler pessimistic process. Whenever the system is nonempty but has fewer than $\capac$ pending jobs, the process runs one batch of duration $s$ and samples a fresh $X\sim\mathrm{Poisson}(\mu)$. If $X=0$, it halts. If $1\le X<\capac$, it returns to the same type of state. If $X\ge\capac$, it enters a high-backlog phase in which it repeatedly runs full batches until the queue size drops below $\capac$, and then returns to the same type of state. Let $T_{\mathrm{sys}}$ be the duration of this pessimistic process in real time. By construction, the clearing time $T_\mathrm{sys}$ first order stochastically dominates $D$, i.e. we have $D\preceq T_{\mathrm{sys}}$. It is therefore enough to prove an exponential tail for $T_{\mathrm{sys}}$.

Let
$$
p:=\prob{X=0}=e^{-\mu}.
$$
Each visit to the root state halts with probability $p$. Conditional on a failed visit ($X>0$), the duration of that visit is
$$
T_{\mathrm{loop}}=s(1+\widetilde Z),
$$
where
$$
\widetilde Z=
\begin{cases}
0, & 1\le X<\capac,\\
Z(X), & X\ge\capac,
\end{cases}
\qquad
X\sim\mathrm{Poisson}(\mu)\mid X>0.
$$
Here $Z(x)$ is the number of full batches needed for the queue to drop below $\capac$, starting with $x\ge\capac$ pending jobs.

We first control $Z(x)$. During the full-batch phase, let $A_1,A_2,\ldots$ be i.i.d. $\mathrm{Poisson}(\mu)$ random variables, where $A_i$ is the number of arrivals during the $i$-th full batch. If the phase has not ended after $m$ full batches, then
$$
x-m\capac+\sum_{i=1}^m A_i\ge \capac.
$$
Thus, for any $\theta>0$,
$$
\begin{aligned}
\prob{Z(x)>m}
\le
\prob{
\sum_{i=1}^m A_i
\ge m\capac-x+\capac
} \le
\exp\left(
\theta(x-\capac)
-
m\left(\theta\capac-\mu(e^\theta-1)\right)
\right)
\end{aligned}
$$
where we used Markov's inequality. Choose
$$
\theta:=\log\left(\frac{\capac}{\mu}\right).
$$
Then
$$
\theta\capac-\mu(e^\theta-1)
=
\capac\log\left(\frac{\capac}{\mu}\right)-(\capac-\mu)
=
c_{\gr}.
$$
Therefore, for $x\ge\capac$,
$$
\prob{Z(x)>m}
\le
e^{\theta(x-\capac)}e^{-c_{\gr}m}.
$$
Since $Z(x)$ is nonnegative and integer-valued, for every $0<a<c_{\gr}$,
$$
\begin{aligned}
\expect{e^{aZ(x)}}
=
1+(e^a-1)\sum_{m=0}^{\infty}e^{am}\prob{Z(x)>m} 
\le
1+
\frac{e^a-1}{1-e^{a-c_{\gr}}}
e^{\theta(x-\capac)}.
\end{aligned}
$$

We now set $a:=\gamma_{\gr}$. By the definition of $\gamma_{\gr}$, we have $a\le c_{\gr}/2$, and hence $1-e^{a-c_{\gr}}\ge 1-e^{-c_{\gr}/2}$. Again by the definition of $\gamma_{\gr}$,
$$
e^a-1
\le
\frac{e^{c_{\gr}-\mu}}{4}
\left(1-e^{-c_{\gr}/2}\right)
=
\frac{p e^{c_{\gr}}}{4}
\left(1-e^{-c_{\gr}/2}\right).
$$
Consequently,
$$
\frac{e^a-1}{1-e^{a-c_{\gr}}}
\le
\frac{p e^{c_{\gr}}}{4},
$$
and therefore
$$
\expect{e^{aZ(x)}}
\le
1+\frac{p e^{c_{\gr}}}{4}e^{\theta(x-\capac)}.
$$

Averaging over the failed visit gives
$$
\begin{aligned}
\expect{e^{a\widetilde Z}}
&\le
\frac{
1-p
+
\frac{p e^{c_{\gr}}}{4}
\sum_{x=\capac}^{\infty}
e^{-\mu}\frac{\mu^x}{x!}
e^{\theta(x-\capac)}
}{1-p}.
\end{aligned}
$$
The remaining sum is at most
$$
\begin{aligned}
\sum_{x=0}^{\infty}
e^{-\mu}\frac{\mu^x}{x!}
e^{\theta(x-\capac)}
=
e^{-\theta\capac}\expect{e^{\theta X}} 
=
\exp\left(-\theta\capac+\mu(e^\theta-1)\right)
=
e^{-c_{\gr}}.
\end{aligned}
$$
Thus
$$
\expect{e^{a\widetilde Z}}
\le
\frac{1-p+p/4}{1-p}.
$$
Since $T_{\mathrm{loop}}=s(1+\widetilde Z)$ and $a=\gamma_{\gr}$,
$$
\expect{
\exp\left(\frac{\gamma_{\gr}}{s}T_{\mathrm{loop}}\right)
}
=
e^{a}\expect{e^{a\widetilde Z}}
\le
e^{a}\frac{1-p+p/4}{1-p}.
$$
Moreover, $a=\gamma_{\gr}\le\log(1+p/4)$, so $e^a\le 1+p/4$. Hence
$$
(1-p)
\expect{
\exp\left(\frac{\gamma_{\gr}}{s}T_{\mathrm{loop}}\right)
}
\le
\left(1+\frac p4\right)
\left(1-\frac{3p}{4}\right)
\le
1-\frac p2.
$$

The number of failed visits before the first successful one is geometric. Therefore,
$$
\begin{aligned}
\expect{
\exp\left(\frac{\gamma_{\gr}}{s}T_{\mathrm{sys}}\right)
}
=
p e^{\gamma_{\gr}}
\sum_{\ell=0}^{\infty}
\left[
(1-p)
\expect{
\exp\left(\frac{\gamma_{\gr}}{s}T_{\mathrm{loop}}\right)
}
\right]^\ell 
\le
\frac{p e^{\gamma_{\gr}}}{p/2}
=
2e^{\gamma_{\gr}}
\le
\frac52.
\end{aligned}
$$
Markov's inequality gives, for every $x>0$,
$$
\prob{T_{\mathrm{sys}}>x}
\le
\frac52
\exp\left(-\gamma_{\gr}\frac{x}{s}\right).
$$
Since $D\preceq T_{\mathrm{sys}}$, this proves
$$
\prob{\cycdur(\sched_1^\on)>x}
\le
\frac52
\exp\left(-\gamma_{\gr}\frac{x}{s}\right).
$$

It remains to derive the moment bounds. For every $\eta\in(0,\gamma_{\gr}/s)$, tail integration gives
$$
\begin{aligned}
\expect{e^{\eta D}}
=
1+\eta\int_0^\infty e^{\eta x}\prob{D>x}dx 
\le
1+\frac{5\eta}{2}
\int_0^\infty
\exp\left(-\left(\frac{\gamma_{\gr}}{s}-\eta\right)x\right)dx 
=
1+
\frac{5\eta}{2(\gamma_{\gr}/s-\eta)}.
\end{aligned}
$$
Similarly, for every integer $r\ge1$,
$$
\begin{aligned}
\expect{D^r}
=
r\int_0^\infty x^{r-1}\prob{D>x}dx 
\le
\frac{5r}{2}
\int_0^\infty x^{r-1}
\exp\left(-\gamma_{\gr}\frac{x}{s}\right)dx 
=
\frac{5r!}{2}
\frac{s^r}{\gamma_{\gr}^r}.
\end{aligned}
$$
This completes the proof.
\end{proof}

\subsection{The Analysis of PrefixLookahead (\Cref{thm:poisson-upper})}
\label{app:poisson-lookahead-upper}
We now prove~\Cref{thm:poisson-upper}. The algorithm $\algname$ is stated in~\Cref{alg:poisson_lookahead}; here we give the formal renewal and tail estimates behind the proof sketch in~\Cref{sec:stoch-lookahead}. Throughout this section, assume $\lambda s<\capac$. We use the cycle decomposition induced by $\algname$. Each cycle starts when the system is empty and the next job arrives; we shift this first arrival to time $0$. The cycle ends either when $\algname$ executes a certified safe prefix and clears the system, or when no safe prefix is found in the lookahead window and $\algname$ runs $\sched_1^\on$ until the system clears. Let $\inst_\cyc$ denote the random arrival sequence in one such cycle. Write
$$
N(\inst_\cyc):=|\inst_\cyc|,
\qquad
F(\inst_\cyc):=\text{flow time incurred by }\algname\text{ on the cycle}.
$$
We also write $\bad(\inst_\cyc)$ for the event that $\algname$ finds no safe prefix inside the lookahead window and therefore enters fallback mode. Finally, $\sched_1^\on(\inst_\cyc)$ denotes the flow time incurred by running the greedy threshold scheduler on the same cycle instance.

The first step is to reduce the rate comparison to two quantities: the probability of a bad cycle and the second moment of the greedy fallback cost.

\begin{lemma}
\label{lemma:lookahead-cycle-reduction}
Assume $\lambda s<\capac$. The cycles of $\algname$ are i.i.d., the ordinary rate of $\algname$ exists, and
$$
\rate(\algname)
=
\frac{
\expect{F(\inst_\cyc)}
}{
\expect{N(\inst_\cyc)}
}.
$$
Moreover,
$$
\rate(\algname)-\rate(\sched^*)
\le
\sqrt{
\prob{\bad(\inst_\cyc)}
\cdot
\expect{\bigl(\sched_1^\on(\inst_\cyc)\bigr)^2}
}.
$$
\end{lemma}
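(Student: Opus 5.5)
First, I will establish the renewal structure. Each cycle of $\algname$ begins with an arrival into an empty system. It ends either at the completion of a certified prefix schedule, which by the certificate $C_j<a_{j+1}$ finishes before the next arrival $a_{j+1}$, or at the first time the fallback $\sched_1^\on$ empties the system after all observed arrivals have been served. In both cases the end of the cycle is a stopping time for the arrival process augmented by the lookahead information. I will argue that the cycle's decisions depend only on arrivals up to a time that is either the next arrival after the prefix (good case) or before the clearing time (bad case). Because the algorithm carries no state, the strong Markov property of the Poisson process makes the post-cycle arrivals a fresh Poisson process, so the pairs $(F,N)$ over cycles are i.i.d.

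Next, I need finiteness of $\expect{F(\inst_\cyc)}$ and $\expect{N(\inst_\cyc)}$. I will dominate each cycle by a greedy cycle:
\begin{itemize}
\item \emph{Good cycle:} the executed prefix is optimal for its jobs, so $F\le \sched_1^\on(\inst_\cyc)$ on that prefix, and the jobs lie within a greedy cycle span.
\item \emph{Bad cycle:} the algorithm runs $\sched_1^\on$ until the window $(0,L]$ is covered and the system clears. The cycle is then at most a concatenation of greedy cycles covering $[0,L]$, with moments controlled by \Cref{lemma:s1-cycle-tail}.
\end{itemize}
With these bounds, the renewal-reward argument of \Cref{claim:stationary-timer-cycle-rate} yields existence of the rate and the stated ratio formula.

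For the gap, I will compare against $\sched^*$ cycle by cycle, using super-additivity. Partition the first $n$ arrivals into the jobs of consecutive $\algname$-cycles (plus a final incomplete piece). By \Cref{lemma:superadditive},
\[
\sched^*(n)\ge \sum_i \opt(\inst_{\cyc,i}).
\]
Dividing by $n$ and taking limits via renewal reward gives
\[
\rate(\sched^*)\ge \frac{\expect{\opt(\inst_\cyc)}}{\expect{N(\inst_\cyc)}}.
\]
On good cycles, $F(\inst_\cyc)=\opt(\inst_\cyc)$ because the cycle's job set is exactly the certified prefix. On bad cycles, $F(\inst_\cyc)=\sched_1^\on(\inst_\cyc)$, so $F(\inst_\cyc)-\opt(\inst_\cyc)\le \sched_1^\on(\inst_\cyc)$. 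Hence
\[
\rate(\algname)-\rate(\sched^*)\le \frac{\expect{\sched_1^\on(\inst_\cyc)\mathbf 1_{\bad}}}{\expect{N}}.
\]
Applying Cauchy--Schwarz to the numerator and using $\expect{N}\ge 1$ gives the claimed bound.

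The main obstacle is the regeneration claim in the bad case. The fallback must not use lookahead beyond the cycle end, and "cycle end" must be a genuine stopping time so that the memorylessness argument applies. I also need to justify that $\expect{(\sched_1^\on(\inst_\cyc))^2}<\infty$, where the cycle may span several greedy cycles up to time $L$. I would handle this second point by bounding $\sched_1^\on(\inst_\cyc)$ by $N\cdot D$ with $D$ the total duration, and controlling these using the moment bounds of \Cref{lemma:s1-cycle-tail} together with Poisson moments.
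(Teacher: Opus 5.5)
Your proposal follows the paper's proof essentially step for step: regeneration at empty-system epochs via the strong Markov property, the superadditivity bound $\rate(\sched^*)\ge \expect{\opt(\inst_\cyc)}/\expect{N(\inst_\cyc)}$, the per-cycle comparison $F(\inst_\cyc)-\opt(\inst_\cyc)\le \mathbf 1_{\bad}\,\sched_1^\on(\inst_\cyc)$, and finally $\expect{N(\inst_\cyc)}\ge 1$ with Cauchy--Schwarz. Two details need tightening but do not affect the argument. First, a good cycle need not fit inside a single greedy cycle; use instead that all its jobs arrive in the window $[a_0,a_0+L]$. Second, in the bad case the decision to fall back uses arrivals up to $a_0+L$, which may be later than the clearing time $T$, so the correct regeneration point is the first arrival after $\max\{T,a_0+L\}$. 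This is a stopping time of the plain arrival filtration, and it coincides with the first arrival after $T$ because no window job can arrive after $T$.
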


\begin{proof}
At the end of every cycle, the processor is idle and there are no pending jobs. The time until the next arrival is independent of the past, and after shifting that next arrival to time $0$, the future arrivals form a fresh rate-$\lambda$ Poisson process as $\algname$ hasn't used any future information by construction. Hence, by the strong Markov property, the cycles of $\algname$ are identically and independently distributed. Furthermore, the cycle has finite first moments. On a good cycle, it contains only jobs in the finite lookahead window. On a bad cycle, the fallback execution is dominated by the enlarged greedy execution analyzed in~\Cref{lemma:lookahead-fallback-second-moment} below, which has finite second moment. Therefore
$$
\expect{N(\inst_\cyc)}<\infty,
\qquad
\expect{F(\inst_\cyc)}<\infty.
$$
The renewal-reward theorem gives
$$
\rate(\algname)
=
\frac{
\expect{F(\inst_\cyc)}
}{
\expect{N(\inst_\cyc)}
}.
$$

We next lower-bound the offline rate using the same cycles. Run $\algname$ on an infinite Poisson arrival sequence, and let $\inst_\cyc^1,\inst_\cyc^2,\ldots$ be the resulting cycle instances. Set
$$
N_i:=|\inst_\cyc^i|,
\qquad
O_i:=\opt(\inst_\cyc^i).
$$
The pairs $(N_i,O_i)_{i\ge1}$ are i.i.d., and $\expect{O_1}<\infty$ because $O_1\le F(\inst_\cyc^1)$. Let $\inst(n)$ be the first $n$ arrivals of the infinite sequence, and let
$$
K(n):=
\max\left\{
k:
\sum_{i=1}^k N_i\le n
\right\}.
$$
By superadditivity of the offline optimal flow time (\Cref{lemma:superadditive}), cutting the first $n$ jobs into consecutive subinstances can only decrease the total optimum. Therefore, for every realization,
$$
\opt(\inst(n))
\ge
\sum_{i=1}^{K(n)}\opt(\inst_\cyc^i)
=
\sum_{i=1}^{K(n)}O_i.
$$
Taking expectations, dividing by $n$, and applying the renewal-reward theorem gives
$$
\rate(\sched^*)
\ge
\frac{
\expect{\opt(\inst_\cyc)}
}{
\expect{N(\inst_\cyc)}
},
$$
where we used the existence of the offline rate (see~\Cref{app:subsec_limit_existance}).

It remains to compare $F(\inst_\cyc)$ with $\opt(\inst_\cyc)$ inside one cycle. If the cycle is good, $\algname$ executes an offline-optimal schedule for a prefix that completes before the next arrival outside the prefix. Hence this prefix is the entire cycle and
$$
F(\inst_\cyc)=\opt(\inst_\cyc).
$$
If the cycle is bad, then $\algname$ runs $\sched_1^\on$ until the system clears, and therefore
$$
F(\inst_\cyc)-\opt(\inst_\cyc)
\le
\sched_1^\on(\inst_\cyc).
$$
Combining the two cases,
$$
F(\inst_\cyc)-\opt(\inst_\cyc)
\le
\mathbf 1(\bad(\inst_\cyc))\cdot\sched_1^\on(\inst_\cyc)
$$
and thus
$$
\begin{aligned}
\rate(\algname)-\rate(\sched^*)
&\le
\frac{
\expect{F(\inst_\cyc)-\opt(\inst_\cyc)}
}{
\expect{N(\inst_\cyc)}
} 
\le
\expect{
\mathbf 1(\bad(\inst_\cyc))\cdot\sched_1^\on(\inst_\cyc)
},
\end{aligned}
$$
where we used $N(\inst_\cyc)\ge1$. Cauchy--Schwarz now gives
$$
\expect{
\mathbf 1{\bad(\inst_\cyc)}\sched_1^\on(\inst_\cyc)
}
\le
\sqrt{
\prob{\bad(\inst_\cyc)}
\cdot
\expect{\bigl(\sched_1^\on(\inst_\cyc)\bigr)^2}
}
$$
and the proof follows.
\end{proof}

We now define the constants used in the two technical estimates. Let $\mu:=\lambda s$ and let $\gamma_{\gr}$ be the greedy tail parameter from~\Cref{def:greedy-tail-parameter}. Define
$$
p_H
:=
e^{-2\mu}-e^{-3\mu},
$$
and
$$
\eta_H
:=
\min\left\{
\frac{2p_H\gamma_{\gr}}{27},
\frac{p_H\mu}{16},
\frac{p_H}{20}
\right\}.
$$
Note that both parameters are strictly positive constants depending only on $\lambda,s,\capac$.

We prove the two estimates needed in~\Cref{lemma:lookahead-cycle-reduction}: bad cycles are exponentially unlikely, and the greedy fallback cost has polynomial second moment in $L$.

\begin{lemma}[Formal version of~\Cref{lemma:bad_event_bound}]
\label{lemma:lookahead-bad-probability}
For every $L\ge3s$,
$$
\prob{\bad(\inst_\cyc)}
\le
4\exp\left(-\eta_H\frac{L}{s}\right).
$$
\end{lemma}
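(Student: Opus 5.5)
We follow the sketch given after \Cref{lemma:bad_event_bound}. Shift the first arrival of the cycle to time $0$ and couple $\algname$ with an execution of $\sched_1^\on$ on the same arrival sequence $a_0=0<a_1<a_2<\cdots$. Let $0=\sigma_0<\sigma_1<\cdots$ denote the successive clearing epochs of $\sched_1^\on$: the times at which the processor becomes idle with no pending jobs. Define $H_L$ to be the event that some clearing epoch $T=\sigma_i\le L-3s$ is followed by a next arrival in $(T+2s,T+3s]$.

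\textbf{Step 1: $H_L$ certifies a safe prefix.} On $H_L$, let $j$ be the index of the last arrival before $T$. The prefix $\inst_j$ has a feasible schedule (the restriction of $\sched_1^\on$) with makespan at most $T$. By \Cref{lemma:opt-makespan}, every offline-optimal schedule for $\inst_j$ therefore has makespan at most $T+2s$, so $C_j\le T+2s<a_{j+1}\le T+3s\le L$. Because $a_{j+1}$ lies in the window, the for-loop of \Cref{alg:poisson_lookahead} reaches index $j$, or accepts an earlier prefix, before it exhausts the window. Hence $H_L\subseteq\overline{\bad}$, which gives $\prob{\bad}\le\prob{\overline{H_L}}$.

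\textbf{Step 2: renewal structure.} By the strong Markov property at each $\sigma_i$, the blocks $(D_i,G_i)$ are i.i.d. Here $D_i$ is the duration of the $i$th greedy cycle and $G_i=a_{\text{next}}-\sigma_i\sim\mathrm{Exp}(\lambda)$ is the subsequent idle gap, independent of $D_i$. A block succeeds if $G_i\in(2s,3s]$, which happens with probability $p_H=e^{-2\mu}-e^{-3\mu}$. Let $K$ be the index of the first successful block, so $K$ is geometric with parameter $p_H$. Then $\overline{H_L}\subseteq\{\sum_{i<K}(D_i+G_i)+D_K>L-3s\}$, where in the failed blocks $G_i$ is conditioned to lie outside $(2s,3s]$. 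Here we use $\sigma_K=\sum_{i<K}(D_i+G_i)+D_K$, placing the start of a greedy cycle at each arrival into an empty system.

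\textbf{Step 3: exponential tail of a geometric sum.} Apply a Chernoff bound with parameter $\eta=\eta'/s$. \Cref{lemma:s1-cycle-tail} gives $\expect{e^{\eta D}}\le 1+\frac{5\eta}{2(\gamma_{\gr}/s-\eta)}$, and $\expect{e^{\eta G}\mid \text{fail}}\le \frac{1}{(1-p_H)}\cdot\frac{\lambda}{\lambda-\eta}$. Choose $\eta'$ small enough, comparable to $\min\{p_H\gamma_{\gr},p_H\mu,p_H\}$ up to constants, so that
\[
(1-p_H)\,\expect{e^{\eta D}}\,\expect{e^{\eta G}\mid\text{fail}}\le 1-p_H/2 .
\]
This is where the three terms in $\eta_H$ come from: the $D$-factor is controlled via $\gamma_{\gr}$, and the $G$-factor via $\lambda s=\mu$. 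Summing the geometric series then gives
\[
\expect{e^{\eta\sigma_K}}\le \frac{p_H\,\expect{e^{\eta D}}}{p_H/2}\le 2\cdot\tfrac{5}{4}\le 4 .
\]
Markov's inequality yields $\prob{\sigma_K>L-3s}\le 4e^{-\eta(L-3s)}$. The factor $e^{3\eta'}$ is absorbed, either by choosing $\eta'\le p_H/20$ and adjusting constants or by bounding against $L$ directly, to give $4\exp(-\eta_H L/s)$.

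\textbf{Main obstacle.} The delicate point is Step 3's bookkeeping. Conditioning on failure biases $G_i$, and the event that $G_i>3s$ makes $G_i$ stochastically larger. Its conditional exponential moment is still bounded by the unconditional one divided by $1-p_H$, and we must check that the product stays below $1-p_H/2$ under the chosen $\eta_H$. A secondary point to verify carefully is that successive greedy cycles truly regenerate even though $\algname$ sees lookahead. This holds because the coupling uses $\sched_1^\on$ purely as an analytic device on the arrival sequence, independent of the algorithm's decisions.
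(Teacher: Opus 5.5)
Your overall structure matches the paper's proof: the certificate event via \Cref{lemma:opt-makespan} (your Step~1 is fine), i.i.d.\ blocks made of a greedy cycle plus the following idle gap, a geometric sum controlled by exponential moments, and Markov's inequality. The gap is in Step~3. Your bound for failed gaps, $\expect{e^{\eta G}\mid\text{fail}}\le\frac{1}{1-p_H}\cdot\frac{\lambda}{\lambda-\eta}$, discards exactly the mass that creates the contraction. When you plug it in, the factor $1-p_H$ cancels, and all you obtain is $(1-p_H)\,\expect{e^{\eta D}}\,\expect{e^{\eta G}\mid\text{fail}}\le\expect{e^{\eta D}}\cdot\frac{\lambda}{\lambda-\eta}$. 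This exceeds $1$ for every $\eta>0$, because both factors exceed $1$. So no choice of $\eta'$ makes this bound at most $1-p_H/2$, and the geometric series does not close. Step~3 therefore gives no tail bound on $\sigma_K$. The check you defer to your ``main obstacle'' paragraph cannot succeed with the bound you state there.

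The missing step is to keep the subtraction when you restrict to failure. Since $e^{\eta G}\ge 1$ on the success event,
\[
\expect{e^{\eta G}\mathbf 1\bigl(G\notin(2s,3s]\bigr)}\le\expect{e^{\eta G}}-\prob{G\in(2s,3s]}=\frac{\lambda}{\lambda-\eta}-p_H ,
\]
which is exactly what the paper uses. The argument then goes through as follows.
\begin{itemize}
\item With $\eta s\le p_H\mu/16$, the restricted moment is at most $1+2\eta s/\mu-p_H\le 1-7p_H/8$.
\item With $\eta s\le 2p_H\gamma_{\gr}/27$, \Cref{lemma:s1-cycle-tail} gives $\expect{e^{\eta D}}\le 1+p_H/5$.
\item The per-failure factor is therefore at most $(1+p_H/5)(1-7p_H/8)\le 1-27p_H/40$.
\item The geometric sum gives $\expect{e^{\eta\sigma_K}}\le\frac{40}{27}(1+p_H/5)\le\frac{40}{17}$.
\item Finally, $\eta_H\le p_H/20$ gives $\frac{40}{17}e^{3\eta_H}\le 4$.
\end{itemize}
You should also take $\eta s=\eta_H$ exactly, not something ``comparable up to constants,'' because the statement fixes the exponent as $\eta_H$. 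With the corrected bound on $G$, this exact choice satisfies all three constraints simultaneously.
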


\begin{proof}
Shift the first arrival of the cycle to time $0$. For any time $t$, let $r^+(t)$ be the first arrival time strictly after $t$. Run $\sched_1^\on$ from the same initial state, on the same Poisson arrival sequence. Define $H_L$ to be the event that there exists a clearing time $T\le L-3s$ of this greedy execution such that
$$
r^+(T)\in(T+2s,T+3s].
$$
We first show that
$$
H_L\subseteq \overline{\bad}(\inst_\cyc).
$$
Suppose $H_L$ occurs, and fix such a clearing time $T$. Let $\inst_T$ be the prefix consisting of all jobs that arrive up to time $T$. Since $\sched_1^\on$ clears the system at time $T$, it gives a feasible schedule for $\inst_T$ with makespan $T$. By \Cref{lemma:opt-makespan}, there exists an offline-optimal schedule for $\inst_T$ whose makespan is at most $T+2s$. Since $r^+(T)>T+2s$, this offline-optimal schedule completes before the next arrival after the prefix. Moreover, $r^+(T)\le T+3s\le L$, so the certifying next arrival is visible inside the lookahead window. Therefore $\algname$ finds a safe prefix and does not enter fallback mode. Hence
$$
\prob{\bad(\inst_\cyc)}
\le
\prob{\overline{H_L}}.
$$

It remains to bound $\prob{\overline{H_L}}$. We work in time units scaled by $s$. Let
$$
\widehat L:=\frac{L}{s}.
$$
Let $\widehat R$ be the duration of one regenerative cycle of $\sched_1^\on$, divided by $s$. By~\Cref{lemma:s1-cycle-tail}, applied to the scaled duration, for every $\eta\in(0,\gamma_{\gr})$,
$$
\expect{e^{\eta\widehat R}}
\le
1+\frac{5\eta}{2(\gamma_{\gr}-\eta)}.
$$
After each greedy clearing time, the scaled idle time until the next arrival is an independent exponential random variable
$$
\widehat Z\sim\mathrm{Exp}(\mu).
$$
A clearing time certifies $H_L$ exactly when the following idle gap lies in $(2,3]$. This has probability
$$
p_H
=
\prob{\widehat Z\in(2,3]}
=
e^{-2\mu}-e^{-3\mu}.
$$

Fix $\eta\le\eta_H$. Since $\eta\le 2p_H\gamma_{\gr}/27$ and $p_H\le1$,
$$
\expect{e^{\eta\widehat R}}
\le
1+\frac{p_H}{5}.
$$

Also, since $\eta\le p_H\mu/16$, we have
$$
\begin{aligned}
\expect{e^{\eta\widehat Z}\mathbf 1(\widehat Z\notin(2,3])}
\le
\expect{e^{\eta\widehat Z}}
-
\prob{\widehat Z\in(2,3]} =
\frac{\mu}{\mu-\eta}-p_H \le
1+\frac{2\eta}{\mu}-p_H \le
1-\frac{7p_H}{8}.
\end{aligned}
$$
Therefore
$$
\expect{e^{\eta\widehat R}} \cdot
\expect{e^{\eta\widehat Z}\mathbf 1(\widehat Z\notin(2,3])}
\le
\left(1+\frac{p_H}{5}\right)
\left(1-\frac{7p_H}{8}\right)
\le
1-\frac{27p_H}{40}.
$$

Let $A$ be the scaled time of the first greedy clearing epoch whose following idle gap lies in $(2,3]$. If $A\le \widehat L-3$, then $H_L$ occurs. Thus
$$
\prob{\overline{H_L}}
\le
\prob{A>\widehat L-3}.
$$
We bound $A$ by an exponential moment. Summing over the index of the first certifying gap,
$$
\begin{aligned}
\expect{e^{\eta A}}
&\le
\sum_{j=1}^{\infty}
p_H
\expect{e^{\eta\widehat R}}^j
\expect{e^{\eta\widehat Z}\mathbf 1{\widehat Z\notin(2,3]}}^{j-1} \\
&=
\frac{
p_H\expect{e^{\eta\widehat R}}
}{
1-
\expect{e^{\eta\widehat R}},
\expect{e^{\eta\widehat Z}\mathbf 1{\widehat Z\notin(2,3]}}
} \\
&\le
\frac{
p_H(1+p_H/5)
}{
27p_H/40
}
\le
\frac{40}{17}.
\end{aligned}
$$
Taking $\eta=\eta_H$ and applying Markov's inequality gives
$$
\begin{aligned}
\prob{A>\widehat L-3}
&\le
\expect{e^{\eta_H A}}e^{-\eta_H(\widehat L-3)} \
&\le
\frac{40}{17}
\exp\left(-\eta_H\left(\frac{L}{s}-3\right)\right).
\end{aligned}
$$
Since $\eta_H\le p_H/20\le1/20$,
$$
\frac{40}{17}e^{3\eta_H}\le4.
$$
Hence
$$
\prob{\bad(\inst_\cyc)}
\le
\prob{\overline{H_L}}
\le
4\exp\left(-\eta_H\frac{L}{s}\right).
$$
\end{proof}

\begin{lemma}
\label{lemma:lookahead-fallback-second-moment}
For every $L>0$,
$$
\expect{\bigl(\sched_1^\on(\inst_\cyc)\bigr)^2}
\le
\frac{240\capac^2}{\gamma_{\gr}^4}(L+s)^2.
$$
\end{lemma}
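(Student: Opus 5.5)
We aim to bound $\sched_1^\on(\inst_\cyc)$ pathwise by a quantity of the form (number of jobs)$\times$(duration), and then control both factors with \Cref{lemma:s1-cycle-tail}. The cycle instance $\inst_\cyc$ contains the arrivals in $[0,L]$ that the lookahead has revealed, plus any arrivals that occur while the fallback is still running. Running $\sched_1^\on$ on $\inst_\cyc$ therefore amounts to running greedy on the true Poisson sequence until it has cleared and the clock has passed $L$.

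\textbf{Step 1 (the enlarged greedy execution).} Let $D_1,D_2,\ldots$ be the durations of the consecutive regenerative cycles of $\sched_1^\on$ on the infinite arrival sequence, and let $J$ be the number of greedy cycles that start in $[0,L]$. The fallback clears no later than the end of cycle $J$, so its duration satisfies
\[
D_{\mathrm{tot}}\le L+D^{\mathrm{over}},
\]
where $D^{\mathrm{over}}$ is the length of the greedy cycle straddling time $L$. Every job waits at most $D_{\mathrm{tot}}$, so
\[
\sched_1^\on(\inst_\cyc)\le N\cdot D_{\mathrm{tot}},
\]
where $N$ is the number of arrivals in $[0,D_{\mathrm{tot}}]$. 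Because greedy batches have capacity $\capac$, we can also use the cruder bound $N\le \capac(D_{\mathrm{tot}}/s+1)$: greedy is busy whenever jobs are pending, so the jobs completed in an interval of length $D_{\mathrm{tot}}$ number at most $\capac\lceil D_{\mathrm{tot}}/s\rceil$. This gives
\[
\sched_1^\on(\inst_\cyc)\le \capac\,(D_{\mathrm{tot}}+s)^2/s.
\]

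\textbf{Step 2 (moments).} We then need
\[
\expect{(D_{\mathrm{tot}}+s)^4}\lesssim (L+s)^4\cdot \mathrm{poly}(1/\gamma_{\gr}).
\]
The main obstacle is the inspection paradox: the cycle straddling $L$ is size-biased, so $D^{\mathrm{over}}$ does not have the law of a single cycle. We would avoid this by bounding
\[
D^{\mathrm{over}}\le \max_{j\le J+1}D_j\quad\text{and}\quad \max_{j\le m}D_j^4\le \sum_{j\le m} D_j^4 .
\]
Since every greedy cycle lasts at least $s$, $J\le L/s+1$ deterministically. Wald-type summation then gives
\[
\expect{(D^{\mathrm{over}})^4}\le (L/s+2)\,\expect{D^4}\le (L/s+2)\,\tfrac{5\cdot 4!}{2}\, s^4/\gamma_{\gr}^4 ,
\]
using the moment bound of \Cref{lemma:s1-cycle-tail} with $r=4$. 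Combining with $(a+b)^4\le 8(a^4+b^4)$ gives
\[
\expect{(D_{\mathrm{tot}}+s)^4}\lesssim (L+s)^4+(L/s+2)\,s^4/\gamma_{\gr}^4 .
\]

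\textbf{Step 3 (combining).} Substituting into $\expect{\sched_1^\on(\inst_\cyc)^2}\le \capac^2\,\expect{(D_{\mathrm{tot}}+s)^4}/s^2$ yields a bound of order
\[
\capac^2\bigl((L+s)^4/s^2+(L+s)s^2/\gamma_{\gr}^4\bigr).
\]
This is quartic in $L$, not the claimed quadratic. To obtain the stated $(L+s)^2$ form, we would refine Step 1 by decomposing the flow cycle by cycle instead of using $N\cdot D_{\mathrm{tot}}$. The flow of greedy cycle $j$ is at most $\capac D_j(D_j/s+1)$, and a job waits only within its own greedy cycle. Hence
\[
\sched_1^\on(\inst_\cyc)\le \sum_{j\le J+1}\capac D_j(D_j/s+1).
\]
Applying Cauchy--Schwarz with at most $L/s+2$ terms gives
\[
\expect{\Bigl(\sum_{j\le J+1} X_j\Bigr)^2}\le (L/s+2)\sum_{j\le L/s+2}\expect{X_j^2}\lesssim (L/s+2)^2\,\capac^2 s^2/\gamma_{\gr}^4 ,
\]
again using \Cref{lemma:s1-cycle-tail} for the moments up to order $4$. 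Since $(L/s+2)^2s^2\le 4(L+s)^2$, this is $O\bigl(\capac^2(L+s)^2/\gamma_{\gr}^4\bigr)$. Tracking the constants ($5\cdot 4!/2=60$, times $4$) should land near the stated $240$, using $\gamma_{\gr}\le 1$ to absorb the lower-order terms.

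The step we expect to be hardest is justifying that the number of relevant greedy cycles is deterministically $O(L/s)$ while the summands still obey the single-cycle moment bounds; the summands are in fact i.i.d., because the index range is fixed rather than random.
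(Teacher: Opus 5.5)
Your Step 3 is the paper's proof: dominate by the enlarged greedy execution, split it into i.i.d.\ greedy regenerative cycles, bound their number deterministically by $O(L/s)$ (each lasts at least $s$), apply Cauchy--Schwarz over a fixed index range, and control each cycle's flow via the fourth moment from \Cref{lemma:s1-cycle-tail}. Steps 1--2 are a detour you rightly discard.

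The one slip is the constant, which your argument does not actually reach. The per-cycle bound $\capac D_j(D_j/s+1)$ gives $\expect{X_j^2}\le(60+30+5)\capac^2s^2/\gamma_{\gr}^4$, and hence a final constant of $380$, not $240$. To get exactly $240$, use that greedy never idles inside a cycle. Then $D_j$ is exactly $s$ times the number of batches, so $\cycjobs\le \capac D_j/s$ and the cycle flow is at most $\capac D_j^2/s$. This gives $\expect{X_j^2}\le 60\capac^2s^2/\gamma_{\gr}^4$, and $(\lfloor L/s\rfloor+2)^2s^2\le 4(L+s)^2$ finishes it.
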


\begin{proof}
Couple the cycle $\inst_\cyc$ with an infinite rate-$\lambda$ Poisson arrival sequence starting with one job at time $0$, and run $\sched_1^\on$ on this infinite sequence. Let $\widetilde{\inst}_L$ be the instance consisting of all jobs completed by $\sched_1^\on$ up to the first time at or after $L$ when the system is empty. This enlarged instance contains every job that can affect the fallback execution of $\algname$, and possibly more. Therefore
$$
\sched_1^\on(\inst_\cyc)
\le
\sched_1^\on(\widetilde{\inst}_L)
\qquad\text{with probability }1.
$$

Decompose the execution of $\sched_1^\on$ on $\widetilde{\inst}_L$ into complete greedy regenerative cycles, and let
$$
F_1,F_2,\ldots
$$
be their cycle flow times. These are i.i.d. copies of $\cycflow(\sched_1^\on)$. Let $M_L$ be the number of greedy cycles contained in $\widetilde{\inst}_L$. Since every nonempty greedy cycle lasts at least one service time $s$, and $\widetilde{\inst}_L$ stops at the first clearing time at or after $L$,
$$
M_L\le \left\lceil\frac{L}{s}\right\rceil+1.
$$
Thus
$$
\sched_1^\on(\widetilde{\inst}_L)
=
\sum_{i=1}^{M_L}F_i
\le
\sum_{i=1}^{\lceil L/s\rceil+1}F_i.
$$
By Cauchy--Schwarz,
$$
\left(
\sum_{i=1}^{\lceil L/s\rceil+1}F_i
\right)^2
\le
\left(\left\lceil\frac{L}{s}\right\rceil+1\right)
\sum_{i=1}^{\lceil L/s\rceil+1}F_i^2.
$$
Taking expectations,
$$
\expect{\bigl(\sched_1^\on(\inst_\cyc)\bigr)^2}
\le
\left(\left\lceil\frac{L}{s}\right\rceil+1\right)^2
\expect{\cycflow(\sched_1^\on)^2}.
$$
Since
$$
\left\lceil\frac{L}{s}\right\rceil+1
\le
\frac{2(L+s)}{s},
$$
we get
$$
\expect{\bigl(\sched_1^\on(\inst_\cyc)\bigr)^2}
\le
\frac{4(L+s)^2}{s^2}
\expect{\cycflow(\sched_1^\on)^2}.
$$

It remains to bound the second moment of one greedy cycle. Let $D:=\cycdur(\sched_1^\on)$. During one greedy cycle, the processor starts a new batch immediately whenever the system is nonempty. Hence the number of batches in the cycle is exactly $D/s$. Each batch contains at most $\capac$ jobs, so
$$
\cycjobs(\sched_1^\on)
\le
\frac{\capac}{s}D.
$$
Every job completed in the cycle has flow time at most $D$, and therefore
$$
\cycflow(\sched_1^\on)
\le
\cycjobs(\sched_1^\on)D
\le
\frac{\capac}{s}D^2.
$$
Squaring and taking expectations gives
$$
\expect{\cycflow(\sched_1^\on)^2}
\le
\frac{\capac^2}{s^2}
\expect{D^4}.
$$
By~\Cref{lemma:s1-cycle-tail}, applied with moment order $4$,
$$
\expect{D^4}
\le
\frac{60s^4}{\gamma_{\gr}^4}.
$$
Thus
$$
\expect{\cycflow(\sched_1^\on)^2}
\le
\frac{60\capac^2s^2}{\gamma_{\gr}^4}.
$$
Substituting this into the previous bound yields
$$
\expect{\bigl(\sched_1^\on(\inst_\cyc)\bigr)^2}
\le
\frac{240\capac^2}{\gamma_{\gr}^4}(L+s)^2.
$$
\end{proof}

We can now complete the upper bound and provide the proof of~\Cref{thm:poisson-upper}.

\begin{corollary}
\label{cor:lookahead-upper-bound}
For every $L\ge3s$,
$$
\rate(\algname)-\rate(\sched^*)
\le
\frac{4\sqrt{60}\capac}{\gamma_{\gr}^2}
(L+s)
\exp\left(-\frac{\eta_H}{2s}L\right).
$$
Consequently, there exist constants $\kappa,\gamma>0$, depending only on $\lambda,s,\capac$, such that for every $L>0$,
$$
\rate(\algname)
\le
\rate(\sched^*)+\kappa e^{-\gamma L}.
$$
\end{corollary}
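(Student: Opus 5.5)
The plan is to combine the three estimates already in hand. By \Cref{lemma:lookahead-cycle-reduction},
\[
\rate(\algname)-\rate(\sched^*)\le \sqrt{\prob{\bad(\inst_\cyc)}\cdot \expect{\bigl(\sched_1^\on(\inst_\cyc)\bigr)^2}},
\]
so for $L\ge 3s$ I will bound the two factors separately and then take the square root.

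For the first factor I use \Cref{lemma:lookahead-bad-probability}, which gives $\prob{\bad(\inst_\cyc)}\le 4\exp(-\eta_H L/s)$. For the second I use \Cref{lemma:lookahead-fallback-second-moment}, which gives $\expect{(\sched_1^\on(\inst_\cyc))^2}\le 240\capac^2(L+s)^2/\gamma_{\gr}^4$. The product is at most $960\capac^2(L+s)^2\gamma_{\gr}^{-4}e^{-\eta_H L/s}$. Since $\sqrt{960}=4\sqrt{60}$, the square root is exactly
\[
\frac{4\sqrt{60}\capac}{\gamma_{\gr}^2}(L+s)\exp\!\left(-\frac{\eta_H}{2s}L\right),
\]
which is the first claim.

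For the consequence, I absorb the linear factor into the exponential. Set $\gamma:=\eta_H/(4s)$. Then
\[
(L+s)e^{-\eta_H L/(2s)}=(L+s)e^{-\gamma L}\,e^{-\gamma L}\le \sup_{x\ge0}(x+s)e^{-\gamma x}\cdot e^{-\gamma L}.
\]
The supremum is a finite constant, at most $s+1/(e\gamma)$. This handles all $L\ge 3s$.

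The range $0<L<3s$ is the one step needing care. There I need a uniform bound on the rate gap, and I expect two ways to get it. The first is to note that \Cref{lemma:lookahead-cycle-reduction} holds for every $L$, and that $\prob{\bad}\le1$ together with \Cref{lemma:lookahead-fallback-second-moment} bounds the gap by $\sqrt{240}\,\capac(4s)/\gamma_{\gr}^2$. This bound is constant over $L<3s$. The second is to compare directly with $\rate(\sched_1^\on)-\rate(\sched^*)$, which \Cref{cor:online-threshold-optimality} shows is finite. In either case the gap is at most a constant $K$ on this range. Since $e^{-\gamma L}\ge e^{-3\gamma s}$ for $L<3s$, choosing $\kappa\ge Ke^{3\gamma s}$ covers it. Taking $\kappa$ to be the maximum of this value and the constant from the large-$L$ case completes the proof, with $\kappa,\gamma$ depending only on $\lambda,s,\capac$.
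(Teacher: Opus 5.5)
Your proposal is correct and follows the paper's proof essentially line for line. It uses the same combination of \Cref{lemma:lookahead-cycle-reduction}, \Cref{lemma:lookahead-bad-probability} and \Cref{lemma:lookahead-fallback-second-moment}, the same choice $\gamma=\eta_H/(4s)$ with the bound $s+1/(e\gamma)$, and the same treatment of $0<L<3s$ via $\prob{\bad(\inst_\cyc)}\le 1$. Only your first route for small $L$ is needed, and it is the one the paper uses. Drop the alternative via $\rate(\sched_1^\on)-\rate(\sched^*)$: it would also require $\rate(\algname)\le\rate(\sched_1^\on)$, which is not established.
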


\begin{proof}
By~\Cref{lemma:lookahead-cycle-reduction}, we have that
$$
\rate(\algname)-\rate(\sched^*)
\le
\sqrt{
\prob{\bad(\inst_\cyc)}
\cdot
\expect{\bigl(\sched_1^\on(\inst_\cyc)\bigr)^2}
}.
$$
For $L\ge3s$, substituting~\Cref{lemma:lookahead-bad-probability} and~\Cref{lemma:lookahead-fallback-second-moment} gives
$$
\begin{aligned}
\rate(\algname)-\rate(\sched^*)
&\le
\sqrt{
4e^{-\eta_H L/s}
\cdot
\frac{240\capac^2}{\gamma_{\gr}^4}(L+s)^2
} \
&=
\frac{4\sqrt{60}\capac}{\gamma_{\gr}^2}
(L+s)
\exp\left(-\frac{\eta_H}{2s}L\right).
\end{aligned}
$$

It remains to absorb the polynomial prefactor into the exponential constant. Let
$$
A:=\frac{4\sqrt{60}\capac}{\gamma_{\gr}^2},
\qquad
b:=\frac{\eta_H}{2s},
\qquad
\gamma:=\frac b2=\frac{\eta_H}{4s}.
$$
For $L\ge3s$,
$$
A(L+s)e^{-bL}
=
A\bigl((L+s)e^{-\gamma L}\bigr)e^{-\gamma L}
\le
A\left(s+\frac{1}{e\gamma}\right)e^{-\gamma L}.
$$

This completes the proof for $L\ge 3s$. For $0<L<3s$, we use the same cycle reduction with the trivial bound $\prob{\bad(\inst_\cyc)}\le1$ and~\Cref{lemma:lookahead-fallback-second-moment}. This gives
$$
\rate(\algname)-\rate(\sched^*)
\le
\frac{\sqrt{240}\capac}{\gamma_{\gr}^2}(L+s)
\le
\frac{4\sqrt{240}\capac s}{\gamma_{\gr}^2}.
$$
Since $e^{-\gamma L}\ge e^{-3s\gamma}$ on this range, the bound also has the form $\kappa e^{-\gamma L}$ after increasing $\kappa$ to
$$
\kappa
:=
\max\left\{
A\left(s+\frac{1}{e\gamma}\right),
\frac{4\sqrt{240}\capac s}{\gamma_{\gr}^2}e^{3s\gamma}
\right\}.
$$
Thus, for all $L>0$,
$$
\rate(\algname)-\rate(\sched^*)
\le
\kappa e^{-\gamma L}.
$$
This proves~\Cref{thm:poisson-upper}.
\end{proof}

\subsection{A Lower Bound on the Rate of Lookahead Algorithms (\Cref{thm:poisson-lb})}
\label{app:poisson-lookahead-lower}
We finally prove the lower bound for finite-lookahead schedulers in the Poisson arrival setting, establishing~\Cref{thm:poisson-lb}. The proof isolates a local obstruction that occurs with probability $\exp(-O(L))$ inside a regenerative cycle tailored to the offline optimum. On this event, the scheduler must make a batching decision before it can see the arrivals that determine which decision is correct. This causes a constant excess flow-time loss relative to the offline optimum. A renewal argument then converts this one-cycle loss into a rate lower bound.

\subsubsection{Offline-Certifying Cycles and the Hard Event}
\label{app:poisson-lookahead-lb-setup}

We first define the cycles used in the lower bound. These cycles are intrinsic to the offline optimum rather than to any particular online or lookahead algorithm.

\begin{definition}[Offline-certifying cycle]
\label{def:offline-certifying-cycle}
Let $0=r_0<r_1<r_2<\cdots$ be an infinite rate-$\lambda$ Poisson arrival sequence, shifted so that the first arrival of the cycle occurs at time $0$. For each $m\ge0$, let $\inst_m:=\{r_0,r_1,\ldots,r_m\}$, and let $r_{m+1}$ be the next arrival after this prefix. Define
$$
M_m^*
:=
\min\left\{
\makespan(\sched\mid \inst_m):
\sched \text{ is an offline-optimal schedule for } \inst_m
\right\}.
$$
We say that the prefix $\inst_m$ is \emph{offline-certifying} if $M_m^*<r_{m+1}$. The offline-certifying cycle $\inst_\off$ is the minimal such prefix: formally, we write $\inst_\off:=\inst_K$ for
$$
K:=\min\{m\ge0:M_m^*<r_{m+1}\}.
$$
Finally, we also write $N(\inst_\off):=|\inst_\off|\ge 1$ for the number of jobs in the cycle and $\opt(\inst_\off)$ for its offline-optimal flow time.
\end{definition}

Since the first excluded arrival index is a stopping time with respect to the interarrival sequence, the strong Markov property implies that the arrivals following it are independent of the past and distributed as a fresh Poisson process. Thus, successive offline-certifying cycles are i.i.d.

The strict inequality in the definition ensures that the cycle can be scheduled optimally and cleared before the next arrival. Thus offline-optimal schedules for consecutive cycles can be concatenated without interference. This is the decomposition that will allow us to compare every finite-lookahead scheduler to the offline optimum on a single representative cycle.

We now describe the hard event inside one offline-certifying cycle. Since we only seek a lower bound up to constants in the exponent, it suffices to prove the result for lookahead values of the form $L=ks$, where $k\ge5$ is an integer: for an arbitrary $L>0$, take $k=\max\{5,\lceil L/s\rceil\}$, so that $ks\ge L$, and the resulting bound $\exp(-O_{\lambda,\capac,s}(k))$ is the same as $\exp(-O_{\lambda,\capac,s}(L))$ after adjusting constants. We therefore fix such an integer $k$ and consider lookahead $L=ks$. Assume throughout this construction that $\capac\ge2$, and define
$$
u:=\left\lceil \frac{\capac}{2}\right\rceil,
\qquad
v:=\left\lfloor \frac{\capac}{2}\right\rfloor .
$$
Thus $u+v=\capac$, $1\le v\le u$, and
$$
\frac13\le \frac v\capac\le \frac12 .
$$
Fix the absolute constants\footnote{For ease of presentation and due to the tedious computations involved, we do not optimize these constants and simply demonstrate the desired dependency of the lookahead window $L$.}
$$
\theta:=10^{-3},
\qquad
\eta_0:=10^{-2},
\qquad
\rho:=10^{-4},
\qquad
\alpha:=10^{-6},
\qquad
\beta:=10^{-8}.
$$
For the fixed integer $k\ge5$, define
$$
\delta_k
:=
\left(\frac v\capac-\frac{\alpha}{k+1}\right)s,
\qquad
\varepsilon_k
:=
\frac{\beta s}{k+1}.
$$
The choices above ensure that, uniformly over every $\capac\ge2$ and every $k\ge5$,
$$
0<\varepsilon_k<\rho s<\theta s<\eta_0 s<\delta_k<s.
$$

Define the initial windows
$$
I_0:=(0,\varepsilon_k],
\qquad
D_k:=[\delta_k-\varepsilon_k,\delta_k+\varepsilon_k],
$$
and, for each $i=1,\ldots,k$, define the visible full-burst window
$$
J_i:=[is-\rho s,is].
$$
The two hidden-continuation windows are
$$
U_E:=
[ks+\delta_k+\eta_0s,\ (k+1)s-2\theta s],
$$
and
$$
U_W:=
[ks+\delta_k+s-2\theta s,\ ks+\delta_k+s-\theta s].
$$
Finally, let
$$
H_k:=(3k+10)s.
$$

We can now define our hard event over offline-certifying cycles.

\begin{definition}[Hard event]
\label{def:hard-offline-cycle-event}
Let $\mathcal E_k^E$ be the event that, after shifting the first arrival of an offline-certifying cycle to time $0$, the following pattern occurs:
\begin{enumerate}
    \item exactly $u-1$ additional arrivals occur in $I_0$;
    \item exactly $v$ arrivals occur in $D_k$;
    \item for every $i=1,\ldots,k$, exactly $\capac$ arrivals occur in $J_i$;
    \item exactly $u$ arrivals occur in $U_E$;
    \item no other arrivals occur in $(0,H_k]$.
\end{enumerate}
Similarly, let $\mathcal E_k^W$ be the event obtained by replacing $U_E$ with $U_W$ in requirement (4). Define
$$
\mathcal E_k:=\mathcal E_k^E\cup \mathcal E_k^W.
$$
\end{definition}

% Preamble:
% \usepackage{tikz}
% \usetikzlibrary{decorations.pathreplacing}

\begin{figure}[h!]
\centering
\begin{tikzpicture}[
    x=1cm,
    y=1cm,
    tick/.style={black, line width=0.8pt},
    axis/.style={black, line width=0.8pt},
    arr/.style={font=\small},
    timelab/.style={font=\small},
    title/.style={font=\large},
    prefixstyle/.style={draw=red!70, text=red!70},
    early/.style={draw=green!55!black, text=green!55!black},
    late/.style={draw=blue!70, text=blue!70},
    hidden/.style={line width=2.2pt},
    window/.style={line width=2.0pt},
    brace/.style={decorate, decoration={brace, amplitude=5pt}, line width=0.8pt},
    dashedmark/.style={black!55, dashed, line width=0.7pt}
]

% y-coordinates
\def\yE{0}
\def\yW{-3.45}

% x-coordinates
\def\xzero{1.0}
\def\xdelta{1.75}
\def\xs{2.75}
\def\xtwos{3.85}
\def\xmid{5.35}
\def\xks{7.35}
\def\xlook{8.05}
\def\xEhid{8.80}
\def\xkpone{9.80}
\def\xWhid{10.35}
\def\xH{13.00}

%%%%%%%%%%%%%%%%%%%%%%%%%%%%%%%%%%%%%%%%%%%%%%%%%%%%%%%%%%%%
% Early continuation
%%%%%%%%%%%%%%%%%%%%%%%%%%%%%%%%%%%%%%%%%%%%%%%%%%%%%%%%%%%%

\node[title, anchor=west] at (-0.55,{\yE+1.85}) {\small Early continuation $\mathcal E_k^E$:};

\draw[axis] (0.70,\yE) -- (13.35,\yE);

\node[arr, red!70, anchor=east] at (0.62,{\yE+0.36}) {arrivals};
\node[timelab, anchor=east] at (0.62,{\yE-0.35}) {time};

% Main ticks
\foreach \x/\lab in {
    \xzero/{\scriptsize $0$},
    \xdelta/{\scriptsize $\delta_k$},
    \xs/{\scriptsize $s$},
    \xtwos/{\scriptsize $2s$},
    \xks/{\scriptsize $ks$},
    \xkpone/{\scriptsize $(k+1)s$},
    \xH/{\scriptsize $H_k$}
}{
    \draw[tick] (\x,{\yE-0.12}) -- (\x,{\yE+0.12});
    \node[timelab, below=3pt] at (\x,\yE) {\lab};
}

% Lookahead dashed mark and callout
\draw[dashedmark] (\xlook,{\yE-0.55}) -- (\xlook,{\yE+1.10});
\node[timelab, black!60, align=center] at (\xlook,{\yE-0.92})
    {\tiny lookahead ends at\\[-2pt]\tiny $\le ks+\delta_k+\varepsilon_k$};

% Ellipses
\node at (\xmid,{\yE+0.24}) {\large $\cdots$};
\node at (11.55,{\yE+0.24}) {\large $\cdots$};

% Common-prefix arrivals
\node[arr, red!70] at (\xzero,{\yE+0.45}) {\scriptsize $u$};
\node[arr, red!70] at (\xdelta,{\yE+0.45}) {\scriptsize $v$};
\node[arr, red!70] at (\xs,{\yE+0.45}) {\scriptsize $\capac$};
\node[arr, red!70] at (\xtwos,{\yE+0.45}) {\scriptsize $\capac$};
\node[arr, red!70] at (\xks,{\yE+0.45}) {\scriptsize $\capac$};

% Interval bars for I_0, D_k, J_i
\draw[window, red!70] ({\xzero-0.09},{\yE+0.18}) -- ({\xzero+0.09},{\yE+0.18});
\draw[window, red!70] ({\xdelta-0.15},{\yE+0.18}) -- ({\xdelta+0.15},{\yE+0.18});
\draw[window, red!70] ({\xs-0.22},{\yE+0.18}) -- (\xs,{\yE+0.18});
\draw[window, red!70] ({\xtwos-0.22},{\yE+0.18}) -- (\xtwos,{\yE+0.18});
\draw[window, red!70] ({\xks-0.22},{\yE+0.18}) -- (\xks,{\yE+0.18});

% Early hidden block
\draw[hidden, green!55!black] ({\xEhid-0.18},{\yE+0.18}) -- ({\xEhid+0.18},{\yE+0.18});
\node[arr, green!55!black] at (\xEhid,{\yE+0.50}) {\scriptsize $u$};
\node[timelab, green!55!black] at (\xEhid,{\yE-0.43}) {\tiny $U_E$};

% Braces
\draw[brace, prefixstyle] (\xzero,{\yE+1.08}) -- (\xks,{\yE+1.08})
    node[midway, above=6pt] {\scriptsize common prefix};

\draw[brace, early] ({\xEhid-0.18},{\yE+1.08}) -- ({\xEhid+0.18},{\yE+1.08})
    node[midway, above=6pt] {\scriptsize hidden};

\draw[brace, early] ({\xEhid+0.25},{\yE+0.82}) -- (\xH,{\yE+0.82})
    node[midway, above=6pt] {\scriptsize no arrivals until $H_k$};

%%%%%%%%%%%%%%%%%%%%%%%%%%%%%%%%%%%%%%%%%%%%%%%%%%%%%%%%%%%%
% Late continuation
%%%%%%%%%%%%%%%%%%%%%%%%%%%%%%%%%%%%%%%%%%%%%%%%%%%%%%%%%%%%

\node[title, anchor=west] at (-0.55,{\yW+1.85}) {\small Late continuation $\mathcal E_k^W$:};

\draw[axis] (0.70,\yW) -- (13.35,\yW);

\node[arr, red!70, anchor=east] at (0.62,{\yW+0.36}) {arrivals};
\node[timelab, anchor=east] at (0.62,{\yW-0.35}) {time};

% Main ticks
\foreach \x/\lab in {
    \xzero/{\scriptsize $0$},
    \xdelta/{\scriptsize $\delta_k$},
    \xs/{\scriptsize $s$},
    \xtwos/{\scriptsize $2s$},
    \xks/{\scriptsize $ks$},
    \xkpone/{\scriptsize $(k+1)s$},
    \xH/{\scriptsize $H_k$}
}{
    \draw[tick] (\x,{\yW-0.12}) -- (\x,{\yW+0.12});
    \node[timelab, below=3pt] at (\x,\yW) {\lab};
}

% Lookahead dashed mark and callout
\draw[dashedmark] (\xlook,{\yW-0.55}) -- (\xlook,{\yW+1.10});
\node[timelab, black!60, align=center] at (\xlook,{\yW-1.00})
    {\tiny lookahead ends at\\[-2pt]\tiny $\le ks+\delta_k+\varepsilon_k$};

% Ellipses
\node at (\xmid,{\yW+0.24}) {\large $\cdots$};
\node at (11.65,{\yW+0.24}) {\large $\cdots$};

% Common-prefix arrivals
\node[arr, red!70] at (\xzero,{\yW+0.45}) {\scriptsize $u$};
\node[arr, red!70] at (\xdelta,{\yW+0.45}) {\scriptsize $v$};
\node[arr, red!70] at (\xs,{\yW+0.45}) {\scriptsize $\capac$};
\node[arr, red!70] at (\xtwos,{\yW+0.45}) {\scriptsize $\capac$};
\node[arr, red!70] at (\xks,{\yW+0.45}) {\scriptsize $\capac$};

% Interval bars for I_0, D_k, J_i
\draw[window, red!70] ({\xzero-0.09},{\yW+0.18}) -- ({\xzero+0.09},{\yW+0.18});
\draw[window, red!70] ({\xdelta-0.15},{\yW+0.18}) -- ({\xdelta+0.15},{\yW+0.18});
\draw[window, red!70] ({\xs-0.22},{\yW+0.18}) -- (\xs,{\yW+0.18});
\draw[window, red!70] ({\xtwos-0.22},{\yW+0.18}) -- (\xtwos,{\yW+0.18});
\draw[window, red!70] ({\xks-0.22},{\yW+0.18}) -- (\xks,{\yW+0.18});

% Late hidden block
\draw[hidden, blue!70] ({\xWhid-0.18},{\yW+0.18}) -- ({\xWhid+0.18},{\yW+0.18});
\node[arr, blue!70] at (\xWhid,{\yW+0.50}) {\scriptsize $u$};
\node[timelab, blue!70] at (\xWhid,{\yW-0.48}) {\tiny $\;\;\; U_W$};

% Braces
\draw[brace, prefixstyle] (\xzero,{\yW+1.08}) -- (\xks,{\yW+1.08})
    node[midway, above=6pt] {\scriptsize common prefix};

\draw[brace, late] ({\xWhid-0.18},{\yW+1.08}) -- ({\xWhid+0.18},{\yW+1.08})
    node[midway, above=6pt] {\scriptsize hidden};

\draw[brace, late] ({\xWhid+0.25},{\yW+0.82}) -- (\xH,{\yW+0.82})
    node[midway, above=6pt] {\scriptsize no arrivals until $H_k$};

\end{tikzpicture}
\caption{The hard event for the Poisson lookahead lower bound. The common prefix is identical in the early and late continuations and is visible to the scheduler. The hidden block contains $u$ arrivals in both continuations; it lies just beyond the lookahead window at the initial decision time, but determines which first batching commitment is optimal.}
\label{fig:poisson-lookahead-hard-event}
\end{figure}

A pictorial depiction of the arrival sequence in an offline-certifying cycle conditioned on the event $\mathcal E_k$ is shown in~\Cref{fig:poisson-lookahead-hard-event}. Let us briefly justify that the arrivals specified in $\mathcal E_k$ indeed form one offline-certifying cycle. No prefix ending before the completion of one of the windows $D_k,J_1,\ldots,J_k$, or before the completion of the active hidden window, can be certifying, because another specified arrival occurs within less than one service time. After all arrivals in $D_k$ and after each visible burst $J_i$ with $i<k$, the next visible burst begins before any schedule could process all jobs in the current prefix: the first full batch is not available before time $\delta_k-\varepsilon_k$, and the next burst begins at time $(i+1)s-\rho s$. Similarly, after the common prefix through $J_k$, the hidden block arrives before the prefix could be cleared. Finally, once the hidden block has arrived, there are no further arrivals until $H_k=(3k+10)s$, while all specified jobs can be processed in at most $k+2$ batches and cleared by time less than $(k+3)s$. Hence the minimal offline-certifying prefix is exactly the set of jobs specified by $\mathcal E_k$. A detailed argument is given in the proof of~\Cref{lemma:hard-event-probability} below.

The event $\mathcal E_k$ creates the desired ambiguity. The arrivals at time $0$ and in $I_0$ form an initial group of $u$ jobs, while the arrivals in $D_k$ complete this group to a full batch of $\capac$ jobs. Thus the scheduler must choose whether to process the first $u$ jobs early, or wait until approximately time $\delta_k$ and process the first full batch. The visible bursts $J_1,\ldots,J_k$ keep these two decisions nearly balanced over the entire lookahead window. The decisive information lies in the hidden continuation: on $\mathcal E_k^E$, the hidden $u$ jobs arrive early enough to complete the final $v$-job offset created by the early commitment, while on $\mathcal E_k^W$, they arrive too late and waiting is better. Since the hidden continuation lies beyond the scheduler's lookahead horizon at the time of the first consequential decision, the scheduler cannot distinguish these two cases when making that decision.

The proof reduces to three claims. First, offline-certifying cycles induce a renewal decomposition for the offline optimum, and any uniform one-cycle excess loss on such cycles yields a rate separation. Formally:

\begin{lemma}
\label{lemma:poisson-lb-renewal-reduction}
Fix an $L=ks$-lookahead scheduler $\sched$ and an event $\mathcal E$ defined over one offline-certifying cycle. Suppose there is a number $\Delta_{\mathcal E}\ge0$ such that, for every possible history and internal state of $\sched$ at the beginning of a cycle,
$$
\expect{
\sched(\inst_\off)-\opt(\inst_\off)
\mid
\mathcal E
}
\ge
\Delta_{\mathcal E},
$$
where $\sched(\inst_\off)$ denotes the flow-time contribution of the jobs of that cycle under the actual execution of $\sched$. Then
$$
\rate(\sched)-\rate(\sched^*)
\ge
\Omega_{\lambda,\capac,s}(1)\cdot
\prob{\mathcal E}\cdot
\Delta_{\mathcal E}.
$$
\end{lemma}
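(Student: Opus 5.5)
The plan is to compare $\sched$ with the offline optimum one offline-certifying cycle at a time, along a single infinite Poisson sample path. Run the rate-$\lambda$ Poisson process and cut it into consecutive offline-certifying cycles $\inst_\off^1,\inst_\off^2,\ldots$ as in \Cref{def:offline-certifying-cycle}. By the strong Markov property remarked after that definition, these cycles are i.i.d. Let $N_i:=|\inst_\off^i|$ and $O_i:=\opt(\inst_\off^i)$. Let $K(n)$ be the number of complete cycles among the first $n$ arrivals. Write $\mathcal F_{i-1}$ for the $\sigma$-field generated by all arrivals before cycle $i$ and by the internal randomness of $\sched$ up to the start of cycle $i$.

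\emph{Step 1: upper bound on the offline optimum.} Offline-optimal schedules of consecutive certifying cycles can be concatenated, because the strict inequality $M_K^*<r_{K+1}$ means each cycle is cleared before the next one begins. Hence
\[
\opt(\inst(n))\le \sum_{i=1}^{K(n)}O_i+R_n ,
\]
where $R_n$ is the optimal cost of the trailing incomplete cycle. This requires $\expect{N(\inst_\off)}<\infty$, which I expect to be the main obstacle (Step 3).

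\emph{Step 2: lower bound on $\sched$.} Restricting $\sched$'s actual schedule to the jobs of cycle $i$ gives a feasible schedule for $\inst_\off^i$, by the same argument as in \Cref{lemma:superadditive}. So every summand $\sched(\inst_\off^i)-O_i$ is nonnegative, and
\[
\sched(n)\ \ge\ \sum_{i=1}^{K(n)}O_i+\sum_{i=1}^{K(n)}\mathbf 1(\mathcal E_i)\bigl(\sched(\inst_\off^i)-O_i\bigr).
\]
Now apply the hypothesis. It holds uniformly over histories, and $\mathcal E_i$ depends only on the arrivals of cycle $i$, which are independent of $\mathcal F_{i-1}$. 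Therefore
\[
\expect{\mathbf 1(\mathcal E_i)(\sched(\inst_\off^i)-O_i)\mid\mathcal F_{i-1}}\ \ge\ \prob{\mathcal E}\,\Delta_{\mathcal E}.
\]
The event $\{i\le K(n)\}$ is not quite $\mathcal F_{i-1}$-measurable. I would handle this with a Wald-type argument using the stopping index $K(n)+1$, paying one extra cycle, which is absorbed into the error since $\expect{N}<\infty$. Subtracting the bound from Step 1 and taking expectations gives
\[
\expect{\sched(n)}-\expect{\opt(\inst(n))}\ \ge\ \prob{\mathcal E}\,\Delta_{\mathcal E}\,\expect{K(n)}-O(1)-\expect{R_n}.
\]
Divide by $n$. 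The elementary renewal theorem gives $\expect{K(n)}/n\to 1/\expect{N(\inst_\off)}$, and $\expect{R_n}=o(n)$. The offline rate exists by \Cref{app:subsec_limit_existance}. Taking $\liminf$ for $\sched$ (its rate may only exist as a liminf) yields the claim with $\Omega_{\lambda,\capac,s}(1)=1/\expect{N(\inst_\off)}$.

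\emph{Step 3 (main obstacle): a bound on $\expect{N(\inst_\off)}$.} I need $\expect{N(\inst_\off)}$ bounded by a constant depending only on $\lambda,\capac,s$, and also $\expect{R_n}=o(n)$. I would reuse the certificate from \Cref{lemma:lookahead-bad-probability}. Run $\sched_1^\on$ from the cycle start. Suppose $T$ is a greedy clearing time and the next arrival satisfies $r^+(T)>T+2s$. Then \Cref{lemma:opt-makespan} bounds the minimum optimal makespan of the prefix up to $T$ by $T+2s<r^+(T)$. So that prefix is offline-certifying, and the minimal certifying index $K$ is at most its length. Consequently $N(\inst_\off)$ is dominated by the number of greedy jobs until the first clearing epoch followed by an idle gap exceeding $2s$. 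Each clearing epoch succeeds independently with probability $e^{-2\lambda s}$. The greedy cycle sizes have exponential moments, by \Cref{lemma:s1-cycle-tail} combined with $\cycjobs\le(\capac/s)\cycdur$. Hence $N(\inst_\off)$ is a geometric sum of variables with exponential tails, so it has finite moments of every order. This gives both $\expect{N}<\infty$ and $\expect{R_n}\le\expect{\capac\, s^{-1}\cdot(\text{duration of the residual cycle})^2}=o(n)$. The residual-cycle step needs some care with the inspection paradox; I would handle it by bounding by the maximum of the first $n$ cycles, whose expectation is $O(\log n)$ under exponential tails.
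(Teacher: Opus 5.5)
Your outline follows the paper's proof essentially step for step. The shared ingredients are: greedy domination of $N(\inst_{\off})$ by a geometric number (success probability $e^{-2\lambda s}$) of greedy cycles via \Cref{lemma:opt-makespan}; the restriction argument giving $S_i\ge O_i$; concatenation of minimal-makespan optimal cycle schedules to upper-bound $\opt(\inst(n))$; and a renewal argument yielding the constant $1/\expect{N(\inst_{\off})}$.

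The step that does not work as written is your Wald repair. Set $X_i:=\mathbf{1}(\mathcal E_i)(S_i-O_i)\ge 0$. Optional stopping at $K(n)+1$ is legitimate, since $\{K(n)+1\ge i\}=\{\sum_{j<i}N_j\le n\}\in\mathcal F_{i-1}$, and it gives $\expect{\sum_{i\le K(n)+1}X_i}\ge \prob{\mathcal E}\Delta_{\mathcal E}\,\expect{K(n)+1}$. But $\sched(n)$ only contains $\sum_{i\le K(n)}X_i$, so you must subtract $X_{K(n)+1}$. That term is the excess cost of $\sched$ on the straddling cycle, not a job count. Nothing in the hypothesis bounds $S_i$ from above, and $K(n)+1$ is a size-biased index, so $\expect{N}<\infty$ says nothing about $\expect{X_{K(n)+1}}$. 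The hypothesis only lower-bounds a conditional mean. A history-dependent scheduler could therefore, in principle, place that excess on ever-longer cycles that are rarely complete among the first $n$ jobs. The paper's proof has the same hole: it simply invokes renewal-reward for the non-i.i.d.\ $S_i$.

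The fix is a stopping index that never overshoots $K(n)$ on the cycles carrying the excess, so that nothing has to be subtracted. This is available whenever $N$ is bounded on $\mathcal E$, which is exactly the situation in the application: on $\mathcal E_k$ the cycle has exactly $m:=(k+1)\capac+u$ jobs. Set $\tau:=K(n-m)+1$, so that $\{\tau\ge i\}=\{\sum_{j<i}N_j\le n-m\}\in\mathcal F_{i-1}$. If $i\le\tau$ and $\mathcal E_i$ occurs, then $\sum_{j\le i}N_j\le n$, i.e.\ $i\le K(n)$. Since all $X_i\ge 0$, this gives $\sum_{i\le K(n)}X_i\ge\sum_{i\le \tau}X_i$. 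Tonelli together with your conditional bound then gives $\expect{\sum_{i\le\tau}X_i}\ge\prob{\mathcal E}\Delta_{\mathcal E}\,\expect{K(n-m)+1}$, with $\expect{K(n-m)}/n\to 1/\expect{N(\inst_{\off})}$.

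For a fully general $\mathcal E$ you would need some additional control of this type, for example that the excess is not carried by unboundedly large cycles. The rest of your argument is fine, including Step~3 and the $o(n)$ bound on the residual cycle.
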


\noindent Then, we show that although hard cycles are rare, they still appear with exponential probability.
\begin{lemma}
\label{lemma:hard-event-probability}
The hard event satisfies
$$
\prob{\mathcal E_k}
\ge
\exp\bigl(-O_{\lambda,\capac,s}(k)\bigr).
$$
\end{lemma}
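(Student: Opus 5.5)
We would prove the bound with the single sub-event $\mathcal E_k^E\subseteq\mathcal E_k$, since $\prob{\mathcal E_k}\ge\prob{\mathcal E_k^E}$. The first step removes the conditioning on being inside an offline-certifying cycle. The start of a cycle is the first arrival after the previous cycle's excluded index, which is a stopping time for the interarrival sequence. By the strong Markov property (as noted after \Cref{def:offline-certifying-cycle}), the arrivals after the shifted time $0$ therefore form a fresh rate-$\lambda$ Poisson process. So $\prob{\mathcal E_k^E}$ equals the probability that a fresh Poisson process on $(0,H_k]$ shows the prescribed counts in $I_0,D_k,J_1,\ldots,J_k,U_E$ and no arrivals elsewhere in $(0,H_k]$.

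We would also check that on this pattern the cycle $\inst_\off$ consists exactly of the specified jobs; this is the consistency claim announced before the lemma.
\begin{itemize}
\item \emph{No earlier prefix certifies.} For any prefix ending before the last specified arrival, the next specified arrival comes less than $s$ after the prefix's last arrival. Every schedule finishes its last job at least $s$ after that job's release, so $M_m^*\ge r_m+s>r_{m+1}$. The relevant gaps are inside a window, from $I_0$ to $D_k$, from $D_k$ to $J_1$, between consecutive $J_i$'s, and from $J_k$ to $U_E$; each is below $s$ by the ordering $0<\varepsilon_k<\rho s<\theta s<\eta_0 s<\delta_k<s$ and the definition of $U_E$.
\item \emph{The full set certifies.} The explicit schedule ``batch all jobs by time $\delta_k+\varepsilon_k$, then batch each $J_i$ burst at time $\le is+\delta_k+\varepsilon_k$, then the hidden block'' finishes well before $(k+3)s$. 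By \Cref{lemma:opt-makespan}, some offline optimum therefore finishes before $(k+5)s<H_k=(3k+10)s$, which precedes the next arrival.
\end{itemize}
This step is the fiddliest, but it only concerns identification; it does not affect the probability.

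Next we check that the windows $I_0,D_k,J_1,\ldots,J_k,U_E$ are pairwise disjoint subintervals of $(0,H_k]$. This follows from the same constant ordering: $\varepsilon_k<\delta_k-\varepsilon_k$, then $\delta_k+\varepsilon_k<s-\rho s$, the $J_i$ are spaced by $s>\rho s$, and $ks<ks+\delta_k+\eta_0 s$. The right end of $U_E$ is below $(k+1)s<H_k$, and the length of $U_E$ is at least $s\bigl(1-v/\capac-\eta_0-2\theta\bigr)\ge s/3$. Poisson counts in disjoint sets are independent, so
\[
\prob{\mathcal E_k^E}\ \ge\ e^{-\lambda H_k}\cdot\frac{(\lambda\varepsilon_k)^{u-1}}{(u-1)!}\cdot\frac{(2\lambda\varepsilon_k)^{v}}{v!}\cdot\Bigl(\frac{(\lambda\rho s)^{\capac}}{\capac!}\Bigr)^{k}\cdot\frac{(\lambda s/3)^{u}}{u!}.
\]
Here the single factor $e^{-\lambda H_k}$ accounts for all the $e^{-\lambda|\cdot|}$ terms, including the no-arrival requirement on the complement.

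Finally, each factor is $\exp(-O_{\lambda,\capac,s}(k))$:
\begin{itemize}
\item $e^{-\lambda H_k}=e^{-\lambda(3k+10)s}$;
\item the $k$ visible-burst factors are a constant raised to the power $k$;
\item the $I_0$ and $D_k$ factors are polynomial in $\varepsilon_k=\beta s/(k+1)$ of degree at most $\capac$, so they equal $\exp(-O(\capac\log k))\ge\exp(-O(k))$;
\item the $U_E$ factor is a constant.
\end{itemize}
Multiplying these gives $\prob{\mathcal E_k}\ge\exp(-O_{\lambda,\capac,s}(k))$. The main obstacle is the bookkeeping in the consistency step of the first paragraph, namely verifying minimality of the certifying prefix gap by gap; the probability estimate itself is routine.
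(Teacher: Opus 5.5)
Your probability estimate is correct and is the same computation the paper does. By the strong Markov property the pattern is a raw Poisson event on $(0,H_k]$, the windows are disjoint, and the product of window probabilities is $\exp(-O_{\lambda,\capac,s}(k))$. Restricting to $\mathcal E_k^E$ and collecting all void probabilities into $e^{-\lambda H_k}$ is fine.

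The problem is in the identification step. The paper also proves this step inside the lemma, because the renewal reduction and the loss lemma need $\inst_\off$ to be exactly the specified jobs. You claim that the gap from a proper prefix's last arrival to the next arrival is always below $s$. This is false between consecutive bursts. The last arrival of $J_i$ can sit near $is-\rho s$ and the first arrival of $J_{i+1}$ near $(i+1)s$, a gap of up to $(1+\rho)s>s$. For such realizations $r_m+s<r_{m+1}$, so your argument does not rule out that the prefix ending with $J_i$ ($1\le i\le k-1$) is certifying. The loss lemma also needs $\mathcal E_k^W$, and there the gap from $J_k$ to $U_W$ is at least $s+\delta_k-2\theta s>s$ for every realization. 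So the gap argument cannot identify that cycle at all.

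The missing ingredient is the paper's capacity-counting bound on the makespan. The prefix through $J_i$ has $(i+1)\capac$ jobs.
\begin{itemize}
\item If its first batch starts before $\delta_k-\varepsilon_k$, that batch holds at most the $u<\capac$ jobs of $\{0\}\cup I_0$. So at least $i+2$ batches are needed and the makespan is at least $(i+2)s$.
\item Otherwise, at least $i+1$ batches all start at or after $\delta_k-\varepsilon_k$, giving makespan at least $(i+1)s+\delta_k-\varepsilon_k$.
\end{itemize}
Either way $M_m^*>(i+1)s\ge r_{m+1}$. With $i=k$, the same bound exceeds the right end $(k+1)s+\delta_k-\theta s$ of $U_W$ because $\varepsilon_k<\theta s$, which handles $\mathcal E_k^W$.

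With this replacement, the rest of your identification goes through: the short gaps within and between the early windows, and the explicit schedule plus \Cref{lemma:opt-makespan} for the full set. The inequality of the lemma itself was never affected by this flaw.
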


\noindent Finally, by exploiting the ambiguity over the first action built inside the definition of the hard event, we prove the following result:

\begin{lemma}
\label{lemma:hard-event-constant-loss}
For every $L=ks$-lookahead scheduler $\sched$, uniformly over every possible history and internal state of $\sched$ at the beginning of the cycle,
$$
\expect{
\sched(\inst_\off)-\opt(\inst_\off)
\mid
\mathcal E_k
}
\ge
\Omega_{\lambda,\capac,s}(1).
$$
\end{lemma}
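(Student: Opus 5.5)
The plan is to turn the informal ambiguity described after \Cref{def:hard-offline-cycle-event} into a two-point (Yao-type) argument conditioned on $\mathcal E_k$. I would classify every execution of $\sched$ on the cycle by its \emph{first commitment}. The execution is \emph{early} if the batch containing the initial job starts before the last of the $v$ arrivals in $D_k$; otherwise it is \emph{wait}.

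\textbf{Step 1: the commitment cannot depend on the hidden block.} An early commitment is made by time at most $\delta_k+\varepsilon_k$. At any such time the scheduler sees arrivals only up to $ks+\delta_k+\varepsilon_k$, which is strictly less than $ks+\delta_k+\eta_0 s=\min U_E<\min U_W$. The events $\mathcal E_k^E$ and $\mathcal E_k^W$ prescribe the same arrivals on $[0,ks+\delta_k+\varepsilon_k]$: the windows $I_0$, $D_k$ and $J_1,\ldots,J_k$, and nothing else. Moreover, conditioned on either event, the visible arrivals have the same law, namely independent uniform placements inside their windows. Since both the history and the internal randomness are independent of the cycle's arrivals, the conditional probability $p$ of an early commitment is the same under $\mathcal E_k^E$ and $\mathcal E_k^W$. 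Next, $\prob{\mathcal E_k^E}/\prob{\mathcal E_k^W}$ equals the ratio of the $u$-th powers of the window lengths. Here $|U_E|\ge s(1-\tfrac12-\eta_0-2\theta)-\varepsilon_k\ge s/4$ and $|U_W|=\theta s$, so this ratio is a constant depending only on $\capac$. Hence both branches carry conditional probability at least some $c_0(\capac)>0$.

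\textbf{Step 2: a wrong commitment costs $\Omega(s)$ in each branch.} Let $\opt_E$, $\opt_W$ be the offline optima on the two continuations, and let $\opt_E^{\mathrm{wait}}$, $\opt_W^{\mathrm{early}}$ be the optima subject to the wrong commitment. I will show $\opt_E^{\mathrm{wait}}-\opt_E\ge c_1 s$ and $\opt_W^{\mathrm{early}}-\opt_W\ge c_1 s$ uniformly in $k\ge5$ and in the jitter allowed by the windows, arguing as in the proof of \Cref{thm:adv-lookahead-cap-lb} but with a single round of the repeated pattern.
\begin{itemize}
\item \emph{Early branch $\mathcal E_k^E$.} Starting early with the $u$ initial jobs leaves $v$ leftovers. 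The bursts $J_i$ are then served at their arrival times, and the leftovers are combined with the $u$ hidden jobs into one full batch, since $U_E$ ends before $(k+1)s$. The wait schedule instead pays about $u\delta_k\approx uvs/\capac$ for the initial jobs and then idles when the hidden jobs arrive.
\item \emph{Late branch $\mathcal E_k^W$.} The early schedule must either carry the $v$ leftovers past the hidden block, or shift all $k$ bursts by about $(1-\delta_k/s)s$. Both options exceed the wait schedule by a constant multiple of $s$.
\end{itemize}
The choice $\delta_k=(v/\capac-\alpha/(k+1))s$ keeps the visible part balanced, so the $k$ bursts contribute at most $O(k\cdot\alpha s/(k+1))=O(\alpha s)$ plus $O(k\rho s)$-type jitter. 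These terms are dominated by the $\Theta(s)$ gap created by the hidden block because $\alpha,\beta,\rho$ are tiny. To make the argument rigorous I will lower-bound the restricted optima with the counting argument of \Cref{lemma:job_difference} and \Cref{claim:flow-def}. Since a wrong commitment is out of phase, at every burst either a full batch is delayed or leftover jobs stay pending, and integrating this deficit over time yields the gap.

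\textbf{Step 3: uniformity over the history, and conclusion.} At the start of the cycle $\sched$ may still be busy with earlier jobs. Its schedule for the cycle's jobs is then a feasible schedule for $\inst_\off$ with some processor-availability constraints added. Adding constraints can only increase the restricted optima $\opt^{\mathrm{early}}$ and $\opt^{\mathrm{wait}}$. So, conditioned on the commitment type, $\sched(\inst_\off)\ge\opt^{\mathrm{type}}$ pathwise. Combining the three steps,
\[
\expect{\sched(\inst_\off)-\opt(\inst_\off)\mid\mathcal E_k}\ \ge\ c_0\bigl((1-p)\,c_1 s+p\,c_1 s\bigr)=c_0c_1 s,
\]
which is $\Omega_{\lambda,\capac,s}(1)$.

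I expect Step 2 to be the main obstacle. The difficulty is the exact bookkeeping showing that the small balancing terms (the $\alpha/(k+1)$ shift, the $\rho s$ burst jitter and the $\varepsilon_k$ windows), accumulated over $k$ bursts, never outweigh the constant gap produced by the hidden block. This has to hold uniformly for both parities of $\capac$ and over all intermediate schedules, not only the two canonical ones.
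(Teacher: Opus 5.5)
Your proposal is correct and follows essentially the same route as the paper. Both arguments use the two canonical first commitments (a batch of the $u$ initial jobs versus a full batch at $r$), indistinguishability because $r+ks<ks+\delta_k+\eta_0 s$, branch-wise losses of order $\Delta_E$ on $\mathcal E_k^E$ and $\Delta_W$ on $\mathcal E_k^W$, and a constant conditional probability for each continuation. Your split at time $r$ also makes the paper's separate $r+\eta_0 s/4$ delay case redundant, and your explicit check that the commitment probability $p$ is the same under $\mathcal E_k^E$ and $\mathcal E_k^W$ spells out what the paper leaves implicit.

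For the Step~2 bookkeeping, use the paper's direct completion-time summation: the prefix difference $(k+1)(\capac(r-\tau)-vs)$, plus the hidden-block terms $u(r-\tau)$ and $\min\{\capac(x-t_0)-u(r-\tau),\,u(s-(r-\tau))\}$. The bound of \Cref{lemma:job_difference} itself will not work here, because its $\capac-1$ job slack and $s$ time shift are too coarse to detect an $\Omega(s)$ gap.
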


Assuming these three claims, the theorem follows immediately. Applying~\Cref{lemma:poisson-lb-renewal-reduction} with $\mathcal E=\mathcal E_k$, and then using~\Cref{lemma:hard-event-probability,lemma:hard-event-constant-loss}, gives
$$
\rate(\sched)-\rate(\sched^*)
\ge
\exp\bigl(-O_{\lambda,\capac,s}(k)\bigr).
$$
Since arbitrary lookahead values were already absorbed into the choice $k=\max\{5,\lceil L/s\rceil\}$, this is exactly
$$
\rate(\sched)-\rate(\sched^*)
\ge
\exp\bigl(-O_{\lambda,\capac,s}(L)\bigr),
$$
which is equivalent to the statement of~\Cref{thm:poisson-lb}. It remains to prove the three claims, which we respectively do in~\Cref{app:poisson-lb-renewal-reduction},~\Cref{app:poisson-lb-hard-probability} and~\Cref{app:poisson-lb-hard-loss}.

\subsubsection{Proof of the Renewal Reduction (\Cref{lemma:poisson-lb-renewal-reduction})}
\label{app:poisson-lb-renewal-reduction}

We first verify that the offline-certifying cycles have finite first moments. Couple the construction of $\inst_\off$ with an execution of the greedy scheduler $\sched_1^\on$, started with one job at time $0$ and run on the same Poisson arrival sequence. At the end of every greedy regenerative cycle, the system is empty. Let $G$ be the idle time from such a clearing time until the next arrival. By the memoryless property of the Poisson process,
$$
G\sim \mathrm{Exp}(\lambda),
$$
independently of the past, and therefore
$$
\prob{G>2s}=e^{-2\lambda s}.
$$

Any greedy clearing time followed by an idle gap longer than $2s$ certifies an offline-certifying prefix. Indeed, let $T$ be such a clearing time, and let $\inst_T$ be the prefix consisting of all jobs that arrive up to time $T$. Since $\sched_1^\on$ clears the system at time $T$, there is a feasible schedule for $\inst_T$ with makespan $T$. By~\Cref{lemma:opt-makespan}, there exists an offline-optimal schedule for $\inst_T$ whose makespan is at most $T+2s$. Since the next arrival occurs strictly after $T+2s$, the prefix $\inst_T$ is offline-certifying.

Thus $\inst_\off$ is contained in the union of the greedy regenerative cycles up to and including the first greedy cycle whose following idle gap is longer than $2s$. Let $M$ denote the index of this first certifying greedy cycle. Then $M$ is geometric with success probability $e^{-2\lambda s}$, independent of the greedy cycle sizes, and
$$
\expect{M}=e^{2\lambda s}.
$$
Let $N_1^\on,N_2^\on,\ldots$ denote the numbers of jobs in successive greedy regenerative cycles. Then
$$
N(\inst_\off)
\le
\sum_{i=1}^{M}N_i^\on.
$$
Taking expectations and applying Wald's identity,
$$
\expect{N(\inst_\off)}
\le
\expect{M}\cdot \expect{\cycjobs(\sched_1^\on)}.
$$
A greedy regenerative cycle consists of the initial job at time $0$ together with the Poisson arrivals during its duration. Hence, by the compensator identity,
$$
\expect{\cycjobs(\sched_1^\on)}
=
1+\lambda\expect{\cycdur(\sched_1^\on)}.
$$
Using the greedy-cycle moment bound from~\Cref{lemma:s1-cycle-tail},
$$
\expect{\cycdur(\sched_1^\on)}
\le
\frac{5s}{2\gamma_{\gr}},
$$
we get
$$
\expect{N(\inst_\off)}
\le
D_{\lambda,\capac,s}
:=
e^{2\lambda s}
\left(
1+
\frac{5\lambda s}{2\gamma_{\gr}}
\right)
<\infty.
$$
The same domination, together with the greedy-cycle tail bound, also implies that
$$
\expect{N(\inst_\off)^2}<\infty
\qquad\text{and}\qquad
\expect{\opt(\inst_\off)^2}<\infty.
$$
Indeed, $\opt(\inst_\off)$ is bounded by the greedy flow accumulated over the first $M$ greedy cycles, and $M$ has finite second moment.

We now derive the renewal comparison. Repeatedly applying~\Cref{def:offline-certifying-cycle} partitions the infinite Poisson sequence into i.i.d. cycles
$$
\inst_\off^1,\inst_\off^2,\ldots .
$$
Let
$$
N_i:=N(\inst_\off^i),
\qquad
O_i:=\opt(\inst_\off^i).
$$
For the offline optimum, superadditivity (\Cref{lemma:superadditive}) gives the lower bound
$$
\opt(\inst(n))
\ge
\sum_{i=1}^{K(n)}O_i,
\qquad
K(n):=
\max\left\{
k:
\sum_{i=1}^k N_i\le n
\right\}.
$$
Taking expectations, dividing by $n$, and applying the renewal-reward theorem gives
$$
\liminf_{n\to\infty}
\frac{\expect{\opt(\inst(n))}}{n}
\ge
\frac{\expect{O_1}}{\expect{N_1}}.
$$
Conversely, by the certifying property of the cycles, there is an offline-optimal schedule for each cycle that completes before the first arrival of the next cycle. Concatenating these schedules gives a feasible offline schedule that solves the cycles independently. The finite second-moment bounds above imply that the incomplete residual cycle contributes $o(n)$ in expectation. Therefore
$$
\rate(\sched^*)
=
\frac{\expect{\opt(\inst_\off)}}{\expect{N(\inst_\off)}}.
$$

Now fix an $L=ks$-lookahead scheduler $\sched$. For each cycle $i$, let $S_i$ be the total flow-time contribution, in the actual execution of $\sched$, of the jobs belonging to $\inst_\off^i$. Restricting the actual schedule to only the jobs of cycle $i$ gives a feasible schedule for $\inst_\off^i$; batches that also contain jobs from other cycles can simply be viewed as partially filled batches. Hence, pointwise,
$$
S_i\ge O_i.
$$
Moreover, by the hypothesis of the lemma, whenever the event $\mathcal E$ occurs in cycle $i$, the conditional expected excess contribution of that cycle is at least $\Delta_{\mathcal E}$, uniformly over the history and internal state of the scheduler before the cycle begins. Since the next offline-certifying cycle is generated from a fresh Poisson process after the previous certifying gap, the event $\mathcal E$ has probability $\prob{\mathcal E}$ independently of the past. Thus, for every $i$,
$$
\expect{S_i-O_i}
\ge
\prob{\mathcal E}\Delta_{\mathcal E}.
$$

The total flow time of $\sched$ on the first $n$ arrivals is at least the contribution of the completed offline-certifying cycles:
$$
\sched(\inst(n))
\ge
\sum_{i=1}^{K(n)}S_i .
$$
Combining the preceding display with the renewal-reward theorem gives
$$
\liminf_{n\to\infty}
\frac{\expect{\sched(\inst(n))}}{n}
\ge
\frac{\expect{O_1}+\prob{\mathcal E}\Delta_{\mathcal E}}{\expect{N_1}}.
$$
Using the offline identity and $\expect{N_1}\le D_{\lambda,\capac,s}$, we conclude that
$$
\rate(\sched)-\rate(\sched^*)
\ge
\frac{\prob{\mathcal E}\Delta_{\mathcal E}}{\expect{N(\inst_\off)}}
\ge
\frac{\prob{\mathcal E}\Delta_{\mathcal E}}{D_{\lambda,\capac,s}}.
$$
This proves~\Cref{lemma:poisson-lb-renewal-reduction}.

\subsubsection{Probability of the Hard Cycle (\Cref{lemma:hard-event-probability})}
\label{app:poisson-lb-hard-probability}

We first compute the probability of the corresponding raw Poisson pattern after the first arrival has been shifted to time $0$, and then verify that this raw pattern is exactly one offline-certifying cycle.

For the common visible prefix, independence of Poisson increments gives probability at least
$$
e^{-\lambda ks}
\cdot
\frac{(\lambda\varepsilon_k)^{u-1}}{(u-1)!}
\cdot
\frac{(2\lambda\varepsilon_k)^v}{v!}
\cdot
\left(
\frac{(\lambda\rho s)^\capac}{\capac!}
\right)^k .
$$
Since $u+v=\capac$ and $\varepsilon_k=\beta s/(k+1)$, the polynomial factor in $k$ can be absorbed into $\exp(-O_{\capac}(k))$. Therefore the common visible prefix has probability at least
$$
\exp(-O_{\lambda,\capac,s}(k)).
$$

Conditioned on this common prefix, either one of the two hidden continuations has probability at least $\exp(-O_{\lambda,\capac,s}(k))$. Indeed, $U_W$ has length $\theta s$, while $U_E$ has length
$$
s-\delta_k-\eta_0s-2\theta s
\ge
\frac{s}{3}
$$
for the constants chosen above. The requirement that there are no other arrivals in $(ks,H_k]$ contributes only $\exp(-O_{\lambda,s}(k))$. Hence
$$
\prob{\mathcal E_k}
\ge
\exp(-O_{\lambda,\capac,s}(k)).
$$

It remains to verify that, on this raw event, the specified jobs form exactly one offline-certifying cycle. No prefix ending before all arrivals in $D_k$ can be certifying, because the next specified arrival occurs less than one service time later. Similarly, no prefix ending before all arrivals in one of the windows $J_i$, $U_E$, or $U_W$ can be certifying, since each such window has length less than $s$.

Now consider a prefix ending after all arrivals in $D_k$ or after all arrivals in $J_i$ for some $i<k$. Such a prefix contains $(i+1)\capac$ jobs, where we interpret $i=0$ as the prefix ending after $D_k$. The first time at which a full batch of $\capac$ jobs is available is at least $\delta_k-\varepsilon_k$, so every feasible schedule for this prefix has makespan at least
$$
\delta_k-\varepsilon_k+(i+1)s.
$$
The next visible burst begins at time $(i+1)s-\rho s$. Since
$$
\delta_k-\varepsilon_k+\rho s>0,
$$
the next burst begins before this prefix can be completed. Hence no such prefix is offline-certifying.

Next consider the common prefix through $J_k$. This prefix contains $(k+1)\capac$ jobs, so every feasible schedule has makespan at least
$$
\delta_k-\varepsilon_k+(k+1)s.
$$
On $\mathcal E_k^E$, the entire hidden window $U_E$ ends by time $(k+1)s-2\theta s$, which is strictly before this lower bound. On $\mathcal E_k^W$, the entire hidden window $U_W$ ends by time $ks+\delta_k+s-\theta s$, which is also strictly before this lower bound because $\varepsilon_k<\theta s$. Thus, in both continuations, the hidden block arrives before the common prefix could be certified, and the hidden jobs belong to the same offline-certifying cycle.

Finally, after the hidden block there are no arrivals until time $H_k=(3k+10)s$. The total number of specified jobs is $(k+1)\capac+u$, so these jobs can be processed in at most $k+2$ batches. Starting when the first full batch is available gives a feasible schedule clearing all specified jobs by time at most
$$
\delta_k+\varepsilon_k+(k+2)s
<
(k+3)s
<
H_k.
$$
Therefore the prefix containing all specified jobs is offline-certifying. Since no proper prefix was offline-certifying, the minimal offline-certifying prefix is exactly the specified arrival pattern. This proves~\Cref{lemma:hard-event-probability}.

\subsubsection{Constant Loss on the Hard Cycle (\Cref{lemma:hard-event-constant-loss})}
\label{app:poisson-lb-hard-loss}

Condition on $\mathcal E_k$ and on an arbitrary history and internal state of the scheduler before the cycle begins. We may assume without loss of generality that the scheduler is work-conserving on the true pending queue: whenever it starts a batch, it includes as many pending jobs as possible up to capacity, and whenever the machine is idle with at least $\capac$ pending jobs, it starts a full batch immediately. Enforcing this property can only improve the scheduler's flow time.

Let $\tau$ be the last arrival time in $\{0\}\cup I_0$, and let $r$ be the last arrival time in $D_k$. Then
$$
0\le \tau\le \varepsilon_k,
\qquad
\delta_k-\varepsilon_k\le r\le \delta_k+\varepsilon_k.
$$
At time $r$, the hidden continuation is not visible to a scheduler with lookahead $ks$. Indeed,
$$
r+ks
\le
ks+\delta_k+\varepsilon_k
<
ks+\delta_k+\eta_0s,
$$
which is strictly before the earliest possible hidden arrival in the early continuation. Moreover, if the scheduler delays its first relevant commitment until time $r+\eta_0s/4$, then the first $\capac$ jobs of the cycle, all of which have arrived by time $r$, have already accumulated an additional constant waiting cost. This will itself be enough for the desired lower bound. Hence, except for a case that already gives constant loss, the scheduler must make its first consequential batching decision before it can distinguish $\mathcal E_k^E$ from $\mathcal E_k^W$.

We compare two canonical first commitments. The \emph{early commitment} starts a batch at time $\tau$ containing the $u$ jobs in $\{0\}\cup I_0$. The \emph{waiting commitment} waits until time $r$ and starts a full batch containing the first $\capac$ jobs. After this first commitment, reveal the entire cycle and complete it optimally subject to that commitment. Let $\opt_{\early}$ and $\opt_{\wait}$ be the corresponding optimal costs.

We first compare the two commitments on the common visible prefix, ignoring the hidden block. Since release times are the same, it suffices to compare sums of completion times. Under the waiting commitment, the first full batch completes at time $r+s$, and the $k$ full bursts $J_1,\ldots,J_k$ complete at times
$$
r+2s,\ r+3s,\ldots,\ r+(k+1)s.
$$
Under the early commitment, the first $u$ jobs complete at time $\tau+s$. This leaves a $v$-job offset. Each subsequent full batch contains the $v$ leftover jobs from the previous group together with $u$ jobs from the next burst, leaving $v$ jobs to be carried forward. After $J_k$, these final $v$ jobs remain.

A direct summation gives
\begin{equation}\label{eq:poisson-lb-common-prefix-comparison}
\opt_{\wait}^{\mathrm{pref}}-\opt_{\early}^{\mathrm{pref}}
=
(k+1)\bigl(\capac(r-\tau)-vs\bigr).
\end{equation}
Using the definitions of $\delta_k$ and $\varepsilon_k$, and the choices $\alpha=10^{-6}$ and $\beta=10^{-8}$, we have
\begin{equation}\label{eq:poisson-lb-common-prefix-small}
-2\capac\alpha s
\le
(k+1)\bigl(\capac(r-\tau)-vs\bigr)
\le
-\frac12\capac\alpha s .
\end{equation}
Thus the common visible prefix favors waiting, but only by a fixed constant of order $\capac\alpha s$.

Now consider the early continuation $\mathcal E_k^E$. The hidden $u$ jobs arrive after the last visible-prefix batch of the waiting commitment has started, but before the final $v$ offset jobs of the early commitment must be processed. Thus the early commitment can batch these hidden $u$ jobs together with the final $v$ offset jobs, while the waiting commitment must process the hidden jobs in an additional batch. Relative to~\eqref{eq:poisson-lb-common-prefix-comparison}, this improves the early commitment by at least $u(r-\tau)$. Since
$$
r-\tau\ge \frac{v}{2\capac}s
$$
for the constants above, we get
$$
\opt_{\wait}-\opt_{\early}
\ge
\frac{uv}{2\capac}s-2\capac\alpha s
\ge
\frac{uv}{4\capac}s.
$$
Define
$$
\Delta_E:=\frac{uv}{4\capac}s>0.
$$
Then
\begin{equation}\label{eq:poisson-lb-early-gap}
\opt_{\wait}-\opt_{\early}
\ge
\Delta_E
\qquad
\text{on } \mathcal E_k^E .
\end{equation}

Next consider the late continuation $\mathcal E_k^W$. Let
$$
t_0:=\tau+(k+1)s
$$
be the time at which the early commitment has completed its last visible full batch and is left with the final $v$ offset jobs. Let $x$ be the last hidden arrival time in $U_W$. Then
$$
x-t_0
\in
[r-\tau-O(\theta s),\, r-\tau],
$$
more precisely,
$$
x-t_0\ge r-\tau-3\theta s.
$$
The waiting commitment has no offset at this point and processes the hidden $u$ jobs after its last visible batch. The early commitment, however, must handle both the $v$ offset jobs and the hidden $u$ jobs. Since their total number is exactly $\capac$, the best completion after time $t_0$ is one of two possibilities: either wait until all hidden jobs arrive and process the $\capac$ jobs together, or process the $v$ offset jobs first and then process the hidden $u$ jobs. Comparing these possibilities to the waiting commitment gives an additional late-continuation penalty of at least
$$
\min\left\{
\capac(x-t_0)-u(r-\tau),
\ u(s-(r-\tau))
\right\}.
$$
Using $x-t_0\ge r-\tau-3\theta s$, $r-\tau\ge \frac{v}{2\capac}s$, $r-\tau\le \frac{v}{\capac}s+o(s)$, and $\theta=10^{-3}$, this minimum is at least
$$
\frac{v^2}{4\capac}s.
$$
Together with~\eqref{eq:poisson-lb-common-prefix-small}, this gives
$$
\opt_{\early}-\opt_{\wait}
\ge
\frac{v^2}{4\capac}s
\qquad
\text{on } \mathcal E_k^W .
$$
Define
$$
\Delta_W:=\frac{v^2}{4\capac}s>0.
$$

We now relate the actual scheduler to these commitments. First suppose that, by time $r+\eta_0s/4$, the scheduler has not started any batch containing a job from the current cycle in a way that serves the first $\capac$ cycle jobs. Since all first $\capac$ cycle jobs have arrived by time $r$, delaying them until after $r+\eta_0s/4$ increases their total flow time by at least
$$
\frac{\capac\eta_0s}{4}
$$
relative to the waiting commitment. This is already a positive constant depending only on $\capac$ and $s$, and therefore gives the desired loss after decreasing constants.

We may therefore assume that the scheduler makes a relevant first commitment by time $r+\eta_0s/4$. Since
$$
r+\frac{\eta_0s}{4}+ks
<
ks+\delta_k+\eta_0s,
$$
this commitment is made before the hidden continuation is visible.

If the scheduler starts no batch before time $r$, then by work-conservation its relevant first commitment is a full batch containing the first $\capac$ jobs of the cycle, possibly delayed by inherited machine occupation. Moving this batch earlier to time $r$ and then revealing the rest of the cycle can only improve its cost. Hence its cost is at least $\opt_{\wait}$, and by~\eqref{eq:poisson-lb-early-gap} it incurs loss at least $\Delta_E$ on $\mathcal E_k^E$.

If the scheduler starts a batch before time $r$ containing exactly the $u$ early jobs from $\{0\}\cup I_0$ and no jobs from $D_k$, then, after ignoring any jobs inherited from previous cycles, moving this batch to time $\tau$ and completing the rest of the cycle optimally can only improve the cost contribution of the current-cycle jobs. Hence its cost is at least $\opt_{\early}$, and it incurs loss at least $\Delta_W$ on $\mathcal E_k^W$.

It remains to consider any other relevant batch started before time $r$. Such a batch contains either fewer than the $u$ early jobs, or it contains some but not all of the $v$ jobs in $D_k$. In either case, restricted to the jobs of the current cycle, it creates an offset different from the balanced $v$-job offset of the early commitment. The same completion-time summation as in~\eqref{eq:poisson-lb-common-prefix-comparison} shows that this imbalance accumulates over the $k$ visible bursts and creates an additional visible-prefix cost of order at least $ks-O_{\capac,s}(1)$. Since $k\ge5$ and the constants above are fixed, this is larger than $\min\{\Delta_E,\Delta_W,\capac\eta_0s/4\}$ after decreasing constants if necessary. Thus this case also incurs constant loss, regardless of which hidden continuation occurs.

Finally, conditioned on $\mathcal E_k$, both continuations $\mathcal E_k^E$ and $\mathcal E_k^W$ have probability bounded below by a positive constant depending only on $\capac$ and $s$. Indeed, both hidden windows have lengths bounded below by positive multiples of $s$, and the same no-arrival condition outside the hidden window appears in both continuations. Therefore, uniformly over the scheduler's entering history and internal state,
$$
\expect{
\sched(\inst_\off)-\opt(\inst_\off)
\mid
\mathcal E_k
}
\ge
\Omega_{\capac,s}(1).
$$
This proves~\Cref{lemma:hard-event-constant-loss}.

\section{Lookahead Predictions with Errors
(Omitted Proofs from~\Cref{sec:prediction_error})}
\label{app:prediction-error-proofs}

In this appendix, we prove \Cref{thm:imperfect-prediction}. We separate the analysis into two steps. First, we establish consistency bounds for prediction-assisted implementations of $\advname$ and $\algname$ under sufficiently accurate predictions. We then obtain robustness by comparing these bounds with the corresponding prediction-free guarantees and selecting the better algorithm based only on the known parameters $L$ and $\epsilon$.

Let $\advname_{\pred}$ denote the guarded prediction-assisted implementation of $\advname$ described in \Cref{app:error-random-shift}, and let $\algname_{\pred}$ denote the margin-certified prediction-assisted implementation of $\algname$ described in \Cref{app:error-prefix-lookahead}. The two consistency bounds that drive the proof are stated below.

\begin{lemma}
\label{lemma:error-random-shift-consistency}
Assume $\capac=\infty$, $\epsilon\in[0,s]$, and $L>2\epsilon$. Then, for every true instance $\inst$ and predicted instance $\inst'$ satisfying $|r_j-r'_j|\le\epsilon$ for every $j\in [n]$, the guarded prediction-assisted implementation of $\advname$ satisfies
\begin{equation}
\label{eq:error-random-shift-consistency}
\frac{\expect{\advname_{\pred}(\inst)}}{\opt(\inst)}
\le
1
+
4\frac{\epsilon}{s}
+
\frac{(3s+2\epsilon)(2s+2\epsilon)}
{s(L+2s)}.
\end{equation}
\end{lemma}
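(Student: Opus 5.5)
The plan is to rerun the proof of \Cref{thm:adv-lookahead-uncap-upper} with two changes. Jobs are classified as active or ignored by their \emph{predicted} release times $r'_j$, and every schedule computed on predicted jobs is transferred to the true jobs through the rank matching. Fix a realization of the offset $\tau$. The active core of each period is the active interval with an $\epsilon$-guard removed at both ends, so it has length $L-2\epsilon>0$. Let $\inst_\act$ be the true jobs whose predicted times fall in some core, let $\inst'_\act$ be their predicted counterparts, and let $\inst_\ign$ be the remaining jobs. Since the period has length $L+2s$ and the non-core part of a period has length $2s+2\epsilon$, the uniform shift gives $\expect{|\inst_\ign|}=\frac{2s+2\epsilon}{L+2s}|\inst|$. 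Exactly as before, \Cref{lemma:superadditive} gives $\opt(\inst)\ge\opt(\inst_\act)+\opt(\inst_\ign)$. So it suffices to show two things: the flow of the active jobs is at most $\opt(\inst_\act)+4\epsilon|\inst_\act|$, and each ignored job incurs flow at most $4s+2\epsilon$. Dividing by $\opt(\inst)\ge s|\inst|$ (\Cref{claim:opt-lb}) and taking expectations then yields \eqref{eq:error-random-shift-consistency}. Concretely, $4\epsilon|\inst_\act|/(s|\inst|)\le 4\epsilon/s$, and the ignored excess is $(4s+2\epsilon-s)\expect{|\inst_\ign|}$ over $s|\inst|$.

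The first ingredient is a transfer lemma. Take any schedule for a predicted instance and delay each batch's start to the maximum of its planned start and the latest true release among its jobs. Since $|r_j-r'_j|\le\epsilon$, every start moves by at most $\epsilon$. With $\capac=\infty$ and batches running in their original order, pushing the later batches back by up to $\epsilon$ keeps the schedule feasible. Each job's flow then changes by at most $\epsilon$ from its completion time and at most $\epsilon$ from its release, so by at most $2\epsilon$ in total. Applying this twice gives the active bound. The predicted-optimal schedule executed on $\inst_\act$ costs at most $\opt(\inst'_\act)+2\epsilon|\inst_\act|$. Applying the same transfer in the reverse direction to an optimal true schedule gives $\opt(\inst'_\act)\le\opt(\inst_\act)+2\epsilon|\inst_\act|$. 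Together these give the claimed $4\epsilon$ per active job.

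Next, non-interference between periods, which is where the guards are used. Every core job's true time lies in the original active interval. In the uncapacitated case, an optimal schedule for a predicted core whose last predicted release is at most $e_{\mathrm{core}}$ finishes by $e_{\mathrm{core}}+2s$. After transfer it finishes by $e_{\mathrm{core}}+2s+\epsilon$, which equals the end of the original active interval plus $2s$, namely the start of the next active interval. That is no later than the next core's start, so schedules from consecutive periods do not overlap.

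The step I expect to be the main obstacle is the ignored-job bound, specifically keeping it at $4s+2\epsilon$ and not something looser. Consider an ignored job whose predicted time lies in the $(2s+2\epsilon)$-long non-core stretch before the next core. Its true arrival is at least its predicted time minus $\epsilon$, so it waits at most $2s+2\epsilon$ until the next core begins. By the error-aware version of the stitching rule, it then joins the first transferred batch within $s$ of the core start, or a fresh batch is inserted; that fresh batch ends by the time the transferred schedule's first batch would need the server. So the job completes within $2s+2\epsilon+s+s$ of its arrival. The delicate point is checking that inserting this stitching batch neither delays the transferred core schedule beyond the accounted $\epsilon$ shift nor conflicts with the tail of the previous core's schedule. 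I would handle this by arguing that the previous schedule ends by the start of the new active interval, that the core starts $\epsilon$ later, and by putting the ignored jobs into the first core batch whenever that batch starts early, as in the perfect-lookahead case.
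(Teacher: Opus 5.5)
Most of your plan is sound: classifying jobs by predicted times, $\expect{|\inst_{\ign}|}=\frac{2s+2\epsilon}{L+2s}|\inst|$, superadditivity on the true instance, the two-way transfer giving $\opt(\inst_{\act})+4\epsilon|\inst_{\act}|$, and the non-interference computation. The gap is the step you flagged. The per-job bound $4s+2\epsilon$ for ignored jobs on the true instance is not established, and it fails for the rule you describe.

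Let $b'=b+L+2s$ be the start of the next active interval. An ignored job predicted just after the core end $b+L-\epsilon$ can truly arrive near $b+L-2\epsilon$. That is $2s+3\epsilon$ before the next core starts at $b'+\epsilon$, not $2s+2\epsilon$. For example, suppose the next core's first job is predicted at $b'+s-\delta$ and truly arrives $\epsilon$ later. Its transferred batch then starts within $s$ of the core start, the ignored job joins it, and its flow is about $4s+3\epsilon$. If the stitching is instead done on the predicted side, as in the paper, the flow can approach $4s+4\epsilon$.

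A second problem: jobs predicted in the start guard $[b',b'+\epsilon)$ can truly arrive as late as $b'+2\epsilon$. That is after the first transferred batch may have started, and after a fresh batch opened at the core start. Your true-side rule does not say where these jobs go. Delaying an active batch to include them also has to be reconciled with your active-job bound.

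The fix is to do all stitching in the predicted world, where every ignored predicted arrival precedes $b'+\epsilon$ and the previous core's predicted schedule ends by $b'-\epsilon$. Each ignored job joins the first core batch if it starts by $b'+\epsilon+s$; otherwise open a fresh batch at $b'+\epsilon$, which ends before that core batch. Then transfer the whole combined predicted schedule with your transfer argument.

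Each batch is delayed by at most $\epsilon$ whatever it contains, so your active bound is untouched. Each ignored job's true flow is at most $(4s+2\epsilon)+2\epsilon$. Your accounting has room for this, because you charge the $\opt$-comparison only to active jobs:
$\advname_{\pred}(\inst)\le\opt(\inst)+4\epsilon|\inst_{\act}|+(3s+4\epsilon)|\inst_{\ign}|=\opt(\inst)+4\epsilon n+3s|\inst_{\ign}|$.
After averaging over the offset and dividing by $ns$, this is at most the claimed bound.

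For comparison, the paper applies superadditivity to $\inst'$ and bounds the combined predicted schedule by $\opt(\inst')+(3s+2\epsilon)|\inst'_{\ign}|$. It then pays $2n\epsilon$ for the transfer and another $2n\epsilon$ for $\opt(\inst')\le\opt(\inst)+2n\epsilon$. Your route, once repaired this way, is slightly tighter on ignored jobs. As written, though, the $4s+2\epsilon$ claim and the handling of late-arriving guard jobs are a real hole.
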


\begin{lemma}
\label{lemma:error-prefix-lookahead-consistency}
Assume $0<\lambda s<\capac$. There exist constants $\kappa,\gamma>0$, depending only on $\lambda$, $s$, and $\capac$, such that, for every $L\ge4s$ and $\epsilon\in[0,s/8]$, the margin-certified prediction-assisted implementation of $\algname$ satisfies
\begin{equation}
\label{eq:error-prefix-lookahead-consistency}
\rate(\algname_{\pred})
\le
\rate(\sched^*)
+
\kappa e^{-\gamma L}
+
4\epsilon.
\end{equation}
\end{lemma}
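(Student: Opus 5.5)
The proof reuses the cycle analysis of \Cref{thm:poisson-upper} (\Cref{lemma:lookahead-cycle-reduction,lemma:lookahead-bad-probability,lemma:lookahead-fallback-second-moment}), changed in three places. First, a \emph{transfer estimate} turns good cycles from exactly optimal into optimal up to $4\epsilon$ per job. Second, a \emph{slack certificate} gives an exponential bound on the bad event despite the shortened horizon and the $2\epsilon$ margin. Third, a renewal argument is arranged so that it does not need the algorithm's cycles to be i.i.d., since the predictions may be adversarial.

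\textbf{Step 1 (transfer estimate).} Fix a prefix of $j+1$ jobs with true instance $\inst_j$ and rank-matched predicted instance $\inst'_j$. Executing a predicted-optimal schedule on the true jobs, starting each batch as soon as it is feasible, delays each batch by at most $\epsilon$. So every true job has flow at most $(T'-r'_j)+(r'_j-r_j)+\epsilon\le (T'-r'_j)+2\epsilon$, where $T'$ is its planned completion time. In the other direction, delaying a true-optimal schedule for $\inst_j$ by $\epsilon$ is feasible for $\inst'_j$, which gives $\opt(\inst'_j)\le\opt(\inst_j)+2\epsilon|\inst_j|$. Combining the two bounds, the transferred cost is at most $\opt(\inst_j)+4\epsilon|\inst_j|$. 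The acceptance test $T'_j+2\epsilon<r'_{j+1}$ makes the transferred schedule finish before $r_{j+1}$, as shown in \Cref{sec:prediction_error}. Hence accepted prefixes are genuine regeneration points, and on a good cycle
\[
F(\inst_\cyc)-\opt(\inst_\cyc)\le 4\epsilon N(\inst_\cyc).
\]

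\textbf{Step 2 (bad event via a slack certificate).} I mimic \Cref{lemma:lookahead-bad-probability}. Run $\sched_1^\on$ on the true arrivals, and let $H'_L$ be the event that some greedy clearing time $T\le L-3s-3\epsilon$ is followed by a true idle gap in $(2.5s,3s]$. On $H'_L$, shifting the greedy schedule by $\epsilon$ gives a feasible schedule for the predicted prefix with makespan at most $T+\epsilon$. Rank matching is what makes the prefixes correspond. By \Cref{lemma:opt-makespan}, this gives $T'_j\le T+\epsilon+2s$. The next predicted arrival satisfies $r'_{j+1}\ge r_{j+1}-\epsilon>T+2.5s-\epsilon$. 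Since $\epsilon\le s/8$, we get $T'_j+2\epsilon<r'_{j+1}$. The certifying predicted arrival lies before $T+3s+\epsilon\le L-2\epsilon$, so it is visible in the effective horizon. The remaining argument is the geometric-sum-of-greedy-cycles argument, now with success probability $e^{-2.5\lambda s}-e^{-3\lambda s}>0$, together with \Cref{lemma:s1-cycle-tail}. It gives
\[
\prob{\bad}\le 4\exp\bigl(-\eta'(L-3s-3\epsilon)/s\bigr),
\]
where the assumptions $L\ge4s$ and $\epsilon\le s/8$ absorb the offsets into constants. This bound holds pointwise for every admissible prediction sequence, because $H'_L$ depends only on the true arrivals.

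\textbf{Step 3 (rate comparison without i.i.d.\ cycles).} This is the main obstacle. If the predictions are arbitrary functions of the true sequence, the algorithm's cycles need not be i.i.d. What is still true is that each cycle starts from an empty system, and its true future is a fresh Poisson process conditional on the past. The bounds of Steps 1 and 2, and \Cref{lemma:lookahead-fallback-second-moment} (whose enlarged-greedy domination uses only true arrivals), hold conditionally on the past, uniformly over predictions.

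I then compare costs cycle by cycle. By \Cref{lemma:superadditive}, $\opt(\inst(n))$ is at least the sum of the cycle optima over the completed cycles among the first $n$ jobs. The algorithm's excess on cycle $i$ is at most $4\epsilon N_i+\mathbf 1(\bad_i)\,\sched_1^\on(\inst^i_\cyc)$. Each conditional expectation of the second term is at most $\sqrt{\prob{\bad}\cdot\frac{240\capac^2}{\gamma_{\gr}^4}(L+s)^2}$ by Cauchy--Schwarz. Since there are at most $n$ cycles, summing, dividing by $n$, and controlling the last incomplete cycle by the same second-moment bound gives
\[
\limsup_n \frac{\expect{\algname_{\pred}(n)}}{n}\le \rate(\sched^*)+4\epsilon+\kappa e^{-\gamma L},
\]
after absorbing the polynomial prefactor into the exponential as in \Cref{cor:lookahead-upper-bound}. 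If the rate must exist as an ordinary limit, I would add the mild assumption that predictions are generated stationarily per cycle, which restores the renewal-reward identity of \Cref{lemma:lookahead-cycle-reduction}. Otherwise I would interpret $\rate$ as a $\limsup$ here.
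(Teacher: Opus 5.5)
Your Steps 1 and 2 are the paper's argument. You use the same transfer bound, $\opt(\inst'_\cyc)+2\epsilon N\le\opt(\inst_\cyc)+4\epsilon N$ via \Cref{lemma:prediction-shifting}, and the same margin test. You also use the same certificate: a greedy clearing time $T\le L-3s-3\epsilon$ followed by a true gap in $(5s/2,3s]$, with success probability $e^{-5\lambda s/2}-e^{-3\lambda s}$ and the geometric-sum tail. Step 3 is where you depart from the paper. The paper reuses the renewal-reward reduction of \Cref{lemma:lookahead-cycle-reduction}, which tacitly treats the cycles of $\algname_{\pred}$ as i.i.d. You instead compare pathwise, $\algname_{\pred}(n)\le\opt(\inst(n))+\sum_{i\le K(n)}\bigl(F_i-\opt(\inst^i_\cyc)\bigr)+(\text{residual})$, and get a $\limsup$ bound. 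That route suits adversarial predictions better and does not need the rate to exist a priori.

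However, your justification for Step 3 does not hold in the generality you claim. If predictions may depend on future true arrivals, the accepted prefix depends on them, and so does the rank at which the next cycle starts. Cycle boundaries are then not stopping times for the true arrivals, so ``the true future is fresh Poisson conditional on the past'' fails. An adversary could steer boundaries toward continuations with no certificate, so you have no uniform bound on $\prob{\bad_i\mid\text{past}}$.

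You do not need conditioning. For each rank $m$, define $H'_L(m)$ and $Y_m$ from $(r_{m+k}-r_m)_{k\ge0}$ alone. Here $Y_m$ is the enlarged-greedy flow from an empty system at $r_m$ up to its first clearing at or after $r_m+L$. Let $\sigma_i$ be the starting rank of cycle $i$. Your Step 2 argument and the domination in \Cref{lemma:lookahead-fallback-second-moment} then give, pathwise and for every admissible prediction sequence, $\mathbf 1(\bad_i)\,\sched_1^\on(\inst^i_\cyc)\le\mathbf 1(\overline{H'_L}(\sigma_i))\,Y_{\sigma_i}$. Since the $\sigma_i$ are distinct, the sum over completed cycles is at most $\sum_{m\le n}\mathbf 1(\overline{H'_L}(m))\,Y_m$. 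For each fixed $m$, the interarrival times after $r_m$ are i.i.d.\ exponentials, so each term has expectation at most $\sqrt{\prob{\overline{H'_L}}\,\expect{Y^2}}$. The residual cycle is bounded by $\max_{m\le n}$ of a square-integrable per-rank quantity, hence is $O(\sqrt n)$ in expectation. This closes Step 3 with no assumption on how predictions are generated, and it also repairs the unstated i.i.d.\ assumption in the paper's one-line reduction.
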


Before proving these lemmas, we show how they imply the full consistency and robustness guarantees of \Cref{thm:imperfect-prediction}, which we restate for the reader's convenience.

\predictionerror*

\begin{proof}
We begin with the uncapacitated adversarial setting. Define
\begin{equation}
\label{eq:error-random-shift-bound-definition}
A_{\mathrm{RS}}(L,\epsilon)
:=
4\frac{\epsilon}{s}
+
\frac{(3s+2\epsilon)(2s+2\epsilon)}
{s(L+2s)}.
\end{equation}
The final error-aware implementation $\mathrm{Error}\advname$ runs $\advname_{\pred}$ when
$$
\epsilon\le s,
\qquad
L>2\epsilon,
\qquad\text{and}\qquad
A_{\mathrm{RS}}(L,\epsilon)\le1,
$$
and otherwise ignores the predictions and runs the greedy scheduler $\greedy$ on the true instance.

When the prediction-assisted implementation is used, \Cref{lemma:error-random-shift-consistency} gives competitive ratio at most
$1+A_{\mathrm{RS}}(L,\epsilon)$. When greedy is used, \Cref{thm:adv-online-bounds} gives competitive ratio at most $2$. Moreover, whenever $\epsilon\le s$,
$$
A_{\mathrm{RS}}(L,\epsilon)
\le
4\frac{\epsilon}{s}
+
20\frac{s}{L+2s}
\le
4\frac{\epsilon}{s}
+
20\frac{s}{L}.
$$
If the assumptions of \Cref{lemma:error-random-shift-consistency} hold but
$A_{\mathrm{RS}}(L,\epsilon)>1$, then
$$
1
<
A_{\mathrm{RS}}(L,\epsilon)
\le
20\left(
\frac{s}{L}
+
\frac{\epsilon}{s}
\right),
$$
so the additive loss of $1$ in the greedy guarantee is absorbed by the
$O(s/L+\epsilon/s)$ term.

The remaining parameter regimes are also covered by the constant robust guarantee. If $\epsilon>s$, then
$$
\frac{s}{L}+\frac{\epsilon}{s}>1.
$$
If instead $L\le2\epsilon$, then
$$
\frac{s}{L}+\frac{\epsilon}{s}
\ge
\frac{s}{L}+\frac{L}{2s}
\ge
\sqrt{2}.
$$
Thus, after increasing the absolute constant hidden by the $O(\cdot)$ notation, the prediction-free ratio $2$ is absorbed whenever the prediction-assisted implementation is not used. Consequently,
$$
\frac{\expect{\mathrm{Error}\advname(\inst)}}{\opt(\inst)}
\le
1+
\min\left\{
O\left(
\frac{s}{L}
+
\frac{\epsilon}{s}
\right),
1
\right\}.
$$

We next consider the capacitated Poisson setting. Let
$$
k^*
\in
\mathrm{argmin}_{k\in[\capac]}
\rate(\sched_k^{\on})
$$
be the best prediction-free threshold policy, whose optimality among online policies follows from \Cref{thm:online-thresholds}, and define its gap from the offline optimum by
$$
B_{\lambda,s,\capac}
:=
\rate(\sched_{k^*}^{\on})
-
\rate(\sched^*).
$$
Since $0<\lambda s<\capac$, this is a finite constant depending only on $\lambda$, $s$, and $\capac$.

The final error-aware implementation $\mathrm{Error}\algname$ runs $\algname_{\pred}$ when
$$
L\ge4s,
\qquad
\epsilon\le\frac{s}{8},
\qquad\text{and}\qquad
\kappa e^{-\gamma L}+4\epsilon
\le
B_{\lambda,s,\capac},
$$
and otherwise ignores the predictions and runs $\sched_{k^*}^{\on}$. Hence, whenever the assumptions of \Cref{lemma:error-prefix-lookahead-consistency} hold,
\begin{equation}
\label{eq:error-prefix-min-under-assumptions}
\rate(\mathrm{Error}\algname)
\le
\rate(\sched^*)
+
\min\left\{
\kappa e^{-\gamma L}+4\epsilon,
B_{\lambda,s,\capac}
\right\}.
\end{equation}

It remains to absorb the parameter regimes outside the consistency lemma into the robust term. Choose constants $\overline{\kappa}\ge\kappa$ and $c\ge4$ sufficiently large that
$$
\overline{\kappa}e^{-4\gamma s}
\ge
B_{\lambda,s,\capac}
\qquad\text{and}\qquad
c\frac{s}{8}
\ge
B_{\lambda,s,\capac}.
$$
If $L<4s$, then
$$
\overline{\kappa}e^{-\gamma L}
\ge
\overline{\kappa}e^{-4\gamma s}
\ge
B_{\lambda,s,\capac}.
$$
If $\epsilon>s/8$, then
$$
c\epsilon
>
c\frac{s}{8}
\ge
B_{\lambda,s,\capac}.
$$
Therefore, in either excluded regime, the first argument of the minimum below is at least the prediction-free gap, and the algorithm obtains the robust guarantee by running $\sched_{k^*}^{\on}$. Combining this observation with \eqref{eq:error-prefix-min-under-assumptions} gives, for all $L$ and $\epsilon$,
$$
\rate(\mathrm{Error}\algname)
\le
\rate(\sched^*)
+
\min\left\{
\overline{\kappa}e^{-\gamma L}
+
c\epsilon,
B_{\lambda,s,\capac}
\right\}.
$$
Equivalently,
$$
\rate(\mathrm{Error}\algname)
\le
\rate(\sched^*)
+
\min\left\{
O_{\lambda,s,\capac}(1)
\left(
e^{-\Omega_{\lambda,s,\capac}(L)}
+
\epsilon
\right),
O_{\lambda,s,\capac}(1)
\right\}.
$$
This proves \Cref{thm:imperfect-prediction}.
\end{proof}

It remains to prove
\Cref{lemma:error-random-shift-consistency,lemma:error-prefix-lookahead-consistency}. We first establish a general schedule-transfer lemma used by both algorithms
(\Cref{app:prediction-shifting-proof}). We then prove the consistency guarantee for the guarded implementation of $\advname$
(\Cref{app:error-random-shift}) and finally analyze the margin-certified implementation of $\algname$
(\Cref{app:error-prefix-lookahead}).

\subsection{Transferring Predicted Schedules}
\label{app:prediction-shifting-proof}

We begin with the schedule-transfer property used throughout this appendix.

\begin{lemma}
\label{lemma:prediction-shifting}
Let $\inst=\{r_j\}_{j=1}^n$ and $\inst'=\{r'_j\}_{j=1}^n$ be any two ordered sequences of arrival times satisfying
$$
|r_j-r'_j|\le\epsilon
\qquad
\text{for every }j\in[n].
$$
Let $\sched'$ be any feasible schedule for $\inst'$. Construct a schedule $\sched$ for $\inst$ by preserving the same ordered partition of ranks into batches and starting each batch as soon as the machine is free and all of its corresponding true jobs have arrived. Then every batch under $\sched$ completes at most $\epsilon$ later than the corresponding batch under $\sched'$, and
$$
\sched(\inst)
\le
\sched'(\inst')+2n\epsilon.
$$
Consequently,
$$
|\opt(\inst)-\opt(\inst')|
\le
2n\epsilon.
$$
\end{lemma}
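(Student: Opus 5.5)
The plan is to prove the per-batch delay claim by induction over the batches of $\sched'$ in order of their start times. Let the batches of $\sched'$ be $B_1,\dots,B_m$ with start times $t'_1<t'_2<\cdots<t'_m$; consecutive starts satisfy $t'_{i+1}\ge t'_i+s$ because the machine is non-preemptive. Each $B_i$ is a set of ranks, and $\sched$ processes the true jobs with those ranks in the same order. By construction, its start times are
\[
t_1=\max_{j\in B_1} r_j,
\qquad
t_{i+1}=\max\Big\{t_i+s,\ \max_{j\in B_{i+1}} r_j\Big\}.
\]
The inductive claim is $t_i\le t'_i+\epsilon$ for every $i$. For the base case, feasibility of $\sched'$ gives $r'_j\le t'_1$ for $j\in B_1$, and hence $r_j\le r'_j+\epsilon\le t'_1+\epsilon$. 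For the inductive step, $t_i+s\le t'_i+\epsilon+s\le t'_{i+1}+\epsilon$, and each $j\in B_{i+1}$ satisfies $r_j\le r'_j+\epsilon\le t'_{i+1}+\epsilon$. Both terms in the maximum are therefore at most $t'_{i+1}+\epsilon$. The schedule $\sched$ is feasible: batch sizes are unchanged (so at most $\capac$), no batch starts before its jobs arrive, and batches do not overlap. Since completion times are start times plus $s$, every batch completes at most $\epsilon$ later than its counterpart, which is the first claim.

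For the flow-time bound, I would compare job by job, matching rank $j$ in $\sched$ to rank $j$ in $\sched'$. Each term satisfies
\[
T_\sched(j)-r_j\le \big(T_{\sched'}(j)+\epsilon\big)-\big(r'_j-\epsilon\big)=\big(T_{\sched'}(j)-r'_j\big)+2\epsilon.
\]
Summing over the $n$ ranks gives $\sched(\inst)\le\sched'(\inst')+2n\epsilon$.

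For the consequence about the optima, apply the construction with $\sched'$ an optimal schedule for $\inst'$; this gives $\opt(\inst)\le\sched(\inst)\le\opt(\inst')+2n\epsilon$. The hypothesis $|r_j-r'_j|\le\epsilon$ is symmetric in $\inst$ and $\inst'$, so the same argument with the roles swapped gives $\opt(\inst')\le\opt(\inst)+2n\epsilon$. Together these yield $|\opt(\inst)-\opt(\inst')|\le 2n\epsilon$.

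The only delicate point is that ties among release times might make rank matching ambiguous. The hypothesis is stated for a fixed ordered indexing, so I would use that same indexing throughout, and nothing in the argument requires the indexing to be unique. I would also note that assigning completion slots by rank rather than by job identity is harmless, since total flow time depends only on the multiset of completion times minus the sum of release times.
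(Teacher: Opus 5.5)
Your proposal is correct and follows essentially the same argument as the paper: an induction over the batches of $\sched'$ in processing order showing each start time is delayed by at most $\epsilon$ (using feasibility of $\sched'$ for the arrival term and $t'_{i+1}\ge t'_i+s$ for the machine-free term), a per-job bound of $2\epsilon$ summed over ranks, and the same construction applied to an optimal schedule in each direction to get $|\opt(\inst)-\opt(\inst')|\le 2n\epsilon$. Your explicit feasibility check and your remark on ties and rank-based completion slots are small additions the paper leaves implicit.
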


\begin{proof}
Let $Q_1,\ldots,Q_m$ be the batches of $\sched'$, in the order in which they are processed. Let $a'_\ell$ be the start time of $Q_\ell$ under $\sched'$, and let $a_\ell$ be the start time of the corresponding batch under $\sched$. We prove by induction that
$$
a_\ell\le a'_\ell+\epsilon
$$
for every $\ell\in[m]$.

Since $\sched'$ is feasible,
$$
a'_\ell\ge\max_{j\in Q_\ell}r'_j.
$$
Therefore, every corresponding true job has arrived by time
$$
\max_{j\in Q_\ell}r_j
\le
\max_{j\in Q_\ell}r'_j+\epsilon
\le
a'_\ell+\epsilon.
$$
For the first batch, this proves the claim. For $\ell>1$, the induction hypothesis also gives
$$
a_{\ell-1}+s
\le
a'_{\ell-1}+s+\epsilon
\le
a'_\ell+\epsilon,
$$
so the machine is free by time $a'_\ell+\epsilon$. Hence $Q_\ell$ can start by that time.

Now fix a job $j\in Q_\ell$, and let $T_j$ and $T'_j$ denote its completion times under $\sched$ and $\sched'$, respectively. The start-time bound gives
$$
T_j\le T'_j+\epsilon.
$$
Since $r_j\ge r'_j-\epsilon$, it follows that
$$
T_j-r_j
\le
T'_j-r'_j+2\epsilon.
$$
Summing over all jobs yields
$$
\sched(\inst)
\le
\sched'(\inst')+2n\epsilon.
$$
Applying the same argument to an offline-optimal schedule for $\inst'$ gives
$$
\opt(\inst)
\le
\opt(\inst')+2n\epsilon.
$$
Interchanging $\inst$ and $\inst'$ gives the reverse inequality and proves the final claim.
\end{proof}

\subsection{Error-Aware Random Shift
(Proof of~\Cref{lemma:error-random-shift-consistency})}
\label{app:error-random-shift}

Fix a true instance $\inst$, a predicted instance $\inst'$, and a realization of the random offset used by $\advname$. Let
$$
H:=L+2s
$$
denote the period length. The guarded implementation retains the same periodic decomposition as $\advname$, but replaces each original active interval
$$
[b,b+L]
$$
by the active core
$$
[b+\epsilon,b+L-\epsilon].
$$
All jobs whose predicted arrivals lie in an active core are called active, and all remaining jobs are called ignored. Since $L>2\epsilon$, every active core has positive length $L-2\epsilon$, and the gap between consecutive active cores has length
$$
H-(L-2\epsilon)=2s+2\epsilon.
$$

At the beginning $b$ of an original active interval, the algorithm's
$L$-lookahead window reveals every predicted arrival in its active core. The algorithm computes an offline-optimal schedule for these predicted active jobs. Ignored predicted jobs are handled using the stitching rule of $\advname$: each ignored job is inserted into the first batch after its arrival generated by the next active-core schedule. If no such batch starts within the first $s$ time units of the next active core, the algorithm starts a fresh stitching batch at the beginning of that core.

We first verify that the predicted schedules constructed for different active cores do not interfere after being transferred to the true instance. Fix an active core
$$
[b+\epsilon,b+L-\epsilon].
$$
Every predicted job in this core arrives by time $b+L-\epsilon$. As in the perfect-prediction analysis of $\advname$, an offline-optimal schedule for these jobs completes by time
$$
b+L-\epsilon+2s
=
b+H-\epsilon.
$$
Indeed, in the uncapacitated setting, the last batch started before the final arrival completes within $s$ time units of that arrival, after which all remaining jobs can be processed together in one additional batch.

By \Cref{lemma:prediction-shifting}, transferring this schedule to the corresponding true jobs delays every batch completion by at most $\epsilon$. The transferred schedule therefore completes by time
$$
b+H,
$$
which is the beginning of the next original active interval. Hence the transferred schedules associated with consecutive active cores remain disjoint.

We now analyze the flow time of the schedule constructed on the predicted instance. Let $\inst'_{\act}$ and $\inst'_{\ign}$ denote the subinstances of active and ignored predicted jobs, respectively, and let $\sched'$ denote the complete predicted schedule constructed by the guarded implementation.

For each active core, $\sched'$ uses an offline-optimal schedule for the jobs in that core. These schedules are mutually nonoverlapping, and inserting ignored jobs into their batches does not affect the completion times of active jobs because $\capac=\infty$. Consequently, the flow time paid by $\sched'$ on active jobs is at most
$$
\opt(\inst'_{\act}).
$$

Every ignored predicted job incurs flow time at most
$$
4s+2\epsilon.
$$
Indeed, such a job waits for at most $2s+2\epsilon$ until the next active core begins, for at most an additional $s$ until either an active batch or a stitching batch starts, and then for one service time $s$. Therefore,
$$
\sched'(\inst')
\le
\opt(\inst'_{\act})
+
(4s+2\epsilon)|\inst'_{\ign}|.
$$
By superadditivity of the offline optimum (\Cref{lemma:superadditive}), we have
$$
\opt(\inst')
\ge
\opt(\inst'_{\act})
+
\opt(\inst'_{\ign}).
$$
Since every job incurs flow time at least $s$ under every feasible schedule,
$$
\opt(\inst'_{\ign})
\ge
s|\inst'_{\ign}|.
$$
Combining these inequalities gives
\begin{align}
\sched'(\inst')
&\le
\opt(\inst')
+
(4s+2\epsilon)|\inst'_{\ign}|
-
\opt(\inst'_{\ign}) \notag\\
&\le
\opt(\inst')
+
(3s+2\epsilon)|\inst'_{\ign}|.
\label{eq:error-random-shift-predicted-bound}
\end{align}

The schedule executed by $\advname_{\pred}$ on the true instance is obtained by transferring the ordered predicted batch structure of $\sched'$ to the corresponding true jobs. Applying \Cref{lemma:prediction-shifting} and then comparing the two offline optima yields
\begin{align}
\advname_{\pred}(\inst)
&\le
\sched'(\inst')+2n\epsilon \notag\\
&\le
\opt(\inst')
+
(3s+2\epsilon)|\inst'_{\ign}|
+
2n\epsilon \notag\\
&\le
\opt(\inst)
+
(3s+2\epsilon)|\inst'_{\ign}|
+
4n\epsilon.
\label{eq:error-random-shift-fixed-offset}
\end{align}

It remains to average over the random offset. Since the offset is uniform over a period of length $H=L+2s$, the phase of every fixed predicted arrival is uniform over that period. The total portion of each period lying outside the active core has length $2s+2\epsilon$. Hence, for every predicted job,
$$
\Pr[\text{the job is ignored}]
=
\frac{2s+2\epsilon}{L+2s},
$$
and therefore
$$
\expect{|\inst'_{\ign}|}
=
n\frac{2s+2\epsilon}{L+2s}.
$$
Taking expectations in
\eqref{eq:error-random-shift-fixed-offset} gives
$$
\expect{\advname_{\pred}(\inst)}
\le
\opt(\inst)
+
4n\epsilon
+
n(3s+2\epsilon)
\frac{2s+2\epsilon}{L+2s}.
$$
Finally, every job contributes at least $s$ to the offline-optimal flow time, so
$$
\opt(\inst)\ge ns.
$$
Dividing by $\opt(\inst)$ yields
$$
\frac{\expect{\advname_{\pred}(\inst)}}{\opt(\inst)}
\le
1
+
4\frac{\epsilon}{s}
+
\frac{(3s+2\epsilon)(2s+2\epsilon)}
{s(L+2s)},
$$
which proves
\Cref{lemma:error-random-shift-consistency}.

\subsection{Error-Aware Prefix Lookahead
(Proof of~\Cref{lemma:error-prefix-lookahead-consistency})}
\label{app:error-prefix-lookahead}

Assume throughout that $0<\lambda s<\capac$, $L\ge4s$, and
$\epsilon\in[0,s/8]$. The predicted arrival sequence need not itself be
Poisson, so all probabilistic arguments below are applied to the true
Poisson arrival sequence.

\subsubsection{The Error-Aware Algorithm}

Consider a cycle beginning with a true arrival at time $a_0$, and shift
time so that $a_0=0$. The error-aware implementation examines predicted
arrivals only within the first
$$
L-2\epsilon
$$
units of the lookahead window. For every predicted prefix beginning with
the rank of the arrival at time $0$, let $T'_j$ be the minimum makespan
among offline-optimal schedules for that predicted prefix, and let
$r'_{j+1}$ be the next predicted arrival. The prefix is accepted only if
$r'_{j+1}$ is visible within the shortened horizon and
\begin{equation}
\label{eq:error-prefix-certificate}
T'_j+2\epsilon<r'_{j+1}.
\end{equation}
The algorithm then transfers an offline-optimal schedule attaining
makespan $T'_j$ to the corresponding true jobs, using the execution rule
of \Cref{lemma:prediction-shifting}.

Every accepted prefix is a valid certificate for the true instance.
Indeed, the transferred schedule completes by time at most
$T'_j+\epsilon$, whereas
$$
r_{j+1}\ge r'_{j+1}-\epsilon.
$$
Therefore, \eqref{eq:error-prefix-certificate} implies
$$
T'_j+\epsilon
<
r'_{j+1}-\epsilon
\le
r_{j+1}.
$$
Thus the transferred schedule clears the true system before the next
true arrival.

If no prefix satisfies
\eqref{eq:error-prefix-certificate}, the algorithm follows the fallback
rule of $\algname$: it runs $\sched^{\on}_1$ on the true arrivals until
all jobs represented in the shortened search window have arrived and
the system has subsequently become empty.

\subsubsection{Probability of Fallback}

We next show that the probability of entering fallback mode remains
exponentially small in $L$. Run $\sched^{\on}_1$ on the true arrival
sequence, starting with the first job at time $0$. Consider the event
that $\sched^{\on}_1$ clears the system at some time
\begin{equation}
\label{eq:error-prefix-clearing-horizon}
T\le L-3s-3\epsilon
\end{equation}
and that the next true arrival, at time $R$, satisfies
\begin{equation}
\label{eq:error-prefix-certifying-gap}
T+\frac{5s}{2}<R\le T+3s.
\end{equation}
We claim that this event guarantees that the error-aware algorithm finds
a valid predicted certificate.

Let $\inst_T$ be the true prefix consisting of all jobs arriving up to
time $T$, and let $\inst'_T$ be the corresponding predicted prefix.
Since $\sched^{\on}_1$ clears $\inst_T$ at time $T$, it gives a feasible
schedule for $\inst_T$ with makespan $T$. Transferring this schedule to
$\inst'_T$ using \Cref{lemma:prediction-shifting} gives a feasible
predicted schedule with makespan at most $T+\epsilon$. By the makespan
comparison of \Cref{lemma:opt-makespan}, there is therefore an offline-optimal schedule
for $\inst'_T$ with makespan at most
$$
T+2s+\epsilon.
$$
Consequently, the minimum makespan $T'_j$ among offline-optimal
schedules for this predicted prefix satisfies
$$
T'_j\le T+2s+\epsilon.
$$

The next predicted arrival satisfies
$$
r'_{j+1}\ge R-\epsilon.
$$
Since $\epsilon\le s/8$, we have
$$
2s+4\epsilon\le\frac{5s}{2}.
$$
Using \eqref{eq:error-prefix-certifying-gap}, it follows that
\begin{align*}
T'_j+2\epsilon
&\le
T+2s+3\epsilon\\
&<
R-\epsilon\\
&\le
r'_{j+1}.
\end{align*}
Hence the predicted prefix satisfies
\eqref{eq:error-prefix-certificate}.

It remains to verify that the next predicted arrival is visible within
the shortened search horizon. By
\eqref{eq:error-prefix-clearing-horizon} and
\eqref{eq:error-prefix-certifying-gap},
$$
r'_{j+1}
\le
R+\epsilon
\le
T+3s+\epsilon
\le
L-2\epsilon.
$$
Thus the entire predicted certificate is visible to the algorithm.

We now bound the probability that no such certificate occurs. At every
clearing time of $\sched^{\on}_1$, the idle time $G$ until the next true
arrival is exponentially distributed with rate $\lambda$, independently
of the past. Therefore, each clearing time has an independent
certifying probability
$$
p_{\mathrm{cert}}
:=
\Pr\left[
G\in\left(\frac{5s}{2},3s\right]
\right]
=
e^{-5\lambda s/2}-e^{-3\lambda s}
>0.
$$
The proof of \Cref{thm:poisson-upper} shows that the durations of the
regenerative cycles of $\sched^{\on}_1$ have an exponential moment.
Applying the same geometric-sum argument with success probability
$p_{\mathrm{cert}}$ therefore gives constants
$c_0,\eta_0>0$, depending only on $\lambda$, $s$, and $\capac$, such
that the probability of observing no certifying clearing time by time
$x$ is at most
$$
c_0e^{-\eta_0x}.
$$

Taking
$$
x:=L-3s-3\epsilon,
$$
and using $L\ge4s$ and $\epsilon\le s/8$, we obtain
$$
x
\ge
L-\frac{27s}{8}
\ge
\frac{5L}{32}.
$$
Hence, if $\bad$ denotes the event that no valid predicted certificate
is found,
\begin{equation}
\label{eq:error-prefix-fallback-probability}
\Pr[\bad]
\le
c_0
\exp\left(
-\frac{5\eta_0}{32}L
\right).
\end{equation}
Thus there exist constants $\kappa_0,\gamma_0>0$, depending only on
$\lambda$, $s$, and $\capac$, such that
$$
\Pr[\bad]\le\kappa_0e^{-\gamma_0L}.
$$

\subsubsection{Rate Bound}

We finally compare the rate of $\algname_{\pred}$ with the
offline-optimal rate. Let $\inst_{\cyc}$ denote the true instance in one
cycle, let
$$
N(\inst_{\cyc}):=|\inst_{\cyc}|,
$$
and let $F_{\pred}(\inst_{\cyc})$ be the flow time incurred by
$\algname_{\pred}$ during that cycle.

Suppose first that a valid predicted certificate is found. Because the
transferred schedule clears the true system before the next true
arrival, the accepted prefix is exactly the cycle instance. Let
$\inst'_{\cyc}$ be the corresponding predicted prefix. Applying
\Cref{lemma:prediction-shifting} to the predicted offline-optimal
schedule gives
\begin{align}
F_{\pred}(\inst_{\cyc})
&\le
\opt(\inst'_{\cyc})
+
2\epsilon N(\inst_{\cyc}) \notag\\
&\le
\opt(\inst_{\cyc})
+
4\epsilon N(\inst_{\cyc}).
\label{eq:error-prefix-good-cycle}
\end{align}

On a fallback cycle, the algorithm runs $\sched^{\on}_1$ on the true
arrivals until the system clears. Therefore, writing
$\sched^{\on}_1(\inst_{\cyc})$ for the resulting flow time, we have
pointwise
\begin{equation}
\label{eq:error-prefix-cycle-loss}
F_{\pred}(\inst_{\cyc})
-
\opt(\inst_{\cyc})
\le
4\epsilon N(\inst_{\cyc})
+
\mathbf{1}_{\bad}
\,\sched^{\on}_1(\inst_{\cyc}).
\end{equation}

As in the proof of \Cref{thm:poisson-upper}, the cycle decomposition and
superadditivity of the offline optimum imply
\begin{align}
\rate(\algname_{\pred})-\rate(\sched^*)
&\le
\frac{
\expect{
F_{\pred}(\inst_{\cyc})
-
\opt(\inst_{\cyc})
}
}{
\expect{N(\inst_{\cyc})}
} \notag\\
&\le
4\epsilon
+
\frac{
\expect{
\mathbf{1}_{\bad}
\,\sched^{\on}_1(\inst_{\cyc})
}
}{
\expect{N(\inst_{\cyc})}
}.
\label{eq:error-prefix-rate-reduction}
\end{align}
Since every cycle contains at least one job,
$\expect{N(\inst_{\cyc})}\ge1$. By Cauchy--Schwarz,
\begin{equation}
\label{eq:error-prefix-cauchy-schwarz}
\expect{
\mathbbm{1}_{\bad}
\,\sched^{\on}_1(\inst_{\cyc})
}
\le
\sqrt{
\Pr[\bad]\,
\expect{
\bigl(
\sched^{\on}_1(\inst_{\cyc})
\bigr)^2
}
}.
\end{equation}

The same enlarged-greedy domination used in the proof of
\Cref{thm:poisson-upper} applies here. Indeed, every true job represented
in the shortened predicted horizon arrives by time at most
$$
L-2\epsilon+\epsilon<L.
$$
Consequently, there is a constant $K_0>0$, depending only on
$\lambda$, $s$, and $\capac$, such that
\begin{equation}
\label{eq:error-prefix-fallback-second-moment}
\expect{
\bigl(
\sched^{\on}_1(\inst_{\cyc})
\bigr)^2
}
\le
K_0(L+s)^2.
\end{equation}
Combining
\eqref{eq:error-prefix-fallback-probability},
\eqref{eq:error-prefix-rate-reduction},
\eqref{eq:error-prefix-cauchy-schwarz}, and
\eqref{eq:error-prefix-fallback-second-moment} gives
$$
\rate(\algname_{\pred})-\rate(\sched^*)
\le
4\epsilon
+
\sqrt{\kappa_0K_0}\,
(L+s)e^{-\gamma_0L/2}.
$$
The polynomial factor can be absorbed into the exponential. In
particular, after decreasing the exponent and increasing the
multiplicative constant, there exist constants $\kappa,\gamma>0$,
depending only on $\lambda$, $s$, and $\capac$, such that
$$
\sqrt{\kappa_0K_0}\,
(L+s)e^{-\gamma_0L/2}
\le
\kappa e^{-\gamma L}
$$
for every $L\ge4s$. Therefore,
$$
\rate(\algname_{\pred})
\le
\rate(\sched^*)
+
\kappa e^{-\gamma L}
+
4\epsilon,
$$
which proves
\Cref{lemma:error-prefix-lookahead-consistency}.

\section{Parameter Generalization}
\label{sec:parameter-generalization}

To evaluate whether the real-trace results depend on the parameter choices used
in the main experiments, we independently vary the batch capacity $\capac$ and setup time s. Figure~\ref{fig:capacity-generalization-D} fixes
$s=0.5\widetilde{\Delta}$ and varies the capacity multiplier in
$\capac=c\hat{\lambda}s$ for $c\in\{1,3,5\}$, where
$\widetilde{\Delta}$ is the median inter-arrival time and $\hat{\lambda}$ is
the empirical arrival rate of the trace instance. Figure~\ref{fig:setup-generalization-D} fixes the capacity multiplier at $c=3$ and varies the setup time over $s\in\{0.25,1,2\}\widetilde{\Delta}$. Across all six settings, the same qualitative trends hold: lookahead consistently improves over the best
fixed-wait baseline, and larger lookahead windows move the achieved flow time
closer to the offline optimum. These results show that our conclusions are not
specific to the parameterization used in the main evaluation.

\begin{figure}[!htbp]
    \centering

    \begin{subfigure}[t]{0.31\textwidth}
        \centering
        \includegraphics[width=\linewidth]{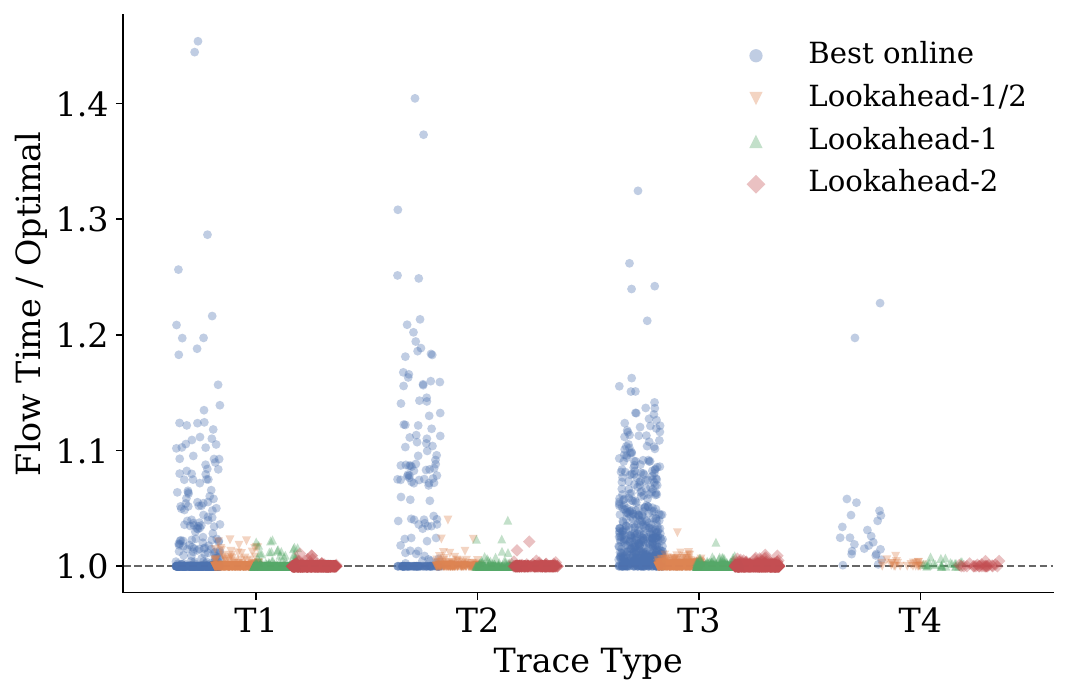}
        \caption{$\capac = 1\times \hat{\lambda}s$}
        \label{fig:capacity-1x-D}
    \end{subfigure}
    \hfill
    \begin{subfigure}[t]{0.31\textwidth}
        \centering
        \includegraphics[width=\linewidth]{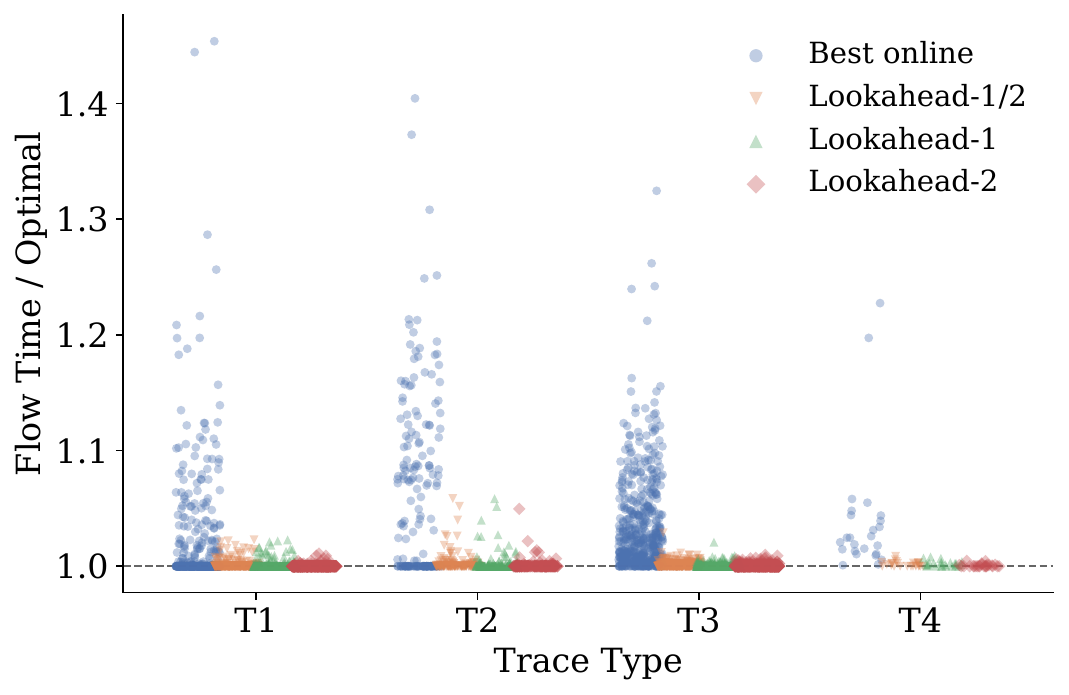}
        \caption{$\capac = 3\times \hat{\lambda}s$}
        \label{fig:capacity-3x-D}
    \end{subfigure}
    \hfill
    \begin{subfigure}[t]{0.31\textwidth}
        \centering
        \includegraphics[width=\linewidth]{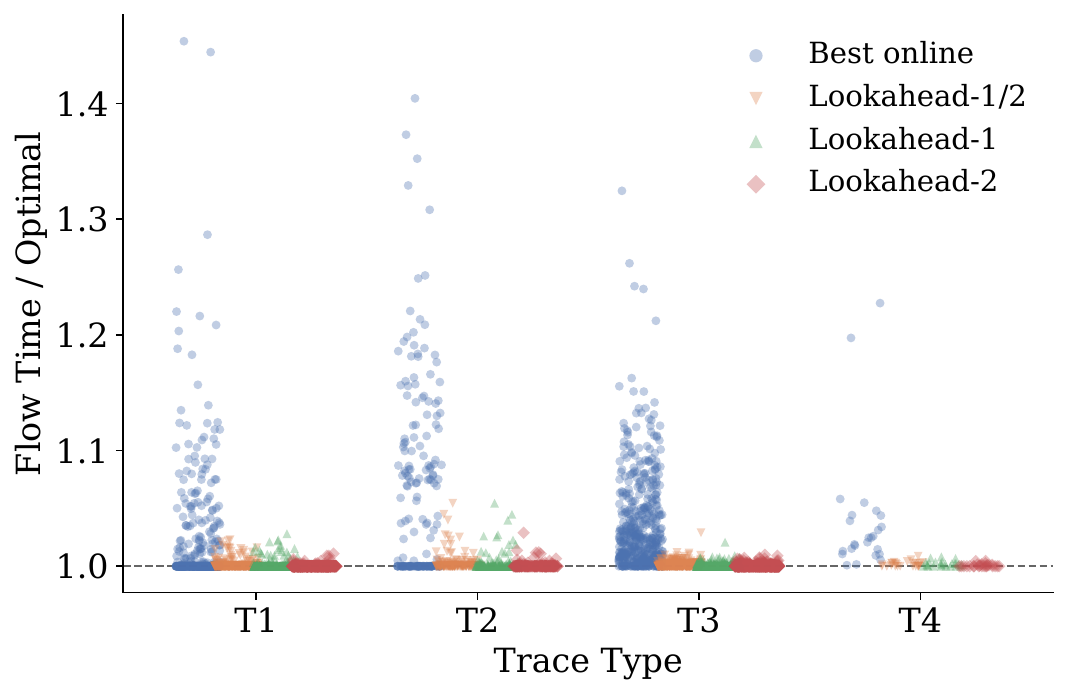}
        \caption{$\capac = 5\times \hat{\lambda}s$}
        \label{fig:capacity-5x-D}
    \end{subfigure}

    \caption{
        Sensitivity of the real-trace results to the batch-capacity parameter.
    }
    \label{fig:capacity-generalization-D}

    \medskip

    \begin{subfigure}[t]{0.31\textwidth}
        \centering
        \includegraphics[width=\linewidth]{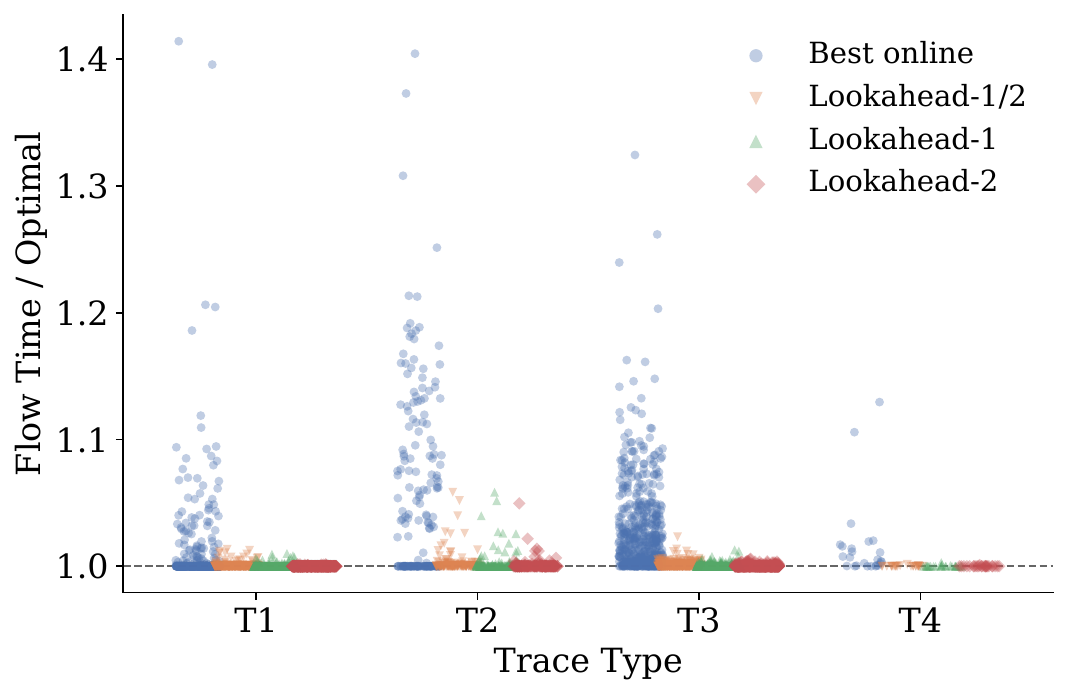}
        \caption{$s=0.25\widetilde{\Delta}$}
        \label{fig:setup-025x-D}
    \end{subfigure}
    \hfill
    \begin{subfigure}[t]{0.31\textwidth}
        \centering
        \includegraphics[width=\linewidth]{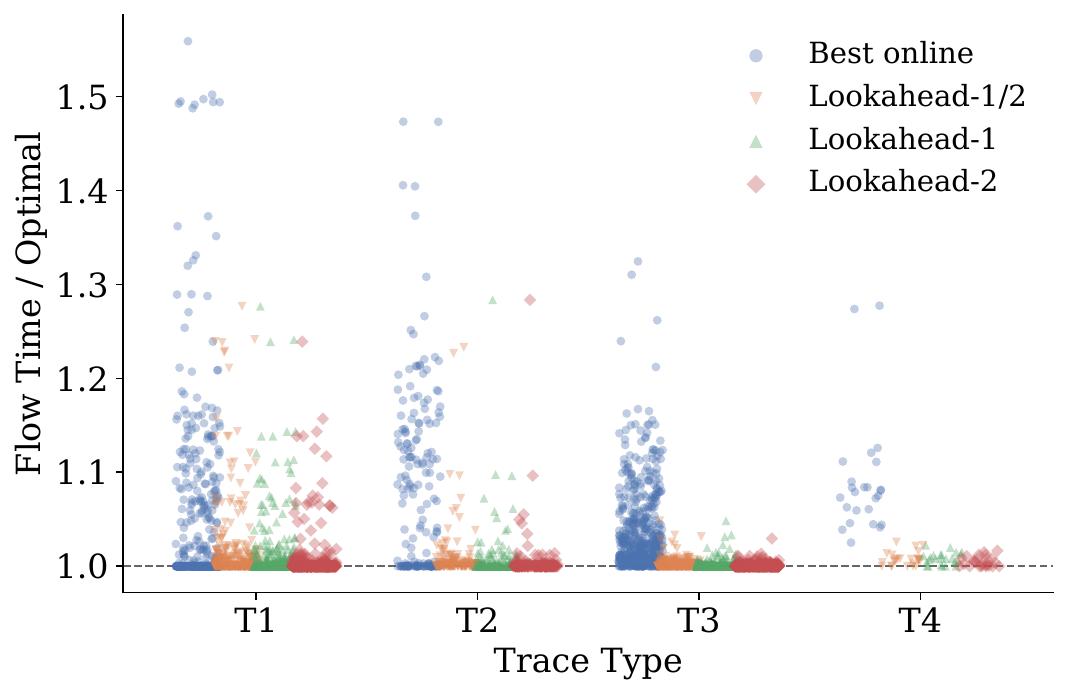}
        \caption{$s=\widetilde{\Delta}$}
        \label{fig:setup-1x-D}
    \end{subfigure}
    \hfill
    \begin{subfigure}[t]{0.31\textwidth}
        \centering
        \includegraphics[width=\linewidth]{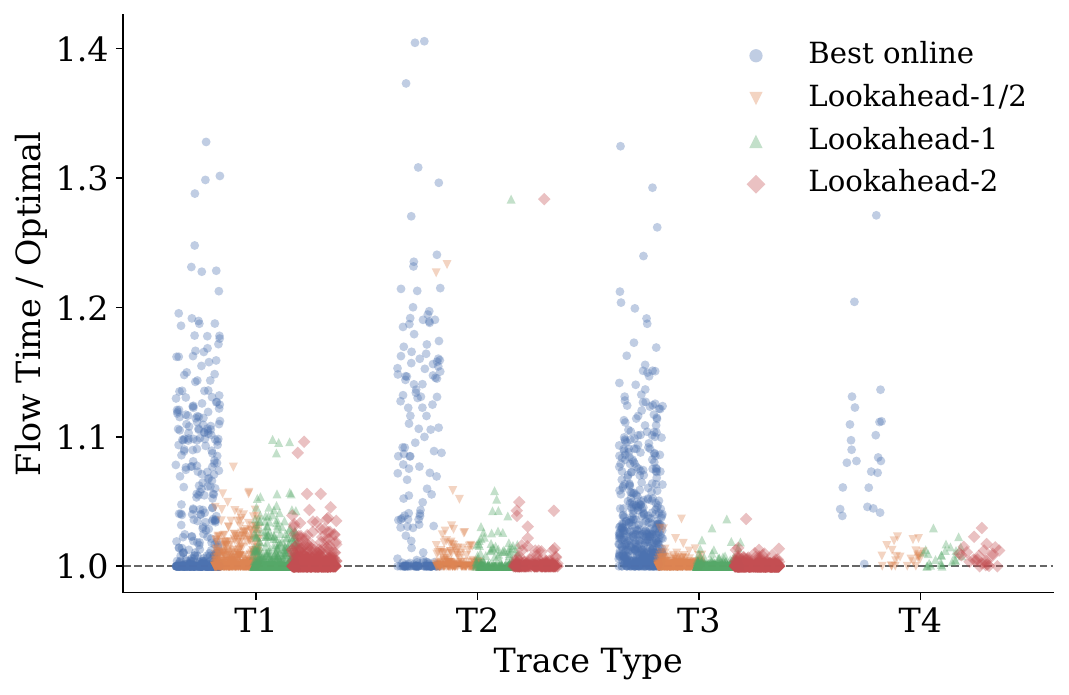}
        \caption{$s=2\widetilde{\Delta}$}
        \label{fig:setup-2x-D}
    \end{subfigure}

    \caption{
        Sensitivity of the real-trace results to the setup-time parameter.
    }
    \label{fig:setup-generalization-D}
\end{figure}

\section{Evaluation of \Cref{alg:poisson_lookahead}}
\label{sec:lookahead-complete}

We implement and evaluate $\algname$ (Algorithm~\ref{alg:poisson_lookahead}) as LookaheadComplete alongside the Lookahead implementation we introduced in Section~\ref{sec:eval} with both Poisson arrival simulations (Figure~\ref{fig:lookahead-evaluation-complete}) and real-world traces (Figure~\ref{fig:capacity-generalization-complete}--\ref{fig:setup-generalization-complete}). The conclusions in Section~\ref{sec:eval} remain true.

\medskip
\begin{figure}[h]
    \centering

    \begin{subfigure}[t]{0.3\textwidth}
        \centering
        \includegraphics[width=\linewidth]{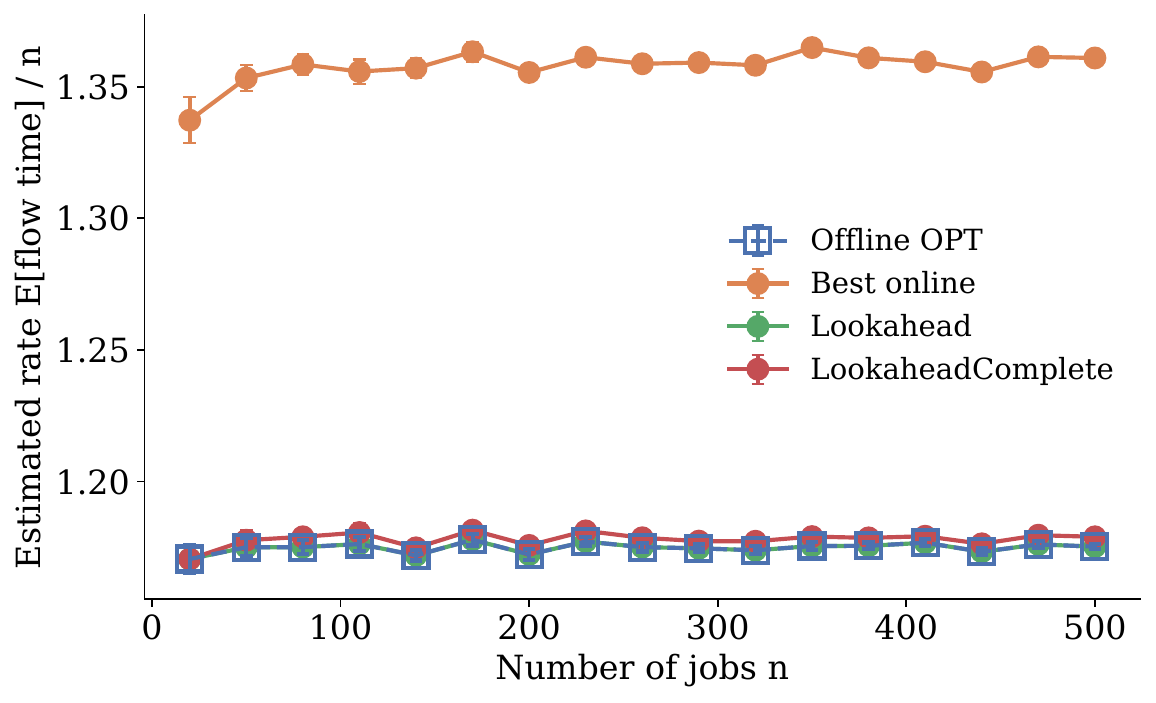}
        \caption{Rate Convergence}
        \label{fig:online-lookahead-rate}
    \end{subfigure}
    \hfill
    \begin{subfigure}[t]{0.3\textwidth}
        \centering
        \includegraphics[width=\linewidth]{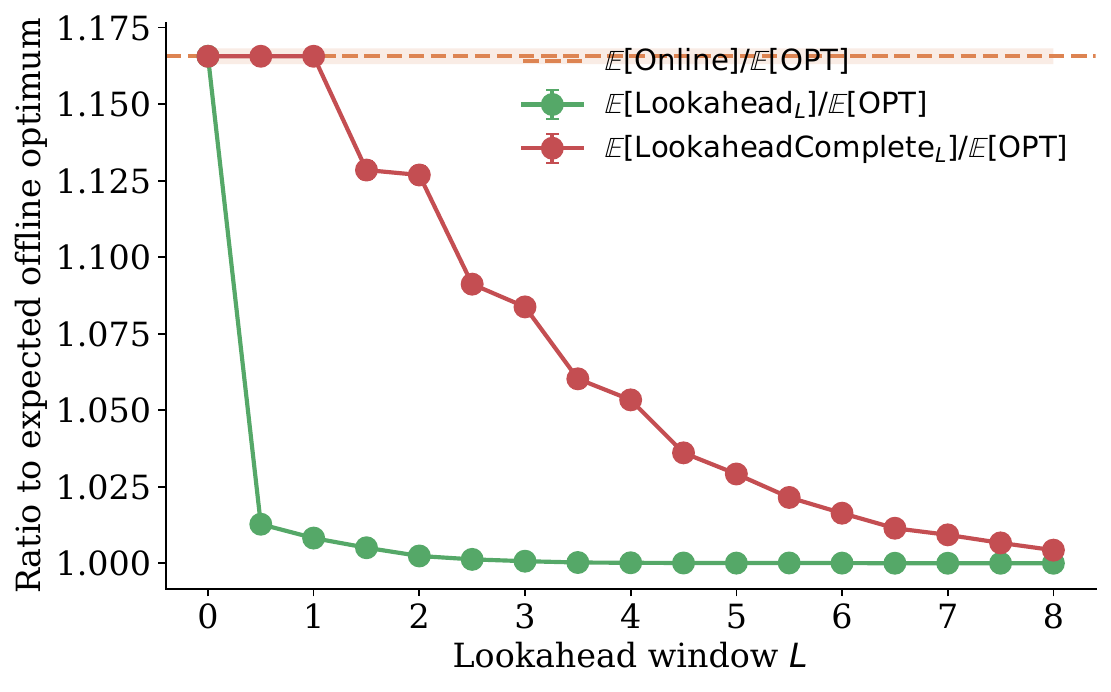}
        \caption{Lookahead Effect}
        \label{fig:lookahead-ratio-vs-L}
    \end{subfigure}
    \hfill
    \begin{subfigure}[t]{0.3\textwidth}
        \centering
        \includegraphics[width=\linewidth]{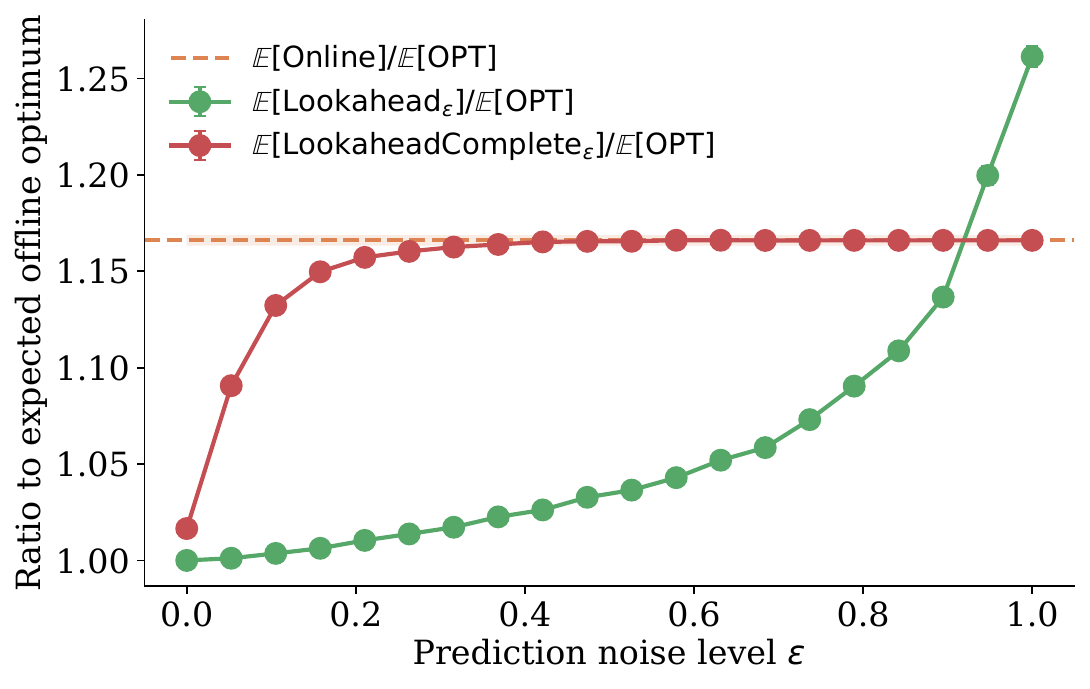}
        \caption{Error Robustness}
        \label{fig:lookahead-ratio-vs-epsilon}
    \end{subfigure}

    \caption{Experimental evaluation of online batching with both lookahead implementations.}
    \label{fig:lookahead-evaluation-complete}
\end{figure}

\begin{figure}[!htbp]
    \centering

    \begin{subfigure}[t]{0.31\textwidth}
        \centering
        \includegraphics[width=\linewidth]{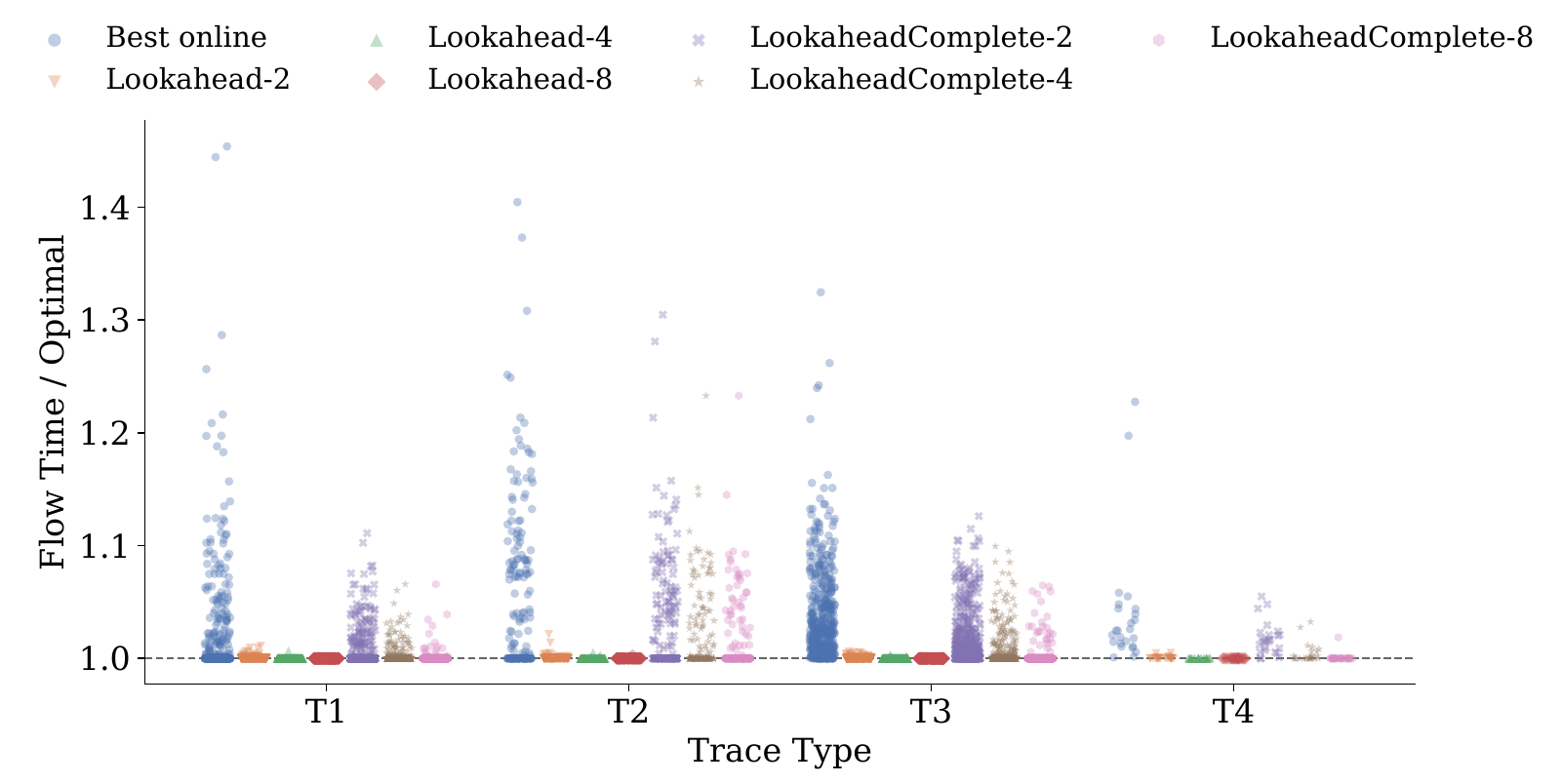}
        \caption{$\capac = 1\times \hat{\lambda}s$}
        \label{fig:capacity-1x}
    \end{subfigure}
    \hfill
    \begin{subfigure}[t]{0.31\textwidth}
        \centering
        \includegraphics[width=\linewidth]{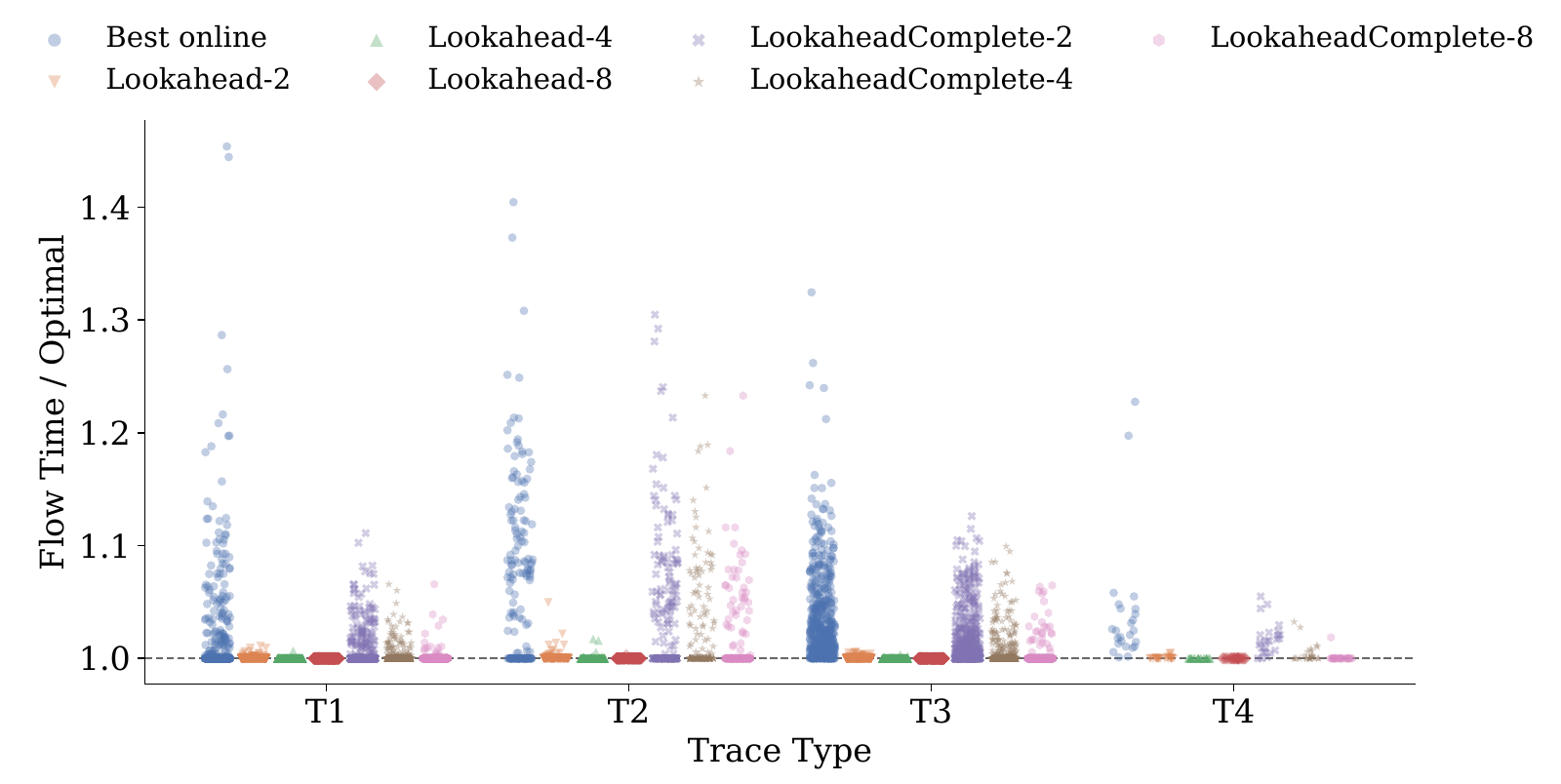}
        \caption{$\capac = 3\times \hat{\lambda}s$}
        \label{fig:capacity-3x}
    \end{subfigure}
    \hfill
    \begin{subfigure}[t]{0.31\textwidth}
        \centering
        \includegraphics[width=\linewidth]{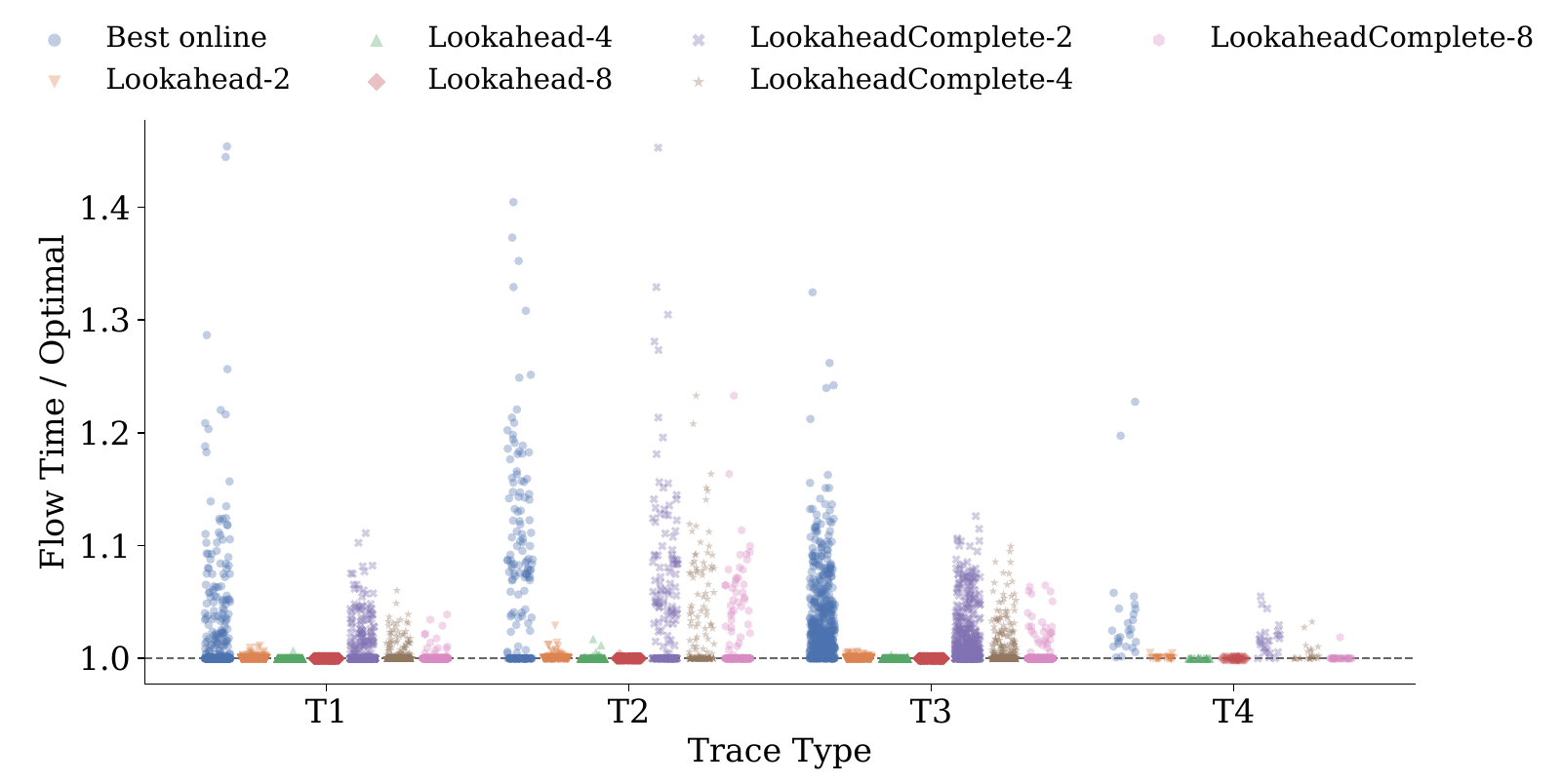}
        \caption{$\capac = 5\times \hat{\lambda}s$}
        \label{fig:capacity-5x}
    \end{subfigure}

    \caption{
        Sensitivity of the real-trace results to the batch-capacity parameter with both lookahead implementations.
    }
    \label{fig:capacity-generalization-complete}

    \medskip

    \begin{subfigure}[t]{0.31\textwidth}
        \centering
        \includegraphics[width=\linewidth]{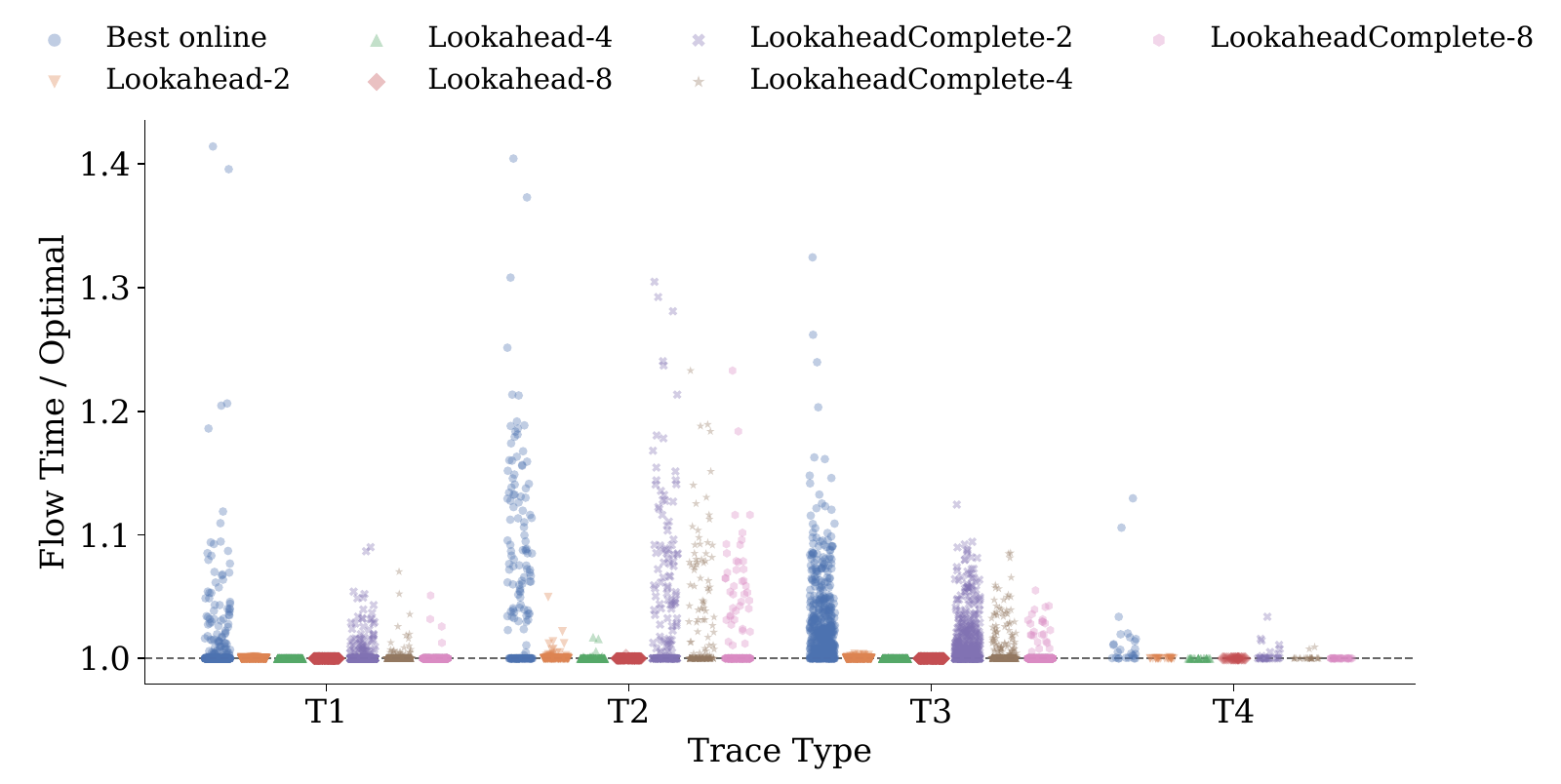}
        \caption{$s=0.25\widetilde{\Delta}$}
        \label{fig:setup-025x}
    \end{subfigure}
    \hfill
    \begin{subfigure}[t]{0.31\textwidth}
        \centering
        \includegraphics[width=\linewidth]{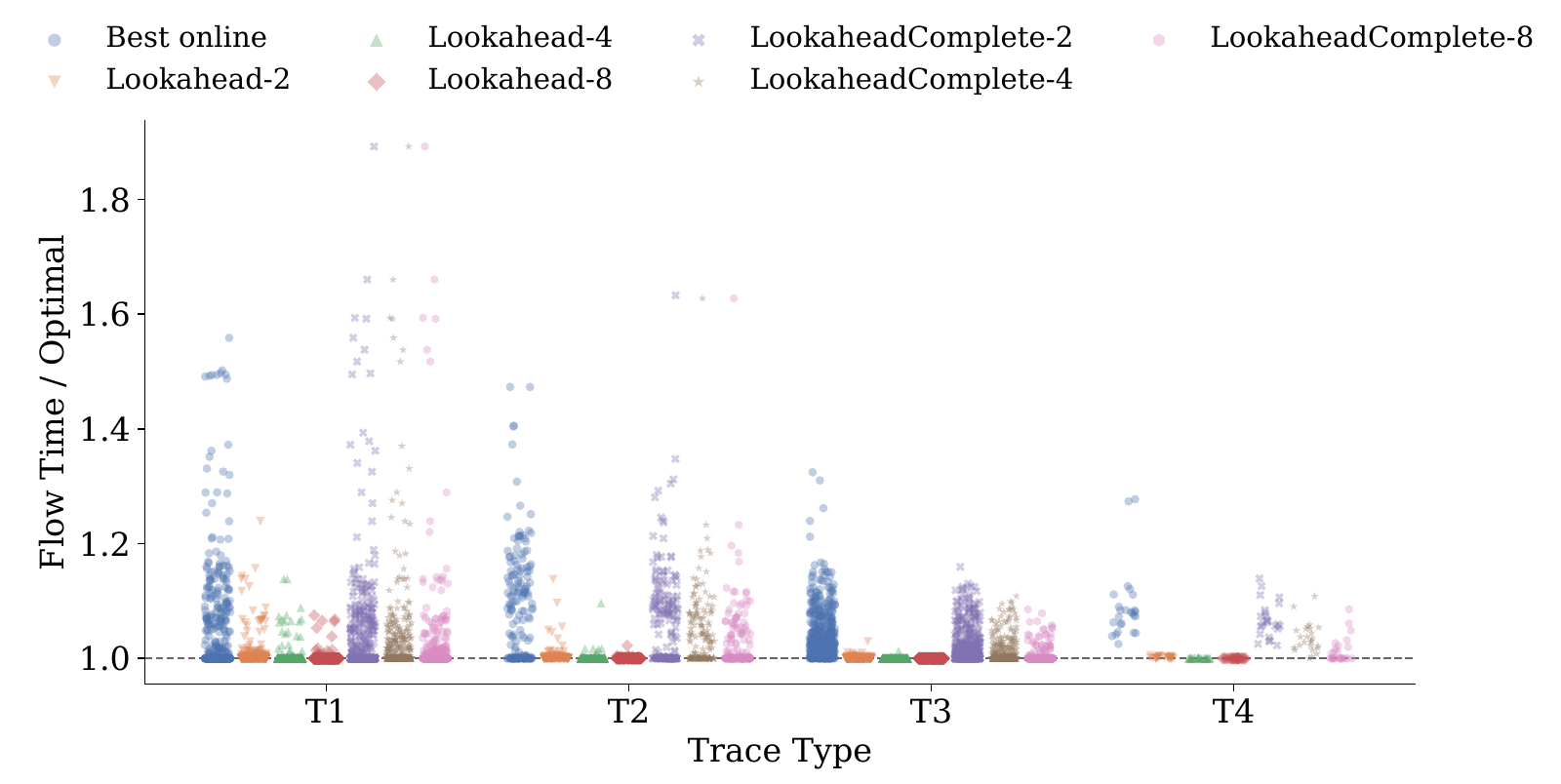}
        \caption{$s=\widetilde{\Delta}$}
        \label{fig:setup-1x}
    \end{subfigure}
    \hfill
    \begin{subfigure}[t]{0.31\textwidth}
        \centering
        \includegraphics[width=\linewidth]{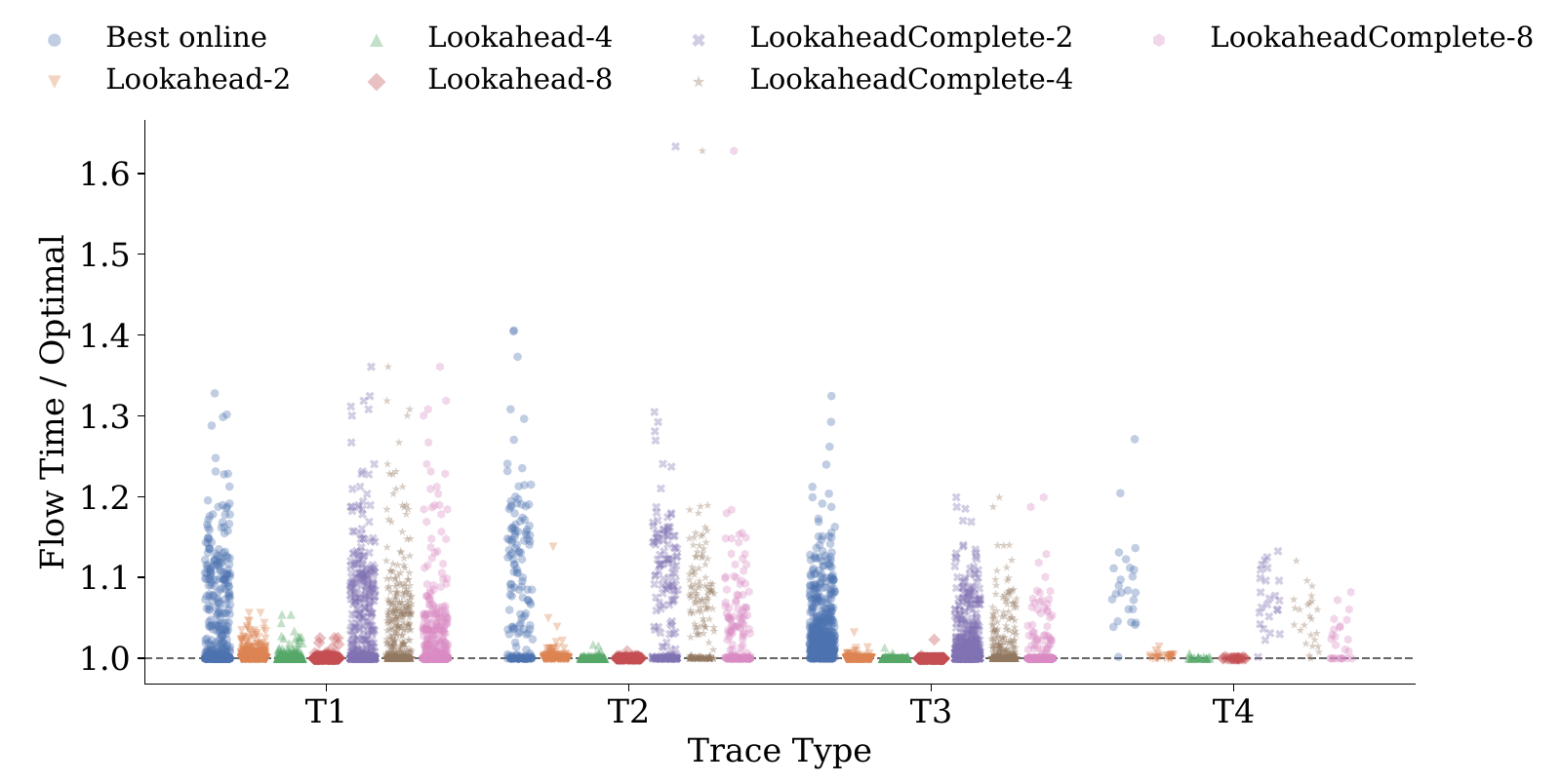}
        \caption{$s=2\widetilde{\Delta}$}
        \label{fig:setup-2x}
    \end{subfigure}

    \caption{
        Sensitivity of the real-trace results to the setup-time parameter with both lookahead implementations.
    }
    \label{fig:setup-generalization-complete}
\end{figure}

In Figure~\ref{fig:lookahead-evaluation-complete}, we use rate $\lambda=0.9$, setup time $s=1$ and batch capacity $\capac=3$ so that $\lambda s < \capac$. We use $L=8s=8$ as the default lookahead window. 
Figure~\ref{fig:lookahead-evaluation-complete}(a) shows that both Lookahead and LookaheadComplete converge to the offline OPT, and has a large gap between best online. Figure~\ref{fig:lookahead-evaluation-complete}(b)--(c) shows that Lookahead generally performs closer to OPT because LookaheadComplete is much more conservative than Lookahead. Lookahead re-plans at every idle moment using everything in its window, so every batching decision benefits from lookahead, while LookaheadComplete requires an idle gap longer than roughly the setup time inside the window.

Figure~\ref{fig:lookahead-evaluation-complete}(c) shows the other side of this trade-off. The noisy LookaheadComplete is more conservative in following the lookahead (and therefore the predictions), adding a safety margin and outperforming the simple lookahead algorithm for large values of the error. In real-world traces (Figure~\ref{fig:capacity-generalization-complete}--\ref{fig:setup-generalization-complete}), the same qualitative trends hold across all six settings: both Lookahead and LookaheadComplete consistently improve over the best fixed-wait baseline.

\clearpage

\bibliographystyle{ACM-Reference-Format}
\bibliography{sample-base}

\end{document}